\def\be{\begin{equation}}
\def\ee{\end{equation}}
\def\bea{\begin{eqnarray}}
\def\eea{\end{eqnarray}}
\author{}
\title{}
\DeclareMathOperator*{\argmin}{\arg\!\min}
\DeclareMathOperator*{\plim}{p\!\lim}
\newcommand{\normmm}[1]{{\left\vert\kern-0.25ex\left\vert\kern-0.25ex\left\vert #1 \right\vert\kern-0.25ex\right\vert\kern-0.25ex\right\vert}}
\begin{document}
\newcommand\blfootnote[1]{
\begingroup
\renewcommand\thefootnote{}\footnote{#1}
\addtocounter{footnote}{-1}
\endgroup
}

\newtheorem{corollary}{Corollary}
\newtheorem{definition}{Definition}
\newtheorem{lemma}{Lemma}
\newtheorem{proposition}{Proposition}
\newtheorem{remark}{Remark}
\newtheorem{theorem}{Theorem}
\newtheorem{assumption}{Assumption}
\newtheorem{example}{Example}

\numberwithin{corollary}{section}
\numberwithin{definition}{section}
\numberwithin{equation}{section}
\numberwithin{lemma}{section}
\numberwithin{proposition}{section}
\numberwithin{remark}{section}
\numberwithin{theorem}{section}

\allowdisplaybreaks[4]

\begin{titlepage}

\begin{center}
{\large \textbf{On Time-Varying VAR Models: \\Estimation, Testing and Impulse Response Analysis}}\blfootnote{{\it Corresponding author}: Jiti Gao, Department of Econometrics and Business Statistics, Monash University, Caulfield East, Victoria 3145, Australia. Email: \url{Jiti.Gao@monash.edu}. The authors of this paper would like to thank George Athanasopoulos, Rainer Dahlhaus, David Frazier, Oliver Linton, Gael Martin, Peter CB Phillips and Wei Biao Wu for their constructive comments on earlier versions of this paper. Gao acknowledges financial support from the Australian Research Council Discovery Grants Program under Grant Numbers: DP170104421 and DP200102769. Peng also acknowledges the Australian Research Council Discovery Grants Program for its financial support under Grant Number DP210100476.}

\bigskip

{\sc Yayi Yan, Jiti Gao and Bin Peng
\smallskip}

Monash University, Melbourne, Australia

\today

\end{center}

\begin{abstract}

Vector autoregressive (VAR) models are widely used in practical studies, e.g., forecasting, modelling policy transmission mechanism, and measuring connection of economic agents. To better capture the dynamics, this paper introduces a new class of time-varying VAR models in which the coefficients and covariance matrix of the error innovations are allowed to change smoothly over time. Accordingly, we establish a set of theories, including the impulse responses analyses subject to both of the short-run timing and the long-run restrictions, an information criterion to select the optimal lag, and a Wald-type test to determine the constant coefficients. Simulation studies are conducted to evaluate the theoretical findings. Finally, we demonstrate the empirical relevance and usefulness of the proposed methods through an application to the transmission mechanism of U.S. monetary policy.
\bigskip

\noindent{\bf Keywords}: Multivariate Dynamic Time Series; Time-Varying Impulse Response; Testing for Parameter Stability
\medskip

\noindent{\bf JEL Classification}: C14, C32, E52

\end{abstract}
\end{titlepage}

\section{Introduction}\label{Sec1}

Vector autoregressive (VAR) models as well as their extensions are among some of the most popular frameworks for modelling dynamic interactions among multiple variables. These models arise mainly as a response to the ``incredible'' identification conditions embedded in the large-scale macroeconomic models \cite[]{sims1980}. VAR modelling begins with minimal restrictions on the multivariate dynamic models. Gradually armed with identification information, VAR models and their statistical tool-kits like impulse response functions become powerful tools for conducting policy analysis. We refer interested readers to \cite{stock2001vector} for a comprehensive review. Despite the popularity, linear VAR models can always be rejected by data in empirical studies (\citealp{tsay1998testing}). For example, \cite{stock2016dynamic} point out, ``\textit{changes associated with the Great Moderation go beyond reduction in variances to include changes in dynamics and reduction in predictability}''.

To go beyond linear VAR models, various parametric time-varying VAR models have been proposed (e.g., \citealp{tsay1998testing}, \citealp{sims2006}, and the references therein) in order to allow for certain changes in economic relationship. However, model misspecification and parameter instability may undermine the performance of the proposed parametric models.  As pointed out by \cite{hansen2001new}, it may seem unlikely that  structural change could be immediate and might seem more reasonable to allow structural change to take a period of time to take effect. Practically, it may be more reasonable to allow smooth structural changes over a period of time rather than in an abrupt manner. To model the dynamic transit, an important strand of the VAR literature assumes that the coefficients of VAR models evolve in a random way (e.g., \citealp{primiceri2005time, petrova2019quasi}), and the estimation procedure relies on extensive Markov Chain Monte Carlo (MCMC) draws plus the use of a variety of filters, such as Kalman, or related filters. However, by doing so, the asymptotic properties of the estimated model coefficients as well as the corresponding impulse responses are still unclear (\citealp{giraitis2014inference}). 

Meanwhile, there is a separate literature about using nonparametric methods to estimate deterministically unknown time--varying parameters for autoregressive models. Up to this point, it is worth bringing up the terminology ``local stationarity", as a bulk of the literature is carried out using the local stationarity technique, which at least dates back to the seminal work by \cite{dahlhaus1996}. Recent developments focus on univariate autoregressive models (\citealp{dahlhaus2006statistical,zhang2012inference,richter2019cross,shlwz20}). To the best of our knowledge, it has had very little success to extend the local stationarity technique to multivariate settings, such as VAR models with deterministic time--varying parameters. In some specific cases where different locally stationary univariate time series may be approximated by their stationary versions on the same segments, the local stationarity technique may be applicable to such specific multivariate settings. However, univariate time series of general multivariate time series may have quite different patterns and behaviours, such as the three univariate time series plotted in Figure \ref{Fg1} of Section \ref{Sec5}.

To address the aforementioned issues, this paper therefore proposes a class of deterministic time-varying VAR models where both VAR coefficients and covariance matrix of the model's error innovations are allowed to change smoothly over time. We develop a time-varying vector moving average infinity (VMA$(\infty)$) representation for a class of VAR models before we are able to establish uniform consistency results and a joint central limit theory for kernel-based estimators of both VAR coefficients and covariance matrix, which facilitate the inference on time-varying structural impulse responses. This type of impulse responses is of major importance in typical VAR applications \cite[]{ik13,ik20,paul2019time}. In addition, inferences on structural impulse responses subject to both short-run timing and long-run restrictions are considered, so that the proposed model can better capture the simultaneous relations among multiple variables of interest over time. Such modelling strategy is especially useful for analyzing multivariate time series over a long horizon, since it offers a comprehensive treatment on tracking interests which are affected by frequently updated policies, environment, system, etc. In an economy system consisting of inflation, unemployment and interest rates, we discuss time-varying impacts of the interest rate change, which helps stabilize fluctuations in inflation and unemployment in the long-run. Under the proposed framework, it is achieved by investigating the corresponding time-varying impulse response functions.

We now comment on the literature closely related to the parameter stability testing problem we are considering in this paper. Detecting and estimating parametric components in univariate time-varying autoregressive models has been studied by \cite{zhang2012inference}. Recently, \cite{truquet2017parameter} considers parameter stability testing for univariate time-varying autoregressive conditional heteroscedasticity. More recently, \cite{LS2021} discuss testing for time--varying impact effects through a change--point mechanism. We develop a simple test for checking whether some of the time--varying coefficients (if not all) reduce to constant coefficients involved in the VAR models. It can be used to test whether the policy transmission mechanism is changing over time, which is of great importance in the macroeconomic literature \cite[e.g.,][]{primiceri2005time,paul2019time}. To give another example, consider our empirical study on the ongoing debate about the high inflation during 1970-1980, i.e., whether the high inflation is due to bad monetary policy (\citealp{primiceri2005time, sims2006}). Mathematically, the discrepancy comes down to specification testing on the coefficient matrices of the VAR models by the proposed test.

In summary, our contributions are in three-fold. First, we propose a class of time-varying VAR($p$) models, and develop a time-varying VMA$(\infty)$ representation for the VAR($p$) models before we establish an estimation theory for time-varying structural impulse responses, which are often of great interest to describe how the economy reacts over time to structural shocks. The proposed estimation is conducted subject to both short-run timing and long-run restrictions, which have attracted considerable attention in the literature (\citealp{kilian2017structural}). Second, we develop a Wald--type test statistic for detecting time-invariant parameters in time-varying VAR models before we show that the proposed test statistic is asymptotically normally distributed under both the null hypothesis and a sequence of local alternatives. Third, in an empirical study, we investigate the changing dynamics of three key U.S. macroeconomic variables (inflation, unemployment, and interest rate), and uncover a fall in the volatilities of exogenous shocks. We observe that there exists a substantial time-variation in the policy transmission mechanism, and the ``price puzzle'' is limited to periods of bad monetary policy.

The organization of this paper is as follows. Section \ref{Sec2} considers a class of time-varying VAR models, and establish asymptotic properties for the estimated time-varying impulse response functions. Section \ref{Sec3} describes the proposed test and establishes the corresponding asymptotic theory. Section \ref{Sec4} discusses some implementation issues and presents comprehensive simulation studies. Section \ref{Sec5} provides a case study to demonstrate the empirical relevance of the proposed models and estimation theory. Section \ref{Sec6} concludes. The proofs of the main results are given in Appendix A. Preliminary lemmas and their proofs are given in Appendix B.

Before proceeding further, it is convenient to introduce some notations: $\| \cdot \|$ denotes the Euclidean norm of a vector or the Frobenius norm of a matrix; $\otimes$ denotes the Kronecker product; $\bm{I}_a$ stands for an $a\times a$ identity matrix; $\bm{0}_{a\times b}$ stands for an $a\times b$ matrix of zeros, and we write $\bm{0}_a$ for short when $a=b$; for a function $g(w)$, let $g^{(j)}(w)$ be the $j^{th}$ derivative of $g(w)$, where $j\ge 0$ and $g^{(0)}(w) \equiv g(w)$; $K_h(\cdot) =K(\cdot/h)/h$, where $K(\cdot)$ and $h$ stand for a nonparametric kernel function and a bandwidth respectively; let $\tilde{c}_k =\int_{-1}^{1} u^k K(u) du$ and $\tilde{v}_k= \int_{-1}^{1} u^k K^2(u) du$ for integer $k\ge 0$; $\mathrm{vec}(\cdot)$ stacks the elements of an $m\times n$ matrix as an $mn \times 1$ vector; for any $a\times a$ square matrix $\bm{A}$, $\mathrm{vech}\left(\bm{A}\right)$ denotes the $\frac{1}{2}a(a+1)\times 1$ vector obtained from $\mathrm{vec}\left(\bm{A}\right)$ by eliminating all supra--diagonal elements of $\bm{A}$; $\mathrm{tr}\left(\bm{A}\right)$ denotes the trace of $\bm{A}$. Finally, $\to_P$ and $\to_D$ denote convergence in probability and convergence in distribution, respectively.

\section{The Time-Varying VAR($p$) Model}\label{Sec2}

Suppose that we observe $\{\bm{x}_{-p+1},\ldots,\bm{x}_0,\bm{x}_1,\ldots,\bm{x}_T\}$  from the following data generating process:
\begin{equation}\label{Eq2.1}
\bm{x}_t=\bm{a}(\tau_t)+\sum_{j=1}^{p}\bm{A}_{j}(\tau_t) \bm{x}_{t-j}+\bm{\eta}_t  \ \ \mbox{with} \ \ \bm{\eta}_t=\bm{\omega}(\tau_t)\bm{\epsilon}_{t},
\end{equation}
where $\tau_t = t/T$, $\bm{a}(\cdot)$ and $ \bm{A}_{j}(\cdot)$ are respectively a $d\times 1$ vector and $d\times d$ matrices of unknown functional coefficients, and $\bm{\omega}(\tau)$ is a $d\times d$ matrix of unknown functions capturing the heteroskedasticity of the error components.  Allowing $\bm{\omega}(\cdot)$ to vary over time is important theoretically and practically, because a constant covariance matrix implies that the shock to the $i^{th}$ variable of $\bm{x}_{t}$ has a time-invariant effect on the $j^{th}$ variable of $\bm{x}_{t}$, restricting simultaneous interactions among multiple variables to be time-invariant (\citealp{primiceri2005time}). For the time being, we assume $p$ is known, and shall come back to the choice of $p$ in Section \ref{Sec2.4}.

The following conditions are necessary for our development.

\begin{assumption}\label{Ass1}

\item
\begin{enumerate}
\item The roots of $\bm{I}_d-\bm{A}_{1}(\tau)L-\cdots -\bm{A}_{p}(\tau)L^p=\bm{0}_d$ all lie outside the unit circle uniformly in $\tau \in [0,1]$.

\item Each element of $\bm{A}(\tau)=\left[\bm{a}(\tau),\bm{A}_1(\tau),\ldots,\bm{A}_{p}(\tau)\right]$ is second order continuously differentiable on $[0,1]$ and $\bm{A}(\tau)=\bm{A}(0)$ for $\tau<0$.

\item Each element of $\bm{\omega}(\tau)$ is second order continuously differentiable on $[0,1]$. Moreover, $\bm{\Omega}(\tau)=\bm{\omega}(\tau)\bm{\omega}(\tau)^\top$ is positive definite uniformly in $\tau \in [0,1]$ and $\bm{\omega}(\tau)=\bm{\omega}(0)$ for $\tau<0$.
\end{enumerate}
\end{assumption}

\begin{assumption}\label{Ass2}
$\{\bm{\epsilon}_t\}_{t=-\infty}^{\infty}$ is a martingale difference sequence (m.d.s.) adapted to the filtration $\left\{\mathcal{F}_t\right\}$, where $\mathcal{F}_t=\sigma\left(\bm{\epsilon}_t,\bm{\epsilon}_{t-1},\ldots\right)$ is the $\sigma$-field generated by $\left(\bm{\epsilon}_t,\bm{\epsilon}_{t-1},\ldots\right)$, $E[\bm{\epsilon}_t \bm{\epsilon}_t^\top | \mathcal{F}_{t-1}]=\bm{I}_d$ almost surely (a.s.), and $ \max_{t}E \left\|\bm{\epsilon}_t\right\|^\delta < \infty$ for some $\delta > 4$.
\end{assumption}

Assumption \ref{Ass1}.1 ensures the eigenvalues of the companion matrix $\bm{\Phi}(\tau)$ of \eqref{Eq2.3} below all lie inside the unit circle uniformly over $\tau \in [0,1]$.  As a consequence, \eqref{Eq2.1} cannot include any unit-root or explosive components. Similar treatments have also been adopted to investigate univariate locally stationary models in the literature (e.g., Assumption T3 of \citealp{zhang2012inference}). Assumptions \ref{Ass1}.2 and \ref{Ass1}.3 allow the underlying data generating process to evolve over time in a smooth manner. The conditions $\bm{A}(\tau)=\bm{A}(0)$ and $\bm{\omega}(\tau)=\bm{\omega}(0)$ for $\tau<0$ gives
\begin{eqnarray}\label{Eq2.2}
\bm{x}_t=\bm{a}(0)+\sum_{j=1}^{p}\bm{A}_{j}(0) \bm{x}_{t-j}+\bm{\omega}(0)\bm{\epsilon}_{t},
\end{eqnarray}
which basically assumes that $\bm{x}_t$ behaves like a parametric VAR$(p)$ model for $t\le 0$. A similar condition can be found in \cite{vogt2012nonparametric} for a nonparametric time-varying time series model.

Assumption \ref{Ass2} imposes some conditions on the innovation error terms, and are standard in the VAR literature \cite[cf.,][]{lutkepohl2005new}. 

\medskip

With these conditions, the following proposition says \eqref{Eq2.1} admits a time-varying vector moving average infinity (VMA$(\infty)$) representation, which sheds a light on how to recover the time-varying structural impulse responses.

\begin{proposition}\label{Proposition2.1}
Under Assumptions \ref{Ass1} and \ref{Ass2}, there exists a time-varying VMA$(\infty)$ process of the form:
$$
\widetilde{\bm{x}}_t= \bm{\mu}(\tau_t)+\bm{B}_0(\tau_t)\bm{\epsilon}_t+\bm{B}_1(\tau_t)\bm{\epsilon}_{t-1}+\bm{B}_2(\tau_t)\bm{\epsilon}_{t-2}+\cdots
$$
such that $\bm{x}_t$ of \eqref{Eq2.1} satisfies $\max_{t\geq 1} \{E\left\|\bm{x}_t-\widetilde{\bm{x}}_t\right\|^\delta\}^{1/\delta}=O(T^{-1})$, where
\begin{eqnarray}\label{Eq2.3}
&&\bm{\mu}(\tau)=\bm{a}(\tau)+\sum_ {j=1}^{\infty}\bm{\Psi}_j(\tau)\bm{a}(\tau),\quad  \bm{\Psi}_j(\tau)=\bm{J}\bm{\Phi}^j(\tau) \bm{J}^\top \text{ for }j\geq 1,\nonumber \\
&&\bm{J}=\left[\bm{I}_d,\bm{0}_{d\times d(p-1)}\right],\quad \bm{B}_0(\tau)=\bm{\omega}(\tau),\quad  \bm{B}_j(\tau)=\bm{\Psi}_j(\tau)\bm{\omega}(\tau),  \nonumber \\
&&\bm{\Phi}(\tau)=\left[\begin{matrix}
       \bm{A}_{1}(\tau) & \cdots & \bm{A}_{p-1}(\tau) & \bm{A}_{p}(\tau)  \\
       \bm{I}_d & \cdots& \bm{0}_d & \bm{0}_d\\
       \vdots & \ddots&\vdots & \vdots\\
       \bm{0}_d &\cdots &\bm{I}_d & \bm{0}_d\\
    \end{matrix} \right].
\end{eqnarray}
\end{proposition}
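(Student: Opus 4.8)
The plan is to recast \eqref{Eq2.1} in companion form, compare the actual process with the ``frozen-coefficient'' moving-average process obtained by fixing the arguments at $\tau_t$, and show that the discrepancy itself solves a time-varying VAR$(p)$ recursion whose forcing term is of order $T^{-1}$ in $L^\delta$. Stacking $\bm{X}_t=(\bm{x}_t^\top,\bm{x}_{t-1}^\top,\ldots,\bm{x}_{t-p+1}^\top)^\top$, \eqref{Eq2.1} reads $\bm{X}_t=\bm{J}^\top\bm{a}(\tau_t)+\bm{\Phi}(\tau_t)\bm{X}_{t-1}+\bm{J}^\top\bm{\omega}(\tau_t)\bm{\epsilon}_t$ with $\bm{x}_t=\bm{J}\bm{X}_t$. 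For each fixed $\tau$ the stationary VAR$(p)$ with coefficients frozen at $\tau$ admits the VMA$(\infty)$ solution $\widetilde{\bm{x}}_s(\tau):=\bm{\mu}(\tau)+\sum_{k\ge0}\bm{B}_k(\tau)\bm{\epsilon}_{s-k}$ with $\bm{\mu},\bm{B}_k$ as in \eqref{Eq2.3}, and by construction this solution satisfies the algebraic identity $\widetilde{\bm{x}}_s(\tau)=\bm{a}(\tau)+\sum_{j=1}^{p}\bm{A}_j(\tau)\widetilde{\bm{x}}_{s-j}(\tau)+\bm{\omega}(\tau)\bm{\epsilon}_s$ for every $s$. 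I then define $\widetilde{\bm{x}}_t:=\widetilde{\bm{x}}_t(\tau_t)$, which is precisely the process in the statement (the match with \eqref{Eq2.3} is just $\bm{J}\bm{J}^\top=\bm{I}_d$ and $\bm{J}\bm{\Phi}^j(\tau)\bm{J}^\top=\bm{\Psi}_j(\tau)$).

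The one genuinely technical ingredient is the geometric decay, uniform in $t$ and $T$, of products of the (slowly varying) companion matrices: there are $C<\infty$ and $\rho\in(0,1)$ with $\|\bm{\Phi}(\tau_t)\bm{\Phi}(\tau_{t-1})\cdots\bm{\Phi}(\tau_{t-k+1})\|\le C\rho^k$ for all $t,k$, and in particular $\sup_{\tau}\|\bm{\Phi}^k(\tau)\|\le C\rho^k$. The power estimate is routine from Assumption \ref{Ass1}.1 (spectral radii bounded away from one) together with continuity of $\tau\mapsto\bm{\Phi}(\tau)$ and compactness of $[0,1]$; the product estimate is the main obstacle, since products of matrices each with spectral radius below one need not be contracting, and here it is the slow variation $\|\bm{\Phi}(\tau_{t-i})-\bm{\Phi}(\tau_{t})\|=O(i/T)$ that must be exploited. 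I would prove it by a blocking argument: split the factors into consecutive blocks of a fixed length $\ell$; within a block all arguments lie within $\ell/T$ of a common point, so the block product equals $\bm{\Phi}^{\ell}(\cdot)$ evaluated at that point up to a norm error of order $\ell^2/T$; choose $\ell$ large enough that $\|\bm{\Phi}^{\ell}(\tau)\|\le\tfrac14$ uniformly, whence each block contracts by at most $\tfrac12$ once $T$ is large, and the geometric bound follows. Two by-products I will use: $\widetilde{\bm{x}}_s(\tau)$ is a well-defined $L^\delta$ process uniformly in $\tau$ (Minkowski's inequality and $\sup_s\|\bm{\epsilon}_s\|_{L^\delta}<\infty$ from Assumption \ref{Ass2}), and $\bm{x}_t$ has $L^\delta$ norm bounded uniformly in $t$ (iterate the companion recursion, using stationarity for $t\le0$).

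Finally I quantify the discrepancy $r_t:=\bm{x}_t-\widetilde{\bm{x}}_t$. Setting $e_t:=\bm{a}(\tau_t)+\sum_{j=1}^{p}\bm{A}_j(\tau_t)\widetilde{\bm{x}}_{t-j}+\bm{\omega}(\tau_t)\bm{\epsilon}_t-\widetilde{\bm{x}}_t$ and subtracting from \eqref{Eq2.1} yields the exact recursion $r_t=\sum_{j=1}^{p}\bm{A}_j(\tau_t)r_{t-j}+e_t$. Applying the frozen identity at $\tau=\tau_t$ cancels the diagonal part of $e_t$, leaving $e_t=\sum_{j=1}^{p}\bm{A}_j(\tau_t)\big(\widetilde{\bm{x}}_{t-j}(\tau_{t-j})-\widetilde{\bm{x}}_{t-j}(\tau_t)\big)$. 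Since $\bm{\mu}(\cdot)$ and the $\bm{B}_k(\cdot)$ are Lipschitz on $[0,1]$ with $\sum_k\sup_\tau\|\bm{B}_k^{(1)}(\tau)\|<\infty$ --- obtained by differentiating $\bm{B}_k(\tau)=\bm{J}\bm{\Phi}^k(\tau)\bm{J}^\top\bm{\omega}(\tau)$, writing $\frac{d}{d\tau}\bm{\Phi}^k(\tau)$ as a telescoping sum and invoking the power bound with Assumptions \ref{Ass1}.2--\ref{Ass1}.3 --- one gets $\|\widetilde{\bm{x}}_s(\tau)-\widetilde{\bm{x}}_s(\tau')\|_{L^\delta}\le C|\tau-\tau'|$, hence $\max_t\|e_t\|_{L^\delta}=O(T^{-1})$ because $|\tau_{t-j}-\tau_t|\le p/T$. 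Writing the $r_t$ recursion in companion form and iterating, the remainder vanishes in $L^\delta$ by the product bound and the uniform moment bound, so $\bm{R}_t=\sum_{k\ge0}\big(\prod_{i=0}^{k-1}\bm{\Phi}(\tau_{t-i})\big)\bm{J}^\top e_{t-k}$ and therefore $\{E\|\bm{x}_t-\widetilde{\bm{x}}_t\|^\delta\}^{1/\delta}\le\|\bm{R}_t\|_{L^\delta}\le C\big(\textstyle\sum_{k\ge0}\rho^k\big)\max_s\|e_s\|_{L^\delta}=O(T^{-1})$ uniformly in $t\ge1$, as claimed.
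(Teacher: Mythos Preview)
Your proof is correct, but it is organized differently from the paper's. The paper writes $\bm{x}_t$ itself as an explicit time-varying VMA$(\infty)$, $\bm{x}_t=\bm{\mu}_t+\sum_{j\ge0}\bm{B}_{j,t}\bm{\epsilon}_{t-j}$ with $\bm{B}_{j,t}=\bm{J}\big(\prod_{m=0}^{j-1}\bm{\Phi}(\tau_{t-m})\big)\bm{J}^\top\bm{\omega}(\tau_{t-j})$, and then compares coefficients term by term: using the telescoping identity $\prod_i\bm{A}_i-\prod_i\bm{B}_i=\sum_j(\prod_{k<j}\bm{A}_k)(\bm{A}_j-\bm{B}_j)(\prod_{k>j}\bm{B}_k)$ together with $\|\bm{\Phi}(\tau_{t-i})-\bm{\Phi}(\tau_t)\|=O(i/T)$ and the product bound, it obtains $\|\bm{B}_{j,t}-\bm{B}_j(\tau_t)\|\le M\big(\sum_{i=1}^{j-1}\tfrac{i}{T}\rho^{j-1}+\tfrac{j}{T}\rho^j\big)$, summable to $O(T^{-1})$. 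You instead derive a VAR$(p)$ recursion for the discrepancy $r_t$ with forcing $e_t=\sum_j\bm{A}_j(\tau_t)\big(\widetilde{\bm{x}}_{t-j}(\tau_{t-j})-\widetilde{\bm{x}}_{t-j}(\tau_t)\big)$, bound $\|e_t\|_{L^\delta}=O(T^{-1})$ via a Lipschitz estimate on the frozen process, and iterate. Both routes hinge on the same uniform geometric decay of products $\prod_{m}\bm{\Phi}(\tau_{t-m})$; the paper simply cites \cite{dahlhaus2009empirical} for this, while you sketch a self-contained blocking argument. The paper's coefficient-by-coefficient comparison yields a slightly sharper lagwise bound (useful elsewhere, e.g.\ in verifying the summability conditions of Lemma~\ref{LemmaB.3}), whereas your recursion argument is more structural and would extend more naturally to higher-order local approximations of $\bm{x}_t$.
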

From Proposition \ref{Proposition2.1}, we can see that the $d \times 1$ vector of the orthogonalized impulse response function of a unit shock at time $t$ to the $j^{th}$ equation on $\bm{x}_{t+n}$ is given by $\bm{B}_n(\tau_{t+n})\bm{e}_j$, where $\bm{e}_j$ is a $d \times 1$ selection vector with unity as its $j^{th}$ element and zeros elsewhere. Hence, the impulse responses produced by our model are deterministic functions of rescaled time, so that the TV-VAR model captures potential drifts in the transmission mechanism and produces impulse responses which are not history- and shock-dependent.

\subsection{Estimation}\label{Sec2.2}

To estimate $\{\bm{B}_j(\tau): j\geq 0\}$, we need a joint central limit theorem for the estimators of the coefficients and the innovation covariance matrix. That said, we consider the estimation of $\bm{A}(\cdot)$ and $\bm{\Omega}(\cdot)$ using the local linear kernel method\footnote{The local linear kernel method allows us to address the so-called boundary effects of the kernel estimation when establishing the first result of Theorem \ref{Thm2.1} below. Alternatively, one may consider some boundary adjustment approaches as mentioned in \cite{HongLi} and \cite{CHL2012}.}. Intuitively,  when $\tau_t$ is in a small neighbourhood of $\tau$, we can write \eqref{Eq2.1} as

\begin{eqnarray}\label{Eq2.4}
\bm{x}_t = \bm{A}(\tau_t) \bm{z}_{t-1} + \bm{\eta}_t \approx \left[\bm{A}(\tau),h\bm{A}^{(1)}(\tau)\right]\bm{z}_{t-1}^*+\bm{\eta}_t,
\end{eqnarray}
where $\bm{z}_{t-1}=\left[1,\bm{x}_{t-1}^\top,\ldots,\bm{x}_{t-p}^\top\right]^\top$ and $\bm{z}^*_{t-1} = \left[\bm{z}_{t-1}^\top,\frac{\tau_t-\tau}{h}\bm{z}_{t-1}^\top\right]^\top$. The local linear estimators\footnote{It is worth pointing out that the estimation of the covariance matrix using the local linear kernel method (such as the second estimator of \eqref{Eq2.5}) is a non-trivial problem, and even has its own literature. We refer interested readers to \cite{zhang2012inference} for more details.} of $\bm{A}(\tau)$ and $\bm{\Omega}(\tau)$ are then respectively given by
\begin{eqnarray}\label{Eq2.5}
\mathrm{vec} [\bm{\widehat{A}}(\tau)]&=&[\bm{I}_{d^2p+d},\bm{0}_{d^2p+d}]\cdot\left(\sum_{t=1}^{T} \bm{Z}_{t-1}^*\bm{Z}_{t-1}^{*,\top} K_h (\tau_t-\tau) \right)^{-1} \sum_{t=1}^{T}\bm{Z}_{t-1}^*\bm{x}_t K_h (\tau_t-\tau), \nonumber\\
\bm{\widehat{\Omega}}(\tau)&=&\frac{1}{T}\sum_{t=1}^{T}\bm{\widehat{\eta}}_t\bm{\widehat{\eta}}_t^\top \omega_{t}(\tau),
\end{eqnarray}
where $\bm{Z}_{t}^*=\bm{z}_{t}^*\otimes \bm{I}_d$, $\bm{\widehat{\eta}}_t=\bm{x}_t-\bm{\widehat{A}}(\tau_t)\bm{z}_{t-1}$, $\omega_{t}(\tau)=K_h(\tau_t-\tau)\frac{ P_{h,2}(\tau)-\frac{\tau_t-\tau}{h}P_{h,1}(\tau)}{P_{h,0}(\tau)P_{h,2}(\tau)-P_{h,1}^2(\tau)}$ is the local linear weight, and $P_{h,k}(\tau)=\frac{1}{T}\sum_{t=1}^{T}\left(\frac{\tau_t-\tau}{h}\right)^k K_h(\tau_t-\tau)$ for $k=0,1,2$. 

\medskip

We require the following conditions to hold for the kernel function and the bandwidth.

\begin{assumption}\label{Ass3}
Let $K(\cdot)$ be a symmetric and positive kernel function defined on $[-1,1]$ with $\int_{-1}^{1}K(u)\mathrm{d}u = 1$. Moreover, $K(\cdot)$ is Lipschitz continuous on $[-1,1]$. As $(T,h) \to (\infty, 0)$, $Th\to \infty$.
\end{assumption}

With Assumption \ref{Ass3} in hand, we summarize the first theorem of this paper below.

\begin{theorem}\label{Thm2.1}
Let Assumptions \ref{Ass1}-\ref{Ass3} hold. Suppose that $\max_{t\geq1} E [\|\bm{\epsilon}_t \|^4 |\mathcal{F}_{t-1} ] < \infty $ a.s., and $\frac{T^{1-\frac{4}{\delta}}h}{\log T} \to \infty$. Then
\begin{enumerate}
\item[1.] $\sup_{\tau \in [0,1]} \| \bm{\widehat{A}}(\tau)-\bm{A}(\tau) \|=O_P \left(h^2+ (\frac{\log T}{Th} )^{1/2} \right)$.
\end{enumerate}
In addition, suppose that conditional on $\mathcal{F}_{t-1}$, the third and fourth moments of $\bm{\epsilon}_t$ are identical to the corresponding unconditional moments a.s., and $Th^{5} \to \alpha \in [0,\infty)$. Then the following two results also hold for $\forall\tau \in (0,1)$:
\begin{enumerate}
\item[2.] $\sqrt{Th} \widehat{\bm{V}}^{-1/2}(\tau)\left[\begin{matrix}
\mathrm{vec}\left(\bm{\widehat{A}}(\tau)-\bm{A}(\tau)-\frac{1}{2}h^2\tilde{c}_2\bm{A}^{(2)}(\tau)\right) \\
\mathrm{vech}\left(\bm{\widehat{\Omega}}(\tau)-\bm{\Omega}(\tau)-\frac{1}{2}h^2\tilde{c}_2\bm{\Omega}^{(2)}(\tau)\right)  \end{matrix}
\right]\to_D N\left(\bm{0},\bm{I}\right),$
\end{enumerate}
where $\bm{V}(\tau) $ and $\widehat{\bm{V}}(\tau)$ are defined in \eqref{EqA.3} and \eqref{EqA.5} for the sake of presentation.
\end{theorem}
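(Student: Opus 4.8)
The plan is to establish the three assertions in turn, throughout replacing $\bm{x}_t$ by the locally stationary surrogate $\widetilde{\bm{x}}_t$ of Proposition \ref{Proposition2.1}; since the replacement error is $O(T^{-1})$ in $L^\delta$ it is negligible against every rate below. For the uniform bound I would start from \eqref{Eq2.5} and write
\[
\mathrm{vec}[\bm{\widehat{A}}(\tau)-\bm{A}(\tau)]=\bm{S}_T(\tau)^{-1}\bigl[\bm{b}_T(\tau)+\bm{r}_T(\tau)\bigr],
\]
where $\bm{S}_T(\tau)=\frac{1}{Th}\sum_{t=1}^T\bm{Z}_{t-1}^*\bm{Z}_{t-1}^{*,\top}K_h(\tau_t-\tau)$ (up to the selection matrix and a normalisation), $\bm{b}_T(\tau)$ collects the second-order Taylor remainder of $\bm{A}(\tau_t)$ about $\tau$, and $\bm{r}_T(\tau)=\frac{1}{Th}\sum_{t=1}^T\bm{Z}_{t-1}^*\bm{\eta}_tK_h(\tau_t-\tau)$ is the stochastic part. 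Assumption \ref{Ass1}.2 gives $\sup_\tau\|\bm{b}_T(\tau)\|=O(h^2)$, and a uniform law of large numbers (a preliminary lemma of Appendix B, using Assumption \ref{Ass1}.1 so that the instantaneous second-moment matrix of $\widetilde{\bm{x}}_t$ is nonsingular uniformly in $\tau$, together with the fourth conditional moment bound) gives $\sup_\tau\|\bm{S}_T(\tau)^{-1}\|=O_P(1)$. The crux is $\sup_\tau\|\bm{r}_T(\tau)\|=O_P((\log T/(Th))^{1/2})$: because $\{\bm{\epsilon}_t\}$ is only an m.d.s.\ with finitely many ($\delta>4$) moments, I would truncate $\bm{\epsilon}_t$ at a slowly diverging level, apply a Bernstein/Freedman-type exponential inequality for the martingale part over an $O(T^{a})$-point grid of $[0,1]$, control oscillation between grid points by the Lipschitz continuity of $K$ (Assumption \ref{Ass3}), and dispatch the truncation tail by Markov's inequality; the condition $T^{1-4/\delta}h/\log T\to\infty$ is precisely what forces both the tail and the discretisation errors to be $o((\log T/(Th))^{1/2})$.

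For Part 2, fix an interior $\tau$. The local-linear design annihilates the first-order bias, so the same decomposition yields $\mathrm{vec}[\bm{\widehat{A}}(\tau)-\bm{A}(\tau)]=\tfrac12 h^2\tilde c_2\,\mathrm{vec}(\bm{A}^{(2)}(\tau))+\bm{S}_T(\tau)^{-1}\bm{r}_T(\tau)+o_P((Th)^{-1/2})$, where under $Th^5\to\alpha$ the third-order remainder is $o(h^2)=o((Th)^{-1/2})$. Writing $\sqrt{Th}\,\bm{r}_T(\tau)$ as a sum of the m.d.s.\ increments $\tfrac{1}{\sqrt{Th}}\bm{Z}_{t-1}^*\bm{\eta}_tK_h(\tau_t-\tau)$, I would invoke a martingale CLT: the sum of conditional variances converges to a deterministic matrix by $E[\bm{\epsilon}_t\bm{\epsilon}_t^\top\mid\mathcal{F}_{t-1}]=\bm{I}_d$ and the continuity of $\bm{A}(\cdot),\bm{\omega}(\cdot)$, while the Lindeberg condition follows from $\max_tE\|\bm{\epsilon}_t\|^\delta<\infty$, $\delta>4$. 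Slutsky together with $\bm{S}_T(\tau)^{-1}\to_P$ a fixed nonsingular limit then gives asymptotic normality of the $\mathrm{vec}[\bm{\widehat{A}}]$ block.

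The $\bm{\widehat{\Omega}}$ block is the delicate step and I expect it to be the main obstacle. Using $\bm{\widehat{\eta}}_t=\bm{\eta}_t-(\bm{\widehat{A}}(\tau_t)-\bm{A}(\tau_t))\bm{z}_{t-1}$, I would expand $\bm{\widehat{\eta}}_t\bm{\widehat{\eta}}_t^\top$ and show, via the \emph{uniform} rate of Part 1 and $\tfrac1T\sum_t|\omega_t(\tau)|=O(1)$, that every term carrying the estimation error of $\bm{A}$ is $o_P((Th)^{-1/2})$; this is not a mere norm bound, since $\bm{\widehat{A}}(\tau_t)$ uses data local to $\tau_t$, so the cross term must be handled through a second linearisation of $\bm{\widehat{A}}(\tau_t)-\bm{A}(\tau_t)$ and its martingale structure. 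What remains is $\bm{\widehat{\Omega}}(\tau)-\bm{\Omega}(\tau)=\tfrac1T\sum_t(\bm{\eta}_t\bm{\eta}_t^\top-\bm{\Omega}(\tau_t))\omega_t(\tau)+\tfrac12 h^2\tilde c_2\bm{\Omega}^{(2)}(\tau)+o_P((Th)^{-1/2})$, the bias arising from applying the reproducing property of the local-linear weights to the Taylor expansion of $\bm{\Omega}(\tau_t)$. Since $\bm{\eta}_t\bm{\eta}_t^\top-\bm{\Omega}(\tau_t)=\bm{\omega}(\tau_t)(\bm{\epsilon}_t\bm{\epsilon}_t^\top-\bm{I}_d)\bm{\omega}(\tau_t)^\top$, the leading term is again an m.d.s.\ array, but its conditional variance involves the \emph{fourth} conditional moments of $\bm{\epsilon}_t$, and the conditional covariance between this array and $\sqrt{Th}\,\bm{r}_T(\tau)$ involves the \emph{third} conditional moments; the hypothesis that the conditional third and fourth moments coincide with the unconditional ones a.s.\ is exactly what makes these limits deterministic. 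Stacking the two blocks, I would verify the hypotheses of a multivariate m.d.s.\ CLT via the Cram\'er--Wold device and the scalar martingale CLT, identify the limiting covariance as $\bm{V}(\tau)$ of \eqref{EqA.3}, and finally prove $\widehat{\bm{V}}(\tau)\to_P\bm{V}(\tau)$ using Part 1 and a law of large numbers for the plug-in quantities entering \eqref{EqA.5}, so that Slutsky delivers the stated $N(\bm{0},\bm{I})$ limit.
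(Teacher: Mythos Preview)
Your proposal is correct and follows essentially the same route as the paper: the same bias--stochastic decomposition for $\bm{\widehat{A}}$, truncation plus Freedman's inequality on a grid for the uniform bound on $\bm{r}_T(\tau)$ (this is Lemma \ref{LemmaB.4}.1), the expansion of $\bm{\widehat{\eta}}_t\bm{\widehat{\eta}}_t^\top$ with the cross term handled by a second linearisation of $\bm{\widehat{A}}(\tau_t)-\bm{A}(\tau_t)$ (Lemma \ref{LemmaB.4}.2), and a stacked martingale CLT via Cram\'er--Wold (Lemma \ref{LemmaB.1}) using the conditional third/fourth moment hypothesis exactly where you indicate. The only organisational difference is that the paper does not globally replace $\bm{x}_t$ by $\widetilde{\bm{x}}_t$ up front but instead absorbs the approximation into the preliminary Lemmas \ref{LemmaB.3}--\ref{LemmaB.4}.
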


The first result of Theorem \ref{Thm2.1} establishes a uniform convergence rate for $\bm{\widehat{A}}(\tau)$, which further allows us to establish a joint asymptotic distribution in the second result. If $\delta>5$, the usual optimal bandwidth $h_{opt}=O\left(T^{-1/5}\right)$ satisfies the condition $\frac{T^{1-\frac{4}{\delta}}h}{\log T} \to \infty$.

\subsection{On Impulse Responses}\label{Sec2.3}

Having established the joint CLT in Theorem \ref{Thm2.1}, we are now ready to study the impulse responses. As $\bm{\Omega}(\cdot)=\bm{\omega}(\cdot)\bm{\omega}^\top(\cdot)$, we cannot infer the elements of $\bm{\omega}(\cdot)$ unless certain identification restrictions are imposed. In the following, we consider two types of identification conditions: (i) the short-run timing restrictions, and (ii) the long-run restrictions. The economic interpretations of the two types of identification conditions can be found in \cite{kilian2017structural}, and we do not repeat them here for the sake of space.

Under the short-run timing restrictions, $\bm{\omega}(\cdot)$ is a lower-triangular matrix. Thus, $\widehat{\bm{\omega}}(\tau)$ is chosen as the lower triangular matrix from the Cholesky decomposition of $\bm{\widehat{\Omega}}(\tau)$, i.e., $\bm{\widehat{\Omega}}(\tau)=\bm{\widehat{\omega}}(\tau)\bm{\widehat{\omega}}^\top(\tau)$. Alternatively, one can impose the conditions on the long-run impacts of the shocks (i.e., $\bm{B}(\tau)$ defined below).  Specifically, define
\begin{eqnarray}\label{Eq2.6}
&&\bm{B}(\tau):=\sum_{j=0}^{\infty}\bm{B}_j(\tau)=\left(\bm{I}_d-\sum_{i=1}^{p}\bm{A}_i(\tau)\right)^{-1}\bm{\omega}(\tau),\nonumber \\
&&\bm{\Psi}(\tau):=\sum_{j=0}^{\infty}\bm{\Psi}_j(\tau)=\left(\bm{I}_d-\sum_{i=1}^{p}\bm{A}_i(\tau)\right)^{-1},
\end{eqnarray}
where the last equalities of both lines follow in an obvious matter. Thus, the elements of $\bm{B}(\tau)$ may be recovered from $\bm{B}(\tau)\bm{B}^\top(\tau) = \bm{\Psi}(\tau)\bm{\Omega}(\tau)\bm{\Psi}^\top(\tau)$. It is then convenient to assume that  $\bm{B}(\tau)$ is a lower-triangular matrix, so $\widehat{\bm{B}}(\tau)$ is chosen as the lower triangular matrix from the Cholesky decomposition of $\widehat{\bm{\Psi}}(\tau)\widehat{\bm{\Omega}}(\tau)\widehat{\bm{\Psi}}^\top(\tau)$, where $\widehat{\bm{\Psi}}(\tau)$ is defined in the same way as $\bm{\Psi}(\tau)$ in \eqref{Eq2.3} but replacing $\bm{A}_i(\tau)$ with $\widehat{\bm{A}}_i(\tau)$. Under the long-run restrictions, $\widehat{\bm{\omega}}(\tau) = \widehat{\bm{\Psi}}^{-1}(\tau)\widehat{\bm{B}}(\tau)$.

Either way, the estimator of the impulse response function $\bm{B}_j(\tau)$ for each given $j\ge 0$ is given by

\begin{eqnarray}\label{Eq2.7}
\bm{\widehat{B}}_j(\tau)=\bm{\widehat{\Psi}}_j(\tau)\bm{\widehat{\omega}}(\tau),
\end{eqnarray}
where $\bm{\widehat{\Psi}}_j(\tau)=\bm{J} \bm{\widehat{\Phi}}^j(\tau) \bm{J}^\top$, $\bm{J}$ is defined in Proposition \ref{Proposition2.1}, and $ \bm{\widehat{\Phi}}^j(\tau)$ is defined under \eqref{EqA.6}. 

We summarize the asymptotic properties of the estimation of the impulse responses in the following theorem.

\begin{theorem}\label{Thm2.2}
Under the conditions of Theorem \ref{Thm2.1}. For any fixed integer $j\ge 0$
\begin{eqnarray*}
 \sqrt{Th}\left(\mathrm{vec}\left(\bm{\widehat{B}}_j(\tau)-\bm{B}_j(\tau)\right)-\frac{1}{2}h^2\tilde{c}_2\bm{B}_j^{(2)}(\tau)\right)\to_D N\left(0,\bm{\Sigma}_{\bm{B}_j}(\tau)\right),
\end{eqnarray*}
where 
\begin{eqnarray*}
\bm{B}_j^{(2)}(\tau)&=&\bm{C}_{j,1}(\tau)\mathrm{vec}\left(\bm{A}^{(2)}(\tau)\right) +\bm{C}_{j,2}(\tau)\mathrm{vech}\left(\bm{\Omega}^{(2)}(\tau)\right),\\
\bm{\Sigma}_{\bm{B}_j}(\tau) &=& \left[\bm{C}_{j,1}(\tau),\bm{C}_{j,2}(\tau)\right]\bm{V}(\tau)\left[\bm{C}_{j,1}(\tau),\bm{C}_{j,2}(\tau)\right]^\top.
\end{eqnarray*}

Specifically, we have the following expressions.

\begin{enumerate}
\item Under the short-run timing restrictions, we have
\begin{eqnarray*}
\bm{C}_{0,1}(\tau)&=&0,\\
\bm{C}_{j,1}(\tau)&=&\left(\bm{\omega}^\top(\tau)\otimes \bm{I}_d\right) \left(\sum_{m=0}^{j-1} \bm{J}(\bm{\Phi}^\top(\tau))^{j-1-m}\otimes \bm{\Psi}_m(\tau) \right) \left[\bm{0}_{d^2p\times d},\bm{I}_{d^2p}\right],\ j \geq 1,\\
\bm{C}_{j,2}(\tau)&=&\left(\bm{I}_d\otimes\bm{\Psi}_j(\tau)\right) \bm{L}_d^\top\left(\bm{L}_d\bm{N}_1(\tau)\bm{L}_d^\top \right)^{-1},\ j \geq 0,
\end{eqnarray*}
in which $\bm{N}_1(\tau)=(\bm{I}_{d^2}+\bm{K}_{d,d})(\bm{\omega}(\tau)\otimes \bm{I}_d)$, the elimination matrix $\bm{L}_d$ satisfies that $\mathrm{vech}(\bm{F})=\bm{L}_d\mathrm{vec}(\bm{F})$ for any $d\times d$ matrix $\bm{F}$, and the commutation matrix $\bm{K}_{m,n}$ satisfies   $\bm{K}_{m,n}\mathrm{vec}(\bm{G})=\mathrm{vec}(\bm{G}^\top)$ for any $m\times n$ matrix $\bm{G}$.

\item Under the long-run restrictions, we have
\begin{eqnarray*}
\bm{C}_{0,1}(\tau)&=&(\bm{I}_d \otimes \bm{\Psi}_j(\tau))\left(\bm{N}_1^\top(\tau)\bm{N}_1(\tau)+\bm{N}_2^\top(\tau)\bm{N}_2(\tau)\right)^{-1}\bm{N}_2^\top(\tau)\bm{D}_2(\tau),\\
\bm{C}_{j,1}(\tau)&=&\left(\bm{\omega}^\top(\tau)\otimes \bm{I}_d\right) \left(\sum_{m=0}^{j-1} \bm{J}(\bm{\Phi}^\top(\tau))^{j-1-m}\otimes \bm{\Psi}_m(\tau) \right)\left[\bm{0}_{d^2p\times d},\bm{I}_{d^2p}\right]+(\bm{I}_d \otimes \bm{\Psi}_j(\tau))\\
&&\times\left(\bm{N}_1^\top(\tau)\bm{N}_1(\tau)+\bm{N}_2^\top(\tau)\bm{N}_2(\tau)\right)^{-1}\bm{N}_2^\top(\tau)\bm{D}_2(\tau)\left[\bm{0}_{d^2p\times d},\bm{I}_{d^2p}\right],\ j \geq 1,\\
\bm{C}_{j,2}(\tau)&=&(\bm{I}_d \otimes \bm{\Psi}_j(\tau))\left(\bm{N}_1^\top(\tau)\bm{N}_1(\tau)+\bm{N}_2^\top(\tau)\bm{N}_2(\tau)\right)^{-1}\bm{N}_1^\top(\tau)\bm{D}_1,\ j \geq 0,
\end{eqnarray*}
in which $\bm{N}_2(\tau)=\bm{Q}\left(\bm{I}_d\otimes \bm{A}_\tau^{-1}(1)\right)$, $\bm{D}_{2}(\tau)=\bm{Q}\left(\bm{B}^\top(\tau)\otimes\bm{A}_\tau^{-1}(1)\right)\triangledown_{\bm{\alpha}(\tau)}\bm{A}_{\tau}(1)$ with $\bm{A}_{\tau}(1)=\bm{I}_d-\sum_{i=1}^{p}\bm{A}_i(\tau)$ and $\triangledown_{\bm{\alpha}(\tau)}\bm{A}_{\tau}(1)=-\left[\bm{I}_{d^2},...,\bm{I}_{d^2}\right]$ $(d^2\times d^2p)$, the duplication matrix $\bm{D}_1$ satisfies $\mathrm{vec}\left(\bm{\Omega}(\tau)\right)=\bm{D}_1\mathrm{vech}\left(\bm{\Omega}(\tau)\right)$, and $\bm{Q}$ is a $d(d-1)/2 \times d^2$ selection matrix of $0$ and $1$ such that $\bm{Q}\mathrm{vec}\left(\bm{B}(\tau)\right)=0$.
\end{enumerate}
\end{theorem}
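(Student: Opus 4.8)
The plan is to treat each estimated impulse response $\widehat{\bm B}_j(\tau)=\widehat{\bm\Psi}_j(\tau)\widehat{\bm\omega}(\tau)$ as a continuously differentiable transformation of the pair $(\widehat{\bm A}(\tau),\widehat{\bm\Omega}(\tau))$ and to obtain the stated limit from the joint central limit theorem in Theorem~\ref{Thm2.1} by the delta method. Write $G_j:(\bm A,\bm\Omega)\mapsto\mathrm{vec}(\bm\Psi_j(\bm A)\,\bm\omega)$, where $\bm\Psi_j(\bm A)=\bm J\bm\Phi^j\bm J^\top$ is assembled from the companion matrix $\bm\Phi$ of $\bm A$ as in \eqref{Eq2.3}, and $\bm\omega$ is read off from the identification scheme: under the short-run timing restrictions $\bm\omega$ is the lower-triangular Cholesky factor of $\bm\Omega$, while under the long-run restrictions $\bm\omega=(\bm I_d-\sum_{i=1}^p\bm A_i)\bm B$ with $\bm B$ the lower-triangular Cholesky factor of $\bm\Psi\bm\Omega\bm\Psi^\top$. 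Assumption~\ref{Ass1}.1 forces $\bm\Phi(\tau)$ to have spectral radius strictly below one and $\bm I_d-\sum_i\bm A_i(\tau)$ to be invertible, uniformly in $\tau$, and Assumption~\ref{Ass1}.3 makes $\bm\Omega(\tau)$ uniformly positive definite; combined with the consistency of $\widehat{\bm A}(\tau)$ and $\widehat{\bm\Omega}(\tau)$ from Theorem~\ref{Thm2.1}, these ensure that $\widehat{\bm\Psi}_j(\tau)$, $\widehat{\bm\omega}(\tau)$ and $\widehat{\bm B}(\tau)$ are well defined with probability approaching one, are consistent, and that $G_j$ is $C^1$ on a neighbourhood of $(\bm A(\tau),\bm\Omega(\tau))$.

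The core is the explicit computation of the Jacobian $[\bm C_{j,1}(\tau),\bm C_{j,2}(\tau)]=\partial G_j/\partial(\mathrm{vec}\,\bm A,\mathrm{vech}\,\bm\Omega)$ by matrix differential calculus. The product rule gives $d\bm B_j=(d\bm\Psi_j)\bm\omega+\bm\Psi_j(d\bm\omega)$, so the Jacobian splits into a ``dynamics'' part and an ``innovation'' part. For the dynamics part, $d\bm\Phi$ has $[d\bm A_1,\ldots,d\bm A_p]$ as its top block row and zeros elsewhere, i.e. $d\bm\Phi=\bm J^\top[d\bm A_1,\ldots,d\bm A_p]$, so that $d\bm\Phi^j=\sum_{m=0}^{j-1}\bm\Phi^m(d\bm\Phi)\bm\Phi^{j-1-m}$ and $\bm J\bm\Phi^m(d\bm\Phi)=\bm\Psi_m[d\bm A_1,\ldots,d\bm A_p]$; vectorizing with $\mathrm{vec}(\bm A\bm X\bm C)=(\bm C^\top\otimes\bm A)\mathrm{vec}(\bm X)$ and stripping off the intercept column of $\bm A$ with the selector $[\bm{0}_{d^2p\times d},\bm{I}_{d^2p}]$ yields the $\sum_{m=0}^{j-1}\bm J(\bm\Phi^\top)^{j-1-m}\otimes\bm\Psi_m$ factor of $\bm C_{j,1}$, and in particular $\bm C_{0,1}=\bm 0$ under short-run timing because $\bm\Psi_0=\bm I_d$ does not depend on $\bm A$. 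For the innovation part, differentiate the square relation $\bm\omega\bm\omega^\top=\bm\Omega$ (short-run) or $\bm B\bm B^\top=\bm\Psi\bm\Omega\bm\Psi^\top$ together with $\bm\omega=(\bm I_d-\sum_i\bm A_i)\bm B$ (long-run), impose lower-triangularity through the elimination matrix $\bm L_d$ (resp. the selector $\bm Q$ with $\bm Q\,\mathrm{vec}\,\bm B=\bm 0$), and solve the resulting linear system for $\mathrm{vec}(d\bm\omega)$ (resp. $\mathrm{vec}(d\bm B)$) by the normal equations; this is where $(\bm L_d\bm N_1(\tau)\bm L_d^\top)^{-1}$ and $(\bm N_1^\top(\tau)\bm N_1(\tau)+\bm N_2^\top(\tau)\bm N_2(\tau))^{-1}$ appear, and in the long-run case the dependence of $\bm A_\tau(1)=\bm I_d-\sum_i\bm A_i$ on $\bm A$ produces the extra terms carried by $\bm D_2(\tau)$ and $\triangledown_{\bm\alpha(\tau)}\bm A_\tau(1)=-[\bm I_{d^2},\ldots,\bm I_{d^2}]$. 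Combining the two parts reproduces exactly the stated $\bm C_{j,1}(\tau)$ and $\bm C_{j,2}(\tau)$.

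With the Jacobian in hand, a Taylor expansion of $G_j$ around $(\bm A(\tau),\bm\Omega(\tau))$ gives
$$
\mathrm{vec}\big(\widehat{\bm B}_j(\tau)-\bm B_j(\tau)\big)=[\bm C_{j,1}(\tau),\bm C_{j,2}(\tau)]\begin{bmatrix}\mathrm{vec}(\widehat{\bm A}(\tau)-\bm A(\tau))\\[2pt]\mathrm{vech}(\widehat{\bm\Omega}(\tau)-\bm\Omega(\tau))\end{bmatrix}+R_T(\tau),
$$
with $R_T(\tau)=O_P(\|\widehat{\bm A}(\tau)-\bm A(\tau)\|^2+\|\widehat{\bm\Omega}(\tau)-\bm\Omega(\tau)\|^2)$. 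Substituting the bias-plus-noise decomposition from part~2 of Theorem~\ref{Thm2.1}, the linear term reproduces the bias $\tfrac12 h^2\tilde{c}_2\bm B_j^{(2)}(\tau)$ with $\bm B_j^{(2)}(\tau)=\bm C_{j,1}(\tau)\mathrm{vec}(\bm A^{(2)}(\tau))+\bm C_{j,2}(\tau)\mathrm{vech}(\bm\Omega^{(2)}(\tau))$, together with a fluctuation of order $(Th)^{-1/2}$ that is asymptotically $N(\bm 0,[\bm C_{j,1}(\tau),\bm C_{j,2}(\tau)]\bm V(\tau)[\bm C_{j,1}(\tau),\bm C_{j,2}(\tau)]^\top)$ by the convergence in part~2 and the consistency of $\widehat{\bm V}(\tau)$. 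Since at the fixed interior point $\tau$ we have $\|\widehat{\bm A}(\tau)-\bm A(\tau)\|+\|\widehat{\bm\Omega}(\tau)-\bm\Omega(\tau)\|=O_P(h^2+(Th)^{-1/2})$, the remainder obeys $\sqrt{Th}\,R_T(\tau)=O_P(\sqrt{Th}(h^4+(Th)^{-1}))=o_P(1)$ because $Th^5\to\alpha\in[0,\infty)$; Slutsky's theorem then yields the claim.

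I expect the main obstacle to be the differentiation of the Cholesky factorization, in particular under the long-run restrictions, where $\bm\omega$ is an implicit function of both arguments through a matrix inverse composed with the triangular factor of $\bm\Psi\bm\Omega\bm\Psi^\top$. Two points need care: first, verifying that $\bm L_d\bm N_1(\tau)\bm L_d^\top$ (short-run) and $\bm N_1^\top(\tau)\bm N_1(\tau)+\bm N_2^\top(\tau)\bm N_2(\tau)$ (long-run) are nonsingular, which follows from the positive definiteness of $\bm\Omega(\tau)$ in Assumption~\ref{Ass1}.3 and the invertibility of $\bm A_\tau(1)$ from Assumption~\ref{Ass1}.1; and second, keeping the $\mathrm{vec}$/$\mathrm{vech}$ bookkeeping consistent so that the triangularity constraints are imposed on the correct entries on both sides of each differentiated identity. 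Once these are settled, the rest is a routine delta-method argument driven by Theorem~\ref{Thm2.1}.
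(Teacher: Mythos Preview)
Your proposal is correct and follows essentially the same route as the paper: both treat $\widehat{\bm B}_j(\tau)$ as a smooth functional of $(\widehat{\bm A}(\tau),\widehat{\bm\Omega}(\tau))$, invoke the joint CLT of Theorem~\ref{Thm2.1}, and then apply the delta method after computing the Jacobian by the product rule $d\bm B_j=(d\bm\Psi_j)\bm\omega+\bm\Psi_j(d\bm\omega)$, with $d\bm\Psi_j$ obtained from $d\bm\Phi^j=\sum_m\bm\Phi^m(d\bm\Phi)\bm\Phi^{j-1-m}$ and $d\bm\omega$ obtained by differentiating the defining identities $\bm\omega\bm\omega^\top=\bm\Omega$ (short-run) or $\bm Q\,\mathrm{vec}[\bm A_\tau^{-1}(1)\bm\omega]=0$ together with $\bm N_1\,\mathrm{vec}(d\bm\omega)=\bm D_1\,d\bm\sigma$ (long-run). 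The only cosmetic difference is that the paper solves the long-run system by stacking $\bm N_1$ and $\bm N_2$ and inverting $\bm N_1^\top\bm N_1+\bm N_2^\top\bm N_2$ directly rather than phrasing it as ``normal equations'', and it does not spell out the $o_P$ control of the Taylor remainder as explicitly as you do.
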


For ease of presentation, we provide the definitions of the respective estimators of $ \bm{V}(\tau)$ and $\bm{\Phi}(\tau)$ (i.e., $\widehat{\bm{V}}(\tau)$ and $\widehat{\bm{\Phi}}(\tau)$) in Appendix \ref{ApA.1}. It is easy to see that $\widehat{\bm{\Phi}}(\tau)\to_P \bm{\Phi}(\tau)$, $\widehat{\bm{\omega}}(\tau)\to_P\bm{\omega}(\tau)$, and $\widehat{\bm{V}}(\tau)\to_P \bm{V}(\tau)$ by Theorem \ref{Thm2.1}. As a result, $\widehat{\bm{\Sigma}}_{\bm{B}_j}(\tau)\to_P\bm{\Sigma}_{\bm{B}_j}(\tau)$, where $\widehat{\bm{\Sigma}}_{\bm{B}_j}(\tau)$  has a form identical to $\bm{\Sigma}_{\bm{B}_j}(\tau)$ but replacing $\bm{\Phi}(\tau)$, $\bm{\omega}(\tau)$ and $ \bm{V}(\tau)$ with their estimators, respectively.

\subsection{Selection of the Optimal Lag}\label{Sec2.4}

We next consider the choice of the optimal number of lags, i.e., the estimation of $p$. Specifically, we consider the minimization of the next information criterion.
\begin{eqnarray}\label{Eq2.8}
\widehat{\mathsf{p}} =  \argmin_{1\le \mathsf{p}\le\mathsf{P} } (\log \left\{\text{RSS}(\mathsf{p})\right\}+\mathsf{p}\cdot\chi_T)\equiv \argmin_{1\le \mathsf{p}\le\mathsf{P} }\text{IC}(\mathsf{p}) ,
\end{eqnarray}
where $\text{RSS}(\mathsf{p})=\frac{1}{T}\sum_{t=1}^{T}\widehat{\bm{\eta}}_{\mathsf{p},t}^\top \widehat{\bm{\eta}}_{\mathsf{p},t}$, $\chi_T$ is the penalty term, $\widehat{\bm{\eta}}_{\mathsf{p},t}$ is the value of $\widehat{\bm{\eta}}_{t}$ by letting the lag be $\mathsf{p}$, and $\mathsf{P}$ is a sufficiently large fixed positive integer. The next theorem shows the validity of \eqref{Eq2.8}.

\begin{theorem}\label{Thm2.3}
Let Assumptions \ref{Ass1}-\ref{Ass3} hold. Suppose $\frac{T^{1-\frac{4}{\delta}}h}{\log T} \to \infty$, $\max_{t\geq1} E (\|\bm{\epsilon}_t \|^4 |\mathcal{F}_{t-1} ) < \infty $ a.s., $\chi_T\to 0$, and $c_T^{-2}\chi_T\to \infty$, where $c_T=h^2+\left(\frac{\log T}{Th}\right)^{1/2}$. Then $\Pr\left(\widehat{\mathsf{p}}=p\right)\to 1$.
\end{theorem}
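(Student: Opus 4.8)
Since $\mathsf{P}$ is a fixed integer, the plan is to reduce the claim to a pointwise statement: it suffices to show that for every fixed $\mathsf{p}\in\{1,\dots,\mathsf{P}\}$ with $\mathsf{p}\neq p$ one has $\Pr\big(\mathrm{IC}(\mathsf{p})>\mathrm{IC}(p)\big)\to1$, because a union bound over the finitely many such $\mathsf{p}$ then gives $\Pr(\widehat{\mathsf{p}}=p)\to1$. I would handle the two regimes $\mathsf{p}<p$ (underfitting) and $\mathsf{p}>p$ (overfitting) by different mechanisms: underfitting is ruled out because $\log\mathrm{RSS}(\mathsf{p})$ exceeds $\log\mathrm{RSS}(p)$ by a strictly positive constant while the penalty gap $(\mathsf{p}-p)\chi_T$ vanishes, whereas overfitting is ruled out because the penalty gap, although vanishing, dominates an even faster-vanishing gap between $\log\mathrm{RSS}(\mathsf{p})$ and $\log\mathrm{RSS}(p)$.

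For the underfitting case, write $\bm{z}_{\mathsf{p},t-1}=[1,\bm{x}_{t-1}^\top,\dots,\bm{x}_{t-\mathsf{p}}^\top]^\top$ and let $\widehat{\bm{A}}_{\mathsf{p}}(\cdot)$ be the local linear estimator \eqref{Eq2.5} built from $\bm{z}_{\mathsf{p},t-1}$, so $\widehat{\bm{\eta}}_{\mathsf{p},t}=\bm{x}_t-\widehat{\bm{A}}_{\mathsf{p}}(\tau_t)\bm{z}_{\mathsf{p},t-1}$. The bias/variance argument behind the first part of Theorem \ref{Thm2.1} carries over to the misspecified fit and gives $\sup_{\tau}\|\widehat{\bm{A}}_{\mathsf{p}}(\tau)-\bm{A}^{*}_{\mathsf{p}}(\tau)\|\to_P0$, where $\bm{A}^{*}_{\mathsf{p}}(\tau)$ collects the coefficients of the best linear predictor of $\bm{x}_t$ on $\bm{z}_{\mathsf{p},t-1}$ in the stationary approximation at rescaled time $\tau$, with prediction-error covariance $\bm{\Sigma}^{*}_{\mathsf{p}}(\tau)$ (well defined since $\bm{\Omega}(\tau)\succ\bm{0}$ by Assumption \ref{Ass1}.3). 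Using the $T^{-1}$ approximation of Proposition \ref{Proposition2.1} together with the sample-moment limits supplied by the Appendix B lemmas, I would then show $\mathrm{RSS}(\mathsf{p})\to_P\int_0^1\mathrm{tr}\,\bm{\Sigma}^{*}_{\mathsf{p}}(\tau)\,\mathrm{d}\tau$ and, in particular, $\mathrm{RSS}(p)\to_P\int_0^1\mathrm{tr}\,\bm{\Omega}(\tau)\,\mathrm{d}\tau$ (the latter using the m.d.s. identity $E[\bm{\eta}_t^\top\bm{\eta}_t\mid\mathcal{F}_{t-1}]=\mathrm{tr}\,\bm{\Omega}(\tau_t)$ a.s., a martingale law of large numbers, and a Riemann-sum argument). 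Orthogonality of $\bm{\eta}_t$ to $\mathcal{F}_{t-1}$ yields $\bm{\Sigma}^{*}_{\mathsf{p}}(\tau)\succeq\bm{\Omega}(\tau)$, and since $p$ is the true order ($\bm{A}_p(\cdot)\not\equiv\bm{0}_d$) the gap is strictly positive on a set of positive Lebesgue measure, so $\int_0^1\mathrm{tr}\,\bm{\Sigma}^{*}_{\mathsf{p}}>\int_0^1\mathrm{tr}\,\bm{\Omega}$. As $(\mathsf{p}-p)\chi_T\to0$, this gives $\mathrm{IC}(\mathsf{p})-\mathrm{IC}(p)\to_P\log\!\big(\int_0^1\mathrm{tr}\,\bm{\Sigma}^{*}_{\mathsf{p}}\big)-\log\!\big(\int_0^1\mathrm{tr}\,\bm{\Omega}\big)>0$.

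For the overfitting case the true model is nested, $\bm{x}_t=\bm{A}_{\mathsf{p}}(\tau_t)\bm{z}_{\mathsf{p},t-1}+\bm{\eta}_t$ with $\bm{A}_{\mathsf{p}}(\tau)=[\bm{A}(\tau),\bm{0}_{d\times d(\mathsf{p}-p)}]$, and Assumption \ref{Ass1}.1 still holds for the padded system, so the first part of Theorem \ref{Thm2.1} gives $\sup_\tau\|\widehat{\bm{A}}_{\mathsf{p}}(\tau)-\bm{A}_{\mathsf{p}}(\tau)\|=O_P(c_T)$. From $\widehat{\bm{\eta}}_{\mathsf{p},t}=\bm{\eta}_t-(\widehat{\bm{A}}_{\mathsf{p}}(\tau_t)-\bm{A}_{\mathsf{p}}(\tau_t))\bm{z}_{\mathsf{p},t-1}$ I would expand
\begin{eqnarray*}
\mathrm{RSS}(\mathsf{p})=\frac1T\sum_{t=1}^{T}\bm{\eta}_t^\top\bm{\eta}_t-\frac2T\sum_{t=1}^{T}\bm{\eta}_t^\top\big(\widehat{\bm{A}}_{\mathsf{p}}(\tau_t)-\bm{A}_{\mathsf{p}}(\tau_t)\big)\bm{z}_{\mathsf{p},t-1}+\frac1T\sum_{t=1}^{T}\big\|\big(\widehat{\bm{A}}_{\mathsf{p}}(\tau_t)-\bm{A}_{\mathsf{p}}(\tau_t)\big)\bm{z}_{\mathsf{p},t-1}\big\|^2,
\end{eqnarray*}
and likewise for $\mathrm{RSS}(p)$. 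With $\frac1T\sum\|\bm{z}_{\mathsf{p},t-1}\|^2=O_P(1)$ (from Assumption \ref{Ass2} and Proposition \ref{Proposition2.1}), the quadratic term is $O_P(c_T^2)$. For the cross term I would insert the leading stochastic expansion of the local linear estimator, of the form $\widehat{\bm{A}}_{\mathsf{p}}(\tau)-\bm{A}_{\mathsf{p}}(\tau)=\big(\frac1{Th}\sum_sK_h(\tau_s-\tau)\bm{\eta}_s\bm{z}_{\mathsf{p},s-1}^\top\big)\bm{M}_{\mathsf{p}}(\tau)^{-1}+O_P(h^2)$ uniformly in $\tau$, which turns the cross term into a normalized double sum over $(t,s)$; a second-moment calculation exploiting $E[\bm{\eta}_t\mid\mathcal{F}_{t-1}]=\bm{0}$ shows that the $s=t$ diagonal is $O_P((Th)^{-1})$ and the off-diagonal part is of smaller order, while the $O_P(h^2)$ bias remainder contributes $O_P(h^2)$; altogether $\mathrm{RSS}(\mathsf{p})-\mathrm{RSS}(p)=O_P(c_T^2)$. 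Since $\mathrm{RSS}(p)\to_P\int_0^1\mathrm{tr}\,\bm{\Omega}(\tau)\,\mathrm{d}\tau>0$, a first-order Taylor expansion of $\log(\cdot)$ gives $\log\mathrm{RSS}(\mathsf{p})-\log\mathrm{RSS}(p)=O_P(c_T^2)$, hence
\begin{eqnarray*}
\mathrm{IC}(\mathsf{p})-\mathrm{IC}(p)=(\mathsf{p}-p)\chi_T+O_P(c_T^2),
\end{eqnarray*}
and since $\mathsf{p}-p\geq1$, $\chi_T\to0$ and $c_T^{-2}\chi_T\to\infty$, the right-hand side is positive with probability tending to one.

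The step I expect to be the main obstacle is controlling the overfitting cross term $\frac1T\sum_t\bm{\eta}_t^\top(\widehat{\bm{A}}_{\mathsf{p}}(\tau_t)-\bm{A}_{\mathsf{p}}(\tau_t))\bm{z}_{\mathsf{p},t-1}$ at rate $c_T^2$ (equivalently, $o_P(\chi_T)$). Because $\widehat{\bm{A}}_{\mathsf{p}}(\tau_t)$ is a full-sample object it is not $\mathcal{F}_{t-1}$-measurable, so one cannot simply invoke the martingale property of $\bm{\eta}_t$; the argument needs the explicit local-linear expansion above, a careful diagonal/off-diagonal split of the resulting double sum, and the fourth-moment and m.d.s. conditions of Assumption \ref{Ass2} together with the geometric decay of the VMA$(\infty)$ coefficients implied by Proposition \ref{Proposition2.1} to handle the $\bm{z}_{\mathsf{p},t-1}$ factors — essentially the Appendix B lemmas re-applied at the level of sample second moments. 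The underfitting part, by contrast, is routine once the pseudo-true limit $\bm{A}^{*}_{\mathsf{p}}(\cdot)$ and its strict sub-optimality have been identified.
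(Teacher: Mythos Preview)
Your approach is essentially the paper's: the proof there also reduces to pairwise comparisons $\mathrm{IC}(\mathsf{p})>\mathrm{IC}(p)$ and packages the RSS analysis into a preparatory lemma (Lemma~B.5) establishing exactly your two claims---a strictly positive limiting gap when $\mathsf{p}<p$, and $\mathrm{RSS}(\mathsf{p})=\frac{1}{T}\sum_tE(\bm{\eta}_t^\top\bm{\eta}_t)+O_P(c_T^2)$ when $\mathsf{p}\geq p$---via the same residual decomposition and the same local-linear expansion for the cross term, with the diagonal/off-diagonal split you describe.

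There is, however, one slip that would break the overfitting argument as written. You say the $O_P(h^2)$ bias remainder in the local-linear expansion ``contributes $O_P(h^2)$'' to the cross term and then conclude $\mathrm{RSS}(\mathsf{p})-\mathrm{RSS}(p)=O_P(c_T^2)$. These are inconsistent: $h^2$ is \emph{not} $O(c_T^2)$ in general (e.g.\ $h\asymp T^{-1/5}$ gives $h^2=T^{-2/5}$ but $c_T^2\asymp T^{-4/5}\log T$), so an $O_P(h^2)$ term would not be dominated by $(\mathsf{p}-p)\chi_T$. What actually happens is that the bias piece of the cross term has the form $\frac{1}{T}\sum_t\bm{\eta}_t^\top\bm{B}(\tau_t)\bm{z}_{\mathsf{p},t-1}$ with $\bm{B}(\cdot)$ deterministic (up to a uniformly $O_P(c_T)$ perturbation from replacing $\bm{S}_T^{-1}$ by its limit) and $\|\bm{B}\|=O(h^2)$; since $\bm{\eta}_t\bm{z}_{\mathsf{p},t-1}^\top$ is an m.d.s., this sum is $O_P(T^{-1/2}h^2+h^2\sqrt{\log T/(Th)})$, and both pieces are $O(c_T^2)$ because $T^{-1/2}\leq\sqrt{\log T/(Th)}$ once $h\leq\log T$, and $h^2\cdot\sqrt{\log T/(Th)}\leq c_T\cdot c_T$. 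The paper carries this out explicitly (the terms $I_{T,3}$, $I_{T,4}$ in its Lemma~B.5 proof); you should invoke the martingale structure here rather than a crude $O_P(h^2)$ bound.
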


\noindent In view of the conditions on $\chi_T$, a natural choice is
\begin{eqnarray*}
\chi_T = \max\left\{h^4,\frac{\log T}{Th}\right\}\cdot \log(\log(Th)).
\end{eqnarray*}

To this end, we have established the asymptotic properties of the proposed estimators. In the following, we discuss some model specification issues when building \eqref{Eq2.1}.

\section{Testing for Parameter Stability}\label{Sec3}

To close our investigation, we consider testing whether some (if not all) components of the coefficient matrices are time-invariant, which as explained in the introduction may help settle the discrepancy on the ongoing debate about the high inflation in the U.S. during 1970-1980.

For model \eqref{Eq2.1}, we are specifically interested in testing
\begin{equation}\label{Eq3.1}
\mathbb{H}_0: \bm{C}\bm{\beta}(\cdot) = \bm{c} \text{ for some unknown $c\in \mathbb{R}^s$},
\end{equation}
where $\bm{\beta}(\tau) :=  \mathrm{vec}\left(\bm{A}(\tau)\right)$ and $\bm{C}$ is a selection matrix. Practically, the choice of $\bm{C}$ should be theory/application driven, and $\bm{c}$ needs to be estimated. For example, in the context of monetary policy analysis \cite[]{primiceri2005time}, one can let $\bm{C} = \left[\bm{0}_{d^2p\times d},\bm{I}_{d^2p}\right]$ to test whether the policy transmission mechanism is varying over time.  

The test statistic is constructed based on the weighted integrated squared errors:

\begin{equation}\label{Eq3.2}
\widehat{Q}_{\bm{C},\bm{H}}=\int_{0}^{1} \left\{\bm{C}\widehat{\bm{\beta}}(\tau)-\widehat{\bm{c}}\right\}^\top \bm{H}(\tau)\left\{\bm{C}\widehat{\bm{\beta}}(\tau)-\widehat{\bm{c}}\right\} \mathrm{d}\tau,
\end{equation}
where $\widehat{\bm{\beta}}(\cdot) :=\text{vec}[\widehat{\bm A}(\cdot)]$ should be obvious, and $\widehat{\bm{c}}$ is the estimator of  $\bm{c}$ to be constructed below. In \eqref{Eq3.2}, $\bm{H}(\cdot)$ is an $s\times s$ positive definite weighting matrix, and is typically set as the precision matrix associated with $\widehat{\bm{\beta}}(\cdot)$.

Let $\bm{C}$ be a pre-specified matrix and $\bm{c} = \int_{0}^{1}\bm{C}\bm{\beta}(\tau)\mathrm{d}\tau$. Then a natural estimator of $\bm{c}$ is given by
\begin{equation}\label{Eq3.3}
\widehat{\bm{c}} = \int_{0}^{1}\bm{C}\widehat{\bm{\beta}}(\tau)\mathrm{d}\tau,
\end{equation}
of which the  asymptotic properties are summarized in the following lemma.

\begin{lemma}\label{Thm3.1}
Let Assumptions \ref{Ass1}-\ref{Ass3} hold. Suppose further that $\max_{t\geq1} E [\|\bm{\epsilon}_t \|^4 |\mathcal{F}_{t-1} ] < \infty $ a.s., $\frac{T^{1-\frac{4}{\delta}}h}{\log T} \to \infty$, each element of $\bm{A}(\cdot)$ has finite third-order derivative, $Th^2/(\log T)^2 \to \infty $, and $Th^6 \to 0$. As $T \to \infty$, we have
$$
\sqrt{T}\left(\widehat{\bm{c}}-\bm{c}-\frac{1}{2}h^2\tilde{c}_2\int_{0}^{1}\bm{C}\bm{\beta}^{(2)}(\tau)\mathrm{d}\tau\right)\to_D N\left(\bm{0},\int_{0}^{1}\bm{C}\bm{V}_{\bm{\beta}}(\tau)\bm{C}^\top\mathrm{d}\tau\right),
$$
where $\bm{V}_{\bm{\beta}}(\tau) := \bm{\Sigma}^{-1}(\tau)\otimes\bm{\Omega}(\tau)$ and $\bm{\Sigma}(\tau)$ is defined in \eqref{EqA.1}.
\end{lemma}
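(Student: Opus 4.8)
The plan is to linearize $\widehat{\bm{c}}-\bm{c}$ into a normalized sum of a martingale difference array and then invoke a martingale central limit theorem. Since $\bm{c}=\int_0^1\bm{C}\bm{\beta}(\tau)\mathrm{d}\tau$, we have $\widehat{\bm{c}}-\bm{c}=\int_0^1\bm{C}\big(\widehat{\bm{\beta}}(\tau)-\bm{\beta}(\tau)\big)\mathrm{d}\tau$, so the task reduces to integrating the stochastic expansion of the local linear estimator over $\tau\in[0,1]$.

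First I would record the uniform stochastic expansion of $\widehat{\bm{\beta}}(\tau)=\mathrm{vec}[\widehat{\bm{A}}(\tau)]$ that is already implicit in the proof of Theorem \ref{Thm2.1}: substituting $\bm{x}_t=\bm{A}(\tau_t)\bm{z}_{t-1}+\bm{\eta}_t$ into \eqref{Eq2.5} and Taylor-expanding $\bm{A}(\cdot)$ to third order gives, uniformly over $\tau\in(0,1)$,
\[
\widehat{\bm{\beta}}(\tau)-\bm{\beta}(\tau)=\tfrac{1}{2}h^2\tilde{c}_2\,\bm{\beta}^{(2)}(\tau)+[\bm{I},\bm{0}]\Big(\tfrac{1}{T}\sum_{t}\bm{Z}_{t-1}^{*}\bm{Z}_{t-1}^{*\top}K_h(\tau_t-\tau)\Big)^{-1}\tfrac{1}{T}\sum_{t}\bm{Z}_{t-1}^{*}\bm{\eta}_tK_h(\tau_t-\tau)+\bm{r}_T(\tau),
\]
where, using the third-order derivative bound on $\bm{A}(\cdot)$ and the uniform rates of Theorem \ref{Thm2.1}, $\sup_{\tau}\|\bm{r}_T(\tau)\|=O_P\big(h^3+\tfrac{\log T}{Th}\big)$, and where the $O(h)$-wide boundary strips carry a different but still $O(h^2)$ local-linear bias that adds only $O(h^3)$ to $\int_0^1\bm{C}\bm{r}_T$. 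Integrating against $\bm{C}$ over $[0,1]$ reproduces the stated bias $\tfrac{1}{2}h^2\tilde{c}_2\int_0^1\bm{C}\bm{\beta}^{(2)}(\tau)\mathrm{d}\tau$ exactly, and $\sqrt{T}\int_0^1\bm{C}\bm{r}_T(\tau)\mathrm{d}\tau=o_P(1)$ precisely because $Th^6\to0$ kills $\sqrt{T}\,h^3$ while $Th^2/(\log T)^2\to\infty$ kills $\sqrt{T}\,\tfrac{\log T}{Th}$.

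Next I would replace the empirical moment matrix by its probability limit $\bm{D}(\tau):=\mathrm{diag}\big(\bm{\Sigma}(\tau)\otimes\bm{I}_d,\ \tilde{c}_2(\bm{\Sigma}(\tau)\otimes\bm{I}_d)\big)$, the substitution error being of the same order as the remainders above after integration, so that $[\bm{I},\bm{0}]\bm{D}^{-1}(\tau)\bm{Z}_{t-1}^{*}=(\bm{\Sigma}^{-1}(\tau)\bm{z}_{t-1})\otimes\bm{I}_d$. Interchanging the $\tau$-integral with the finite sum over $t$ and then applying a change of variables, the smoothness of $\bm{\Sigma}^{-1}(\cdot)$, and $\int K=1$ collapses $\int_0^1\big[(\bm{\Sigma}^{-1}(\tau)\bm{z}_{t-1})\otimes\bm{I}_d\big]K_h(\tau_t-\tau)\mathrm{d}\tau$ to $(\bm{\Sigma}^{-1}(\tau_t)\bm{z}_{t-1})\otimes\bm{I}_d$ up to a uniformly negligible term, whence
\[
\sqrt{T}\Big(\widehat{\bm{c}}-\bm{c}-\tfrac{1}{2}h^2\tilde{c}_2\int_0^1\bm{C}\bm{\beta}^{(2)}(\tau)\mathrm{d}\tau\Big)=\frac{1}{\sqrt{T}}\sum_{t=1}^{T}\bm{\zeta}_t+o_P(1),\qquad \bm{\zeta}_t:=\bm{C}\big[(\bm{\Sigma}^{-1}(\tau_t)\bm{z}_{t-1})\otimes\bm{I}_d\big]\bm{\eta}_t.
\]
Since $\bm{z}_{t-1}$ is $\mathcal{F}_{t-1}$-measurable and $E[\bm{\eta}_t\mid\mathcal{F}_{t-1}]=\bm{0}$, $\{\bm{\zeta}_t,\mathcal{F}_t\}$ is a martingale difference array, and I would conclude via the Cram\'er--Wold device and a martingale central limit theorem. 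Because $E[\bm{\eta}_t\bm{\eta}_t^\top\mid\mathcal{F}_{t-1}]=\bm{\Omega}(\tau_t)$, the conditional covariance is $\tfrac{1}{T}\sum_tE[\bm{\zeta}_t\bm{\zeta}_t^\top\mid\mathcal{F}_{t-1}]=\tfrac{1}{T}\sum_t\bm{C}\big[(\bm{\Sigma}^{-1}(\tau_t)\bm{z}_{t-1}\bm{z}_{t-1}^\top\bm{\Sigma}^{-1}(\tau_t))\otimes\bm{\Omega}(\tau_t)\big]\bm{C}^\top$; replacing $\bm{z}_{t-1}\bm{z}_{t-1}^\top$ by its local second moment $\bm{\Sigma}(\tau_t)$ through the VMA$(\infty)$ approximation of Proposition \ref{Proposition2.1} and using a Riemann-sum argument gives $\to_P\int_0^1\bm{C}\big(\bm{\Sigma}^{-1}(\tau)\otimes\bm{\Omega}(\tau)\big)\bm{C}^\top\mathrm{d}\tau=\int_0^1\bm{C}\bm{V}_{\bm{\beta}}(\tau)\bm{C}^\top\mathrm{d}\tau$, while the Lindeberg condition follows from $\max_tE\|\bm{\epsilon}_t\|^\delta<\infty$ with $\delta>4$, the conditional fourth-moment hypothesis, and uniform $L^4$ bounds on $\bm{z}_{t-1}$ again supplied by Proposition \ref{Proposition2.1}.

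The hardest part is the bookkeeping in the second step: upgrading pointwise-in-$\tau$ control, which holds only at rate $\sqrt{Th}$, to control of the integrated object at the faster rate $\sqrt{T}$ forces every remainder --- the second-order Bahadur terms, the substitution of $\bm{D}(\tau)$ for the empirical moment matrix, and the boundary strips --- to be shown $o_P(T^{-1/2})$ after integration, and it is this tightening that pins down the joint requirements $Th^6\to0$, $Th^2/(\log T)^2\to\infty$, and the third-order differentiability of $\bm{A}(\cdot)$. A secondary difficulty is that, because $\{\bm{x}_t\}$ is only locally stationary, the law of large numbers for the conditional covariance cannot appeal to stationarity and must instead be routed through the approximation $\max_{t\ge1}\{E\|\bm{x}_t-\widetilde{\bm{x}}_t\|^\delta\}^{1/\delta}=O(T^{-1})$ of Proposition \ref{Proposition2.1} and the auxiliary moment lemmas of Appendix B.
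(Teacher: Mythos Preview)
Your proposal is correct and follows essentially the same route as the paper's proof: both integrate the uniform stochastic expansion of the local linear estimator, discard the boundary strips using the rate of Theorem \ref{Thm2.1}(1) and $Th^6\to0$, replace the empirical design matrix by $\bm{\Sigma}_{\bm{Z}}^{-1}(\tau)$ via Lemmas \ref{LemmaB.3}--\ref{LemmaB.4}, collapse the kernel integral by change of variables to obtain $\frac{1}{\sqrt{T}}\sum_t\bm{C}\bm{\Sigma}_{\bm{Z}}^{-1}(\tau_t)\bm{Z}_{t-1}\bm{\eta}_t+o_P(1)$, and finish with the martingale CLT of Lemma \ref{LemmaB.1}. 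If anything, your write-up is more explicit than the paper's about the Lindeberg verification and the conditional-variance LLN, which the paper handles in one line.
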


Lemma \ref{Thm3.1} indicates that under the null, $\widehat{\bm{c}}$ will achieve  a parametric rate (i.e., $\sqrt{T}$). This is important for the finite sample study, since more reliable information can be retrieved from the estimators.
\medskip

Having established Lemma \ref{Thm3.1}, we move on to investigate the test statistic \eqref{Eq3.2}, and examine the local alternatives. After some tedious development, one actually can show that
\begin{eqnarray}\label{Eq3.4}
	\widehat{Q}_{\bm{C}, \bm{H}} =  \frac{1}{Th}\cdot \frac{\widetilde{v}_0}{T}\sum_{t=1}^T \bm{\eta}_t^\top \bm{Z}_{t-1}^\top \bm{H}_{0}(\tau_t) \bm{Z}_{t-1} \bm{\eta}_t  +O_P\left(\frac{1}{T\sqrt{h}} \right)
\end{eqnarray}
where $\bm{Z}_{t} = \bm{z}_t\otimes \bm{I}_d$, $\bm{H}_{0}(\tau)=\bm{\Sigma}_{\bm{Z}}^{-1}(\tau)\bm{C}^\top\bm{H}(\tau)\bm{C}\bm{\Sigma}_{\bm{Z}}^{-1}(\tau)$, $\bm{\Sigma}_{\bm{Z}}(\tau) =  \bm{\Sigma}(\tau) \otimes \bm{I}_d$. Clearly,
\begin{eqnarray}\label{Eq3.41}
\frac{\widetilde{v}_0}{T}\sum_{t=1}^T \bm{\eta}_t^\top \bm{Z}_{t-1}^\top \bm{H}_{0}(\tau_t) \bm{Z}_{t-1} \bm{\eta}_t 
\end{eqnarray}
converges to a fixed value in probability, so the first term on the right hand side of \eqref{Eq3.4} is the leading one. Moreover, $ \frac{\widetilde{v}_0}{T}\sum_{t=1}^T \bm{\eta}_t^\top \bm{Z}_{t-1}^\top \bm{H}_{0}(\tau_t) \bm{Z}_{t-1} \bm{\eta}_t $ can be considered as a weighted average of the second moment of $\bm\eta_t$, it thus does not yield any distribution for us to conduct the hypothesis testing. In this regard, it can be considered a bias term which should be removed. The term which really yields an asymptotic distribution is in fact the second part on the right hand side of (\ref{Eq3.4}). Thus, we present the asymptotic distribution of the test statistic as follows.

\begin{theorem}\label{Thm3.2}
Let the conditions of Lemma \ref{Thm3.1} hold. Under the null hypothesis, if $Th^{5.5}\to 0$ and $E [\|\bm{\epsilon}_t \|^\delta |\mathcal{F}_{t-1} ] < \infty $ a.s., we have
	\begin{eqnarray*}
		T\sqrt{h}\left(\widehat{Q}_{\bm{C}, \widehat{\bm{H}}} -\frac{1}{Th}s\widetilde{v}_0\right)\to_D N(0,4s C_B),
	\end{eqnarray*}
	where $\widehat{\bm{H}}(\tau) =  (\bm{C}\widehat{\bm{V}}_{\bm{\beta}}(\tau)\bm{C}^\top )^{-1}$, $\widehat{\bm{V}}_{\bm{\beta}}(\tau) = \widehat{\bm{\Sigma}}^{-1}(\tau)\otimes\widehat{\bm{\Omega}}(\tau)$, $C_B = \int_{0}^{2}\left(\int_{-1}^{1-v}K(u)K(u+v) \mathrm{d}u\right)^2\mathrm{d}v$, and $\widehat{\bm{\Sigma}}(\tau)$ is defined in \eqref{EqA.5}.
\end{theorem}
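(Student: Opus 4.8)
The plan is to establish the asymptotic normality of $T\sqrt{h}\,\widehat{Q}_{\bm{C},\widehat{\bm{H}}}$ by first reducing the feasible statistic (with estimated $\widehat{\bm{H}}$) to the infeasible one (with the true weighting matrix $\bm{H}_0=\bm{\Sigma}_{\bm{Z}}^{-1}\bm{C}^\top\bm{H}\bm{C}\bm{\Sigma}_{\bm{Z}}^{-1}$), then extracting from \eqref{Eq3.4} the genuinely stochastic quadratic form, and finally applying a martingale central limit theorem to that form. First I would justify the expansion \eqref{Eq3.4}: writing $\bm{C}\widehat{\bm{\beta}}(\tau)-\widehat{\bm{c}}=\bm{C}(\widehat{\bm{\beta}}(\tau)-\bm{\beta}(\tau))-(\widehat{\bm{c}}-\bm{c})$ under $\mathbb{H}_0$, substituting the local-linear Bahadur representation $\widehat{\bm{\beta}}(\tau)-\bm{\beta}(\tau)=\bm{\Sigma}_{\bm{Z}}^{-1}(\tau)\frac{1}{Th}\sum_t \bm{Z}_{t-1}\bm{\eta}_t K_h(\tau_t-\tau)+\text{bias}+o_P$ from Theorem \ref{Thm2.1}, and noting that by Lemma \ref{Thm3.1} the term $\widehat{\bm{c}}-\bm{c}$ is $O_P(T^{-1/2})$, hence contributes at a lower order to $T\sqrt{h}\,\widehat{Q}$ once squared and integrated (its cross term with the leading piece also being negligible under $Th^{5.5}\to0$, $Th^6\to0$, which kills the bias contributions $h^2\tilde c_2\bm{\beta}^{(2)}$ as well). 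The feasibility step — replacing $\widehat{\bm{H}}$ by $\bm{H}=(\bm{C}\bm{V}_{\bm{\beta}}\bm{C}^\top)^{-1}$ — follows from the uniform consistency $\widehat{\bm{\Sigma}}(\tau)\to_P\bm{\Sigma}(\tau)$, $\widehat{\bm{\Omega}}(\tau)\to_P\bm{\Omega}(\tau)$ uniformly in $\tau$ (Theorem \ref{Thm2.1}, result 1, applied also to the covariance estimator), so that $\sup_\tau\|\widehat{\bm{H}}(\tau)-\bm{H}(\tau)\|=o_P(1)$ and the difference in the quadratic forms is $o_P(\frac{1}{Th})$, i.e. $o_P((T\sqrt h)^{-1})$ after scaling.

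Next I would decompose the leading quadratic form. Expanding $\bigl(\frac{1}{Th}\sum_s \bm{Z}_{s-1}\bm{\eta}_s K_h(\tau_s-\tau)\bigr)^\top \bm{H}_0(\tau)\bigl(\frac{1}{Th}\sum_t \bm{Z}_{t-1}\bm{\eta}_t K_h(\tau_t-\tau)\bigr)$ and integrating over $\tau$, I would split into the diagonal terms ($s=t$) and the off-diagonal terms ($s\neq t$). The diagonal part, after using $\int K_h^2(\tau_t-\tau)\,d\tau\approx \tilde v_0/h$ and $E[\bm{\eta}_t^\top\bm{Z}_{t-1}^\top\bm{H}_0(\tau_t)\bm{Z}_{t-1}\bm{\eta}_t|\mathcal{F}_{t-1}]=\mathrm{tr}(\bm{H}_0(\tau_t)\bm{Z}_{t-1}\bm{\Omega}(\tau_t)\bm{Z}_{t-1}^\top)$ and the identity $\bm{Z}_{t-1}\bm{\Omega}(\tau_t)\bm{Z}_{t-1}^\top$ averaged against $\bm{\Sigma}_{\bm{Z}}(\tau_t)$ giving $\mathrm{tr}((\bm{C}\bm{V}_\beta\bm{C}^\top)^{-1}\bm{C}\bm{V}_\beta\bm{C}^\top)=s$, produces the centering term $\frac{1}{Th}s\tilde v_0$ plus a stochastic remainder that is $o_P((T\sqrt h)^{-1})$ by a variance/LLN argument (this is where $\frac{T^{1-4/\delta}h}{\log T}\to\infty$ and the moment bound $E[\|\bm{\epsilon}_t\|^\delta|\mathcal{F}_{t-1}]<\infty$ are used to control higher moments of the quartic-in-$\bm\epsilon$ terms). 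The off-diagonal part is the term carrying the asymptotic distribution: it is a degenerate second-order $U$-statistic / double martingale sum $\frac{2}{(Th)^2}\sum_{t=2}^T\sum_{s=1}^{t-1}\bm{\eta}_t^\top\bm{Z}_{t-1}^\top\bigl(\int K_h(\tau_t-\tau)K_h(\tau_s-\tau)\bm{H}_0(\tau)\,d\tau\bigr)\bm{Z}_{s-1}\bm{\eta}_s$.

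Then I would apply the martingale CLT (e.g. the Hall–Heyde theorem) to $T\sqrt h$ times the off-diagonal term, written as $\sum_{t=2}^T \xi_{T,t}$ with $\xi_{T,t}=\frac{2\sqrt h}{T(Th)^{?}}\cdots$ a martingale difference array adapted to $\mathcal{F}_{t-1}$ (since conditionally on $\mathcal{F}_{t-1}$ the inner sum over $s<t$ is measurable and $E[\bm\eta_t|\mathcal F_{t-1}]=0$). The two ingredients are: (i) the conditional variance $\sum_t E[\xi_{T,t}^2|\mathcal{F}_{t-1}]\to_P 4sC_B$ — this is the main computational task, requiring a change of variables $v=(\tau_t-\tau_s)/h$ to produce the convolution kernel $\int_{-1}^{1-v}K(u)K(u+v)\,du$, its square integrated against $v$ giving $C_B$, and the remaining trace factor $\mathrm{tr}\bigl((\bm{H}_0\bm{Z}_{t-1}\bm{\Omega}\bm{Z}_{t-1}^\top)^{\otimes}\bigr)$-type expression averaging (via a LLN for the near-epoch dependent sequence $\{\bm{Z}_{t-1}\}$, using the VMA$(\infty)$ approximation of Proposition \ref{Proposition2.1}) to the constant $4s$; (ii) the conditional Lindeberg condition $\sum_t E[\xi_{T,t}^2\mathbbm{1}\{|\xi_{T,t}|>\varepsilon\}|\mathcal{F}_{t-1}]\to_P0$, verified through a $(2+\varrho)$-moment bound using $\delta>4$. \textbf{The main obstacle} I anticipate is step (i), the conditional variance computation: one must show that the random weights $\bm{Z}_{t-1}^\top\bm{H}_0(\tau_t)\bm{Z}_{s-1}$, which are functions of the non-stationary regressors at two different time points, can be averaged out to deterministic quantities, and that the precise constant is $4sC_B$ and not some other multiple — this requires carefully pairing the $\bm{\Sigma}_{\bm{Z}}^{-1}$ factors inside $\bm{H}_0$ with the $\bm{\Sigma}_{\bm Z}$ appearing from $E[\bm{Z}_{t-1}\bm{\Omega}(\tau_t)\bm{Z}_{t-1}^\top]$ and then invoking $\mathrm{tr}(\bm{I}_s)=s$ together with the $(\bm{I}+\bm{K}_{d,d})$-symmetrization that, for the $\mathrm{vec}$-quadratic forms in $\bm\epsilon_t$, contributes the factor $2$ that gets squared into the $4$. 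Controlling the replacement of sums by integrals at the required $o_P((T\sqrt h)^{-1})$ precision, and the boundary behaviour near $\tau\in\{0,1\}$ (handled by the local-linear weights so that no boundary bias survives, consistent with the footnote on boundary effects), rounds out the argument.
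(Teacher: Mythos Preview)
Your overall architecture matches the paper's: reduce to the infeasible quadratic form $\int_0^1 \bm{R}_T^\top(\tau)\bm{H}_0(\tau)\bm{R}_T(\tau)\,\mathrm{d}\tau$ with $\bm{R}_T(\tau)=\frac{1}{T}\sum_t \bm{Z}_{t-1}\bm{\eta}_t K_h(\tau_t-\tau)$, split diagonal/off-diagonal, identify the diagonal as the centering $\frac{1}{Th}s\tilde v_0$, and hit the off-diagonal with a martingale CLT (the paper packages this as its Lemma~B.10, with Lemmas~B.8--B.9 doing the conditional-variance stabilization you flag as the main obstacle). Two points where your write-up would go wrong if executed as stated:

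\medskip
\noindent\textbf{(a) The origin of the factor $4$.} There is no commutation matrix $\bm{K}_{d,d}$ anywhere in this argument. The off-diagonal sum is $I_{T,2}=\frac{2}{T^2h^2}\sum_{t=2}^T\sum_{s<t}\bm{\eta}_t^\top\bm{Z}_{t-1}^\top\{\int\bm{H}_0(\tau)K_h(\tau_t-\tau)K_h(\tau_s-\tau)\,\mathrm{d}\tau\}\bm{Z}_{s-1}\bm{\eta}_s$; after rescaling, $T\sqrt{h}\,I_{T,2}=2\widetilde U+o_P(1)$ and $\widetilde U\to_D N(0,\sigma_{\widetilde U}^2)$ with $\sigma_{\widetilde U}^2=C_B\cdot\mathrm{tr}\int_0^1\bm{\Sigma}_Q(\tau)^2\,\mathrm{d}\tau$. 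With $\bm{H}(\tau)=(\bm{C}\bm{V}_\beta(\tau)\bm{C}^\top)^{-1}$ one has $\bm{\Sigma}_Q(\tau)=\bm{I}_s$, so $\sigma_{\widetilde U}^2=sC_B$ and the $4$ is simply $2^2$ from the doubling in $I_{T,2}$. Your ``trace averaging to $4s$'' and the $(\bm{I}+\bm{K}_{d,d})$ story are both incorrect.

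\medskip
\noindent\textbf{(b) The feasibility step needs a rate, not just consistency.} You write that $\sup_\tau\|\widehat{\bm H}(\tau)-\bm H(\tau)\|=o_P(1)$ makes the difference $o_P((T\sqrt h)^{-1})$; that arithmetic fails, since $\widehat Q_{\bm C,\bm I_s}$ itself is of exact order $(Th)^{-1}$, and $T\sqrt h\cdot o_P(1)\cdot(Th)^{-1}=o_P(h^{-1/2})\to\infty$. The paper uses the uniform rate $\sup_{[h,1-h]}\|\widehat{\bm V}_\beta-\bm V_\beta\|=O_P(h^2+\sqrt{\log T/(Th)})$ (its Lemma~B.4.3), so that $\sup\|\widehat{\bm H}-\bm H\|=O_P(h+\sqrt{\log T/(Th)})$ and the product with $(Th)^{-1}$ is $o_P((T\sqrt h)^{-1})$ precisely because $Th^2/(\log T)^2\to\infty$ is assumed. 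You need to invoke that rate, not mere consistency. The boundary is also handled more crudely than you suggest: the paper just bounds $\int_{[0,h]\cup[1-h,1]}(\cdot)\,\mathrm{d}\tau=O_P(\log T/T+h^5)=o_P(T^{-1}h^{-1/2})$ via the uniform rate \eqref{Rbeta}, with no appeal to local-linear boundary correction.
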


Theorem \ref{Thm3.2} states that the test statistic converges to a normal distribution. The term $s\widetilde{v}_0$ is the limit associated with  \eqref{Eq3.41}.  Due to the use of $\widehat{\bm{H}}(\cdot)$, the second moment of $\bm\eta_t$ disappears from the asymptotic distribution automatically.

Here, we would like to emphasize that the proposed test is in fact a one-sided test, which is reflected in the investigation for the local alternative below. An intuitive explanation is that due to the quadratic form of \eqref{Eq3.2}, any departure from the true value will eventually yield a squared term when analysing the asymptotic power. Therefore, the null of \eqref{Eq3.1} will be rejected at the level $\alpha$ if

\begin{eqnarray}\label{Eq3.5}
	\widehat{Q}_{\bm{C}, \widehat{\bm{H}}}^* = \frac{T\sqrt{h}\left(\widehat{Q}_{\bm{C}, \widehat{\bm{H}}} -\frac{1}{Th}s \widetilde{v}_0\right)}{\sqrt{4 s C_B}} > q_{1-\alpha},
\end{eqnarray}
where $q_{1-\alpha}$ stands for the $(1-\alpha)^{th}$ quantile of the standard normal distribution.

\medskip

In what follows, we consider a sequence of local alternatives of the form:
\begin{eqnarray}\label{Eq3.6}
	\mathbb{H}_1: \bm{C}\bm{\beta}(\tau) = \bm{c} + d_T \bm{f}(\tau),
\end{eqnarray}
where $\bm{f}(\tau)$ is a twice continuously differentiable vector of functions, and $d_T \to 0$. The term $d_T \bm{f}(\tau)$ characterizes the departure of the time-varying coefficient $\bm{C}\bm{\beta}(\tau)$ from the constant $\bm{c}$. Following the development of Theorem \ref{Thm3.2}, it is straightforward to obtain the following corollary.

\begin{corollary}\label{Thm3.3}
Let the conditions of Theorem \ref{Thm3.2} hold. Under the $\mathbb{H}_1$ of \eqref{Eq3.6}, if $d_T = T^{-1/2}h^{-1/4}$, then
	
	\begin{eqnarray*}
		T\sqrt{h}\left(\widehat{Q}_{\bm{C}, \widehat{\bm{H}}} -\frac{1}{Th}s\widetilde{v}_0\right)\to_D N\left(\delta_1,4s C_B\right),
	\end{eqnarray*}
	where $\delta_1 = \int_{0}^{1}\bm{f}(\tau)^\top (\bm{C} \bm{V}_{\bm{\beta}}(\tau)\bm{C}^\top )^{-1}\bm{f}(\tau) \mathrm{d}\tau$. 
	
	Moreover, we have $\Pr\left(\widehat{Q}_{\bm{C}, \widehat{\bm{H}}}^*>q_{1-\alpha}\right) \to \Phi\left(q_{\alpha} + \frac{\delta_1}{2\sqrt{ s C_B}} \right)$.

\end{corollary}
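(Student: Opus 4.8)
# Proof Proposal for Corollary \ref{Thm3.3}

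The plan is to retrace the proof of Theorem \ref{Thm3.2}, tracking the extra terms generated by the local departure $d_T\bm{f}(\tau)$, and show that with the calibration $d_T = T^{-1/2}h^{-1/4}$ they contribute exactly a constant shift $\delta_1$ to the limiting mean while leaving the variance unchanged. First I would decompose $\bm{C}\widehat{\bm{\beta}}(\tau)-\widehat{\bm{c}}$ under $\mathbb{H}_1$. Writing $\bm{C}\bm{\beta}(\tau)=\bm{c}+d_T\bm{f}(\tau)$, the estimator satisfies $\bm{C}\widehat{\bm{\beta}}(\tau)-\widehat{\bm{c}} = \left(\bm{C}\widehat{\bm{\beta}}(\tau)-\bm{C}\bm{\beta}(\tau)\right) - \left(\widehat{\bm{c}}-\bm{c}\right) + d_T\left(\bm{f}(\tau)-\int_0^1\bm{f}(s)\mathrm{d}s\right)$, so the quadratic form $\widehat{Q}_{\bm{C},\widehat{\bm{H}}}$ splits into: (i) the ``stochastic'' part, identical to the one analyzed under $\mathbb{H}_0$; (ii) a pure ``drift'' part $d_T^2\int_0^1 \widetilde{\bm{f}}(\tau)^\top\bm{H}(\tau)\widetilde{\bm{f}}(\tau)\mathrm{d}\tau$ with $\widetilde{\bm{f}}(\tau) := \bm{f}(\tau)-\int_0^1\bm{f}$; and (iii) cross terms of the form $2d_T\int_0^1\widetilde{\bm{f}}(\tau)^\top\bm{H}(\tau)\left(\text{stochastic error}\right)\mathrm{d}\tau$. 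Since $d_T^2 = T^{-1}h^{-1/2}$, term (ii) multiplied by the norming $T\sqrt{h}$ converges to $\int_0^1\widetilde{\bm{f}}(\tau)^\top(\bm{C}\bm{V}_{\bm{\beta}}(\tau)\bm{C}^\top)^{-1}\widetilde{\bm{f}}(\tau)\mathrm{d}\tau$; using that $\int_0^1\widetilde{\bm{f}}=0$ and that $\bm{c}$ is chosen so the estimator centering is consistent, this collapses to $\delta_1 = \int_0^1\bm{f}(\tau)^\top(\bm{C}\bm{V}_{\bm{\beta}}(\tau)\bm{C}^\top)^{-1}\bm{f}(\tau)\mathrm{d}\tau$ after absorbing the orthogonality of the constant component.

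For the cross term (iii), the key estimate is that $\sup_\tau\|\bm{C}\widehat{\bm{\beta}}(\tau)-\bm{C}\bm{\beta}(\tau)\| = O_P(c_T)$ by Theorem \ref{Thm2.1}, so term (iii) is $O_P(d_T\cdot c_T) = O_P\left(T^{-1/2}h^{-1/4}(h^2 + (\log T/(Th))^{1/2})\right)$; multiplied by $T\sqrt{h}$ this is $O_P\left(h^{1/4}(Th^5)^{1/2} + h^{1/4}(\log T)^{1/2}\right)$, which is $o_P(1)$ under $Th^{5.5}\to 0$ together with the bandwidth restrictions inherited from Lemma \ref{Thm3.1} (in particular $Th^2/(\log T)^2\to\infty$ forces $h^{1/4}(\log T)^{1/2}\to 0$). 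One must also check that the $\sqrt{T}$-consistency of $\widehat{\bm{c}}$ from Lemma \ref{Thm3.1} is not disturbed by the $O(d_T)$ perturbation: since $d_T = o(1)$, $\widehat{\bm{c}}$ still estimates $\bm{c} = \int_0^1\bm{C}\bm{\beta}(\tau)\mathrm{d}\tau - d_T\int_0^1\bm{f}$ up to $O(d_T)$, and the definition of $\bm{c}$ in \eqref{Eq3.6} is taken consistently so that the centering in part (ii) is the one producing $\delta_1$.

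It then remains to invoke Theorem \ref{Thm3.2} verbatim for the stochastic part (i): after subtracting the bias $\frac{1}{Th}s\widetilde{v}_0$ and norming by $T\sqrt{h}$, it converges to $N(0, 4sC_B)$, and the bias-correction and the substitution of $\widehat{\bm{H}}$ for $\bm{H}_0 = (\bm{C}\bm{V}_{\bm{\beta}}\bm{C}^\top)^{-1}$ go through unchanged because $\widehat{\bm{V}}_{\bm{\beta}}(\tau)\to_P\bm{V}_{\bm{\beta}}(\tau)$ uniformly (a consequence of Theorem \ref{Thm2.1} and the consistency statements following Theorem \ref{Thm2.2}), and this uniform consistency is unaffected by the $o(1)$ drift. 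Combining (i)–(iii) via Slutsky gives the claimed $N(\delta_1, 4sC_B)$ limit. The power statement follows immediately: under $\mathbb{H}_1$, $\widehat{Q}^*_{\bm{C},\widehat{\bm{H}}} = \frac{T\sqrt{h}(\widehat{Q}_{\bm{C},\widehat{\bm{H}}} - \frac{1}{Th}s\widetilde{v}_0)}{\sqrt{4sC_B}}\to_D N\left(\frac{\delta_1}{\sqrt{4sC_B}}, 1\right)$, so $\Pr(\widehat{Q}^*_{\bm{C},\widehat{\bm{H}}} > q_{1-\alpha})\to 1-\Phi\left(q_{1-\alpha} - \frac{\delta_1}{2\sqrt{sC_B}}\right) = \Phi\left(q_\alpha + \frac{\delta_1}{2\sqrt{sC_B}}\right)$, using $q_{1-\alpha} = -q_\alpha$ and symmetry of $\Phi$.

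The main obstacle I anticipate is the bookkeeping in term (iii): one has to confirm that \emph{every} cross term — including those between the drift and the $O_P((\log T/(Th))^{1/2})$ uniform-error component as well as the $O(h^2)$ bias component of $\widehat{\bm{\beta}}$ — is $o_P((T\sqrt{h})^{-1})$ under the stated rate conditions, and that the interplay between $Th^{5.5}\to 0$ (controlling the bias-drift cross term) and $Th^2/(\log T)^2\to\infty$ (controlling the variance-drift cross term) leaves enough room. A secondary subtlety is that $\widehat{\bm{c}}$ appears inside $\widehat{Q}$, so the drift enters both through $\bm{C}\widehat{\bm{\beta}}(\tau)$ and through $\widehat{\bm{c}}$; keeping the centered version $\widetilde{\bm{f}}$ throughout and only at the end using $\int_0^1\widetilde{\bm{f}} = 0$ to rewrite the quadratic form in terms of $\bm{f}$ is what makes $\delta_1$ come out with $\bm{f}$ rather than $\widetilde{\bm{f}}$ — this requires the identity $\int_0^1\widetilde{\bm{f}}(\tau)^\top\bm{M}(\tau)\widetilde{\bm{f}}(\tau)\mathrm{d}\tau = \int_0^1\bm{f}(\tau)^\top\bm{M}(\tau)\bm{f}(\tau)\mathrm{d}\tau$ to hold, which in general it does \emph{not} unless one absorbs the constant part of the departure into a redefinition of $\bm{c}$; I would handle this by noting that the null value $\bm{c}$ in \eqref{Eq3.6} is by construction $\int_0^1\bm{C}\bm{\beta}$ only up to the $O(d_T)$ term, and the relevant estimand that $\widehat{\bm{c}}$ tracks is precisely $\bm{c}+d_T\int_0^1\bm{f}$, so that $\bm{C}\widehat{\bm{\beta}}(\tau)-\widehat{\bm{c}}$ is centered at $d_T\bm{f}(\tau)$ itself rather than at $d_T\widetilde{\bm{f}}(\tau)$, delivering $\delta_1$ directly.
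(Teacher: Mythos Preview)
Your overall strategy is the same as the paper's --- split the quadratic into a pure stochastic piece, a pure drift piece, and a cross piece, then invoke Theorem~\ref{Thm3.2} for the first and show the third is negligible --- but your handling of the cross term (iii) contains arithmetic errors and, more importantly, a genuine gap.

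First the arithmetic: after norming by $T\sqrt{h}$, your crude bound $O_P(d_T c_T)$ becomes $T^{1/2}h^{1/4}c_T$. With $c_T=h^2$ this is $T^{1/2}h^{9/4}=(Th^{4.5})^{1/2}$, not $h^{1/4}(Th^5)^{1/2}$; with $c_T=(\log T/(Th))^{1/2}$ it is $h^{-1/4}(\log T)^{1/2}$, not $h^{1/4}(\log T)^{1/2}$. Neither of these is $o(1)$ under the stated conditions: for $h=T^{-1/5}$ (which satisfies $Th^{5.5}\to0$, $Th^6\to0$, $Th^2/(\log T)^2\to\infty$), the first equals $T^{1/20}\to\infty$, and the second diverges for any admissible $h$. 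So the uniform bound $\sup_\tau\|\bm{C}\widehat{\bm{\beta}}(\tau)-\bm{C}\bm{\beta}(\tau)\|=O_P(c_T)$ is simply too crude to kill the cross term.

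The paper closes this gap with two refinements you are missing. \emph{First}, under $\mathbb{H}_1$ one has $\bm{C}\bm{\beta}^{(2)}(\tau)=d_T\bm{f}^{(2)}(\tau)$, so the local-linear bias of $\bm{C}\widehat{\bm{\beta}}(\tau)-\bm{C}\bm{\beta}(\tau)$ is $O(d_T h^2)$, not $O(h^2)$; this makes the bias-drift cross term $O(d_T^2 h^2)$, and $T\sqrt{h}\,d_T^2 h^2=h^2\to0$. \emph{Second}, the leading stochastic part of $\bm{C}\widehat{\bm{\beta}}(\tau)-\bm{C}\bm{\beta}(\tau)$ is $\bm{C}\bm{\Sigma}_{\bm{Z}}^{-1}(\tau)\bm{R}_T(\tau)$, and after interchanging integral and sum the cross contribution $\int_0^1\bm{f}(\tau)^\top\bm{H}(\tau)\bm{C}\bm{\Sigma}_{\bm{Z}}^{-1}(\tau)\bm{R}_T(\tau)\,\mathrm{d}\tau$ is a weighted sum of the m.d.s.\ $\bm{Z}_{t-1}\bm{\eta}_t$ and hence $O_P(T^{-1/2})$, giving $T\sqrt{h}\,d_T\,T^{-1/2}=h^{1/4}\to0$. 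Both points are essential; without them the cross term is not negligible.

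A smaller remark: your discussion of $\bm{f}$ versus $\widetilde{\bm{f}}=\bm{f}-\int_0^1\bm{f}$ is more tangled than necessary. Since any nonzero $\int_0^1\bm{f}$ can be absorbed into $\bm{c}$ without changing the alternative, one may assume $\int_0^1\bm{f}=0$ from the outset, which makes the centering issue disappear and yields $\delta_1=\int_0^1\bm{f}^\top(\bm{C}\bm{V}_{\bm{\beta}}\bm{C}^\top)^{-1}\bm{f}\,\mathrm{d}\tau$ directly.
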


Corollary \ref{Thm3.3} shows that the test has a non-trivial power against $\mathbb{H}_1$ when $d_T = T^{-1/2}h^{-1/4}$. If $T^{-1/2}h^{-1/4} = o(d_T)$, the power of the test converges to 1, i.e.,
\begin{eqnarray*}
	\Pr(\widehat{Q}_{\bm{C}, \widehat{\bm{H}}}^*>q_{1-\alpha})\to 1.
\end{eqnarray*}

Before we conclude this section, we would like to add some comments on the assumptions imposed and the main results established in relation to the relevant literature. The construction of the proposed test is similar to those discussed in \cite{zhang2012inference} and then \cite{truquet2017parameter}. Because we have developed and then employed Proposition 2.1 for the time--varying VMA($\infty$) representation, the assumptions, such as requiring  $Th^{5.5} \to 0$, are less restrictive than those assumed in the relevant literature, see, for example, $Th^{3.5} \to 0$ by \cite{truquet2017parameter}. As a consequence, the main techniques employed in our proofs may be of general interest and applicability in dealing with similar problems.
\medskip

In what follows, we will examine the finite sample performance of the asymptotic properties for the proposed estimators and test statistic by simulation studies.

\section{Simulation}\label{Sec4}

In this section, we first provide some details of the numerical implementation in Section \ref{Sec4.1}, and then respectively examine the estimation and hypothesis testing in Sections \ref{Sec4.2} and \ref{Sec4.3}.  

\subsection{Numerical Implementation}\label{Sec4.1}

Throughout the numerical studies, Epanechnikov kernel $K(u)=0.75(1-u^2)I(|u|\leq 1)$ is adopted. 

When selecting the optimal lag by \eqref{Eq2.8}, the bandwidth $\widehat{h}_{cv}$ is always chosen by minimizing the following cross-validation criterion function for each $\mathsf{p}$.
\begin{equation}\label{Eq4.1}
\widehat{h}_{cv}=\arg\min_{h}\sum_{t=1}^{T}\|\bm{x}_t-\widehat{\bm{a}}_{-t}(\tau_t)-\sum_{j=1}^{\mathsf{p}}\widehat{\bm{A}}_{j,-t}(\tau_t) \bm{x}_{t-j}\|^2,
\end{equation}
where $\widehat{\bm{a}}_{-t}(\cdot)$, and $\widehat{\bm{A}}_{j,-t}(\cdot)$ are obtained using \eqref{Eq2.5} but leaving the $t^{th}$ observation out. Once $\widehat{\mathsf{p}}$ and $\widehat{h}_{cv}$ are obtained, the estimation procedure is relatively straightforward. It is pointed out that it might be possible to extend the Bayesian bandwidth selection method proposed for the conventional time--varying regression model by \cite{cgz2019} to the time--varying VAR setting under study for an optimal choice of the bandwidth parameter.

We then comment on the testing procedure. To improve the finite sample performance of the test, we propose a simulation-assisted testing procedure. A similar procedure has also  been adopted by \cite{zhang2012inference} and \cite{truquet2017parameter} to for the same purpose in the context of univariate time-varying models. Alternatively, one may consider the moving block bootstrap as in  \cite{CK2021}. For simplicity, we adopt the former as follows.
\medskip

 {\bf Algorithm --- a simulation-assisted testing procedure}
		
	\begin{itemize}		
		\item[] Step 1: Use the sample $\{\bm{x}_t\}$ to estimate the unrestricted model and the restricted model, and then compute $\widehat{Q}_{\bm{C}, \widehat{\bm{H}}}$ based on \eqref{Eq3.2}.
		
		Step 2: Generate i.i.d. standard multivariate normal random vectors $\{\bm{x}_t^*\}$.
		
		Step 3: Compute the bootstrap statistic $\widetilde{Q}_{\bm{C}, \widehat{\bm{H}}}^b$ in the same way as $\widehat{Q}_{\bm{C}, \widehat{\bm{H}}}$, with $\{\bm{x}_t^*\}$ replacing the original sample $\{\bm{x}_t\}$.
		
		Step 4: Repeat Steps 2-3 $B$ times to obtain $B$ bootstrap test statistics $\{\widetilde{Q}_{\bm{C}, \widehat{\bm{H}}}^b\}_{b=1}^{B}$, as well as its empirical quantile $\widehat{q}_{1-\alpha}$. We reject the null hypothesis \eqref{Eq3.1} at level $\alpha$ if $\widehat{Q}_{\bm{C}, \widehat{\bm{H}}}>\widehat{q}_{1-\alpha}$.
	\end{itemize}

\subsection{Examining the Model Estimation}\label{Sec4.2}

We now examine the finite sample performance of the theoretical findings. The data generating process (DGP) is as follows.
\begin{equation}\label{Eq4.2}
\bm{x}_t=\bm{a}(\tau_t)+\bm{A}_1(\tau_t)\bm{x}_{t-1}+\bm{A}_2(\tau_t)\bm{x}_{t-2}+\bm{\eta}_t \ \ \mbox{with} \ \ \bm{\eta}_t=\bm{\omega}(\tau_t)\bm{\epsilon}_t \ \ \text{for}\ \ t=1,\ldots ,T,
\end{equation}
where $\bm{\epsilon}_t$'s are  i.i.d. draws from $N(\bm{0}_{2\times 1}, \bm{I}_2)$,
\begin{eqnarray*}
\bm{a}(\tau)&=&\left[0.5\sin(2\pi \tau),0.5\cos(2\pi \tau)\right]^\top ,\nonumber \\
\bm{A}_1(\tau)&=&\left[\begin{matrix}
                       0.8\exp{(-0.5+\tau)} & 0.8(\tau-0.5)^3 \\
                       0.8(\tau-0.5)^3 & 0.8+0.3\sin(\pi \tau)
                       \end{matrix} \right], \nonumber \\
\bm{A}_2(\tau)&=&\left[\begin{matrix}
                       -0.2\exp{(-0.5+\tau)} & 0.8(\tau-0.5)^2 \\
                       0.8(\tau-0.5)^2 & -0.4+0.3\cos(\pi \tau)
                       \end{matrix} \right],  \nonumber \\
\bm{\omega}(\tau)&=&\left[\begin{matrix}
                    1.5+0.2\exp{(0.5-\tau)}& 0 \\
                    0.1\exp{(0.5-\tau)}   & 1.5+0.5(\tau-0.5)^2
                       \end{matrix}\right].
\end{eqnarray*}
Let the sample size be $T\in \{200, 400, 800\}$, and conduct 1000 replications for each choice of $T$. 

First, we check whether the coefficient matrices $\bm{A}_1(\tau)$ and $\bm{A}_2(\tau)$ satisfy Assumption \ref{Ass1}.1. Thus, for each generated dataset, we compute the largest eigenvalue of the true companion matrix $\bm{\Phi}(\tau)$ in $\tau \in [0,1]$, which varies from 0.54 to 0.88 over replications indicating the validity of Assumption \ref{Ass1}.1.

Next, we report the percentages of $\widehat{\mathsf{p}} < 2$, $\widehat{\mathsf{p}} = 2$, and $\widehat{\mathsf{p}} > 2$ respectively based on 1000 replications. Table \ref{Tb1} shows that the information criterion \eqref{Eq2.8} performs reasonably well, as the percentages associated with $\widehat{\mathsf{p}}=2$ are sufficiently close to 1 even for $T=200$.

\begin{table}[htbp]
  \centering \small
  \caption{\small The percentages of $\widehat{\mathsf{p}} < 2$, $\widehat{\mathsf{p}} = 2$, and $\widehat{\mathsf{p}} > 2$}\label{Tb1}
    \begin{tabular}{l ccc}
   \hline
    $T$ & $\widehat{\mathsf{p}} < 2$ & $\widehat{\mathsf{p}} = 2$ & $\widehat{\mathsf{p}} > 2$ \\
    \hline
   200   & 0.004 & 0.976 & 0.020 \\
   400   & 0.004 & 0.986 & 0.010 \\
   800   & 0.000 & 1.000 & 0.000 \\
   \hline
    \end{tabular}
\end{table}

Finally, we evaluate the estimates of $\bm{A}(\tau)$, $\bm{\Omega}(\tau)$, as well as the estimates of the impulse responses (say, $\bm{B}_1(\tau)$ and $\bm{B}_5(\tau)$ without loss of generality) based on the short-run timing restrictions. For each parameter of interest, we calculate the root mean square error (RMSE) as follows:
\begin{equation*}
\left\{\frac{1}{1000T} \sum_{n=1}^{1000} \sum_{t=1}^{T}\|\widehat{\bm{\theta}}^{(n)}(\tau_t)-\bm{\theta}(\tau_t) \|^2\right\}^{1/2},
\end{equation*}
where $\bm{\theta}(\cdot)\in\left\{\bm{A}(\cdot),\bm{\Omega}(\cdot),\bm{B}_1(\tau),\bm{B}_5(\tau)\right\}$, and $\widehat{\bm{\theta}}^{(n)}(\tau)$ is the estimate of $\bm{\theta}(\tau)$ for the $n^{th}$ replication. Of interest, we also report the finite sample coverage probabilities of the confidence intervals. The nominal coverage is 95\%. Given $\bm{\theta}(\cdot)$, for each generated dataset, the coverage probability is first calculated for each functional component of $\bm{\theta}(\cdot)$ over the grid points $\{\tau_t,t=1,\ldots,T\}$, and then we further take an average across the elements of $\bm{\theta}(\cdot)$.  After 1000 replications, we present the averaged value of these coverage probabilities  in Table \ref{Tb2}.

As shown in Table \ref{Tb2}, the RMSEs decrease as the sample size increases. The finite sample coverage probabilities are smaller than their nominal level when $T=200$, but are fairly close to 95\% as $T=800$.

\begin{table}[htbp]
  \centering \small
  \caption{\small The RMSEs and the empirical coverage probabilities (in parentheses)}\label{Tb2}
    \begin{tabular}{l cccc}
    \hline
$T$ & $\bm{A}(\tau)$ &$\bm{\Omega}(\tau)$ &$\bm{B}_1(\tau)$ & $\bm{B}_5(\tau)$\\
    \hline
 200   & 0.54 (0.89) & 0.83 (0.87)& 0.46 (0.87)  & 0.31 (0.89) \\
 400   & 0.40 (0.91) & 0.71 (0.91)& 0.30 (0.91)  & 0.34 (0.89) \\
 800   & 0.29 (0.92) & 0.62 (0.93)& 0.29 (0.92)  & 0.30 (0.90)  \\
    \hline
    \end{tabular}
\end{table}

\subsection{Examining the Parameter Stability Testing}\label{Sec4.3}

To evaluate the size and local power of the proposed test statistic, we consider the following DGP:

\begin{equation}\label{Eq4.3}
	\bm{x}_t=\bm{A}_1(\tau_t)\bm{x}_{t-1}+\bm{A}_2(\tau_t)\bm{x}_{t-2}+\bm{\eta}_t,
\end{equation}
where $\bm{A}_2(\cdot)$ and $\bm{\eta}_t$ are generated in the same way as in Section \ref{Sec4.2}, and
$$
	\bm{A}_1(\tau)=\left[\begin{matrix}
		0.4 & -0.1 \\
		-0.1 & 0.4
	\end{matrix} \right]+b\times d_T\times \left[\begin{matrix}
		2\exp(\tau-1)-1  & \exp(\tau-1)-1  \\
		\exp(\tau-1)-1   & 2\exp(\tau-1)-1
	\end{matrix} \right]
$$
in which $d_T = T^{-1/2}h^{-1/4}$ and $b$ is set to be $0$, $2$ or $4$ in order to investigate the size and local power of the proposed test. We use the proposed testing procedure to test whether the coefficient $\bm{A}_1(\cdot)$ is time-varying.

Again, we let $T\in \{200,400,800\}$ and conduct $1000$ replications for each choice of $T$. We use the simulation-assisted testing procedure to get the empirical critical value $\widehat{q}_{1-\alpha}$ after $1000$ bootstrap replications. We consider a sequence of bandwidths to check the robustness of the proposed test with respect to the bandwidth choice:
\begin{equation}\label{h_test}
	h = \alpha_1 T^{-1/5}, \quad \alpha_1= 0.2,0.4,\ldots,1.8.
\end{equation}

Table \ref{table_sim2} reports the rejection rates at the $5\%$ and $10\%$ nominal levels. A few facts emerge from the table. First, our test has reasonable sizes using empirical critical values obtained by the  bootstrap procedure. Second, the size behaviour of our test is not sensitive to the choices of bandwidths. As discussed in \cite{gao2008bandwidth}, the estimation-based optimal bandwidths may also be optimal for testing purposes, so for simplicity one can use the rule-of-thumb in practice. Third, the local power of our test increases rapidly as $b$ increases.
{\small
	\begin{table}[h]
		\caption{Size and power evaluation}\label{table_sim2}
		\vspace{-0.7cm}
		\begin{center}
			\begin{tabular}{c c c ccc c ccc}
				\hline\hline
				& & &\multicolumn{3}{c}{$5\%$}& &\multicolumn{3}{c}{$10\%$}\\
				\cline{4-6} \cline{8-10}
				&\text{Bandwidth}&T &200 &400 &800  &   &200 &400 &800 \\
				\hline
				\multirow{9}{*}{\shortstack{$b=0$\\(size)}}
				&$0.2T^{-1/5}$ & &  0.046 & 0.060  & 0.058&  &  0.081 & 0.113  & 0.117 \\
				&$0.4T^{-1/5}$ & &  0.044 & 0.050  & 0.066&  &  0.095 & 0.093  & 0.104 \\
				&$0.6T^{-1/5}$ & &  0.057 & 0.070  & 0.046&  &  0.110 & 0.121  & 0.102 \\
				&$0.8T^{-1/5}$ & &  0.056 & 0.066  & 0.050&  &  0.113 & 0.131  & 0.105 \\
				&$1.0T^{-1/5}$ & &  0.033 & 0.060  & 0.047&  &  0.066 & 0.104  & 0.100 \\
				&$1.2T^{-1/5}$ & &  0.055 & 0.042  & 0.042&  &  0.115 & 0.093  & 0.086 \\
				&$1.4T^{-1/5}$ & &  0.039 & 0.045  & 0.043&  &  0.085 & 0.103  & 0.078 \\
				&$1.6T^{-1/5}$ & &  0.065 & 0.047  & 0.038&  &  0.114 & 0.120  & 0.088 \\
				&$1.8T^{-1/5}$ & &  0.041 & 0.066  & 0.046&  &  0.094 & 0.117  & 0.088 \\
				\hline
				\multirow{9}{*}{\shortstack{$b=2$\\(local power)}}
				&$0.2T^{-1/5}$ & & 0.095  & 0.103  & 0.143&  & 0.177  & 0.189 & 0.242 \\
				&$0.4T^{-1/5}$ & & 0.145  & 0.155  & 0.162&  & 0.204  & 0.247  & 0.236 \\
				&$0.6T^{-1/5}$ & & 0.148  & 0.129  & 0.170&  & 0.208  & 0.255  & 0.239 \\
				&$0.8T^{-1/5}$ & & 0.124  & 0.168  & 0.168&  & 0.233  & 0.258  & 0.240 \\
				&$1.0T^{-1/5}$ & & 0.144  & 0.176  & 0.145&  & 0.230  & 0.250  & 0.225 \\
				&$1.2T^{-1/5}$ & & 0.123  & 0.189  & 0.158&  & 0.196  & 0.274  & 0.246 \\
				&$1.4T^{-1/5}$ & & 0.160  & 0.173  & 0.183&  & 0.246  & 0.275  & 0.279 \\
				&$1.6T^{-1/5}$ & & 0.142  & 0.139  & 0.189&  & 0.208  & 0.218  & 0.292 \\
				&$1.8T^{-1/5}$ & & 0.172  & 0.170  & 0.194&  & 0.269  & 0.268  & 0.287 \\
				\hline
				\multirow{9}{*}{\shortstack{$b=4$\\(local power)}}
				&$0.2T^{-1/5}$ & & 0.477  & 0.448  & 0.655&  & 0.635  & 0.604  & 0.783 \\
				&$0.4T^{-1/5}$ & & 0.340  & 0.561  & 0.648&  & 0.513  & 0.705  & 0.758 \\
				&$0.6T^{-1/5}$ & & 0.428  & 0.582  & 0.708&  & 0.583  & 0.709  & 0.803 \\
				&$0.8T^{-1/5}$ & & 0.506  & 0.579  & 0.640&  & 0.615  & 0.701  & 0.773 \\
				&$1.0T^{-1/5}$ & & 0.501  & 0.579  & 0.585&  & 0.598  & 0.699  & 0.695 \\
				&$1.2T^{-1/5}$ & & 0.473  & 0.527  & 0.620&  & 0.592  & 0.657  & 0.728 \\
				&$1.4T^{-1/5}$ & & 0.482  & 0.532  & 0.553&  & 0.614  & 0.648  & 0.655 \\
				&$1.6T^{-1/5}$ & & 0.519  & 0.559  & 0.583&  & 0.617  & 0.664  & 0.665 \\
				&$1.8T^{-1/5}$ & & 0.460  & 0.523  & 0.613&  & 0.584  & 0.653  & 0.720 \\
				\hline\hline
			\end{tabular}
		\end{center}
	\end{table}
}

\section{Empirical Study}\label{Sec5}

In this section, we discuss the transmission mechanism of the monetary policy. As well documented, inflation is higher and more volatile in the U.S. during 1970-1980, but substantially decreases in the subsequent period, which is often referred to as the Great Moderation (\citealp{primiceri2005time}). Two explanations (i.e., bad policy and bad luck) have been debated repeatedly in the literature. The first explanation focuses on the changes in the transmission mechanism \cite[e.g.,][]{cogley2005drifts}, while the second regards it as a consequence of changes in the size of exogenous shocks \cite[e.g.,][]{sims2006}. In what follows, we revisit the arguments associated with the Great Moderation using our approach. The estimation is conducted in exactly the same way as in Section \ref{Sec4}, so we no longer repeat the details.

First, we estimate the time-varying VAR$(p)$ model using three commonly adopted macroeconomic variables of the literature (\citealp{primiceri2005time,cogley2010inflation}), which are the inflation rate (measured by 100 times the year-over-year log change in the GDP deflator), the unemployment rate representing the non-policy block, and the interest rate (measured by the average value for the Federal funds rates over the quarter) representing the monetary policy block. Following \cite{primiceri2005time}, the short-run timing restrictions are employed to identify the monetary policy shocks. The interest rate is included as the third variable and being treated as the monetary policy instrument. The identification requirement is that the monetary policy actions affect the inflation and the unemployment with at least one period of lag ({i.e., short-run timing restrictions}). The data are quarterly observations measured at an annual rate from 1954:Q3 to 2015:Q4, which are collected from the Federal Reserve Bank of St. Louis economic database. Figure \ref{Fg1} plots the three variables.

\begin{figure}[H]
\centering
{\includegraphics[width=18cm]{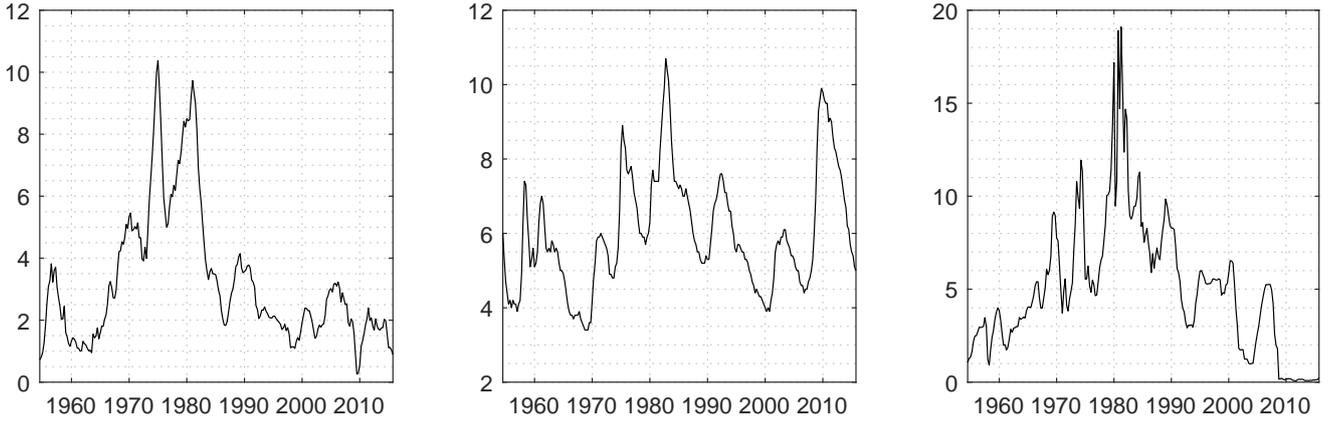}}
\caption{\small Plots of the inflation (left), the unemployment rate (middle) and the interest rate (right)}\label{Fg1}
\end{figure}

For the time-varying VAR$(p)$ model, the optimal lag is $\widehat{\mathsf{p}}=3$ by our approach, while it is often assumed to be known with the values varying from $2$ to $4$ in the literature. Thus, the following analyses focus on the time-varying VAR$(3)$ model (referred to as TV-VAR$(3)$ hereafter). We then conduct a robustness check to see whether the innovation process $\bm{\epsilon}_t$ exhibits serial correlation. We use the Breusch-Godfrey LM test \cite[cf.,][]{breusch1978testing,godfrey1978testing} for testing the first-order autocorrelation of $\bm{\epsilon}_t$'s. The null hypothesis is $H_0: E\left(\bm{\epsilon}_t\bm{\epsilon}_{t-1}^\top\right) = 0$. Based on the estimates $\widehat{\bm{\epsilon}}_t = \widehat{\bm{\omega}}^{-1}(\tau_t)\widehat{\bm{\eta}}_t$, the $p$-value is 0.81 suggesting that the TV-VAR$(3)$ model fits the data quite well.

Before investigating the time-varying effects of monetary policy on real economy, it is reasonable to check whether the VAR coefficients (i.e., the policy transmission mechanism) are time-varying. To this end, we employ the proposed test statistic to examine the constancy of model coefficients, and summarize the results in Table \ref{table_em1}. Specifically, we first set $\bm{C}$ to be an identity matrix for selecting between parametric and time-varying VAR models. From Table \ref{table_em1} (the row labelled ``Constancy''), we conclude that the VAR coefficients are not constant,  implying that there exists significant time-variations in policy transmission mechanism. We then apply the testing procedures to examine whether the selected variables (i.e., intercept, $\bm{x}_{t-1}$, $\bm{x}_{t-2}$, $\bm{x}_{t-3}$) have time-varying contributions. By Table \ref{table_em1},  at the 5\% significance level we conclude that all $\bm{a}(\cdot)$, $\bm{A}_1(\cdot)$, $\bm{A}_2(\cdot)$ and $\bm{A}_3(\cdot)$ are time-varying by examining them individually\footnote{Certainly, one may examine each element of these matrices. However, it will lead to a quite lengthy presentation. In order not to deviate from our main goal, we no longer conduct more testing along this line.}, suggesting that the TV-VAR(3) is more appropriate than a constant VAR model. 

\begin{table}[H]
	\caption{Summary statistics of the test}\label{table_em1}
	\vspace{-0.7cm}
	\begin{center}
		\begin{tabular}{c cc}
			\hline\hline
			&\text{test statistic} & \text{p-value}\\
			\hline
			\text{Constancy}  &  341.27 & 0.00  \\
			$\bm{a}(\cdot)$   &  50.15  & 0.00  \\
			$\bm{A}_1(\cdot)$ &  63.02  & 0.00  \\
			$\bm{A}_2(\cdot)$ &  23.27  & 0.00  \\
			$\bm{A}_3(\cdot)$ &  17.23  & 0.01  \\
			\hline\hline
		\end{tabular}
	\end{center}
\end{table}

We now consider the changes in the size of exogenous shocks. Figure \ref{Fg2} plots the estimated residuals $\widehat{\bm{\eta}}_t$. As can be seen, the magnitudes of estimated residuals change over time. For example, the magnitudes of the estimated residuals in the interest rate component increase from 1970 to 1980, but decrease quickly after 1980. The evidence favours the TV-VAR$(3)$ model which accounts for heteroscedasticity. More precisely, Figure \ref{Fg3} plots the time-varying standard deviation of the estimated residuals $\widehat{\bm{\eta}}_t$ as well as the associated 95\% confidence intervals. We can see that a decline in unconditional volatilities of exogenous shocks after 1980 in every subplot of Figure \ref{Fg3}. Thus, our results support the explanation of ``bad luck'' \cite[e.g.,][]{primiceri2005time,sims2006}.

\begin{figure}[H]
\centering
{\includegraphics[width=18cm]{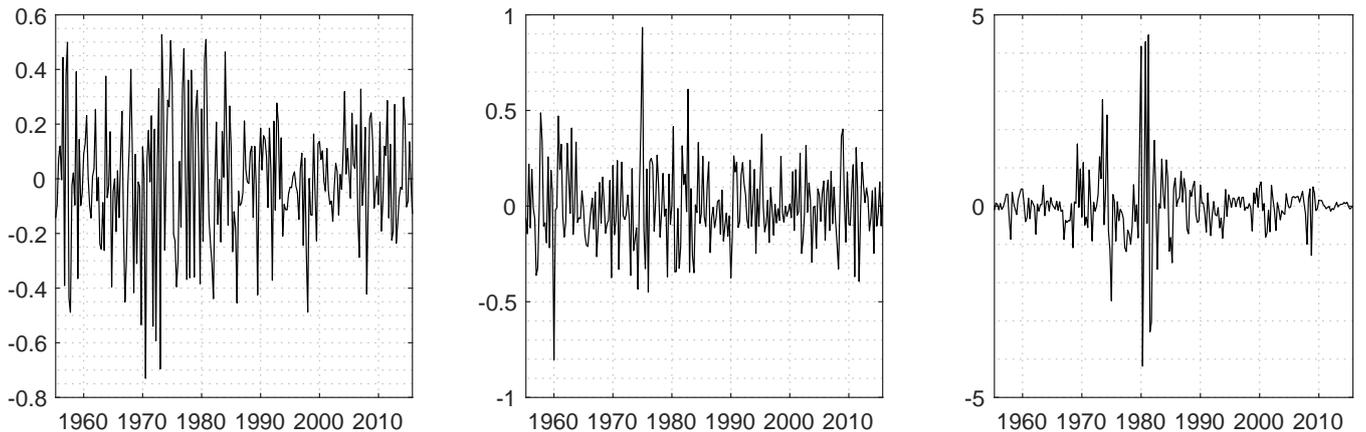}}
\caption{\small Estimation residuals in the inflation equation (left), the unemployment equation (middle) and the interest rate equation (right).}\label{Fg2}
\end{figure}

We then discuss how inflation and unemployment respond to monetary policy shocks. Figure \ref{Fg4} plots the time-varying cumulative impulse responses\footnote{The cumulative impulse responses are constructed in exactly the same way as in \cite{primiceri2005time}, and the standard errors are computed based on Theorem \ref{Thm2.2}.} of inflation and unemployment to a monetary shock subject to the short-run timing restrictions. Clearly, these responses vary over time, indicating a substantial time-variation in the policy transmission mechanism. The result differs from \cite{primiceri2005time} that has found no evidence of time variation in the response of the economy to monetary policy shocks using a Bayesian approach. Interestingly, our results show that the responses of inflation exhibit a price puzzle during 1970-1980, which suggests that inflation increases in response to a monetary tightening although standard macroeconomic theory predicts the opposite. Therefore, our results also support the  explanation of ``bad policy'' \cite[e.g.,][]{cogley2005drifts}.

\begin{figure}[H]
\centering
{\includegraphics[width=18cm]{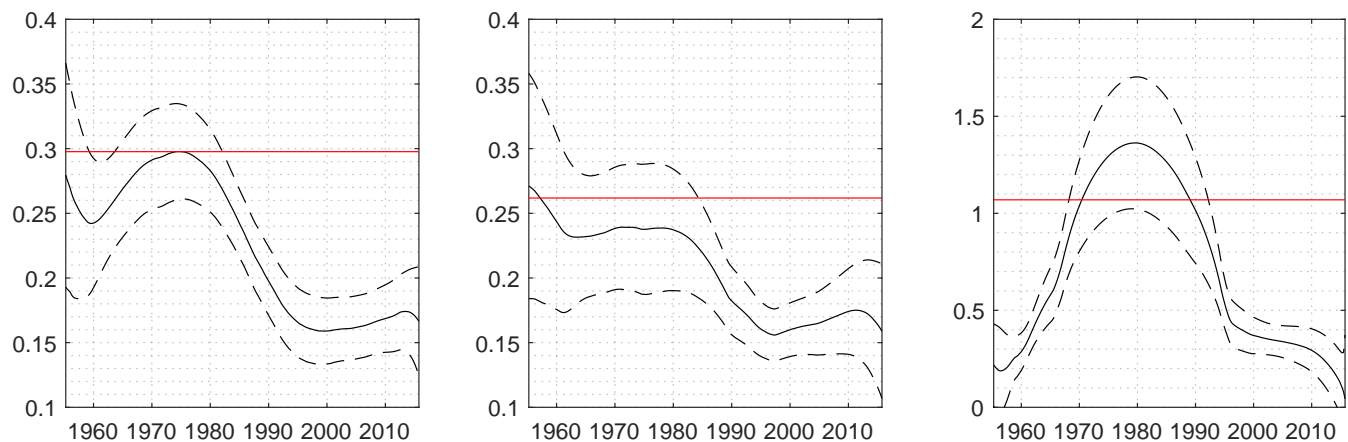}}
\caption{\small The standard deviation of estimation residuals $\widehat{\bm{\eta}}_t$ in the inflation equation (left), the unemployment equation (middle) and the interest rate equation (right) as well as the associated 95\% confidence intervals. The red line denotes the estimated standard deviation using the VAR(3) model with constant parameters.}\label{Fg3}
\end{figure}

\begin{figure}[H]
\centering
{\includegraphics[width=8cm, height=7cm]{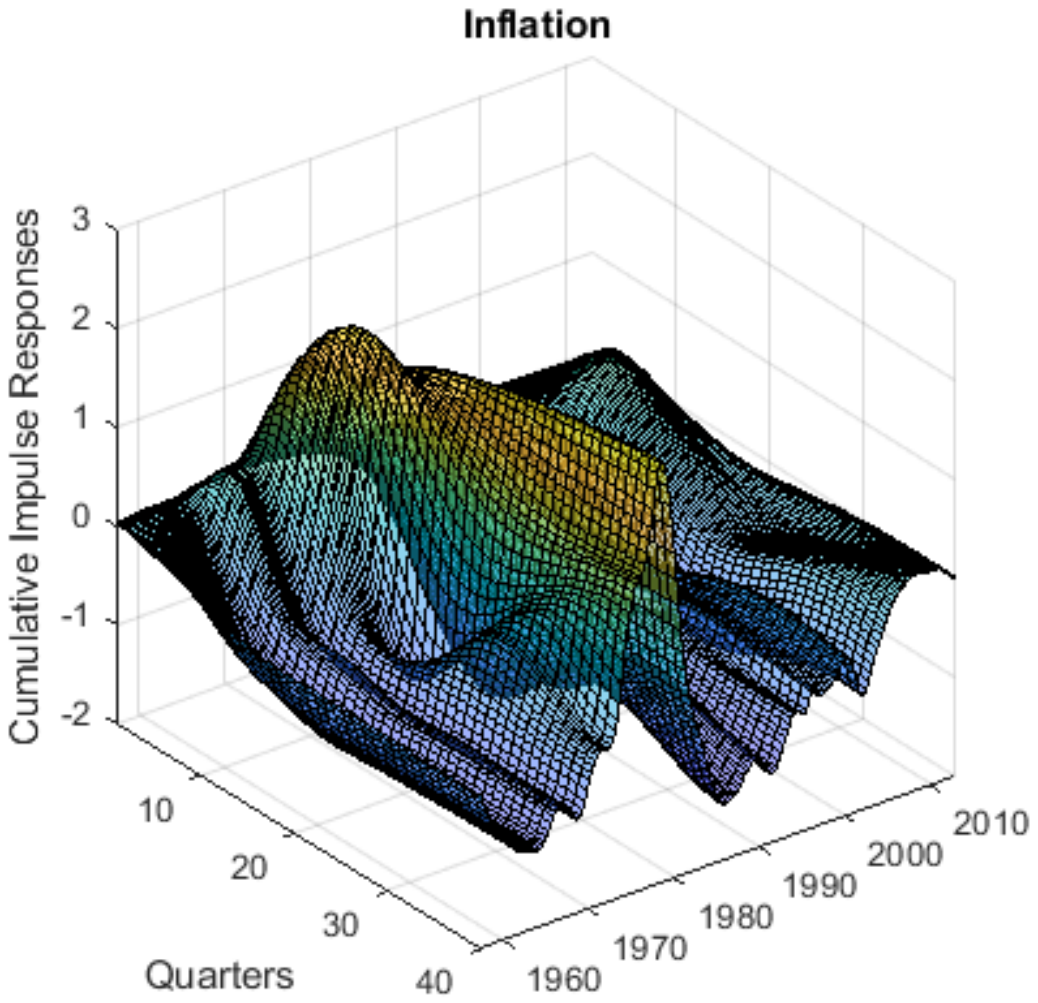}}
{\includegraphics[width=8cm, height=7cm]{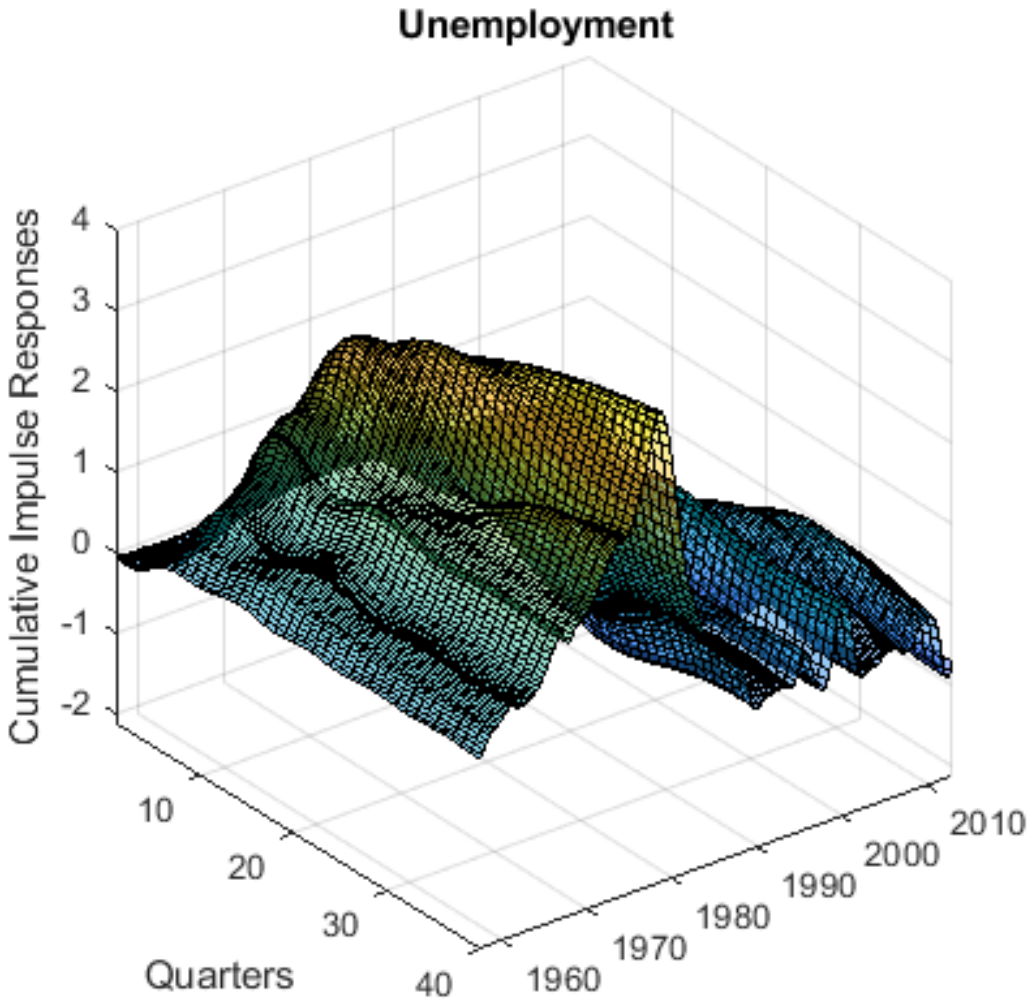}}\\
\caption{\small Time-varying cumulative impulse responses to the monetary policy shocks. Left front axis left: Quarters (Horizon). Right front axis left: Time.}\label{Fg4}
\end{figure}

Additionally, Figure \ref{Fg5} plots the impulse responses of inflation to a monetary policy shock in 1978:Q1 and 1994:Q1. The left panel in Figure \ref{Fg5} exhibits the price puzzle, while the right panel shows a response that corresponds to the mainstream prior: the price level declines soon after a tightening. This result shows that the price puzzle is limited to periods of bad monetary policy, and is consistent with a stream of literature \cite[e.g.,][]{elbourne2006financial,rusnak2013solve}, which suggests that the price puzzle emerges when using data for different monetary policy regimes and VARs estimated within a period of single monetary policy rarely encounter the price puzzle. Hence, the existence of time-varying policy transmission mechanism favours the TV-VAR(3) model.

\begin{figure}[H]
  \centering
  \includegraphics[width=18cm]{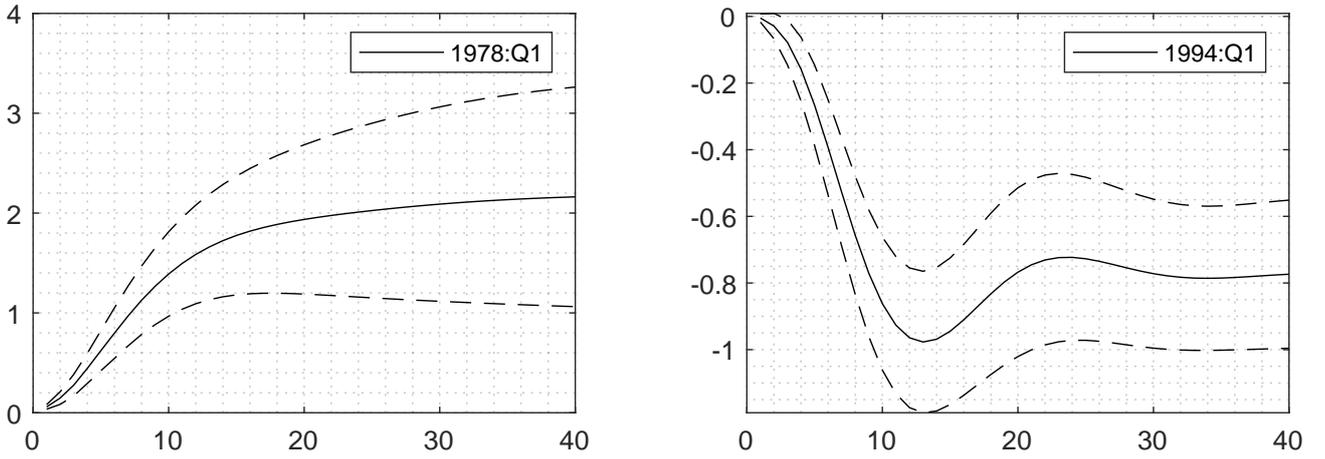}
  \caption{\small Impulse responses of inflation to the monetary policy shocks in 1978:Q1 and 1994:Q1 with one-standard error confidence intervals.}\label{Fg5}
\end{figure}

\section{Conclusions and Discussion}\label{Sec6}

\renewcommand{\theequation}{6.\arabic{equation}}

\setcounter{equation}{0}
In this paper, we have proposed a new class of time-varying VAR models where the VAR coefficients and covariance matrix of the error innovations are allowed to evolve over time. Accordingly, we have established a set of asymptotic results, including the impulse responses analyses subject to both short-run timing and long-run restrictions, an information criterion to select the optimal lag, and a Wald--type test to determine the constant coefficients. Simulation studies are conducted to evaluate the theoretical findings. Finally, we demonstrate the empirical relevance and usefulness of the proposed methods through an application to the transmission mechanism of U.S. monetary policy. We have revealed that there exists a substantial time-variation in the policy transmission mechanism, and the ``price puzzle'' is limited to periods of bad monetary policy.

There are several directions for possible extensions. The first one is about how to consistently estimate the $d$-dimensional components of the VAR$(p)$ process for the case where the dimensionality, $d$, and the number of lags, $p$, may diverge along with the sample size, $T$. The second one is to allow for some time-varying structure in cointegrated dynamic models. We wish to leave such issues for future study. 

\newpage

\section*{Appendix A}

\renewcommand{\theequation}{A.\arabic{equation}}
\renewcommand{\thesection}{A.\arabic{section}}
\renewcommand{\thefigure}{A.\arabic{figure}}
\renewcommand{\thetable}{A.\arabic{table}}
\renewcommand{\thelemma}{A.\arabic{lemma}}
\renewcommand{\theassumption}{A.\arabic{assumption}}
\renewcommand{\thetheorem}{A.\arabic{theorem}}
\renewcommand{\theproposition}{A.\arabic{proposition}}

\setcounter{equation}{0}
\setcounter{lemma}{0}
\setcounter{section}{0}
\setcounter{table}{0}
\setcounter{figure}{0}
\setcounter{assumption}{0}
\setcounter{proposition}{0}

{\small

For the sake of presentation, we first provide some notations and mathematical symbols in Appendix \ref{ApA.1}. The proofs of main results are provided in Appendix \ref{ApA.2}. In what follows, $M$ and $O(1)$ always stand for constants, and may be different at each appearance.

\section{Notations and Mathematical Symbols}\label{ApA.1}

For ease of notation, we define three matrices $\bm{\Sigma}(\tau)$, $\bm{V}(\tau)$ and $\bm{\Phi}(\tau)$ with their estimators respectively. For $\forall \tau\in (0,1)$, let

\begin{eqnarray}\label{EqA.1}
 \bm{\Sigma}(\tau)&=&\left[\begin{matrix}
      1          & \bm{\mu}^\top(\tau)         & \cdots & \bm{\mu}^\top(\tau) \\
      \bm{\mu}(\tau)  & \bm{\Sigma}_{0}(\tau)     & \cdots & \bm{\Sigma}_{p-1}^\top(\tau) \\
      \vdots     & \vdots                 & \ddots & \vdots \\
      \bm{\mu}(\tau)  & \bm{\Sigma}_{p-1}(\tau)   & \cdots & \bm{\Sigma}_{0}(\tau)
    \end{matrix}\right],
\end{eqnarray}
in which $\bm{\mu}(\tau)$ and $\bm{B}_j(\tau)$ are defined in Proposition \ref{Proposition2.1} and $\bm{\Sigma}_m(\tau)=\bm{\mu}(\tau)\bm{\mu}(\tau)^\top+\sum_{j=0}^{\infty}\bm{B}_j(\tau)\bm{B}_{j+m}^\top(\tau)$ for $m=0,\ldots,p-1$. We define the estimator of $\bm{\Sigma}(\tau)$ as

\begin{eqnarray}\label{EqA.2}
\widehat{\bm{\Sigma}}(\tau)=\left(\frac{1}{T}\sum_{t=1}^{T} K_h(\tau_t-\tau)\right)^{-1}\frac{1}{T}\sum_{t=1}^{T}\bm{z}_{t-1}\bm{z}_{t-1}^\top K_h(\tau_t-\tau),
\end{eqnarray}
where $\bm{z}_t$ is defined in \eqref{Eq2.4}.

\medskip

Next, we let
\begin{eqnarray}\label{EqA.3}
\bm{V}(\tau)=\left[\begin{matrix}
                 \bm{V}_{1,1}(\tau) & \bm{V}_{2,1}^\top(\tau) \\
                 \bm{V}_{2,1}(\tau) & \bm{V}_{2,2}(\tau)
               \end{matrix} \right],
\end{eqnarray}
where $\bm{V}_{1,1}(\tau) = \tilde{v}_0\bm{\Sigma}^{-1}(\tau)\otimes \bm{\Omega}(\tau)$,
\begin{eqnarray}\label{EqA.4}
\bm{V}_{2,1}(\tau)&=& \lim_{T\rightarrow \infty}\frac{h}{T}\sum_{t=1}^{T} E\left( \mathrm{vech} (\bm{\eta}_t\bm{\eta}_t^\top )\bm{\eta}_t^\top\bm{Z}_{t-1}^\top \right)K_h (\tau_t-\tau)^2\cdot  (\bm{\Sigma}^{-1}(\tau)\otimes \bm{I}_d ) , \nonumber \\
 \bm{V}_{2,2}(\tau)&=&\lim_{T\rightarrow\infty} \frac{h}{T}\sum_{t=1}^{T}E\left(\mathrm{vech}(\bm{\eta}_t\bm{\eta}_t^\top)\mathrm{vech}(\bm{\eta}_t\bm{\eta}_t^\top)^\top \right) K_h (\tau_t-\tau)^2 \nonumber \\
 &&-\tilde{v}_0\mathrm{vech}\left(\bm{\Omega}(\tau)\right) \mathrm{vech}\left(\bm{\Omega}(\tau)\right)^\top.
\end{eqnarray}

The estimator of $\bm{V}(\tau)$ is then defined as follows:
\begin{eqnarray}\label{EqA.5}
\widehat{\bm{V}}(\tau)&=&\left[\begin{matrix}
\widehat{\bm{V}}_{1,1}(\tau) & \widehat{\bm{V}}_{2,1}^\top(\tau) \\
\widehat{\bm{V}}_{2,1}(\tau) & \widehat{\bm{V}}_{2,2}(\tau)
  \end{matrix}\right],
\end{eqnarray}
where $\bm{\widehat{V}}_{1,1}(\tau)$, $\bm{\widehat{V}}_{2,1}(\tau)$ and $\bm{\widehat{V}}_{2,2}(\tau)$ have the forms identical to their counterparts of \eqref{EqA.4}, but we replace $ \bm{\Sigma}(\tau)$, $ \bm{\eta}_t$ and $\bm{\Omega}(\tau)$ with their estimators presented in \eqref{EqA.2} and \eqref{Eq2.5}.

Finally, recall the following definition:
\begin{eqnarray}\label{EqA.6}
\bm{\Phi}(\tau)=\left[\begin{matrix}
       \bm{A}_{1}(\tau) & \cdots & \bm{A}_{p-1}(\tau) & \bm{A}_{p}(\tau)  \\
       \bm{I}_d & \cdots& \bm{0}_d & \bm{0}_d\\
       \vdots & \ddots&\vdots & \vdots\\
       \bm{0}_d &\cdots &\bm{I}_d & \bm{0}_d\\
    \end{matrix} \right].
\end{eqnarray}

\noindent Replacing $\bm{A}_j(\tau)$'s of \eqref{EqA.6} with their estimators obtained from \eqref{Eq2.5} yields an estimator, $\widehat{\bm{\Phi}}(\tau)$, for $\bm{\Phi}(\tau)$.

\section{Proofs of the Main Results}\label{ApA.2}

\begin{proof}[Proof of Proposition \ref{Proposition2.1}]

\item
Consider the VMA representation of $\bm{x}_t$: \, $\bm{x}_t= \bm{\mu}_{t} +\bm{B}_{0,t}\bm{\epsilon}_t+\bm{B}_{1,t}\bm{\epsilon}_{t-1}+\bm{B}_{2,t}\bm{\epsilon}_{t-2}+\cdots$,
where  $\bm{B}_{0,t}=\bm{\omega}(\tau_t)$, $\bm{B}_{j,t} =\bm{\Psi}_{j,t}\bm{\omega}(\tau_{t-j})$,
$\bm{\Psi}_{j,t}=\bm{J}\prod_{m=0}^{j-1}\bm{\Phi}(\tau_{t-m}) \bm{J}^\top$ for $j\geq 1$, $\bm{\mu}_t=\bm{a}(\tau_t)+\sum_{j=1}^{\infty}\bm{\Psi}_{j,t} \bm{a}(\tau_{t-j})$ and $\tau_{t-j}=\frac{t-j}{T}I(t\geq j)$.

First, we investigate the validity of the VMA representations of $\bm{x}_t$ and $\widetilde{\bm{x}}_t$. Let $\rho_A$ denote the largest eigenvalue of $\bm{\Phi}(\tau)$ uniformly over $\tau\in [0,1]$. Then, $\rho_A<1$ by Assumption 4.1. Similar to the proof of Proposition 2.4 in \cite{dahlhaus2009empirical}, we have $\max_{t\geq 1} \left\|\prod_{m=0}^{j-1}\bm{\Phi}(\tau_{t-m}) \right\| \leq M \rho_A^j$. It follows that $\left\|{\rm E}(\bm{x_t})\right\| \leq \sum_{j=0}^{\infty} \left\|\bm{\Psi}_{j,t} \right\|\cdot \left\|\bm{a}(\tau_{t-j})\right\| \leq M \sum_{j=0}^{\infty} \rho_A^j < \infty$ and
\begin{eqnarray*}
  \left\|{\rm Var}(\bm{x_t})\right\| &=& \left\|\sum_{j=0}^{\infty}\bm{B}_{j,t}\bm{B}_{j,t}^\top\right\| \leq \sum_{j=0}^{\infty}\left\|\bm{B}_{j,t}\right\|^2 \leq M \sum_{j=0}^{\infty}\rho_A^{2j} < \infty.
\end{eqnarray*}
Similarly, we have $\left\|{\rm E}\left(\widetilde{\bm{x}}_t \right)\right\| < \infty$ and $\left\|{\rm Var}\left(\widetilde{\bm{x}}_t \right)\right\| < \infty$.

Then, we need to verify that $\max_{t\geq 1}\{E\left\|\bm{x}_t-\widetilde{\bm{x}}_t \right\|^\delta\}^{1/\delta}=O(T^{-1})$. For any conformable matrices $\{\bm{A}_i\}$ and $\{\bm{B}_i\}$, since
$\prod_{i=1}^{r}\bm{A}_i-\prod_{i=1}^{r}\bm{B}_i=\sum_{j=1}^{r}\left( \prod_{k=1}^{j-1}\bm{A}_k \right)\left(\bm{A}_j-\bm{B}_j\right)\left(\prod_{k=j+1}^{r}\bm{B}_k\right)$, we then obtain 
\begin{eqnarray*}
  &&\left\|\bm{B}_{j,t} -\bm{B}_j(\tau_t)\right\| = \left\|\bm{J}\prod_{m=0}^{j-1}\bm{\Phi}(\tau_{t-m})\bm{J}^\top\bm{\omega}(\tau_{t-j})-\bm{J}\bm{\Phi}^j(\tau_t)\bm{J}^\top\bm{\omega}(\tau_t)\right\|\\
   &&= \left\| \left(\bm{J}\prod_{m=0}^{j-1}\bm{\Phi}(\tau_{t-m})\bm{J}^\top-\bm{J}\bm{\Phi}^j(\tau_t)\bm{J}^\top\right)\bm{\omega}(\tau_t)+\bm{J}\prod_{m=0}^{j-1}\bm{\Phi}(\tau_{t-m})\bm{J}^\top\left( \bm{\omega}(\tau_{t-j})-\bm{\omega}(\tau_t)\right)\right\|\\
   &&\leq M\sum_{i=1}^{j-1}\left\|\bm{\Phi}^i(\tau_t)(\bm{\Phi}(\tau_{t-i})-\bm{\Phi}(\tau_t))\prod_{m=i+1}^{j-1}\bm{\Phi}(\tau_{t-m}) \right\|+M \rho_A^j \frac{j}{T} \leq M\sum_{i=1}^{j-1}\frac{i}{T}\rho_A^{j-1}+M \rho_A^j \frac{j}{T}=O(T^{-1}),
\end{eqnarray*}
which implies for the same $\delta>4$ as in Assumption 2,
\begin{eqnarray*}
\{E\left\|\bm{x}_t-\widetilde{\bm{x}}_t\right\|^\delta\}^{1/\delta} &\leq &\sum_{j=1}^{\infty}\left\|\bm{\Psi}_{j,t}\bm{a}(\tau_{t-j})-\bm{\Psi}_j(\tau_t)\bm{a}(\tau_t)\right\|+ \sum_{j=1}^{\infty}\left\|\bm{B}_{j,t} -\bm{B}_j(\tau_t)\right\| \cdot \{E\left\|\bm{\epsilon}_t\right\|^\delta\}^{1/\delta}\\
   &\leq& M\sum_{j=1}^{\infty} \left(\sum_{i=1}^{j-1}\frac{i}{T}\rho_A^{j-1}+ \rho_A^j \frac{j}{T}\right) =O\left(T^{-1}\right).
\end{eqnarray*}

The proof is now completed.
\end{proof}

\begin{proof}[Proof of Theorem \ref{Thm2.1}]
\item
	
\noindent (1). For notation simplicity, let $\bm{S}_{T,k}(\tau)=\frac{1}{T}\sum_{t=1}^{T}\bm{z}_{t-1} \bm{z}_{t-1}^{\top}\left(\frac{\tau_t-\tau}{h}\right)^k K_h(\tau_t-\tau)$,
	$$
	\bm{S}_{T}(\tau)=\left(\begin{matrix}
		\bm{S}_{T,0}(\tau) & \bm{S}_{T,1}(\tau) \\
		\bm{S}_{T,1}(\tau) & \bm{S}_{T,2}(\tau)
	\end{matrix} \right),
	$$
	and $\bm{M}(\tau_t)=\bm{A}(\tau_t)-\bm{A}(\tau)-\bm{A}^{(1)}(\tau)(\tau_t-\tau)-\frac{1}{2}\bm{A}^{(2)}(\tau)(\tau_t-\tau)^2$.
	
We now begin our investigation. Since
	\begin{eqnarray*}
		\bm{x}_t &=& \left(\bm{A}(\tau)+\bm{A}^{(1)}(\tau)(\tau_t-\tau)+\frac{1}{2}\bm{A}^{(2)}(\tau)(\tau_t-\tau)^2+\bm{M}(\tau_t)\right)\bm{z}_{t-1}+\bm{\eta}_t
		\\
		&=& \bm{Z}_{t-1}^{*,\top}\left[\begin{matrix}
			\mathrm{vec}\left(\bm{A}(\tau)\right) \\
			h\mathrm{vec}\left(\bm{A}^{(1)}(\tau)\right)
		\end{matrix} \right]+\frac{1}{2}h^2\left(\frac{\tau_t-\tau}{h}\right)^2\left(\bm{z}_{t-1}^\top\otimes \bm{I}_d\right)\mathrm{vec}\left(\bm{A}^{(2)}(\tau)\right)\\
		&&+\left(\bm{z}_{t-1}^\top\otimes \bm{I}_d\right)\mathrm{vec}\left(\bm{M}(\tau_t)\right)+\bm{\eta}_t,
	\end{eqnarray*}
	we write
	\begin{eqnarray*}
		&&\mathrm{vec} (\bm{\widehat{A}}(\tau)-\bm{A}(\tau))\\
		&=&[\bm{I}_{d^2p+d},\bm{0}_{d^2p+d}]\cdot\left(\frac{1}{T}\sum_{t=1}^{T} \bm{Z}_{t-1}^*\bm{Z}_{t-1}^{*,\top} K_h(\tau_t-\tau)\right)^{-1}\left(\frac{1}{T}\sum_{t=1}^{T}\bm{Z}_{t-1}^*\bm{x}_t  K_h(\tau_t-\tau)\right)-\mathrm{vec}\left(\bm{A}(\tau)\right)\\
		&=&[\bm{I}_{d^2p+d},\bm{0}_{d^2p+d}]\left(\bm{S}_{T}^{-1}(\tau)\left(\begin{matrix}
			\bm{S}_{T,2}(\tau) \\
			\bm{S}_{T,3}(\tau)
		\end{matrix} \right)\otimes \bm{I}_d\right)\left\{\frac{1}{2}h^2\mathrm{vec}\left(\bm{A}^{(2)}(\tau)\right)\right\}\\
		&& +[\bm{I}_{d^2p+d},\bm{0}_{d^2p+d}]\left(\bm{S}_{T}^{-1}(\tau)\otimes \bm{I}_d\right)\left(\frac{1}{T}\sum_{t=1}^{T}(\bm{z}_{t-1}^*\bm{z}_{t-1}^{\top} \otimes \bm{I}_d )\mathrm{vec}\left(\bm{M}(\tau_t)\right)K_h(\tau_t-\tau)\right)\\
		&& +[\bm{I}_{d^2p+d},\bm{0}_{d^2p+d}]\left(\bm{S}_{T}^{-1}(\tau)\otimes \bm{I}_d\right)\left(\frac{1}{T}\sum_{t=1}^{T}(\bm{z}_{t-1}^*\otimes \bm{I}_d)\bm{\eta}_t K_h(\tau_t-\tau)\right) \\
		&:=& I_{T,1}+I_{T,2}+I_{T,3}.
	\end{eqnarray*}

	By standard arguments for the local linear kernel estimator and the uniform convergence results in Lemmas \ref{LemmaB.3}.2-3, we have $\left\|I_{T,1}+I_{T,2}\right\|=O(h^2)+O_P(h^2\sqrt{\log T/(Th)})$ uniformly over $\tau \in[0,1]$. By Lemma \ref{LemmaB.4}.1, we have $I_{T,3}=O_P \left( (\frac{\log T}{Th} )^{\frac{1}{2}} \right)$ uniformly over $\tau \in [0,1]$. Therefore, the first result follows.
	
	\medskip
	
	\noindent (2). We begin our investigation on the asymptotic normality by writing that for $\forall \tau \in (0,1)$,
	\begin{eqnarray*}
		\widehat{\bm{\Omega}}(\tau) &=&\frac{1}{Th}\sum_{t=1}^{T}\bm{\widehat{\eta}}_t\bm{\widehat{\eta}}_t^\top K\left(\frac{\tau_t-\tau}{h}\right)+O_P\left(\frac{1}{Th}\right) \\
		&=&\frac{1}{Th}\sum_{t=1}^{T}\left(\bm{\eta}_t+\bm{\widehat{\eta}}_t-\bm{\eta}_t\right)\left(\bm{\eta}_t+\bm{\widehat{\eta}}_t-\bm{\eta}_t\right)^\top K\left(\frac{\tau_t-\tau}{h}\right)+O_P\left(\frac{1}{Th}\right)\\
		&=& \frac{1}{Th}\sum_{t=1}^{T}\bm{\eta}_t\bm{\eta}_t^\top K\left(\frac{\tau_t-\tau}{h}\right)+\frac{1}{Th}\sum_{t=1}^{T}(\bm{\widehat{\eta}}_t-\bm{\eta}_t)(\bm{\widehat{\eta}}_t-\bm{\eta}_t)^\top K\left(\frac{\tau_t-\tau}{h}\right) \\
		& &+\frac{1}{Th}\sum_{t=1}^{T}\bm{\eta}_t(\bm{\widehat{\eta}}_t-\bm{\eta}_t)^\top K\left(\frac{\tau_t-\tau}{h}\right)+\frac{1}{Th}\sum_{t=1}^{T}(\bm{\widehat{\eta}}_t-\bm{\eta}_t)\bm{\eta}_t^\top K\left(\frac{\tau_t-\tau}{h}\right)+O_P\left(\frac{1}{Th}\right)\\
		&:=& \bm{I}_{T,4}+\bm{I}_{T,5}+\bm{I}_{T,6}+\bm{I}_{T,7}+O_P\left(\frac{1}{Th}\right).
	\end{eqnarray*}
	
	Let $\rho_T=h^2+\sqrt{\frac{\log T}{Th}}$. By what we have just proved for Theorem 2.1(i), for $\forall \tau \in [0,1]$ we have
	\begin{eqnarray*}
		&&\left\|\frac{1}{Th}\sum_{t=1}^{T}(\bm{\widehat{\eta}}_t-\bm{\eta}_t)(\bm{\widehat{\eta}}_t-\bm{\eta}_t)^\top  K\left(\frac{\tau_t-\tau}{h}\right)\right\|\\
		&\leq& \sup_{\tau_t \in [0,1]}\|\bm{\widehat{A}}(\tau_t)-\bm{A}(\tau_t)\|^2 \cdot \frac{1}{Th}\sum_{t=1}^{T}\left\|\bm{z}_{t-1}\right\|^2 K\left(\frac{\tau_t-\tau}{h}\right)=O_P(\rho_T^2).
	\end{eqnarray*}
	
	By Lemma \ref{LemmaB.4}.2, $\bm{I}_{T,6}$ and $\bm{I}_{T,7}$ are both $o_P((Th)^{-1/2})$. Hence,
	\begin{eqnarray*}
		\sqrt{Th}\left(\frac{1}{Th}\sum_{t=1}^{T}\bm{\widehat{\eta}}_t\bm{\widehat{\eta}}_t^\top K\left(\frac{\tau_t-\tau}{h}\right)-\frac{1}{Th}\sum_{t=1}^{T}\bm{\eta}_t\bm{\eta}_t^\top K\left(\frac{\tau_t-\tau}{h}\right)-o_P(h^4)\right)=o_P(1).
	\end{eqnarray*}
	
	Combined with the convergence results of the sample covariance matrix stated in Lemma \ref{LemmaB.3}, the above development yields that
	\begin{eqnarray*}
		&&\sqrt{Th}\left[\begin{matrix}
			\mathrm{vec}\left(\bm{\widehat{A}}(\tau)-\bm{A}(\tau)-\frac{1}{2}h^2\tilde{c}_2\bm{A}^{(2)}(\tau)\right)+o_P(h^2) \\
			\mathrm{vech}\left(\bm{\widehat{\Omega}}(\tau)-\bm{\Omega}(\tau)-\frac{1}{2}h^2\tilde{c}_2\bm{\Omega}^{(2)}(\tau)\right)+o_P(h^2)
		\end{matrix} \right]\\
		&=&\left[\begin{matrix}
			\left( \bm{\Sigma}^{-1}(\tau)\otimes \bm{I}_d\right)\left(\frac{1}{\sqrt{Th}}\sum_{t=1}^{T}\bm{Z}_{t-1} \bm{\eta}_t
			K\left(\frac{\tau_t-\tau}{h}\right)\right) \\
			\frac{1}{\sqrt{Th}}\sum_{t=1}^{T}\mathrm{vech}\left(\bm{\eta}_t\bm{\eta}_t^\top-\bm{\Omega}(\tau_t) \right)K\left(\frac{\tau_t-\tau}{h}\right)
		\end{matrix} \right]+o_P(1) := \bm{I}_{T,8}+o_P(1).
	\end{eqnarray*}
	
	Below, we focus on $\bm{I}_{T,8}$. First, we show $\text{Var}(\bm{I}_{T,8})\to \bm{V}(\tau)$. Let $\text{Var}(\bm{I}_{T,8})=\left[\begin{matrix}
		\widetilde{\bm{V}}_{1,1}(\tau) & \widetilde{\bm{V}}_{2,1}^\top(\tau) \\
		\widetilde{\bm{V}}_{2,1}(\tau) & \widetilde{\bm{V}}_{2,2}(\tau)
	\end{matrix}\right]$,
	where the definition of each block should be obvious. Moreover, simple algebra shows that $\widetilde{\bm{V}}_{i,j}(\tau) \to  \bm{V}_{i,j}(\tau)$ for $i,j\in\{1,2\}$.
	
	By construction and Assumption 2, $\bm{I}_{T,8}$ is a summation of m.d.s., we thus use Lemma \ref{LemmaB.1} and Cram\'er-Wold device to prove its asymptotic normality. It suffices to show that $\bm{d}^\top \bm{I}_{T,8}\to_D N\left(\bm{0},\bm{d}^\top\bm{V}(\tau)\bm{d}\right)$ for any conformable unit vector $\bm{d}$. Let
	\begin{equation*}
		\bm{Z}_{T,t}(\tau)=\frac{1}{\sqrt{Th}}\bm{d}^\top\left[\begin{matrix}
			\left( \bm{\Sigma}^{-1}(\tau)\otimes \bm{I}_d\right)\left(\bm{Z}_{t-1}\bm{\eta}_t
			K\left(\frac{\tau_t-\tau}{h}\right)\right) \\
			\mathrm{vech}\left(\bm{\eta}_t\bm{\eta}_t^\top-\bm{\Omega}(\tau_t) \right)K\left(\frac{\tau_t-\tau}{h}\right)
		\end{matrix} \right].
	\end{equation*}
	
	By the law of large numbers for martingale differences, we have 
	$$
	\sum_{t=1}^T \bm{Z}_{T,t}^2(\tau) - \sum_{t=1}^T E(\bm{Z}_{T,t}^2(\tau)|\mathcal{F}_{t-1})\to_P 0.
	$$
	Since conditional on $\mathcal{F}_{t-1}$ the third and fourth moments of $\bm{\epsilon}_t$ are identical to the corresponding unconditional moments a.s., by Lemma \ref{LemmaB.3}.1 we can prove that $\sum_{t=1}^T E (\bm{Z}_{T,t}^2(\tau)|\mathcal{F}_{t-1} )\to_P \bm{d}^\top \bm{V}(\tau) \bm{d}$.
	
	Furthermore, for any $\nu > 0$ and $\tau\in(0,1)$, by both Holder's and Markov's inequalities, we have
	\begin{eqnarray*}
	\sum_{t=1}^{T}E\left(\left( \bm{Z}_{T,t}(\tau)\right)^2I\left(| \bm{Z}_{T,t}(\tau)|>\nu\right)  \right)	&\leq&\sum_{t=1}^{T}\left[E|\bm{Z}_{T,t}(\tau)|^{\delta/2}\right]^{4/\delta} \left[\frac{E|\bm{Z}_{T,t}(\tau)|^{\delta/2}|}{\nu^{\delta/2}} \right]^{(\delta-4)/\delta}\\
		&=&O((Th)^{(\delta-4)/4})=o(1).
	\end{eqnarray*}

Thus, the CLT follows. 
	
	\medskip
	
	Finally, we consider $\widehat{\bm V}(\cdot)$. By Lemma \ref{LemmaB.3} and the above proof, we have $\widehat{\bm{V}}_{1,1}(\tau)\to_P \bm{V}_{1,1}(\tau)$. By the uniform convergence results of $\widehat{\bm{A}}(\tau)$, we can replace $\widehat{\bm{\eta}}_t$ with $\bm{\eta}_t$ in the following derivations. Therefore, we have
	\begin{eqnarray*}
		\widehat{\bm{V}}_{2,1}(\tau) &=& \frac{1}{Th}\sum_{t=1}^{T}\mathrm{vech}\left(\bm{\eta}_t\bm{\eta}_t^\top\right)\bm{\eta}_t^\top
		\bm{Z}_{t-1}^\top K^2\left(\frac{\tau_t-\tau}{h}\right)\left( \bm{\Sigma}^{-1}(\tau)\otimes \bm{I}_d\right)+o_P(1)\to_P\bm{V}_{2,1}(\tau),
	\end{eqnarray*}
	and
	\begin{eqnarray*}
		\widehat{\bm{V}}_{2,2}(\tau) &=& \frac{1}{Th}\sum_{t=1}^{T}\mathrm{vech}(\bm{\eta}_t\bm{\eta}_t^\top)\mathrm{vech}(\bm{\eta}_t\bm{\eta}_t^\top)^\top K^2\left(\frac{\tau_t-\tau}{h}\right) -\widetilde{v}_0\mathrm{vech}\left(\bm{\Omega}(\tau)\right)\mathrm{vech}\left(\bm{\Omega}(\tau)\right)^\top+o_P(1)
		\nonumber\\
		& \to_P & \bm{V}_{2,2}(\tau).
	\end{eqnarray*}
	
The proof is now completed.
\end{proof}

\begin{proof}[Proof of Theorem \ref{Thm2.2}]

\item
Let $\bm{A}(\bm{\theta})$ be a real, differentiable, $m\times n$ matrix function of real $p \times 1$ vector $\bm{\theta}$. Define $\triangledown_{\bm{\theta}}\bm{A} = \frac{\partial \mathrm{vec}(\bm{A})}{\partial \bm{\theta}^\top}$, and thus $\mathrm{vec}(\mathrm{d}\bm{A})= \triangledown_{\bm{\theta}}\bm{A} \mathrm{d}\bm{\theta}$.

Let $\bm{\alpha}(\tau)=\mathrm{vec}\left(\bm{A}_1(\tau),...,\bm{A}_p(\tau)\right)$, $\bm{\sigma}(\tau)=\mathrm{vech}\left(\bm{\Omega}(\tau) \right)$ and $\bm{\phi}(\tau)=[\bm{\alpha}^\top(\tau),\bm{\sigma}^\top(\tau)]^\top$. Given the joint distribution of $\mathrm{vec} (\bm{\widehat{A}}(\tau) )$ and $\mathrm{vech}(\bm{\widehat\Omega}(\tau))$ in Theorem \ref{Thm2.1}, Theorem \ref{Thm2.2} can be obtained by the Delta method. By the first-order approximation of $\mathrm{vec}\left(\bm{\widehat{B}}_j(\tau)\right)$ around $\mathrm{vec}\left(\bm{B}_j(\tau)\right)$, we have
$$
\sqrt{Th}\mathrm{vec}\left(\bm{\widehat{B}}_j(\tau)-\bm{B}_j(\tau)\right)\simeq \triangledown_{\bm{\phi}(\tau)}\bm{B}_j(\tau)\sqrt{Th}\left(\widehat{\bm{\phi}}(\tau)-\bm{\phi}(\tau)\right)
$$
and thus
\begin{eqnarray*}
 \sqrt{Th}\left(\mathrm{vec}\left(\bm{\widehat{B}}_j(\tau)-\bm{B}_j(\tau)\right)-\frac{1}{2}h^2\tilde{c}_2\bm{B}_j^{(2)}(\tau)
   +o_P(h^2)\right)\to_D N\left(0,\bm{\Sigma}_{\bm{B}_j}(\tau)\right),
\end{eqnarray*}
where $\bm{B}_j^{(2)}(\tau)$ and $\bm{\Sigma}_{\bm{B}_j}(\tau)$ have been defined in the body of the theorem.

To complete the proof, we first derive an analytic form for the derivative $\triangledown_{\bm{\phi}(\tau)}\bm{B}_j(\tau)$ under each of the identification restrictions. 

We have two sets of restrictions: (a) $d(d+1)/2$ restrictions implied by $\bm{\Omega}(\tau)=\bm{\omega}(\tau)\bm{\omega}^\top(\tau)$ and (b) additional $d(d-1)/2$ structural restrictions based on short-run or long-run restrictions.

Consider type (a) restrictions. We begin by considering $\mathrm{d}\bm{\Omega}(\tau)=\mathrm{d}\bm{\omega}(\tau) \cdot \bm{\omega}^\top(\tau)+\bm{\omega}(\tau) \cdot \mathrm{d}\bm{\omega}^\top(\tau)$. Let $\bm{B}$ and $\bm{C}$ be $n\times q$ and $q\times r$ matrices, respectively. By $\mathrm{vec}(\bm{A}\bm{B}\bm{C})=\bm{C}^\top\otimes\bm{A}\mathrm{vec}(\bm{B})$, $\mathrm{vec}(\bm{A}^\top)=\bm{K}_{m,n}\mathrm{vec}(\bm{A})$ and $\bm{K}_{m,q}(\bm{A}\otimes \bm{C})=(\bm{C}\otimes \bm{A})\bm{K}_{n,r}$, we have $\bm{N}_1(\tau) \mathrm{vec}(\mathrm{d}\bm{\omega}(\tau))=\mathrm{vec}(\mathrm{d}\bm{\Omega}(\tau))$, where $\bm{N}_1(\tau)=(\bm{I}_{d^2}+\bm{K}_{d,d})(\bm{\omega}(\tau)\otimes \bm{I}_{d})$. Let $\bm{D}_1$ be the duplication matrix such that $\mathrm{vec}[\bm{\Omega}(\tau)]=\bm{D}_1\mathrm{vech}[\bm{\Omega}(\tau)]$, which follows that $\bm{N}_1(\tau)\mathrm{vec}(\mathrm{d}\bm{\omega}(\tau))=\bm{D}_1 \mathrm{d}\bm{\sigma}(\tau)$ and
\begin{equation}\label{gradient_a}
  \bm{N}_1(\tau) \triangledown_{\bm{\sigma}(\tau)}\bm{\omega}(\tau) =\bm{D}_1.
\end{equation}

We then illustrate how to combine equation \eqref{gradient_a} with gradient equations from type (b) restrictions in order to compute $\triangledown_{\bm{\phi}(\tau)}\bm{\omega}(\tau)$.

In the case of short-run timing restrictions, because types (a) and (b) restrictions do not involve $\bm{\alpha}$, $\triangledown_{\bm{\phi}(\tau)}\bm{\omega}(\tau)$ has the form $[\bm{0}, \triangledown_{\bm{\sigma}(\tau)}\bm{\omega}(\tau)]$. Let $\bm{L}_d$ be the elimination matrix defined by $\mathrm{vech}[\bm{\omega}(\tau)] = \bm{L}_d \mathrm{vec}[\bm{\omega}(\tau)]$. Because $\bm{\omega}(\tau)$ is lower triangular subject to short-run restrictions, $\bm{L}^\top$ is a duplication matrix such that $\mathrm{vec}[\bm{\omega}(\tau)] = \bm{L}_d^\top \mathrm{vech}[\bm{\omega}(\tau)]$. Write
\begin{eqnarray*}
  \bm{N}_1(\tau) \mathrm{vec}(\mathrm{d}\bm{\omega}(\tau)) &=& \bm{D}_1 \mathrm{d}\bm{\sigma}(\tau), \\
  \bm{L}_d\bm{N}_1(\tau)\bm{L}_d^\top \mathrm{vech}(\mathrm{d}\bm{\omega}(\tau)) &=& \bm{L}_d \bm{D}_1 \mathrm{d}\bm{\sigma}(\tau) =\mathrm{d}\bm{\sigma}(\tau),\\
  \mathrm{vech}(\mathrm{d}\bm{\omega}(\tau)) &=& \left(\bm{L}_d\bm{N}_1(\tau)\bm{L}_d^\top \right)^{-1}\mathrm{d}\bm{\sigma}(\tau),\\
  \mathrm{vec}(\mathrm{d}\bm{\omega}(\tau)) &=& \bm{L}_d^\top \left(\bm{L}_d\bm{N}_1(\tau)\bm{L}_d^\top \right)^{-1}\mathrm{d}\bm{\sigma}(\tau).
\end{eqnarray*}
Hence, $\triangledown_{\bm{\sigma}(\tau)}\bm{\omega}(\tau)= \bm{L}_d^\top \left(\bm{L}_d\bm{N}_1(\tau)\bm{L}_d^\top \right)^{-1}$. Recall that $\triangledown_{\bm{\phi}(\tau)}\bm{B}_j(\tau)=[\triangledown_{\bm{\alpha}(\tau)}\bm{B}_j(\tau),\triangledown_{\bm{\sigma}(\tau)}\bm{B}_j(\tau)]$.
For $\triangledown_{\bm{\alpha}(\tau)}\bm{B}_j(\tau)$,
\begin{eqnarray*}
\triangledown_{\bm{\alpha}(\tau)}\bm{B}_j(\tau) &=& \frac{\partial \mathrm{vec}\left[\bm{\Psi}_j(\tau)\bm{\omega}(\tau)\right]}{\partial \bm{\alpha}^\top(\tau)} =(\bm{\omega}^\top(\tau)\otimes \bm{I}_d) \frac{\partial \mathrm{vec}\left[\bm{\Psi}_j(\tau)\right]}{\partial \bm{\alpha}^\top(\tau)} \\
   &=& (\bm{\omega}^\top(\tau)\otimes \bm{I}_d) \frac{\partial \mathrm{vec}\left[\bm{J}\bm{\Phi}^j(\tau)\bm{J}^\top\right]}{\partial \bm{\alpha}^\top(\tau)} \\
   &=&(\bm{\omega}^\top(\tau)\otimes \bm{I}_d)\left(\sum_{m=0}^{j-1} \bm{J}(\bm{\Phi}^\top(\tau))^{j-1-m}\otimes \bm{\Psi}_m(\tau)\right).
\end{eqnarray*}
For $\triangledown_{\bm{\sigma}(\tau)}\bm{B}_j(\tau)$,
\begin{eqnarray*}
\triangledown_{\bm{\sigma}(\tau)}\bm{B}_j(\tau) &=& \frac{\partial \mathrm{vec}\left[\bm{\Psi}_j(\tau)\bm{\omega}(\tau)\right]}{\partial \bm{\sigma}^\top(\tau)} = (\bm{I}_d\otimes \bm{\Psi}_j(\tau))\frac{\partial \mathrm{vec}\left[\bm{\omega}(\tau)\right]}{\partial \bm{\sigma}^\top(\tau)} \\
   &=&(\bm{I}_d\otimes \bm{\Psi}_j(\tau))\bm{L}_d^\top \left(\bm{L}_d\bm{N}_1(\tau)\bm{L}_d^\top \right)^{-1}.
\end{eqnarray*}

In the case of long-run restrictions, type (b) restrictions involve $\bm{\alpha}(\tau)$, so that $\triangledown_{\bm{\phi}(\tau)}\bm{\omega}(\tau)$ has the form $[\triangledown_{\bm{\alpha}(\tau)}\bm{\omega}(\tau), \triangledown_{\bm{\sigma}(\tau)}\bm{\omega}(\tau)]$. First, equation \eqref{gradient_a} must be extended in the form $\bm{N}_1\triangledown_{\bm{\phi}(\tau)}\bm{\omega}(\tau)=[\bm{0},\bm{D}_1]$. Second, long-run restrictions can be expressed as $\bm{Q}\mathrm{vec}\left[\bm{A}_\tau^{-1}(1)\bm{\omega}\right]=0$, where $\bm{Q}$ is a $d(d-1)/2 \times d^2$ matrix of $0$ and $1$, and $\bm{A}_\tau(1)=\bm{I}_d-\sum_{i=1}^{p}\bm{A}_i(\tau)$. By $\mathrm{d}\bm{A}^{-1}=-\bm{A}^{-1}\cdot\mathrm{d}\bm{A} \cdot \bm{A}^{-1}$, we have
\begin{eqnarray*}
  &&\bm{Q}\mathrm{vec}\left[\bm{A}_\tau^{-1}(1)\bm{\omega}\right] = 0 \\
  &&\bm{Q}\mathrm{vec}\left[\mathrm{d}(\bm{A}_\tau^{-1}(1))\bm{\omega}+\bm{A}_\tau^{-1}(1)\mathrm{d}\bm{\omega}\right] = 0 \\
  &&\bm{Q}\mathrm{vec}\left[-\bm{A}_\tau^{-1}(1)\mathrm{d}(\bm{A}_\tau(1))\bm{A}_\tau^{-1}(1)\bm{\omega}+\bm{A}_\tau^{-1}(1)\mathrm{d}\bm{\omega}\right] = 0 \\
  &&\bm{Q}\left[\bm{I}_d\otimes \bm{A}_\tau^{-1}(1)\right]\mathrm{vec}\left[\mathrm{d}\bm{\omega}\right] = \bm{Q}[\bm{B}^\top(\tau)\otimes\bm{A}_\tau^{-1}(1)]\mathrm{vec}[\mathrm{d}\bm{A}_\tau(1)]\\
  && \bm{N}_2(\tau)  \triangledown_{\bm{\phi}(\tau)}\bm{\omega}(\tau) = [\bm{D}_{2}(\tau), \bm{0}],
\end{eqnarray*}
where $\bm{N}_2(\tau)=\bm{Q}\left[\bm{I}_d\otimes \bm{A}_\tau^{-1}(1)\right]$ and $\bm{D}_{2}(\tau)=\bm{Q}[\bm{B}^\top(\tau)\otimes\bm{A}_\tau^{-1}(1)]\triangledown_{\bm{\alpha}(\tau)}\bm{A}_{\tau}(1)$ with $\triangledown_{\bm{\alpha}(\tau)}\bm{A}_{\tau}(1)=-[\bm{I}_{d^2},...,\bm{I}_{d^2}]$ $(d^2\times d^2p)$. Hence,
\begin{eqnarray*}
  \triangledown_{\bm{\phi}(\tau)}\bm{\omega}(\tau) &=& [\triangledown_{\bm{\alpha}(\tau)}\bm{\omega}(\tau), \triangledown_{\bm{\sigma}(\tau)}\bm{\omega}(\tau)] \\
   &=& \left[\left(\bm{N}_1^\top(\tau),\bm{N}_2^\top(\tau)\right)\left(\begin{matrix}
                                                                       \bm{N}_1(\tau) \\
                                                                       \bm{N}_2(\tau)
                                                                     \end{matrix} \right) \right]^{-1}\left[\bm{N}_1^\top(\tau),\bm{N}_2^\top(\tau)\right]\left[\begin{matrix}
                                                                                    \bm{0} & \bm{D}_1 \\
                                                                                    \bm{D}_2(\tau) & \bm{0}
                                                                             \end{matrix}\right]\\
   &=&\left(\bm{N}_1^\top(\tau)\bm{N}_1(\tau)+\bm{N}_2^\top(\tau)\bm{N}_2(\tau)\right)^{-1}\left[\bm{N}_2^\top(\tau)\bm{D}_2(\tau), \bm{N}_1^\top(\tau)\bm{D}_1 \right].
\end{eqnarray*}
For $\triangledown_{\bm{\alpha}(\tau)}\bm{B}_j(\tau)$,
\begin{eqnarray*}
\triangledown_{\bm{\alpha}(\tau)}\bm{B}_j(\tau) &=& \frac{\partial \mathrm{vec}\left[\bm{\Psi}_j(\tau)\bm{\omega}(\tau)\right]}{\partial \bm{\alpha}^\top(\tau)} \\
   &=& (\bm{\omega}^\top(\tau)\otimes \bm{I}_d) \frac{\partial \mathrm{vec}\left[\bm{\Psi}_j(\tau)\right]}{\partial \bm{\alpha}^\top(\tau)} + (\bm{I}_d \otimes \bm{\Psi}_j(\tau)) \frac{\partial \mathrm{vec}\left[\bm{\omega}(\tau)\right]}{\partial \bm{\alpha}^\top(\tau)}\\
   &=&(\bm{\omega}^\top(\tau)\otimes \bm{I}_d)\left(\sum_{m=0}^{j-1} \bm{J}(\bm{\Phi}^\top(\tau))^{j-1-m}\otimes \bm{\Psi}_m(\tau)\right)\\
   &&+(\bm{I}_d \otimes \bm{\Psi}_j(\tau))\left(\bm{N}_1^\top(\tau)\bm{N}_1(\tau)+\bm{N}_2^\top(\tau)\bm{N}_2(\tau)\right)^{-1}\bm{N}_2^\top(\tau)\bm{D}_2(\tau).
\end{eqnarray*}

For $\triangledown_{\bm{\sigma}(\tau)}\bm{B}_j(\tau)$,
\begin{eqnarray*}
\triangledown_{\bm{\sigma}(\tau)}\bm{B}_j(\tau) &=& \frac{\partial \mathrm{vec}\left[\bm{\Psi}_j(\tau)\bm{\omega}(\tau)\right]}{\partial \bm{\sigma}^\top(\tau)} =(\bm{I}_d\otimes \bm{\Psi}_j(\tau))\frac{\partial \mathrm{vec}\left[\bm{\omega}(\tau)\right]}{\partial \bm{\sigma}^\top(\tau)} \\
   &=&(\bm{I}_d\otimes \bm{\Psi}_j(\tau))\bm{L}_d^\top \left(\bm{N}_1^\top(\tau)\bm{N}_1(\tau)+\bm{N}_2^\top(\tau)\bm{N}_2(\tau)\right)^{-1}\bm{N}_1^\top(\tau)\bm{D}_1.
\end{eqnarray*}

The proof is now completed.
\end{proof}

\begin{proof}[Proof of Theorem 2.3]

\item
We need to prove that $\lim_{T\to \infty}\Pr\left(\text{IC}(\mathsf{p}) < \text{IC}(p)\right)=0$ for all $\mathsf{p}\neq p$ and $\mathsf{p}\leq \mathsf{P}$.

Note that
\be
\text{IC}(\mathsf{p})-\text{IC}(p)=\log[\text{RSS}(\mathsf{p})/\text{RSS}(p)]+(\mathsf{p}-p)\chi_T.
\nonumber
\ee

(i) For $\mathsf{p} < p$, Lemma \ref{LemmaB.5} implies that $\text{RSS}(\mathsf{p})/\text{RSS}(p) > 1 + \nu$ for some $\nu > 0 $ with large probability for all large $T$. Thus, $\log[\text{RSS}(\mathsf{p})/\text{RSS}(p)] \geq \nu/2$ for large $T$. Because $\chi_T\to 0$, we have $\text{IC}(\mathsf{p})-\text{IC}(p)\geq \nu/2-(p-\mathsf{p})\chi_T \geq \nu/3$ for large $T$ with large probability. Thus $\Pr\left(\text{IC}(\mathsf{p}) < \text{IC}(p)\right) \to 0$ for $\mathsf{p} < p$.

(ii) We then consider $\mathsf{p} > p$.  Lemma \ref{LemmaB.5} implies that $\log[\text{RSS}(\mathsf{p})/\text{RSS}(p)]=1+O_P(\rho_T^2)$ with $\rho_T= h^2 + \sqrt{\frac{\log(T)}{Th}}$. Hence, $\log[\text{RSS}(\mathsf{p})/\text{RSS}(p)]=O_P(\rho_T^2)$. Because $(\mathsf{p}-p) \chi_T \geq \chi_T $, which converges to zero at a slower rate than $\rho_T^2$, it follows that
\be
\Pr\left(\text{IC}(\mathsf{p}) < \text{IC}(p)\right)\leq \Pr\left(O_P(\rho_T^2)+ \chi_T < 0\right) \to 0.
\nonumber
\ee

The proof is now completed.
\end{proof}

\begin{proof}[Proof of Lemma \ref{Thm3.1}]
\item
By part(1) of Theorem \ref{Thm2.1}, we have $\sup_{\tau\in[0,1]}\|\widehat{\bm{\beta}}(\tau)-\bm{\beta}(\tau)\|=O_P(h^2+\sqrt{\log T/(Th)})$. Hence, by $Th^6\to 0$, we have
\be
\sqrt{T}\left(\widehat{\bm{c}} - \bm{c}\right)= \sqrt{T}\int_{h}^{1-h}\bm{C}\left(\widehat{\bm{\beta}}(\tau)-\bm{\beta}(\tau) \right)\mathrm{d}\tau + o_P(1).
\nonumber
\ee

In addition, by the proof of part(1) of Theorem \ref{Thm2.1}, the uniform convergence results of Lemmas \ref{LemmaB.3}-\ref{LemmaB.4} and the condition that $\bm{\beta}(\tau)$ has third-order derivative, we have
\begin{eqnarray*}
\widehat{\bm{\beta}}(\tau)-\bm{\beta}(\tau) &=& \frac{1}{2}h^2\widetilde{c}_2 \bm{\beta}^{(2)}(\tau) + \bm{\Sigma}_{\bm{Z}}^{-1}(\tau)\left(\frac{1}{Th}\sum_{t=1}^{T}\bm{Z}_{t-1} \bm{\eta}_t
K\left(\frac{\tau_t-\tau}{h}\right)\right)\\
&& + O_P(h^2\sqrt{\log T/(Th)}) + O_P(h^3)+O_P(\log T/(Th))
\end{eqnarray*}
uniformly over $\tau \in [h,1-h]$.

As $Th^6\to0$ and $Th^2/(\log T)^2 \to \infty$, we have
\begin{eqnarray*}
&&\sqrt{T}\left(\widehat{\bm{c}} - \bm{c} -\frac{1}{2}h^2\widetilde{c}_2\int_{0}^{1}\bm{C}\bm{\beta}^{(2)}(\tau)\mathrm{d}\tau\right) \\
&=& \frac{1}{\sqrt{T}}\sum_{t=1}^{T}\bm{C}\left(\int_{-1}^{1}\bm{\Sigma}_{\bm{Z}}^{-1}(\tau)K\left(\frac{\tau_t-\tau}{h}\right)\mathrm{d}\tau\right)\bm{Z}_{t-1} \bm{\eta_t} +o_P(1)\\
&=&\frac{1}{\sqrt{T}}\sum_{t=1}^{T}\bm{\Sigma}_{\bm{Z}}^{-1}(\tau_t)\bm{Z}_{t-1} \bm{\eta_t} +o_P(1)\\
&\to_D& N\left(\bm{0},\int_{0}^{1}\bm{C}\left( \bm{\Sigma}^{-1}(\tau)\otimes \bm{\Omega}(\tau)\right) \bm{C}^\top\mathrm{d}\tau\right)
\end{eqnarray*}
by Lemma B.1 under the conditions of this lemma.

The proof is now completed.
\end{proof}

\begin{proof}[Proof of Theorem \ref{Thm3.2}]
\item

First, we introduce some additional notation to facilitate the development. Let $A_Q=\widetilde{v}_0\cdot \mathrm{tr}\left\{\int_{0}^{1} \bm{\Sigma}_Q(\tau) \mathrm{d}\tau\right\}$ and $B_Q= 4 C_B\cdot \mathrm{tr}\left\{\int_{0}^{1}\bm{\Sigma}_Q(\tau)^2 \mathrm{d}\tau\right\}$, where $\bm{\Sigma}_Q(\tau)=\bm{H}(\tau)^{1/2}\bm{C}\bm{V}_{\bm{\beta}}(\tau)\bm{C}^\top \bm{H}(\tau)^{1/2} $. Recall $\rho_T=h^2+\sqrt{\frac{\log T}{Th}}$. By Theorem \ref{Thm2.1}.1, we have
\begin{eqnarray}\label{Rbeta}
	\sup_{\tau\in [0,1]} \left\|\widehat{\bm{\beta}}(\tau)-\bm{\beta}(\tau) \right\|=O_P\left(\rho_T\right).
\end{eqnarray}
Then we can conclude that as $Th^{11/2} = o(1)$,
\begin{eqnarray}
	\int_{\mathcal{B}_T}\left[\bm{C}\widehat{\bm{\beta}}(\tau)-\bm{c}\right]^\top\bm{H}(\tau)
	\left[\bm{C}\widehat{\bm{\beta}}(\tau)-\bm{c}\right]\mathrm{d}\tau = O_P\left(\log T / T +h^5\right) =o_P\left(T^{-1}h^{-1/2}\right),
\end{eqnarray}
where $\mathcal{B}_T = [0,h]\cup [1-h,1]$.

In addition, by  Lemma \ref{LemmaB.3} and Lemma \ref{LemmaB.4}.1, we have
\begin{eqnarray}\label{Rrate}
	&&\sup_{[h,1-h]}\left\|\bm{S}_T(\tau)-\bm{\Sigma}_{\bm{Z}}(\tau) \otimes \bm{\Lambda}_1\right\| =  O_P\left(h^2 + \sqrt{\frac{\log T}{Th}}\right),\nonumber \\
	&&\sup_{[0,1]}\left\| \bm{R}_T(\tau) \right\|= O_P\left(\sqrt{\frac{\log T}{Th}}\right),
\end{eqnarray}
where $\bm{\Sigma}_{\bm{Z}}(\tau) = \bm{\Sigma}(\tau)\otimes \bm{I}_d$, $\bm{\Lambda}_1 = \mathrm{diag}(\widetilde{c}_0,\widetilde{c}_2)$ and $\bm{R}_T(\tau)=\frac{1}{T}\sum_{t=1}^{T}\bm{Z}_{t-1}\bm{\eta}_tK_h(\tau_t-\tau)$. Hence, under the null hypothesis, we have

\begin{eqnarray}\label{Rrate2}
	\sup_{[h,1-h]}\left\|\bm{C}\widehat{\bm{\beta}}(\tau)- \bm{c}- \bm{C}\bm{\Sigma}_{\bm{Z}}^{-1}(\tau) \bm{R}_T(\tau)\right\|=O_P(\rho_T^2).
\end{eqnarray}

By \eqref{Rrate} and $Th^{11/2} \to 0$, we have
\begin{eqnarray*}
	&&\int_0^1\left[\bm{C}\widehat{\bm{\beta}}(\tau)-\bm{c}\right]^\top\bm{H}(\tau)
	\left[\bm{C}\widehat{\bm{\beta}}(\tau)-\bm{c}\right]\mathrm{d}\tau \\
	&=& \int_{h}^{1-h}\bm{R}_T^\top(\tau) \bm{H}_0 (\tau)\bm{R}_T(\tau)\mathrm{d}\tau + O_P\left(\log T / T +h^5\right) + O_P\left(\rho_T^2 \sqrt{\frac{\log T}{Th}}\right) \\
	&=& \int_{0}^{1}\bm{R}_T^\top(\tau) \bm{H}_{0}(\tau)\bm{R}_T(\tau)\mathrm{d}\tau +o_P(T^{-1}h^{-1/2}),
\end{eqnarray*}
where $\bm{H}_{0}(\tau)=\bm{\Sigma}_{\bm{Z}}^{-1}(\tau)\bm{C}^\top\bm{H}(\tau)\bm{C}\bm{\Sigma}_{\bm{Z}}^{-1}(\tau)$.

Consider $\int_{0}^{1}\bm{R}_T^\top(\tau)\bm{H}_{0}(\tau)\bm{R}_T(\tau) \mathrm{d}\tau$, and write
\begin{eqnarray*}
	&&\int_{0}^{1}\bm{R}_T^\top(\tau)\bm{H}_{0}(\tau)\bm{R}_T(\tau) \mathrm{d}\tau =  \frac{1}{T^2h^2}\sum_{t=1}^{T}\bm{\eta}_t^\top\bm{Z}_{t-1}^\top \left\{\int_{0}^{1}\bm{H}_0(\tau)K^2\left(\frac{\tau-\tau_t}{h}\right) \mathrm{d}\tau\right\} \bm{Z}_{t-1}\bm{\eta}_t\\
	&& + \frac{1}{T^2h^2}\sum_{t=1}^{T}\sum_{s=1,\neq t}^{T}\bm{\eta}_t^\top\bm{Z}_{t-1}^\top \left\{\int_{0}^{1}\bm{H}_0(\tau)K\left(\frac{\tau-\tau_t}{h}\right)K\left(\frac{\tau-\tau_s}{h}\right) \mathrm{d}\tau \right\}\bm{Z}_{s-1} \bm{\eta}_s\\
	&:=&I_{T,1}+I_{T,2},
\end{eqnarray*}
where the definitions of $I_{T,1}$ and $I_{T,2}$ should be obvious.

For $I_{T,1}$, simple algebra shows that
\begin{eqnarray*}
	&&\bm{\eta}_t^\top\bm{Z}_{t-1}^\top\bm{\Sigma}_{\bm{Z}}^{-1}(\tau)\bm{C}^\top\bm{H}(\tau)\bm{C}\bm{\Sigma}_{\bm{Z}}^{-1}(\tau)\bm{Z}_{t-1}\bm{\eta}_t \\
	&=& \mathrm{tr}\left\{\left[\left(\bm{\Sigma}^{-1}(\tau)\bm{z}_{t-1}\bm{z}_{t-1}^\top\bm{\Sigma}^{-1}(\tau)\right)\otimes  \bm{\eta}_t\bm{\eta}_t^\top \right]\cdot\bm{C}^\top\bm{H}(\tau)\bm{C} \right\}.
\end{eqnarray*}

Then we have
\begin{eqnarray*}
	I_{T,1} &=& \widetilde{v}_0\frac{1}{T^2h}\sum_{t=1}^{T}\bm{\eta}_t^\top\bm{Z}_{t-1}^\top\left[\bm{H}_0(\tau_t)+O(h)\right]\bm{Z}_{t-1}\bm{\eta}_t\\
	&=& \widetilde{v}_0\frac{1}{T^2h}\sum_{t=1}^{T}\mathrm{tr}\left\{\left[\left(\bm{\Sigma}^{-1}(\tau_t)\bm{z}_{t-1}\bm{z}_{t-1}^\top\bm{\Sigma}^{-1}(\tau_t)\right)\otimes  \bm{\eta}_t\bm{\eta}_t^\top \right]\cdot \bm{C}^\top\bm{H}(\tau_t)\bm{C}\right\}+O_P(T^{-1})\\
	&=& \widetilde{v}_0\frac{1}{T^2h}\sum_{t=1}^{T}\mathrm{tr}\left\{\left[\bm{\Sigma}^{-1}(\tau_t)\otimes  \bm{\Omega}(\tau_t)\right]\cdot\bm{C}^\top\bm{H}(\tau_t)\bm{C}\right\}+O_P(T^{-1}+T^{-3/2}h^{-1})\\
	&\to_P&(Th)^{-1}A_Q.
\end{eqnarray*}

Consider $I_{T,2}$. Let $w_{s,t}=\frac{1}{T\sqrt{h}}\int_{-1}^{1}K\left(u\right)K\left(u+\frac{t-s}{Th}\right)\mathrm{d}u$. Since
$$
\int_{0}^{1}\bm{H}_0(\tau)K\left(\frac{\tau-\tau_t}{h}\right)K\left(\frac{\tau-\tau_s}{h}\right)\mathrm{d}\tau= h\int_{-1}^{1}\bm{H}_0(\tau_t+uh)K\left(u\right)K\left(u+\frac{t-s}{Th}\right)du,
$$
we have
$$
T\sqrt{h}I_{T,2}= 2\sum_{t=2}^{T}\sum_{s=1}^{t-1}\bm{\eta}_t^\top\bm{Z}_{t-1}^\top\bm{H}_0(\tau_t)\bm{Z}_{s-1}\bm{\eta}_s w_{s,t}(1+o(1))= 2 \widetilde{U}+o_P(1),
$$
where the definition of $\widetilde{U}$ is obvious. By Lemma \ref{L5}, we have $$\widetilde{U} \to_D N\left(0,\sigma_{\widetilde{U}}^2\right),$$ where $ \sigma_{\widetilde{U}}^2 = \lim_{T \to \infty}\sum_{t=2}^{T}\mathrm{tr}\left\{E\left(\bm{H}_0^\top(\tau_t)\bm{Z}_{t-1}\bm{\eta}_t\bm{\eta}_t^\top\bm{Z}_{t-1}^\top\bm{H}_0(\tau_t)\right)E\left( \sum_{s=1}^{t-1}\bm{Z}_{s-1}\bm{\eta}_s\bm{\eta}_s^\top\bm{Z}_{s-1}^\top\right)w_{s,t}^2\right\}$.

We then show that $\sigma_{\widetilde{U}}^2 = C_B \mathrm{tr}\left\{\int_{0}^{1}\bm{\Sigma}_Q(\tau)^2\mathrm{d}\tau\right\}$. Let $\bm{V}_1 (\tau)=\bm{H}_0(\tau)\bm{V}_2(\tau)\bm{H}_0(\tau)$ and $\bm{V}_2(\tau) = \bm{\Sigma}(\tau) \otimes \bm{\Omega}(\tau)$. Write

\begin{eqnarray*}
	&&\sum_{t=2}^{T}E\left(\bm{H}_0^\top(\tau_t)\bm{Z}_{t-1}\bm{\eta}_t\bm{\eta}_t^\top\bm{Z}_{t-1}^\top\bm{H}_0(\tau_t)\right)E\left( \sum_{s=1}^{t-1}\bm{Z}_{s-1}\bm{\eta}_s\bm{\eta}_s^\top\bm{Z}_{s-1}^\top\right)w_{s,t}^2\\
	&=&\sum_{t=2}^{T}\sum_{s=1}^{t-1}\bm{V}_1(\tau_t)\bm{V}_2(\tau_s)w_{s,t}^2 =  \frac{1}{T^2h}\sum_{t=2}^{T}\sum_{s=1}^{t-1}\bm{V}_1(\tau_t)\bm{V}_2(\tau_s) \left[\int_{-1}^{1}K\left(u\right)K\left(u+\frac{t-s}{Th}\right)\mathrm{d}u\right]^2\\
	&=&\frac{1}{T^2h}\sum_{s=1}^{T-1}\sum_{j=1}^{T-s}\bm{V}_1(\tau_s+j/T)\bm{V}_2(\tau_s)\left[\int_{-1}^{1}K\left(u\right)K\left(u+\frac{t-s}{Th}\right)\mathrm{d}u\right]^2\\
	&=&\frac{1}{T^2h}\sum_{s=1}^{T-1}\sum_{j=1}^{T-s}\bm{V}_1(\tau_s+j/T)\bm{V}_2(\tau_s)\left[\int_{-1}^{1}K\left(u\right)K\left(u+\frac{j}{Th}\right)\mathrm{d}u\right]^2\\
	&=&\frac{1}{T^2h}\sum_{s=1}^{T-1}\sum_{j=1}^{T-s}\bm{V}_1(\tau_s)\bm{V}_2(\tau_s)\left[\int_{-1}^{1}K\left(u\right)K\left(u+\frac{j}{Th}\right)\mathrm{d}u\right]^2\\
	&&+\frac{1}{T^2h}\sum_{s=1}^{T-1}\sum_{j=1}^{T-s}O(j/T)\bm{V}_2(\tau_s)\left[\int_{-1}^{1}K\left(u\right)K\left(u+\frac{j}{Th}\right)\mathrm{d}u\right]^2 := I_{T,3} + I_{T,4},
\end{eqnarray*}
where the definitions of $I_{T,3} $ and $I_{T,4}$ are obvious.

It is easy to verify $\mathrm{tr}\left\{I_{T,3}\right\} \to C_B \mathrm{tr}\left\{\int_{0}^{1}\bm{\Sigma}_Q(\tau)^2\mathrm{d}\tau\right\}$. For $I_{T,4}$, we have
\begin{eqnarray*}
	\left\|I_{T,4}\right\|&\leq&M \frac{1}{Th}\sum_{j=1}^{T}j/T\left[\int_{-1}^{1}K\left(u\right)K\left(u+\frac{j}{Th}\right)\mathrm{d}u\right]^2\\
	&\simeq&M \int_{0}^{2}vh\left[\int_{-1}^{1}K\left(u\right)K\left(u+v\right)\mathrm{d}u\right]^2\mathrm{d}v =o(1).
\end{eqnarray*}

Combining the above results, we have proved
\begin{eqnarray*}
	T\sqrt{h}\left[\int_{0}^{1}\bm{R}_T^\top(\tau) \bm{H}_{0}(\tau)\bm{R}_T(\tau)\mathrm{d}\tau-(Th)^{-1}A_Q\right]\to_D N\left(0,B_Q\right).
\end{eqnarray*}

\medskip

Note that
\begin{eqnarray*}
	&&\int_0^1\left[\bm{C}\widehat{\bm{\beta}}(\tau)-\widehat{\bm{c}}\right]^\top\bm{H}(\tau)
	\left[\bm{C}\widehat{\bm{\beta}}(\tau)-\widehat{\bm{c}}\right]\mathrm{d}\tau-\int_0^1\left[\bm{C}\widehat{\bm{\beta}}(\tau)-\bm{c}\right]^\top\bm{H}(\tau)
	\left[\bm{C}\widehat{\bm{\beta}}(\tau)-\bm{c}\right]\mathrm{d}\tau\\
	&=&\int_{0}^{1}\left( \widehat{\bm{c}}-\bm{c}\right)^\top \bm{H}(\tau)\left(\widehat{\bm{c}}-\bm{c}\right) \mathrm{d}\tau - 2\int_{0}^{1}\left( \widehat{\bm{c}}-\bm{c}\right)^\top \bm{H}(\tau)\left(\bm{C}\widehat{\bm{\beta}}(\tau)-\bm{c}\right)\mathrm{d}\tau  \\
	&:=&I_{T,5} - 2 I_{T,6},
\end{eqnarray*}
where the definitions of $I_{T,5}$ and $I_{T,6}$ are obvious.

Since $\widehat{\bm{c}} = \bm{c} + O_P\left(T^{-1/2}\right)$, we have $I_{T,5} = O_P(T^{-1})$. For $I_{T,6}$, by \eqref{Rbeta} and \eqref{Rrate2}, we have

\begin{eqnarray*}
	I_{T,6} &=& \left(\widehat{\bm{c}}-\bm{c}\right)^\top\int_{h}^{1-h}\bm{H}(\tau)\bm{C}\bm{\Sigma}_{\bm{Z}}^{-1}(\tau)\bm{R}_T(\tau)\mathrm{d}\tau +o_P\left(T^{-1}h^{-1/2}\right)\\
	&=&  O_P(T^{-1})+o_P\left(T^{-1}h^{-1/2}\right)=o_P\left(T^{-1}h^{-1/2}\right)
\end{eqnarray*}
provided that

\begin{eqnarray*}
	&& \int_{h}^{1-h}\bm{H}(\tau)\bm{C}\bm{\Sigma}_{\bm{Z}}^{-1}(\tau)\bm{R}_T(\tau)\mathrm{d}\tau \\
	&=& \frac{1}{T}\sum_{t=1}^{T}\int_{-1}^{1}\bm{H}(\tau_t+uh)\bm{C}\bm{\Sigma}_{\bm{Z}}^{-1}(\tau_t+uh) K(u)\mathrm{d}u \bm{Z}_{t-1}\bm{\eta}_t = O_P(T^{-1/2}).
\end{eqnarray*}

We thenconclude that $T\sqrt{h}\left[\widehat{Q}_{\bm{C},\bm{H}}-(Th)^{-1}A_Q\right]\to_D N\left(0,B_Q\right)$.

Observe that
\begin{eqnarray*}
	&&\int_0^1\left[\bm{C}\widehat{\bm{\beta}}(\tau)-\widehat{\bm{c}}\right]^\top\widehat{\bm{H}}(\tau)
	\left[\bm{C}\widehat{\bm{\beta}}(\tau)-\widehat{\bm{c}}\right]\mathrm{d}\tau -\int_0^1\left[\bm{C}\widehat{\bm{\beta}}(\tau)-\bm{c}\right]^\top\bm{H}(\tau)
	\left[\bm{C}\widehat{\bm{\beta}}(\tau)-\bm{c}\right]\mathrm{d}\tau\\
	&=& \int_0^1\left[\bm{C}\widehat{\bm{\beta}}(\tau)-\widehat{\bm{c}}\right]^\top\widehat{\bm{H}}(\tau)  \left[\bm{C}\widehat{\bm{\beta}}(\tau)-\widehat{\bm{c}}\right]\mathrm{d}\tau -\int_0^1\left[\bm{C}\widehat{\bm{\beta}}(\tau)-\widehat{\bm{c}}\right]^\top\bm{H}(\tau) \left[\bm{C}\widehat{\bm{\beta}}(\tau)-\widehat{\bm{c}}\right]\mathrm{d}\tau \\
	&&+\int_0^1\left[\bm{C}\widehat{\bm{\beta}}(\tau)-\widehat{\bm{c}}\right]^\top\bm{H}(\tau) \left[\bm{C}\widehat{\bm{\beta}}(\tau)-\widehat{\bm{c}}\right]\mathrm{d}\tau -\int_0^1\left[\bm{C}\widehat{\bm{\beta}}(\tau)-\bm{c}\right]^\top\bm{H}(\tau)
	\left[\bm{C}\widehat{\bm{\beta}}(\tau)-\bm{c}\right]\mathrm{d}\tau.
\end{eqnarray*}

Then we just need to focus on
$$ \int_0^1\left[\bm{C}\widehat{\bm{\beta}}(\tau)-\widehat{\bm{c}}\right]^\top\widehat{\bm{H}}(\tau)  \left[\bm{C}\widehat{\bm{\beta}}(\tau)-\widehat{\bm{c}}\right]\mathrm{d}\tau -\int_0^1\left[\bm{C}\widehat{\bm{\beta}}(\tau)-\widehat{\bm{c}}\right]^\top\bm{H}(\tau) \left[\bm{C}\widehat{\bm{\beta}}(\tau)-\widehat{\bm{c}}\right]\mathrm{d}\tau := I_{T,7}
$$

Hence, it suffices to show $T\sqrt{h}I_{T,7} = o_P(1)$. Using Lemma \ref{LemmaB.4}.3, it is easy to know that

\begin{eqnarray*}
	|I_{T,7}|&\leq &\sup_{\tau \in [0,1]}\left\|\widehat{\bm{H}}(\tau)- \bm{H}(\tau)\right\|\times \widehat{Q}_{\bm{C},\bm{I}_s}\\
	& = &O_P\left(h + \sqrt{\frac{\log T}{Th}}\right)O_P\left((Th)^{-1} + 1/(T\sqrt{h})\right) = o_P(1/(T\sqrt{h})).
\end{eqnarray*}

The proof is now completed.
\end{proof}

\begin{proof}[Proof of Corollary \ref{Thm3.3}]
\item
	
Under the local alternative \eqref{Eq3.6}, we have $\bm{C}\bm{\beta}(\tau) = \bm{c} + d_T \bm{f}(\tau)$ and thus
\begin{eqnarray*}
	&&\widehat{Q}_{\bm{C},\bm{H}} - \int_{0}^{1}\bm{R}_T^\top(\tau) \bm{H}_{0}(\tau)\bm{R}_T(\tau) \mathrm{d}\tau \\
	&=& d_T^2 \int_{0}^{1}\bm{f}(\tau)^\top \bm{H}(\tau)\bm{f}(\tau)\mathrm{d}\tau + 2d_T \int_{0}^{1}\bm{f}(\tau)^\top \bm{H}(\tau)\left(\bm{C}\widehat{\bm{\beta}}(\tau)-\bm{C}\bm{\beta}(\tau)\right)\mathrm{d}\tau\\
	&& + \left[\int_{0}^{1}\left(\bm{C}\widehat{\bm{\beta}}(\tau)-\bm{C}\bm{\beta}(\tau)\right)^\top \bm{H}(\tau)\left(\bm{C}\widehat{\bm{\beta}}(\tau)-\bm{C}\bm{\beta}(\tau)\right)\mathrm{d}\tau-\int_{0}^{1}\bm{R}_T^\top(\tau) \bm{H}_{0}(\tau)\bm{R}_T(\tau) \mathrm{d}\tau\right]\\
	&=& d_T^2 \int_{0}^{1}\bm{f}(\tau)^\top \bm{H}(\tau)\bm{f}(\tau)\mathrm{d}\tau + I_{T,1} + I_{T,2}.
\end{eqnarray*}

Since $\bm{C}\widehat{\bm{\beta}}(\tau)-\bm{C}\bm{\beta}(\tau) = O_P\left(d_T \rho_T+\sqrt{\frac{\log T}{Th}}\rho_T\right) + \bm{C}\bm{\Sigma}_{\bm{Z}}^{-1}(\tau)\bm{R}_T(\tau)$ uniformly over $\tau\in[h,1-h]$ and
$$
\int_{0}^{1} \bm{f}(\tau)^\top \bm{H}(\tau) \bm{C}\bm{\Sigma}_{\bm{Z}}^{-1}(\tau)\bm{R}_T(\tau) \mathrm{d}\tau =O_P(T^{-1/2}),
$$
we have $I_{T,1} = O_P\left(d_T (d_T \rho_T + \sqrt{\frac{\log T}{Th}}\rho_T + T^{-1/2} )\right)=o_P(T^{-1}h^{-1/2}) $.

For $I_{T,2}$, since $\sup_{\tau \in [0,1]}\left\|\bm{R}_T(\tau)\right\| = O_P\left(\sqrt{\frac{\log T}{Th}} \right)$, we have
$$
I_{T,2} = O_P\left(d_T^2\rho_T^2 + d_T\rho_T\sqrt{\frac{\log T}{Th}} \right) = o_P(T^{-1}h^{-1/2}).
$$

As $T\sqrt{h}\left(\int_{0}^{1}\bm{R}_T^\top(\tau) \bm{H}_{0}(\tau)\bm{R}_T(\tau)- (Th)^{-1}A_Q\right) \to N(0,B_Q)$,
we have
$$
T\sqrt{h}\left(\widehat{Q}_{\bm{C},\bm{H}}- (Th)^{-1}A_Q\right) \to N(\delta_1,B_Q).
$$
In addition, similar to the proofs of Theorem \ref{Thm3.2}, we have $T\sqrt{h}\left(\widehat{Q}_{\bm{C},\bm{H}}-\widehat{Q}_{\bm{C},\widehat{\bm{H}}}\right)=o_P(1)$. The proof is now completed.
\end{proof}

}
\newpage

\section*{Appendix B}

\renewcommand{\theequation}{B.\arabic{equation}}
\renewcommand{\thesection}{B.\arabic{section}}
\renewcommand{\thefigure}{B.\arabic{figure}}
\renewcommand{\thetable}{B.\arabic{table}}
\renewcommand{\thelemma}{B.\arabic{lemma}}

\setcounter{equation}{0}
\setcounter{lemma}{0}
\setcounter{section}{0}
\setcounter{table}{0}
\setcounter{figure}{0}

{\small

In this appendix, we present the results omitted from the main text of this paper. Specifically, Appendix \ref{AppB.1} includes the preliminary lemmas. Appendix \ref{AppB.2} provides the proofs of the preliminary lemmas.

\section{Preliminary Lemmas}\label{AppB.1}

\begin{lemma}\label{LemmaB.1}
Suppose $\{Z_t,\mathcal{F}_{t}\}$ is a martingale difference sequence, $S_T=\sum_{t=1}^{T}Z_t$, $U_T=\sum_{t=1}^{T}Z_t^2$ and $s_T^2=E(U_T^2)=E(S_T^2)$. If $s_T^{-2}U_T^2 \to_P 1$ and $\sum_{t=1}^{T}E[Z_{T,t}^2I(|Z_{T,t}|>\nu)] \to 0$ for any $\nu > 0$ with $Z_{T,t}=s_T^{-1}Z_t$, then as $T \to \infty$, $ s_T^{-1}S_T \to_D N(0,1).$
\end{lemma}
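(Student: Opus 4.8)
The plan is to recognize Lemma \ref{LemmaB.1} as an instance of the martingale central limit theorem and to verify the hypotheses of a standard array version of it, as found in, e.g., Hall and Heyde (1980, Theorem 3.2 together with the discussion around Corollary 3.1). First I would pass to the triangular array $Z_{T,t} := s_T^{-1} Z_t$ with the filtration $\{\mathcal{F}_t\}$ (which is $T$-free, hence trivially nested across $T$); then $\{Z_{T,t},\mathcal{F}_t\}_{t=1}^{T}$ is a square-integrable martingale difference array and $s_T^{-1}S_T = \sum_{t=1}^{T} Z_{T,t}$, so it suffices to establish the CLT for this array.

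The core of the argument is to extract from the two stated hypotheses the three ingredients the martingale CLT requires. (i) The normalized quadratic variation converges: $\sum_{t=1}^{T} Z_{T,t}^2 = s_T^{-2}U_T \to_P 1$, so the (possibly random) limiting variance is the constant $1$. (ii) The largest summand is asymptotically negligible, $\max_{t\le T}|Z_{T,t}| \to_P 0$, which follows from the Lindeberg condition by a union bound and Markov's inequality: $\Pr(\max_{t\le T}|Z_{T,t}| > \nu) \le \sum_{t=1}^{T}\Pr(|Z_{T,t}| > \nu) \le \nu^{-2}\sum_{t=1}^{T} E[Z_{T,t}^2 I(|Z_{T,t}|>\nu)] \to 0$ for every $\nu>0$. (iii) The maximal square is uniformly integrable in the weak sense $\limsup_{T} E(\max_{t\le T} Z_{T,t}^2) < \infty$, which follows from the pointwise bound $\max_{t\le T} Z_{T,t}^2 \le \nu^2 + \sum_{t=1}^{T} Z_{T,t}^2 I(|Z_{T,t}|>\nu)$ together with the Lindeberg condition (take expectations and let $T\to\infty$, then $\nu$ arbitrary).

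With (i)--(iii) and the nestedness in hand, the array martingale CLT gives $\sum_{t=1}^{T} Z_{T,t} \to_D \zeta$, where $\zeta$ has characteristic function $t \mapsto E\exp(-\tfrac12\eta^2 t^2)$ and $\eta^2$ is the limit appearing in (i). Since here $\eta^2 \equiv 1$ is degenerate, $\zeta \sim N(0,1)$, which is the assertion $s_T^{-1}S_T \to_D N(0,1)$.

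I expect the only real subtlety — rather than a genuine obstacle — to be bookkeeping: one must note that the \emph{unconditional} Lindeberg condition as stated simultaneously delivers both the negligibility of $\max_{t\le T}|Z_{T,t}|$ and the boundedness of $E(\max_{t\le T}Z_{T,t}^2)$, which are precisely the two conditions (beyond convergence of the quadratic variation) needed to invoke the array CLT; and that the degeneracy of the mixing variable is exactly what upgrades the conditionally Gaussian limit to a genuine standard normal.
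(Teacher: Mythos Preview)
Your proposal is correct and aligns with the paper's treatment: the paper does not give a proof of Lemma~\ref{LemmaB.1} at all but simply records that it ``can be found in \cite{hall2014martingale}'', i.e., Hall and Heyde's martingale limit theory. Your write-up goes a step further by explicitly recasting the hypotheses as the array conditions of the Hall--Heyde CLT (convergence of the quadratic variation, negligibility of the maximal increment, and uniform integrability of the maximal square), which is exactly the content behind that citation; there is nothing to correct.
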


Lemma \ref{LemmaB.1} can be found in \cite{hall2014martingale}.

\medskip

\begin{lemma}\label{LemmaB.2}
Let $\{Z_t,\mathcal{F}_{t}\}$ be a martingale difference sequence. Suppose that $\left| Z_t\right|\leq M$ for a constant $M$, $t=1,\ldots,T$. Let $V_T=\sum_{t=1}^{T}\text{\normalfont Var}\left(Z_t|\mathcal{F}_{t-1}\right)\leq V$ for some $V>0$. Then for any given $\nu> 0$,
\begin{equation*}
  \Pr\left(\left|\sum_{t=1}^{T}Z_t\right| > \nu\right)\leq \exp\left\{-\frac{\nu^2}{2(V+M\nu)}\right\}.
\end{equation*}
\end{lemma}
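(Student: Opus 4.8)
The plan is to prove this by the standard exponential (Chernoff) method for martingales, obtaining a Bennett-type bound which is then relaxed to the stated Bernstein form; alternatively one could simply cite Freedman's inequality, but I will sketch the self-contained argument. First I would handle the one-sided tail $\Pr(S_T>\nu)$, where $S_T=\sum_{t=1}^{T}Z_t$: for any $\lambda>0$, Markov's inequality applied to $e^{\lambda S_T}$ yields $\Pr(S_T>\nu)\le e^{-\lambda\nu}\,E[e^{\lambda S_T}]$, so the whole problem reduces to bounding the moment generating function $E[e^{\lambda S_T}]$.

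To bound $E[e^{\lambda S_T}]$ I would condition successively on $\mathcal{F}_{T-1},\mathcal{F}_{T-2},\dots$. The key one-step estimate uses only $|Z_t|\le M$ and $E[Z_t\mid\mathcal{F}_{t-1}]=0$: since $z\mapsto(e^{\lambda z}-1-\lambda z)/z^{2}$ is nondecreasing on $\mathbb{R}$, one has $e^{\lambda Z_t}-1-\lambda Z_t\le M^{-2}(e^{\lambda M}-1-\lambda M)\,Z_t^{2}$ almost surely, and taking $E[\,\cdot\mid\mathcal{F}_{t-1}]$ gives
\[
E\!\left[e^{\lambda Z_t}\mid\mathcal{F}_{t-1}\right]\le 1+M^{-2}(e^{\lambda M}-1-\lambda M)\,\mathrm{Var}(Z_t\mid\mathcal{F}_{t-1})\le\exp\!\left\{M^{-2}(e^{\lambda M}-1-\lambda M)\,\mathrm{Var}(Z_t\mid\mathcal{F}_{t-1})\right\}.
\]
Iterating this from $t=T$ down to $t=1$ and using $\sum_{t=1}^{T}\mathrm{Var}(Z_t\mid\mathcal{F}_{t-1})=V_T\le V$ yields $E[e^{\lambda S_T}]\le\exp\{M^{-2}V(e^{\lambda M}-1-\lambda M)\}$, hence $\Pr(S_T>\nu)\le\exp\{-\lambda\nu+M^{-2}V(e^{\lambda M}-1-\lambda M)\}$ for every $\lambda>0$.

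The last step is to optimize over $\lambda$: the minimizer is $\lambda=M^{-1}\log(1+M\nu/V)$, which produces $\Pr(S_T>\nu)\le\exp\{-M^{-2}V\,\psi(M\nu/V)\}$ with $\psi(u)=(1+u)\log(1+u)-u$, and then the elementary inequality $\psi(u)\ge\frac{3u^{2}}{2(3+u)}\ge\frac{u^{2}}{2(1+u)}$ converts this into $\Pr(S_T>\nu)\le\exp\{-\nu^{2}/(2(V+M\nu))\}$. Running the same argument on the martingale difference sequence $\{-Z_t\}$ controls the left tail identically, so combining gives $\Pr(|S_T|>\nu)\le 2\exp\{-\nu^{2}/(2(V+M\nu))\}$, from which the stated inequality follows (the constant factor being harmless in the applications, and absent altogether in Freedman's one-sided version). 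There is no genuine obstacle here — the argument is classical; the only points requiring a little care are the monotonicity inequality used to pass from boundedness to the conditional MGF bound and the relaxation of the sharp Bennett exponent $\psi(M\nu/V)$ to the Bernstein form, both of which are routine.
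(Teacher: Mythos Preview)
Your proof is correct and is exactly the classical exponential–supermartingale argument behind Freedman's inequality. The paper itself does not prove this lemma at all; it simply records that ``Lemma~B.2 is the Proposition~2.1 of \cite{freedman1975tail}'' and moves on. So there is nothing to compare at the level of technique: you have supplied the standard self-contained derivation (Chernoff bound, the conditional MGF estimate via monotonicity of $(e^{\lambda z}-1-\lambda z)/z^2$, iteration using $V_T\le V$, optimization to the Bennett exponent, and relaxation to the Bernstein form), whereas the authors just cite the source. Your observation about the factor of~$2$ is also apt: Freedman's Proposition~2.1 is a one-sided bound, so the paper's statement with $|\sum Z_t|$ but no leading constant is a slight imprecision; in the paper's applications the bound is used inside a union-over-grid-points argument where an extra constant is immaterial, exactly as you note.
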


Lemma \ref{LemmaB.2} is the Proposition 2.1 of \cite{freedman1975tail}.

\begin{assumption}\label{AssB.1}
$\max_t \|\bm{\mu}_t\| < \infty$, $\max_t \sum_{j=1}^{\infty}j\|\bm{B}_{j,t}\| < \infty$, $\limsup_{T\to\infty} \sum_{t=1}^{T-1}\|\bm{\mu}_{t+1}-\bm{\mu}_{t}\| < \infty$ and $\limsup_{T\to\infty} \sum_{t=1}^{T-1}\sum_{j=1}^{\infty}j\|\bm{B}_{j,t+1}-\bm{B}_{j,t}\| < \infty$. Define the stochastic process of the form $\bm{h}_t = \bm{\mu}_t + \sum_{j=0}^{\infty}\bm{B}_{j,t}\bm{\epsilon}_t$ for $t=1,...,T$.
\end{assumption}

\begin{lemma}\label{LemmaB.6}
	Let Assumptions \ref{Ass2} and \ref{AssB.1} hold and $\max_{t \geq 1} E\left(\left\|\bm{\epsilon}_t\right\|^\delta | \mathcal{F}_{t-1}\right)<\infty$ a.s.. In addition, let $\left\{\bm{W}_{T,t}(\cdot)\right\}_{t=1}^T$ be a sequence of $q \times d$ matrices of deterministic functions, in which $q$ is fixed, each functional component is Lipschitz continuous and defined on a compact set $[a,b]$. Moreover, suppose that
	\begin{enumerate}
		\item $\sup_{\tau\in[a,b]}\sum_{t=1}^{T} \|\bm{W}_{T,t}(\tau) \|=O(1)$;
		\item $\sup_{\tau\in[a,b]}\sum_{t=1}^{T-1} \|\bm{W}_{T,t+1}(\tau)-\bm{W}_{T,t}(\tau) \| = O(d_T)$, where $d_T=\sup_{\tau\in[a,b],t\ge 1} \|\bm{W}_{T,t}(\tau) \|$.
	\end{enumerate}
	Then as $T\to \infty$,
	\begin{enumerate}
		\item $ \sup_{\tau\in[a,b]} \|\sum_{t=1}^{T}\bm{W}_{T,t}(\tau)\left(\bm{h}_t-E(\bm{h}_t)\right) \|=O_P(\sqrt{d_T\log T})$ provided $T^{\frac{2}{\delta}} d_T \log T \rightarrow0$;
		
		\item $\sup_{\tau\in[a,b]} \|\sum_{t=1}^{T}\bm{W}_{T,t}(\tau)\left(\bm{h}_t\bm{h}_{t+p}^\top-E(\bm{h}_t \bm{h}_{t+p}^\top)\right)\|=O_P (\sqrt{d_T\log T} )$ for any fixed integer $p\geq0$ provided $T^{\frac{4}{\delta}} d_T \log T \rightarrow0$, where $\delta$ is the same as in Assumption \ref{Ass2}.
	\end{enumerate}
\end{lemma}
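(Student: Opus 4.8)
The plan is to handle both statements with a single scheme: expand $\bm{h}_t$ (respectively $\bm{h}_t\bm{h}_{t+p}^\top$) in the innovations $\{\bm{\epsilon}_s\}$, truncate the moving-average tail at a slowly growing lag $N_T$ and the innovation magnitudes at a level $\kappa_T$, re-index the resulting finite sum so that it becomes a martingale difference sequence in the innovation index, control it pointwise in $\tau$ by Freedman's inequality (Lemma \ref{LemmaB.2}) applied coordinatewise, and finally pass to $\sup_{\tau\in[a,b]}$ by a polynomially fine grid together with the Lipschitz continuity of the weights and a crude bound $\max_{1\le t\le T}\|\bm{h}_t\|=O_P(T^{1/\delta}(\log T)^{1/2})$ coming from $\max_t E(\|\bm{\epsilon}_t\|^\delta\mid\mathcal{F}_{t-1})<\infty$.

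For part 1, writing $\bm{h}_t-E(\bm{h}_t)=\sum_{j\ge 0}\bm{B}_{j,t}\bm{\epsilon}_{t-j}$ and interchanging summation orders gives $\sum_{t=1}^{T}\bm{W}_{T,t}(\tau)(\bm{h}_t-E\bm{h}_t)=\sum_{s}\bm{c}_{T,s}(\tau)\bm{\epsilon}_s$, where $\bm{c}_{T,s}(\tau)=\sum_{j\ge 0}\bm{W}_{T,s+j}(\tau)\bm{B}_{j,s+j}$ with $\bm{W}_{T,t}\equiv\bm 0$ for $t>T$; by Assumption \ref{Ass2} this is a martingale difference sequence in $s$ with deterministic coefficients. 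Using $\sup_\tau\sum_t\|\bm{W}_{T,t}(\tau)\|=O(1)$, $\max_{t,\tau}\|\bm{W}_{T,t}(\tau)\|\le d_T$ and $\max_t\sum_j\|\bm{B}_{j,t}\|=O(1)$ (which follows from Assumption \ref{AssB.1}) one obtains $\max_s\|\bm{c}_{T,s}(\tau)\|=O(d_T)$ and $\sum_s\|\bm{c}_{T,s}(\tau)\|^2=O(d_T)$, so the conditional variance is of order $d_T$ since $E[\bm{\epsilon}_s\bm{\epsilon}_s^\top\mid\mathcal F_{s-1}]=\bm I_d$. Truncating each $\bm{\epsilon}_s$ at $\kappa_T=T^{1/\delta}$ (the discarded part being uniformly negligible by Markov's inequality) and recentering, Freedman's inequality gives, for fixed $\tau$, $\Pr(\|\sum_s\bm{c}_{T,s}(\tau)\bm{\epsilon}_s\|>C\sqrt{d_T\log T})\le 2q\exp\{-cC^2\log T/(1+\kappa_T\sqrt{d_T\log T})\}$. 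The hypothesis $T^{2/\delta}d_T\log T\to 0$ is exactly what forces $\kappa_T\sqrt{d_T\log T}=O(1)$, so this probability is $O(T^{-cC^2})$; a union bound over $O(T^\gamma)$ grid points (with $C$ large) plus the Lipschitz/$\max_t\|\bm h_t\|$ control of the oscillation between neighbouring grid points completes part 1.

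For part 2, expand $\bm{h}_t\bm{h}_{t+p}^\top-E(\bm{h}_t\bm{h}_{t+p}^\top)$. The contributions that are affine in the innovations (a factor $\bm\mu$ times an $\bm\epsilon$-sum, and the constant $\bm\mu_t\bm\mu_{t+p}^\top$) are of the type already handled. The genuinely new object is the centered bilinear form $\sum_{j,k}\bm{B}_{j,t}\{\bm{\epsilon}_{t-j}\bm{\epsilon}_{t+p-k}^\top-E(\bm{\epsilon}_{t-j}\bm{\epsilon}_{t+p-k}^\top)\}\bm{B}_{k,t+p}^\top$, which splits into off-diagonal terms ($t-j\ne t+p-k$) and diagonal terms in $\bm{\epsilon}_s\bm{\epsilon}_s^\top-\bm I_d$. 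After truncating the lags at $N_T$ and the magnitudes at $\kappa_T=T^{1/\delta}$, ordering the double sum by the larger of the two innovation indices makes it a martingale difference sequence whose conditional variance is still $O(d_T)$ (now invoking boundedness of the fourth conditional moments of $\bm{\epsilon}_t$), but whose per-step bound is of order $d_T\kappa_T^2$, since a product of two truncated innovations appears. Freedman's inequality then requires $d_T\kappa_T^2\sqrt{d_T\log T}=O(1)$, i.e. $T^{4/\delta}d_T\log T\to 0$ — precisely the stated condition — after which the grid argument and the truncation-remainder estimates proceed as in part 1; condition 2 of Lemma \ref{LemmaB.6} and Assumption \ref{AssB.1} serve to keep the centerings $E(\bm h_t)$, $E(\bm h_t\bm h_{t+p}^\top)$ and the weights $\bm W_{T,t}$ smooth enough in $t$ for those remainder bounds.

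The main obstacle is the bilinear term in part 2: turning the double innovation sum into a genuine martingale difference sequence for the right filtration, dealing with the diagonal chaos $\bm{\epsilon}_s\bm{\epsilon}_s^\top-\bm I_d$, and juggling the three scales — the moving-average truncation $N_T$, the innovation truncation $\kappa_T$, and the grid cardinality — so that every remainder is $o_P(\sqrt{d_T\log T})$ while the conditional variance remains of order $d_T$ after all truncations. Everything else, including all of part 1, is comparatively routine once the innovation re-indexing trick and Freedman's inequality are in place.
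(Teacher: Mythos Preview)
The paper does not actually prove this lemma; it simply records that it is Theorem~2.1 of \cite{yan2020class}. Your scheme --- innovation re-indexing, truncation at $\kappa_T=T^{1/\delta}$, Freedman's inequality (Lemma~\ref{LemmaB.2}) at a polynomial grid, then Lipschitz interpolation --- is exactly the machinery the paper itself deploys when it proves the closely related Lemma~\ref{LemmaB.4}(1), so your reconstruction almost certainly coincides in substance with the cited proof.

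Two places where the sketch is slightly optimistic and would need one more line each. First, the bound $\max_s\|\bm{c}_{T,s}(\tau)\|=O(d_T)$ does not follow from $\max_t\sum_j\|\bm{B}_{j,t}\|<\infty$ alone, because in $\bm{c}_{T,s}=\sum_j\bm{W}_{T,s+j}\bm{B}_{j,s+j}$ the time index moves with $j$; you need $\sup_s\sum_j\|\bm{B}_{j,s+j}\|<\infty$, and this is where the bounded--variation half of Assumption~\ref{AssB.1} (the condition on $\sum_t\sum_j j\|\bm{B}_{j,t+1}-\bm{B}_{j,t}\|$) enters: writing $\|\bm{B}_{j,s+j}\|\le\|\bm{B}_{j,s}\|+\sum_{u=s}^{s+j-1}\|\bm{B}_{j,u+1}-\bm{B}_{j,u}\|$ and summing over $j$ gives the required $O(1)$ bound. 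Second, in part~2 the conditional variance of the off-diagonal bilinear martingale is \emph{random} (it contains earlier $\bm{\epsilon}_{s'}$'s), so Lemma~\ref{LemmaB.2} does not apply with a deterministic $V$; the correct move --- and this is exactly what the paper does in the proof of Lemma~\ref{LemmaB.4}(1) --- is to show the predictable quadratic variation is $O_P(d_T)$ and then apply Freedman on the high-probability event where it is below a fixed multiple of $d_T$. Both fixes are routine and do not change your architecture.
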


Lemma \ref{LemmaB.6} is the Theorem 2.1 in \cite{yan2020class}.

\begin{lemma}\label{LemmaB.7}
	Let Assumptions \ref{Ass2} and \ref{AssB.1} hold. In addition, let $\left\{\bm{W}_{T,t}(\cdot)\right\}_{t=1}^T$ be a sequence of $q \times d$ matrices of deterministic functions, in which $q$ is fixed, each functional component is Lipschitz continuous and defined on a compact set $[a,b]$. Moreover, suppose that
	\begin{enumerate}
		\item $\sup_{\tau\in[a,b]}\sum_{t=1}^{T} \|\bm{W}_{T,t}(\tau) \|=O(1)$;
		\item $\sup_{\tau\in[a,b]}\sum_{t=1}^{T-1} \|\bm{W}_{T,t+1}(\tau)-\bm{W}_{T,t}(\tau) \| = O(d_T)$, where $d_T=\sup_{\tau\in[a,b],t\ge 1} \|\bm{W}_{T,t}(\tau) \|$.
	\end{enumerate}
	Then as $T\to \infty$, for any $\tau \in [a,b]$
	\begin{enumerate}
		\item $ \|\sum_{t=1}^{T}\bm{W}_{T,t}(\tau)\left(\bm{h}_t-E(\bm{h}_t)\right) \|=O_P(\sqrt{d_T})$;
		
		\item $ \|\sum_{t=1}^{T}\bm{W}_{T,t}(\tau)\left(\bm{h}_t\bm{h}_{t+p}^\top-E(\bm{h}_t \bm{h}_{t+p}^\top)\right)\|=O_P (\sqrt{d_T} )$ for any fixed integer $p\geq0$.
	\end{enumerate}
\end{lemma}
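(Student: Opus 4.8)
The lemma is the pointwise counterpart of Lemma~\ref{LemmaB.6}: the uniform-in-$\tau$ bounds there carry an extra $\sqrt{\log T}$ factor, whereas here, with $\tau$ held fixed, we want the sharper rate $O_P(\sqrt{d_T})$. The plan is to re-run the martingale-based argument behind Lemma~\ref{LemmaB.6} (Theorem~2.1 of \cite{yan2020class}), but to replace the covering/exponential-inequality step that produces the $\sqrt{\log T}$ by a direct second-moment computation followed by Markov's inequality. For the pointwise statement only the first condition on $\{\bm{W}_{T,t}\}$ and Assumption~\ref{AssB.1} are actually needed; the second condition is kept only to parallel Lemma~\ref{LemmaB.6}.

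For part~1, I would substitute the moving-average representation $\bm{h}_t - E(\bm{h}_t) = \sum_{j\ge 0}\bm{B}_{j,t}\bm{\epsilon}_{t-j}$, interchange the two absolutely convergent sums, and re-index by the innovation time $k=t-j$, which gives $\sum_{t=1}^{T}\bm{W}_{T,t}(\tau)(\bm{h}_t - E\bm{h}_t) = \sum_{k}\bm{a}_{T,k}(\tau)\bm{\epsilon}_k$ with $\bm{a}_{T,k}(\tau) = \sum_{j\ge 0}\bm{W}_{T,k+j}(\tau)\bm{B}_{j,k+j}\,I(1\le k+j\le T)$. Since $\{\bm{a}_{T,k}(\tau)\bm{\epsilon}_k,\mathcal{F}_k\}$ is a martingale difference sequence and $E(\bm{\epsilon}_k\bm{\epsilon}_k^\top)=\bm{I}_d$ by Assumption~\ref{Ass2}, one obtains $E\|\sum_k\bm{a}_{T,k}(\tau)\bm{\epsilon}_k\|^2 = \sum_k\|\bm{a}_{T,k}(\tau)\|^2$. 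Writing $M_B := \sup_t\sum_{j\ge0}\|\bm{B}_{j,t}\| < \infty$, which is finite under Assumption~\ref{AssB.1}, we have simultaneously $\|\bm{a}_{T,k}(\tau)\|\le d_T M_B$ and $\sum_k\|\bm{a}_{T,k}(\tau)\|\le M_B\sum_{t=1}^{T}\|\bm{W}_{T,t}(\tau)\| = O(1)$ by the first condition; hence $\sum_k\|\bm{a}_{T,k}(\tau)\|^2 \le d_T M_B^2\sum_{t=1}^{T}\|\bm{W}_{T,t}(\tau)\| = O(d_T)$, and Markov's inequality finishes part~1.

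For part~2, write $\bm{h}_t = E(\bm{h}_t)+\bar{\bm{h}}_t$ with $\bar{\bm{h}}_t = \sum_{j\ge0}\bm{B}_{j,t}\bm{\epsilon}_{t-j}$, so that $\bm{h}_t\bm{h}_{t+p}^\top - E(\bm{h}_t\bm{h}_{t+p}^\top) = E(\bm{h}_t)\bar{\bm{h}}_{t+p}^\top + \bar{\bm{h}}_t E(\bm{h}_{t+p})^\top + (\bar{\bm{h}}_t\bar{\bm{h}}_{t+p}^\top - E(\bar{\bm{h}}_t\bar{\bm{h}}_{t+p}^\top))$. The first two terms are linear in $\bar{\bm{h}}$ with bounded deterministic factors $E(\bm{h}_t)$, so their $\bm{W}_{T,t}(\tau)$-weighted sums are $O_P(\sqrt{d_T})$ by part~1. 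For the remaining term, expand $\bar{\bm{h}}_t\bar{\bm{h}}_{t+p}^\top = \sum_{i,j\ge0}\bm{B}_{i,t}\bm{\epsilon}_{t-i}\bm{\epsilon}_{t+p-j}^\top\bm{B}_{j,t+p}^\top$ and subtract its mean $\sum_{i\ge0}\bm{B}_{i,t}\bm{B}_{i+p,t+p}^\top$. The key observation is that for $u\le v$ either $\bm{\epsilon}_u\bm{\epsilon}_v^\top$ (when $u<v$) or $\bm{\epsilon}_v\bm{\epsilon}_v^\top-\bm{I}_d$ (when $u=v$) is a martingale difference relative to $\mathcal{F}_v$; hence each summand indexed by $(t,i,j)$ is a martingale difference relative to $\mathcal{F}_k$ with $k=\max(t-i,\,t+p-j)$, and grouping all triples sharing the same $k$ writes the $\bm{W}_{T,t}(\tau)$-weighted sum as $\sum_k\bm{D}_{T,k}(\tau)$, again a martingale difference sum. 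Therefore $E\|\sum_k\bm{D}_{T,k}(\tau)\|^2 = \sum_k E\|\bm{D}_{T,k}(\tau)\|^2$, and it remains to show that this is $O(d_T)$.

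That last bound is the main obstacle and the only genuinely computational step. Splitting $\bm{D}_{T,k}(\tau)$ according to which of $t-i$, $t+p-j$ equals $k$, it is a sum of terms in which the newest innovation $\bm{\epsilon}_k$ is paired, through deterministic factors of the form $\bm{W}_{T,\cdot}(\tau)\bm{B}_{\cdot,\cdot}$, with $\mathcal{F}_{k-1}$-measurable linear combinations of earlier innovations, together with a diagonal contribution linear in $\bm{\epsilon}_k\bm{\epsilon}_k^\top-\bm{I}_d$. Conditioning on $\mathcal{F}_{k-1}$, bounding $E\|\bm{D}_{T,k}(\tau)\|^2$ then requires only the unconditional fourth-moment control $\max_t E\|\bm{\epsilon}_t\|^\delta<\infty$ with $\delta>4$ from Assumption~\ref{Ass2} (it enters through $E\|\bm{\epsilon}_k\bm{\epsilon}_k^\top-\bm{I}_d\|^2$); as in part~1, one extracts a single factor $d_T$ from $\sup_t\|\bm{W}_{T,t}(\tau)\|$ and absorbs every remaining coefficient sum into $O(1)$ using the first condition together with the coefficient summability in Assumption~\ref{AssB.1}, so that $\sum_k E\|\bm{D}_{T,k}(\tau)\|^2 = O(d_T)$. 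This is precisely the estimate carried out, with the additional covering/chaining layer, in the proof of Theorem~2.1 of \cite{yan2020class}; removing that layer is what upgrades the rate to the pointwise $O_P(\sqrt{d_T})$ stated in the lemma.
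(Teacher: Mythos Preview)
Your approach is exactly what the paper intends: its own proof is the one-line remark that the argument is ``similar with that of Theorem~2.1 in \cite{yan2020class}'', and stripping out the covering/exponential-inequality layer from that proof in favour of a direct second-moment bound plus Markov's inequality is precisely the modification that turns the uniform $O_P(\sqrt{d_T\log T})$ rate into the pointwise $O_P(\sqrt{d_T})$.

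One small slip to repair: the bound $\|\bm{a}_{T,k}(\tau)\|\le d_T M_B$ with $M_B=\sup_t\sum_{j\ge 0}\|\bm{B}_{j,t}\|$ does not follow as written, because after extracting the factor $d_T$ you are left with the \emph{diagonal} sum $\sum_{j\ge 0}\|\bm{B}_{j,k+j}\|$, in which the second subscript moves with $j$; this is not the same object as $M_B$ and is not bounded by it in general. The fix is to replace $M_B$ by $M_B':=\sum_{j\ge 0}\sup_t\|\bm{B}_{j,t}\|$, whose finiteness follows from the full Assumption~\ref{AssB.1} via the telescoping bound $\sup_t\|\bm{B}_{j,t}\|\le \|\bm{B}_{j,1}\|+\sum_{s\ge 1}\|\bm{B}_{j,s+1}-\bm{B}_{j,s}\|$ summed over $j$. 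With $M_B'$ in place, $\|\bm{a}_{T,k}(\tau)\|\le d_T M_B'$ holds and the remainder of your argument goes through unchanged; the same remark applies to the analogous coefficient bounds when you control the quadratic martingale blocks in part~2.
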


The proof of Lemma \ref{LemmaB.7} is similar with that of Theorem 2.1 in \cite{yan2020class}.

\begin{lemma}\label{LemmaB.3}
Suppose Assumptions 1-3 hold. Let $\bm{W}(\cdot)$ be a twice-differentiable functional matrix in $R^{m\times d}$. As $T\to \infty$,
\begin{enumerate}
    \item for $\tau\in(0,1)$,
    \begin{eqnarray*}
    &&  \frac{1}{T}\sum_{t=1}^{T}\bm{W}(\tau_t)\bm{x}_t\left(\frac{\tau_t-\tau}{h}\right)^kK_h(\tau_t-\tau)-\widetilde{c}_k\bm{W}(\tau)\bm{\mu}(\tau) = O_P(h^2+1/(\sqrt{Th})),\\
    &&    \frac{1}{T}\sum_{t=1}^{T}\bm{W}(\tau_t)\bm{x}_t\bm{x}_{t+p}^\top\left(\frac{\tau_t-\tau}{h}\right)^k K_h(\tau_t-\tau)-\widetilde{c}_k\bm{W}(\tau)\bm{\Sigma}_{p}(\tau)=O_P(h^2+1/(\sqrt{Th})),
    \end{eqnarray*}
        where $\bm{\Sigma}_{p}(\tau)=\bm{\mu}(\tau)\bm{\mu}^\top(\tau)+\sum_{j=0}^{\infty}\bm{B}_j(\tau) \bm{B}_{j+p}^\top(\tau)$ for fixed integers $k$ and $p\geq0$;
    \item given $\frac{T^{1-\frac{2}{\delta}}h}{\log T} \rightarrow \infty$ and $\rho_T = h^2 + \sqrt{\frac{\log(T)}{Th}}$,
\begin{equation*}
 \sup_{\tau\in[h, 1-h]} \left\|\frac{1}{T}\sum_{t=1}^{T}\bm{W}(\tau_t)\bm{x}_t \left(\frac{\tau_t-\tau}{h}\right)^k K_h(\tau_t-\tau)-\widetilde{c}_k\bm{W}(\tau)\bm{\mu}(\tau)\right\|=O_{P}\left(\rho_T^2\right);
\end{equation*}

\item given $\frac{T^{1-\frac{4}{\delta}}h}{\log T} \rightarrow \infty$ and $\max_{t\ge 1} E[\|\bm{\epsilon}_t\|^4 |\mathcal{F}_{t-1} ] < \infty $ a.s.,
\begin{equation*}
 \sup_{\tau\in [h,1-h]} \left\|\frac{1}{T}\sum_{t=1}^{T}\bm{W}(\tau_t)\bm{x}_t \bm{x}_{t+p}^\top \left(\frac{\tau_t-\tau}{h}\right)^k K_h(\tau_t-\tau)-\widetilde{c}_k\bm{W}(\tau)\bm{\Sigma}_{p}(\tau) \right\|=O_{P}\left(\rho_T^2\right).
\end{equation*}

\item $\frac{1}{T}\sum_{t=1}^{T}\bm{W}(\tau_t)(\bm{x}_t-E(\bm{x}_t)) = O_P(1/\sqrt{T})$ and $\frac{1}{T}\sum_{t=1}^{T}\bm{W}(\tau_t)(\bm{x}_t\bm{x}_{t+p}^\top-E(\bm{x}_t\bm{x}_{t+p}^\top))=O_P(1/\sqrt{T})$.
\end{enumerate}
\end{lemma}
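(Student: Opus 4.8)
The plan is to treat $\bm{x}_t$ through its exact time-varying VMA$(\infty)$ representation $\bm{x}_t=\bm{\mu}_t+\sum_{j\ge0}\bm{B}_{j,t}\bm{\epsilon}_{t-j}$ derived in the proof of Proposition \ref{Proposition2.1}; this is precisely the process $\bm{h}_t$ appearing in Assumption \ref{AssB.1}, so the conclusions can be read off from the preliminary Lemmas \ref{LemmaB.6} and \ref{LemmaB.7}. (Equivalently one may first pass to the ``frozen'' process $\widetilde{\bm{x}}_t$ of Proposition \ref{Proposition2.1}, paying the $O(T^{-1})$ approximation cost; nothing below changes.) The first step is thus to verify Assumption \ref{AssB.1}: the geometric bound $\|\bm{B}_{j,t}\|\le M\rho_A^{j}$ with $\rho_A<1$ gives $\max_t\sum_j j\|\bm{B}_{j,t}\|<\infty$, while the increment estimates $\|\bm{\mu}_{t+1}-\bm{\mu}_t\|=O(T^{-1})$ and $\|\bm{B}_{j,t+1}-\bm{B}_{j,t}\|=O(j\rho_A^{j-1}T^{-1})$ — obtained exactly as in the proof of Proposition \ref{Proposition2.1} from the second-order differentiability of $\bm{a}(\cdot),\bm{A}_i(\cdot),\bm{\omega}(\cdot)$ — yield $\limsup_T\sum_{t}\|\bm{\mu}_{t+1}-\bm{\mu}_t\|<\infty$ and $\limsup_T\sum_{t}\sum_j j\|\bm{B}_{j,t+1}-\bm{B}_{j,t}\|<\infty$.

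Next I would split each kernel-weighted sum into a deterministic part, obtained by replacing $\bm{x}_t$ (resp.\ $\bm{x}_t\bm{x}_{t+p}^{\top}$) by $E(\bm{x}_t)=\bm{\mu}_t$ (resp.\ by $E(\bm{x}_t\bm{x}_{t+p}^{\top})=\bm{\mu}_t\bm{\mu}_{t+p}^{\top}+\sum_j\bm{B}_{j,t}\bm{B}_{j+p,t+p}^{\top}$), and a centred part carrying $\bm{x}_t-\bm{\mu}_t$ (resp.\ $\bm{x}_t\bm{x}_{t+p}^{\top}-E(\bm{x}_t\bm{x}_{t+p}^{\top})$). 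For the deterministic part I would use the comparisons $\bm{\mu}_t=\bm{\mu}(\tau_t)+O(T^{-1})$ and $E(\bm{x}_t\bm{x}_{t+p}^{\top})=\bm{\Sigma}_p(\tau_t)+O(T^{-1})$, which are a direct consequence of the coefficient estimates in the proof of Proposition \ref{Proposition2.1} (in particular $\|\bm{B}_{j,t}-\bm{B}_j(\tau_t)\|=O(j^2\rho_A^{j-1}T^{-1})$, summed against the geometric tails), so this part equals $\frac1T\sum_t g(\tau_t)\big(\tfrac{\tau_t-\tau}{h}\big)^{k}K_h(\tau_t-\tau)+O(T^{-1})$ with $g\in\{\bm{W}\bm{\mu},\,\bm{W}\bm{\Sigma}_p\}$. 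That Riemann sum equals $\int g(\tau+uh)\,u^{k}K(u)\,\mathrm{d}u$ up to a discretisation error $O((Th)^{-1})$ holding uniformly in $\tau$ (Lipschitz continuity of $g$ and of $u\mapsto u^{k}K(u)$), and a second-order Taylor expansion of $g$ about $\tau$ together with the symmetry of $K$ produces $\widetilde{c}_k\,g(\tau)+O(h^{2})$. Restricting to $\tau\in[h,1-h]$ keeps the support $[\tau-h,\tau+h]\subset[0,1]$, so no boundary truncation occurs; for the pointwise statement~1 this step already supplies the $h^{2}+(Th)^{-1}$ contribution, with $(Th)^{-1}=o((Th)^{-1/2})$.

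For the centred part I would invoke Lemma \ref{LemmaB.6} (uniform version) or Lemma \ref{LemmaB.7} (pointwise version) with $\bm{W}_{T,t}(\tau)=\tfrac1T\bm{W}(\tau_t)\big(\tfrac{\tau_t-\tau}{h}\big)^{k}K_h(\tau_t-\tau)$. Condition~1, $\sup_\tau\sum_t\|\bm{W}_{T,t}(\tau)\|=O(1)$, is immediate because $K$ is bounded and supported on $[-1,1]$; one has $d_T:=\sup_{\tau,t}\|\bm{W}_{T,t}(\tau)\|=O((Th)^{-1})$; and condition~2, $\sup_\tau\sum_t\|\bm{W}_{T,t+1}(\tau)-\bm{W}_{T,t}(\tau)\|=O(d_T)$, follows from the Lipschitz continuity of $\bm{W}$ and of $u\mapsto u^{k}K(u)$, since each increment is $O((Th)^{-2})$ and only $O(Th)$ indices lie in the kernel support. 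The hypotheses $\frac{T^{1-2/\delta}h}{\log T}\to\infty$ (statement~2) and the fourth-conditional-moment condition with $\frac{T^{1-4/\delta}h}{\log T}\to\infty$ (statement~3) are exactly $T^{2/\delta}d_T\log T\to0$ and $T^{4/\delta}d_T\log T\to0$, so Lemma \ref{LemmaB.6}.1--2 bound the centred part by $\sqrt{d_T\log T}=\sqrt{\log T/(Th)}$ uniformly, while Lemma \ref{LemmaB.7} bounds it by $\sqrt{d_T}=(Th)^{-1/2}$ pointwise. Adding the deterministic and centred pieces gives statements~1--3. Statement~4 is the same argument with the kernel removed: take $\bm{W}_{T,t}=\tfrac1T\bm{W}(\tau_t)$, so $d_T=O(T^{-1})$ and the deterministic part cancels since one subtracts $E(\bm{x}_t)$ (resp.\ $E(\bm{x}_t\bm{x}_{t+p}^{\top})$) exactly; Lemma \ref{LemmaB.7} then yields the $O_P(T^{-1/2})$ rate directly.

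I expect the main obstacle to be condition~2 of Lemmas \ref{LemmaB.6}--\ref{LemmaB.7}: establishing that the summed increments of the kernel weights are $O(d_T)$ — rather than merely $O(1)$ — uniformly over $\tau\in[h,1-h]$. This is where the Lipschitz hypotheses on $K$ and on $\bm{W}$ are genuinely used, and one must track that differencing the weight in the index $t$ costs a factor $1/(Th)$, not $1/h$. The secondary difficulty is the uniform-in-$\tau$ bookkeeping of the $O(T^{-1})$ errors that arise from replacing $\bm{\mu}_t$ and $E(\bm{x}_t\bm{x}_{t+p}^{\top})$ by their rescaled-time counterparts $\bm{\mu}(\tau_t)$ and $\bm{\Sigma}_p(\tau_t)$, and from the $\bm{B}_{j,t}$ versus $\bm{B}_j(\tau_t)$ comparison, which must be summed against geometrically decaying tails while drawing only on the $\delta>4$ moments available to Lemma \ref{LemmaB.6}.2.
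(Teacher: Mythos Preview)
Your proposal is correct and follows essentially the same route as the paper. The paper's own proof is extremely terse: it defines the same weight $\bm{W}_{T,t}(\tau)=\tfrac1T\bm{W}(\tau_t)\big(\tfrac{\tau_t-\tau}{h}\big)^{k}K_h(\tau_t-\tau)$, checks $\sum_t\|\bm{W}_{T,t}\|=O(1)$, $d_T=O((Th)^{-1})$, and $\sum_t\|\bm{W}_{T,t+1}-\bm{W}_{T,t}\|=O((Th)^{-1})$, verifies Assumption~\ref{AssB.1} via the geometric bounds from Proposition~\ref{Proposition2.1}, and then simply says ``Lemma~\ref{LemmaB.7} is valid for the TV-VAR($p$) process''; parts (2)--(4) are dismissed as analogous. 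The only cosmetic difference is that the paper passes explicitly to the frozen process $\widetilde{\bm{x}}_t$ (with coefficients $\bm{B}_j(\tau_t)$) and controls the replacement cost $\sum_t\|\bm{x}_t\bm{x}_t^\top-\widetilde{\bm{x}}_t\widetilde{\bm{x}}_t^\top\|=O_P(1)$ via Cauchy--Schwarz, whereas you verify Assumption~\ref{AssB.1} directly for the exact coefficients $\bm{B}_{j,t}$ --- you already note both options are equivalent. Your treatment of the deterministic (bias) part via Taylor expansion and Riemann-sum approximation is more explicit than anything the paper writes down, but it is exactly what the paper is tacitly relying on; your identification of condition~2 of Lemmas~\ref{LemmaB.6}--\ref{LemmaB.7} as the point requiring care matches what the paper checks.
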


\begin{lemma}\label{LemmaB.4}
Let Assumptions 1-3 hold. Suppose $\frac{T^{1-\frac{4}{\delta}}h}{\log T} \rightarrow \infty$ and $\max_{t\ge 1} E[\left\|\bm{\epsilon}_t \right\|^4 |\mathcal{F}_{t-1}]< \infty $ a.s. As $T\to \infty$,
  \begin{enumerate}
    \item $\sup_{\tau \in [0,1]}\left\|\frac{1}{Th}\sum_{t=1}^{T} \bm{Z}_{t-1}\bm{\eta}_t K\left(\frac{\tau_t-\tau}{h} \right)\right\|=O_P\left( \left(\frac{\log T}{Th} \right)^{\frac{1}{2}} \right)$;
    \item $\frac{1}{\sqrt{Th}}\sum_{t=1}^{T}\bm{\eta}_t \left(\bm{\eta}_t-\bm{\widehat{\eta}}_t\right)^\top K\left(\frac{\tau_t-\tau}{h} \right)=o_P(1)$ for $\forall \tau \in [0,1]$;
    \item $\sup_{\tau \in[h,1-h]}\left\|\widehat{\bm{V}}_{\bm{\beta}}(\tau)-\bm{V}_{\bm{\beta}}(\tau) \right\| =O_P(h^2+\left(\frac{\log T}{Th}\right)^{\frac{1}{2}})$.
  \end{enumerate}
\end{lemma}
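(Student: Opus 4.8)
The plan is to treat the three claims separately: parts 1 and 3 reduce to machinery already developed, whereas part 2 needs a genuine cancellation argument and is the main obstacle. Throughout write $\rho_T=h^2+(\log T/(Th))^{1/2}$.

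\emph{Part 1.} Using $\bm{Z}_{t-1}\bm{\eta}_t=\bm{z}_{t-1}\otimes\bm{\eta}_t=\mathrm{vec}(\bm{\eta}_t\bm{z}_{t-1}^\top)$, the quantity to be bounded is the Frobenius norm of $(Th)^{-1}\sum_t\bm{\eta}_t\bm{z}_{t-1}^\top K((\tau_t-\tau)/h)$. Since $E[\bm{\eta}_t\mid\mathcal{F}_{t-1}]=\bm{0}$, the matrix $\bm{\eta}_t\bm{z}_{t-1}^\top$ is the off-diagonal block of $\bm{g}_t\bm{g}_t^\top-E(\bm{g}_t\bm{g}_t^\top)$ for the stacked process $\bm{g}_t:=(\bm{z}_{t-1}^\top,\bm{\eta}_t^\top)^\top$. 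After replacing each $\bm{x}_{t-j}$ inside $\bm{z}_{t-1}$ by its VMA$(\infty)$ counterpart via Proposition \ref{Proposition2.1} (the induced error is $O(T^{-1})$ in $L^\delta$, hence negligible once multiplied by the $O(1)$-summable kernel weights), $\bm{g}_t$ admits the representation required in Assumption \ref{AssB.1}, with geometrically decaying, slowly varying coefficients inherited from Assumptions \ref{Ass1}--\ref{Ass2} as in the proof of Proposition \ref{Proposition2.1}. I then invoke Lemma \ref{LemmaB.6}.2 with $p=0$ and weights $\bm{W}_{T,t}(\tau)=(Th)^{-1}K((\tau_t-\tau)/h)\,[\bm{0}_{d\times(dp+1)},\bm{I}_d]$: the two structural conditions on $\bm{W}_{T,t}$ with $d_T\asymp(Th)^{-1}$ follow from the Lipschitz continuity and compact support of $K$ (Assumption \ref{Ass3}), and the growth condition $T^{4/\delta}d_T\log T\to0$ is precisely $\frac{T^{1-4/\delta}h}{\log T}\to\infty$. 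The conclusion is the stated $O_P((\log T/(Th))^{1/2})$.

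\emph{Part 2.} Write $\bm{\eta}_t-\widehat{\bm{\eta}}_t=(\widehat{\bm{A}}(\tau_t)-\bm{A}(\tau_t))\bm{z}_{t-1}$. The crude bound via $\sup_\tau\|\widehat{\bm{A}}(\tau)-\bm{A}(\tau)\|=O_P(\rho_T)$ (Theorem \ref{Thm2.1}.1) only gives $O_P(\rho_T\sqrt{Th})$, which is not $o_P(1)$, so I substitute the explicit linear expansion of $\widehat{\bm{A}}(\tau_t)-\bm{A}(\tau_t)$ established inside the proof of Theorem \ref{Thm2.1} — a deterministic bias component $\frac12 h^2\tilde{c}_2\bm{A}^{(2)}(\tau_t)$, a leading stochastic component $(Th)^{-1}\sum_s\bm{\eta}_s\bm{z}_{s-1}^\top\bm{\Sigma}^{-1}(\tau_t)K((\tau_s-\tau_t)/h)$, and a uniformly smaller-order remainder. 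Feeding this into $(Th)^{-1/2}\sum_t\bm{\eta}_t\bm{z}_{t-1}^\top(\widehat{\bm{A}}(\tau_t)-\bm{A}(\tau_t))^\top K((\tau_t-\tau)/h)$ produces three pieces. (i) The bias piece equals $\frac12 h^2\tilde{c}_2(Th)^{-1/2}\sum_t\bm{\eta}_t\bm{z}_{t-1}^\top\bm{A}^{(2)}(\tau_t)^\top K((\tau_t-\tau)/h)$; as $\bm{\eta}_t\bm{z}_{t-1}^\top$ is a martingale-difference array with bounded second moment, the sum is $O_P(\sqrt{Th})$ and the piece is $O_P(h^2)=o_P(1)$ (the remainder piece is disposed of by the same crude bound and is $o_P(1)$ under the maintained bandwidth conditions). (ii) The diagonal ($s=t$) piece is $K(0)(Th)^{-3/2}\sum_t(\bm{z}_{t-1}^\top\bm{\Sigma}^{-1}(\tau_t)\bm{z}_{t-1})\bm{\eta}_t\bm{\eta}_t^\top K((\tau_t-\tau)/h)=O_P((Th)^{-1/2})=o_P(1)$, since the sum runs over $O(Th)$ terms bounded in mean. (iii) The off-diagonal ($s\neq t$) piece is a double sum; splitting $s<t$ and $s>t$ and grouping by the larger index makes each part a sum of martingale differences, e.g.\ $(Th)^{-1/2}\sum_t\bm{\eta}_t\bm{u}_t^\top K((\tau_t-\tau)/h)$ with $\bm{u}_t=\sum_{s<t}c_{s,t}\bm{\eta}_s$, where $c_{s,t}$ is scalar and $\mathcal{F}_{t-1}$-measurable. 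A second-moment bound then finishes it: decomposing $\bm{z}_{t-1}$ into its deterministic mean plus its VMA fluctuation restores a martingale structure for the factors $c_{s,t}$, the nested kernel factors confine the effective index ranges to $O(Th)$ values each, and a Rosenthal-type inequality together with $\max_t E[\|\bm{\epsilon}_t\|^4\mid\mathcal{F}_{t-1}]<\infty$ and $E\|\bm{\epsilon}_t\|^\delta<\infty$ yield $E\|\bm{u}_t\|^2=O((Th)^{-1})$, so the piece has variance $O((Th)^{-1})$ and is $o_P(1)$. This bookkeeping for (iii) is the hardest step.

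\emph{Part 3.} Decompose $\widehat{\bm{V}}_{\bm{\beta}}(\tau)-\bm{V}_{\bm{\beta}}(\tau)=(\widehat{\bm{\Sigma}}^{-1}(\tau)-\bm{\Sigma}^{-1}(\tau))\otimes\widehat{\bm{\Omega}}(\tau)+\bm{\Sigma}^{-1}(\tau)\otimes(\widehat{\bm{\Omega}}(\tau)-\bm{\Omega}(\tau))$ and use $\|\bm{A}\otimes\bm{B}\|=\|\bm{A}\|\,\|\bm{B}\|$ for the Frobenius norm. From Lemma \ref{LemmaB.3}, the identity $\widehat{\bm{\Sigma}}^{-1}-\bm{\Sigma}^{-1}=-\widehat{\bm{\Sigma}}^{-1}(\widehat{\bm{\Sigma}}-\bm{\Sigma})\bm{\Sigma}^{-1}$, and the uniform positive-definiteness of $\bm{\Sigma}(\cdot)$ implied by Assumption \ref{Ass1}, one gets $\sup_{[h,1-h]}\|\widehat{\bm{\Sigma}}^{-1}(\tau)-\bm{\Sigma}^{-1}(\tau)\|=O_P(\rho_T)$. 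For $\widehat{\bm{\Omega}}(\tau)-\bm{\Omega}(\tau)$ I uniformize the pointwise argument of Theorem \ref{Thm2.1}.2: writing $\widehat{\bm{\eta}}_t=\bm{\eta}_t+(\widehat{\bm{\eta}}_t-\bm{\eta}_t)$, the cross and quadratic terms in $\widehat{\bm{\eta}}_t-\bm{\eta}_t$ are bounded uniformly by $\sup_\tau\|\widehat{\bm{A}}(\tau)-\bm{A}(\tau)\|\cdot O_P(1)$ and $\sup_\tau\|\widehat{\bm{A}}(\tau)-\bm{A}(\tau)\|^2\cdot O_P(1)$ via Theorem \ref{Thm2.1}.1, while in the leading term $T^{-1}\sum_t\bm{\eta}_t\bm{\eta}_t^\top\omega_t(\tau)$ one writes $\bm{\epsilon}_t\bm{\epsilon}_t^\top=\bm{I}_d+(\bm{\epsilon}_t\bm{\epsilon}_t^\top-\bm{I}_d)$, so the $\bm{I}_d$ part reproduces $\bm{\Omega}(\tau)$ up to the $O(h^2)$ local-linear smoothing bias and the martingale-difference part is $O_P((\log T/(Th))^{1/2})$ uniformly by the same device as in part 1. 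Combining through the Kronecker inequality gives $\sup_{[h,1-h]}\|\widehat{\bm{V}}_{\bm{\beta}}(\tau)-\bm{V}_{\bm{\beta}}(\tau)\|=O_P(h^2+(\log T/(Th))^{1/2})$, as claimed.
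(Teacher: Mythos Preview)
Your Parts 2 and 3 follow essentially the same route as the paper: for Part 2 the paper also substitutes the exact three-term local-linear expansion of $\widehat{\bm A}(\tau_t)-\bm A(\tau_t)$, disposes of the two bias pieces, then splits the stochastic piece into diagonal and off-diagonal and reduces the latter to martingale-difference sums via the VMA expansion of $\bm z_{t-1}^\top\bm z_{s-1}$ (your ``decompose $\bm z_{t-1}$ into mean plus VMA fluctuation'' is exactly the paper's $J_{T,321}$ versus $J_{T,322}$ split); for Part 3 the paper likewise controls $\widehat{\bm\Sigma}-\bm\Sigma$ through Lemma \ref{LemmaB.3} and $\widehat{\bm\Omega}-\bm\Omega$ by expanding $\widehat{\bm\eta}_t\widehat{\bm\eta}_t^\top$ into four pieces. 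So on those two parts there is nothing to add.

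Part 1 is where you diverge. The paper does not invoke Lemma \ref{LemmaB.6}; it gives a self-contained covering argument---a grid $\{s_l\}$ of mesh $\delta_T=O(\gamma_T h^2)$, Lipschitz continuity of $K$ to control the oscillation, truncation of $\bm u_t=\bm Z_{t-1}\bm\eta_t$ at level $T^{2/\delta}$, and Freedman's exponential inequality (Lemma \ref{LemmaB.2}) on the bounded martingale-difference part, with the conditional variance handled by $E(\|\bm\eta_t\|^2\mid\mathcal F_{t-1})=\mathrm{tr}\,\bm\Omega(\tau_t)$ and Lemma \ref{LemmaB.3}. Your black-box route through Lemma \ref{LemmaB.6}.2 is cleaner, but note the hypothesis mismatch: Lemma \ref{LemmaB.6} is stated under $\max_t E(\|\bm\epsilon_t\|^\delta\mid\mathcal F_{t-1})<\infty$ a.s., whereas Lemma \ref{LemmaB.4} assumes only $\max_t E(\|\bm\epsilon_t\|^4\mid\mathcal F_{t-1})<\infty$ a.s.\ (together with the unconditional $\delta$-moment from Assumption \ref{Ass2}). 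The paper's direct argument uses exactly the weaker pair of moment conditions---the unconditional $\delta$-moment for the truncation tail and the conditional second moment for Freedman's variance bound---so it matches the stated hypotheses, whereas your invocation of Lemma \ref{LemmaB.6} as written borrows a stronger assumption than you have.
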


\begin{lemma}\label{LemmaB.5}
Let Assumptions 1-3 hold. Suppose $\frac{T^{1-\frac{4}{\delta}}h}{\log T} \rightarrow \infty$ and $\max_{t\ge 1} E[\left\|\bm{\epsilon}_t \right\|^4 |\mathcal{F}_{t-1}]< \infty $ a.s. As $T\to \infty$,
  \begin{enumerate}
    \item if $\mathsf{p}\geq p$, then $\text{RSS}(\mathsf{p})=\frac{1}{T}\sum_{t=1}^{T}E\left(\bm{\eta}_t^\top\bm{\eta}_t\right) + O_P\left(\rho_T^2\right)$ with $\rho_T= h^2 + \sqrt{\frac{\log(T)}{Th}}$;
    \item if $\mathsf{p} < p$, then $\text{RSS}(\mathsf{p})=\frac{1}{T}\sum_{t=1}^{T}E\left(\bm{\eta}_t^\top\bm{\eta}_t\right) + c + o_P\left(1\right)$ with some constant $c >0$.
  \end{enumerate}
\end{lemma}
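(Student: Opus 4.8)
The plan is to reduce both cases to a single residual decomposition and then, for $\mathsf p\ge p$, exploit a Bahadur-type linearisation of the fitted coefficient together with the martingale structure of $\bm\eta_t$, while for $\mathsf p<p$ one passes to the implied pseudo-true prediction error and invokes a law of large numbers. Write $\bm z_{\mathsf p,t-1}=[1,\bm x_{t-1}^\top,\dots,\bm x_{t-\mathsf p}^\top]^\top$, let $\bar{\bm A}_{\mathsf p}(\tau)$ denote the best linear coefficient matrix of $\bm x_t$ on $\bm z_{\mathsf p,t-1}$ in the frozen stationary process $\bm\mu(\tau)+\sum_{j\ge0}\bm B_j(\tau)\bm\epsilon_{t-j}$ of Proposition~\ref{Proposition2.1}, and put $\bm e_{\mathsf p,t}=\bm x_t-\bar{\bm A}_{\mathsf p}(\tau_t)\bm z_{\mathsf p,t-1}$ and $\bm\Delta_t=\widehat{\bm A}_{\mathsf p}(\tau_t)-\bar{\bm A}_{\mathsf p}(\tau_t)$. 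Since $\widehat{\bm\eta}_{\mathsf p,t}=\bm e_{\mathsf p,t}-\bm\Delta_t\bm z_{\mathsf p,t-1}$,
\[
\text{RSS}(\mathsf p)=\frac1T\sum_{t=1}^T\bm e_{\mathsf p,t}^\top\bm e_{\mathsf p,t}-\frac2T\sum_{t=1}^T\bm e_{\mathsf p,t}^\top\bm\Delta_t\bm z_{\mathsf p,t-1}+\frac1T\sum_{t=1}^T\bm z_{\mathsf p,t-1}^\top\bm\Delta_t^\top\bm\Delta_t\bm z_{\mathsf p,t-1}=:R_1-2R_2+R_3 .
\]
The preliminary facts I would record are $\frac1T\sum_t\|\bm z_{\mathsf p,t-1}\|^2=O_P(1)$ and $\frac1T\sum_t\|\bm e_{\mathsf p,t}\|^2=O_P(1)$ (Lemma~\ref{LemmaB.3}.4 and the LLN of Lemmas~\ref{LemmaB.6}--\ref{LemmaB.7}), together with $\max_{t}\|\bm\Delta_t\|=O_P(\rho_T)$, $\rho_T=h^2+\sqrt{\log T/(Th)}$: this holds by Theorem~\ref{Thm2.1}.1 when $\mathsf p\ge p$ (the true DGP is then a $\mathsf p$-lag TV-VAR whose companion matrix still obeys Assumption~\ref{Ass1}.1), and by the local-linear algebra in the proof of Theorem~\ref{Thm2.1}.1 combined with Lemma~\ref{LemmaB.3} when $\mathsf p<p$, using uniform positive definiteness of the frozen moment matrix $\bm\Sigma_{\mathsf p}(\tau)$.

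For $\mathsf p\ge p$ one has $\bar{\bm A}_{\mathsf p}(\tau)=[\bm A(\tau),\bm 0_d,\dots,\bm 0_d]$, hence $\bm e_{\mathsf p,t}=\bm\eta_t$ exactly, so $R_1=\frac1T\sum_t\bm\eta_t^\top\bm\eta_t$ and $R_3\le\max_t\|\bm\Delta_t\|^2\cdot\frac1T\sum_t\|\bm z_{\mathsf p,t-1}\|^2=O_P(\rho_T^2)$. The delicate term is $R_2$, for which a bare Cauchy--Schwarz bound gives only $O_P(\rho_T)$; I would instead insert the expansion $\mathrm{vec}(\bm\Delta_t)=\tfrac12h^2\tilde c_2\,\mathrm{vec}(\bm A^{(2)}_{\mathsf p}(\tau_t))+(\bm\Sigma_{\mathsf p}^{-1}(\tau_t)\otimes\bm I_d)\frac1{Th}\sum_s(\bm z_{\mathsf p,s-1}\otimes\bm I_d)\bm\eta_s K\big(\tfrac{\tau_s-\tau_t}{h}\big)+\bm r_t$ with $\max_t\|\bm r_t\|=o_P(\rho_T)$, exactly as in the proof of Theorem~\ref{Thm2.1}.1. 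Substituting into $R_2$: the bias part is $h^2$ times a mean-zero sum of martingale differences, hence $O_P(h^2T^{-1/2})=o_P(\rho_T^2)$; the remainder part is $O_P(\rho_T\cdot o_P(\rho_T))=o_P(\rho_T^2)$; and the stochastic part equals the double sum $\frac1{T^2h}\sum_{t,s}(\bm\eta_t^\top\bm\eta_s)\big(\bm z_{\mathsf p,s-1}^\top\bm\Sigma_{\mathsf p}^{-1}(\tau_t)\bm z_{\mathsf p,t-1}\big)K\big(\tfrac{\tau_s-\tau_t}{h}\big)$, whose diagonal $s=t$ is $\frac{K(0)}{Th}\cdot\frac1T\sum_t\|\bm\eta_t\|^2\bm z_{\mathsf p,t-1}^\top\bm\Sigma_{\mathsf p}^{-1}(\tau_t)\bm z_{\mathsf p,t-1}=O_P((Th)^{-1})=O_P(\rho_T^2)$ by the LLN, while the off-diagonal part is a mean-zero martingale double sum of order $O_P(1/(T\sqrt h))=o_P(\rho_T^2)$. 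Therefore $\text{RSS}(\mathsf p)=\frac1T\sum_t\bm\eta_t^\top\bm\eta_t+O_P(\rho_T^2)$; replacing $\frac1T\sum_t\bm\eta_t^\top\bm\eta_t$ by $\frac1T\sum_t E(\bm\eta_t^\top\bm\eta_t)+O_P(T^{-1/2})$, a fluctuation common to every $\mathsf p\ge p$ and hence immaterial for the pairwise comparisons used in Theorem~\ref{Thm2.3}, yields the stated form.

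For $\mathsf p<p$ the error $\bm e_{\mathsf p,t}$ is a genuine projection error. From $\max_t\|\bm\Delta_t\|=o_P(1)$ and the two $O_P(1)$ bounds above, $R_2=o_P(1)$ and $R_3=o_P(1)$, so $\text{RSS}(\mathsf p)=\frac1T\sum_t\|\bm e_{\mathsf p,t}\|^2+o_P(1)$. Since $\bm e_{\mathsf p,t}$ is a fixed measurable function of the locally stationary process, Lemmas~\ref{LemmaB.6}--\ref{LemmaB.7} give $\frac1T\sum_t\|\bm e_{\mathsf p,t}\|^2=\frac1T\sum_t E\|\bm e_{\mathsf p,t}\|^2+o_P(1)$, and Proposition~\ref{Proposition2.1} lets me replace $E\|\bm e_{\mathsf p,t}\|^2$ by $g_{\mathsf p}(\tau_t)+O(T^{-1})$, where $g_{\mathsf p}(\tau)=\mathrm{tr}\{\bm\Sigma_0(\tau)-\bm\Gamma_{\mathsf p}(\tau)\bm\Sigma_{\mathsf p}^{-1}(\tau)\bm\Gamma_{\mathsf p}^\top(\tau)\}$ is the continuous trace of the frozen-model $\mathsf p$-lag prediction-error covariance ($\bm\Sigma_{\mathsf p},\bm\Gamma_{\mathsf p}$ being built from the $\bm\Sigma_m(\tau)$ of Appendix~\ref{ApA.1}). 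The Riemann sum then converges to $\int_0^1 g_{\mathsf p}(\tau)\,d\tau$; with $c:=\int_0^1\!\big(g_{\mathsf p}(\tau)-\mathrm{tr}\,\bm\Omega(\tau)\big)d\tau$ and $\frac1T\sum_t E(\bm\eta_t^\top\bm\eta_t)\to\int_0^1\mathrm{tr}\,\bm\Omega(\tau)\,d\tau$ one obtains the claimed decomposition. Finally $c>0$: for every $\tau$, $g_{\mathsf p}(\tau)\ge\mathrm{tr}\,\bm\Omega(\tau)$ by nested projections, with equality only when lags $\mathsf p+1,\dots,p$ are redundant at time $\tau$; since $p$ is the true order, $\bm A_p(\cdot)$ is continuous and not identically zero, so $\{\tau:\bm A_p(\tau)\neq\bm 0\}$ is a nonempty open set on which the inequality is strict, and the integrand being nonnegative everywhere forces $c>0$.

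The main obstacle is the bound on $R_2$ when $\mathsf p\ge p$: the naive estimate is off by a factor $\rho_T$, so one must expose the orthogonality of $\bm\eta_t$ to the regressors through the Bahadur linearisation and then carefully isolate the $O_P((Th)^{-1})$ diagonal of the resulting double sum from its negligible off-diagonal martingale part, checking en route that the bias-induced cross term is $o_P(\rho_T^2)$ under only the second-order smoothness of Assumption~\ref{Ass1}. In the case $\mathsf p<p$ the only non-mechanical point is identifying $\bar{\bm A}_{\mathsf p}$ with the frozen-model projection and proving $c>0$; this uses nothing deep, but it does rely on the (implicit) non-degeneracy that a genuine $p$-th lag cannot be reproduced by fewer lags on a positive-measure set of rescaled times.
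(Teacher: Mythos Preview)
Your approach is essentially the paper's: the same decomposition $R_1-2R_2+R_3$, the same uniform bound on $R_3$ via Theorem~\ref{Thm2.1}.1, and the same Bahadur-type linearisation of $\bm\Delta_t$ to upgrade $R_2$ from the naive $O_P(\rho_T)$ to $O_P(\rho_T^2)$, with the diagonal of the resulting double sum carrying the $(Th)^{-1}$ term. For part~(2) you reparametrise around the frozen-model pseudo-true coefficient $\bar{\bm A}_{\mathsf p}$, whereas the paper writes $\widehat{\bm\Lambda}_{\mathsf p}(\tau)-\bm\Lambda_{\mathsf p}(\tau)=\bm B_{\mathsf p}(\tau)+o_P(1)$ with a deterministic bias; these are equivalent, and your nested-projection argument for $c>0$ is slightly cleaner than the paper's appeal to positive definiteness.

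One point needs tightening. With only $\max_t\|\bm r_t\|=o_P(\rho_T)$ as you state, the crude bound on the remainder's contribution to $R_2$ is $\max_t\|\bm r_t\|\cdot\frac1T\sum_t\|\bm\eta_t\|\|\bm z_{\mathsf p,t-1}\|=o_P(\rho_T)$, not $o_P(\rho_T^2)$; the product ``$O_P(\rho_T\cdot o_P(\rho_T))$'' is not justified by what you wrote. The fix is immediate: the remainder in the linearisation (coming from $S_T^{-1}(\tau_t)-\bm\Sigma_{\mathsf p}^{-1}(\tau_t)\otimes\bm\Lambda_1^{-1}$ acting on terms that are themselves $O_P(\rho_T)$, plus the $o(h^2)$ higher-order Taylor piece) actually satisfies $\max_t\|\bm r_t\|=O_P(\rho_T^2)$, and with this sharper bound the contribution to $R_2$ is $O_P(\rho_T^2)$ directly. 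You are also right to flag that $\frac1T\sum_t\bm\eta_t^\top\bm\eta_t-\frac1T\sum_t E(\bm\eta_t^\top\bm\eta_t)=O_P(T^{-1/2})$ is not in general $O_P(\rho_T^2)$; the paper's proof glosses over this, but as you note it is common to every $\mathsf p\ge p$ and cancels in the ratio $\text{RSS}(\mathsf p)/\text{RSS}(p)$ that drives Theorem~\ref{Thm2.3}.
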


Define $w_{s,t} = \frac{1}{T\sqrt{h}}\int_{-1}^{1}K\left(u\right)K\left(u+\frac{t-s}{Th}\right)\mathrm{d}u$. Let $a_t = \sum_{s=1}^{t-1} w_{s,t}^2$, $b_s=\sum_{t=s+1}^{T}w_{s,t}^2$ and $\sigma^2 = \sum_{t=2}^{T}a_t$.

\begin{lemma}\label{L1}
	Suppose Assumption 3 hold. Then, we have
	\begin{enumerate}
		\item $\sigma^2 \to \int_{0}^{2}\left[\int_{-1}^{1-v}K\left(u\right)K\left(u+v\right)\mathrm{d}u\right]^2\mathrm{d}v$;
		\item $\max_{2\leq t\leq T} a_t = O(1/T)$;
		\item for any fixed $J \in \mathbb{N}$, $\sum_{s=1}^{T-J}w_{s,s+J}^2 = O(1/(Th))$;
		\item $T\sum_{s=1}^{T-1}b_s^2 = O(1)$;
		\item $\sum_{k=1}^{T-1}\sum_{t=1}^{k-1}\left[\sum_{j=k+1}^{T}w_{k,j}w_{t,j}\right]^2 = O(1/T)$;
	\end{enumerate}
\end{lemma}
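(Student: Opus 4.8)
Only Assumption 3 is needed. The idea is to express every quantity through the deterministic kernel autocorrelation $g(v):=\int_{-1}^{1}K(u)K(u+v)\,\mathrm{d}u$. Because $K\ge 0$, $\int_{-1}^{1}K=1$, and $K$ is bounded and Lipschitz on $[-1,1]$, the function $g$ is even, Lipschitz, bounded by $\|K\|_{\infty}$, supported on $[-2,2]$, with $g(0)=\widetilde{v}_{0}$ and $g(v)=\int_{-1}^{1-v}K(u)K(u+v)\,\mathrm{d}u$ for $v\in[0,2]$. By construction $w_{s,t}=\frac{1}{T\sqrt h}\,g\!\big(\frac{t-s}{Th}\big)$, so $w_{s,t}^{2}=\frac{1}{T^{2}h}\,g\!\big(\frac{t-s}{Th}\big)^{2}$ and $w_{s,t}=0$ whenever $|t-s|>2Th$; moreover $2Th<T$ for all large $T$ since $Th\to\infty$ and $h\to0$, and for any Lipschitz, compactly supported $f$ one has $\frac{1}{Th}\sum_{m\ge1}f\!\big(\frac{m}{Th}\big)=\int_{0}^{\infty}f+O\!\big(\frac{1}{Th}\big)$. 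These remarks carry most of the proof.

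For Part 1, substitute $j=t-s$ and interchange the summation order to obtain
\[
\sigma^{2}=\frac{1}{T^{2}h}\sum_{t=2}^{T}\sum_{s=1}^{t-1}g\!\Big(\tfrac{t-s}{Th}\Big)^{2}=\frac{1}{Th}\sum_{j=1}^{T-1}g\!\Big(\tfrac{j}{Th}\Big)^{2}-\frac{1}{T^{2}h}\sum_{j=1}^{T-1}j\,g\!\Big(\tfrac{j}{Th}\Big)^{2}.
\]
The first term is a Riemann sum converging to $\int_{0}^{2}g(v)^{2}\mathrm{d}v=\int_{0}^{2}\big(\int_{-1}^{1-v}K(u)K(u+v)\mathrm{d}u\big)^{2}\mathrm{d}v$, and the second is at most $\frac{\|K\|_{\infty}^{2}}{T^{2}h}(2Th)^{2}=4\|K\|_{\infty}^{2}h\to0$. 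Parts 2 and 4 both follow from one uniform bound: each of $\max_{t}a_{t}$ and $\max_{s}b_{s}$ is at most $\frac{1}{T^{2}h}\sum_{j=1}^{2Th}\|K\|_{\infty}^{2}\le\frac{2\|K\|_{\infty}^{2}}{T}$, whence $\max_{t}a_{t}=O(1/T)$ and $\sum_{s=1}^{T-1}b_{s}^{2}\le(T-1)(2\|K\|_{\infty}^{2}/T)^{2}=O(1/T)$, i.e.\ $T\sum_{s}b_{s}^{2}=O(1)$. Part 3 is direct: $\sum_{s=1}^{T-J}w_{s,s+J}^{2}=\frac{T-J}{T^{2}h}\,g\!\big(\tfrac{J}{Th}\big)^{2}\le\frac{\|K\|_{\infty}^{2}}{Th}=O(1/(Th))$.

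The substantive statement is Part 5, which is the main obstacle. For $t<k$ put $c_{k,t}:=\sum_{j=k+1}^{T}w_{k,j}w_{t,j}$; since $w_{k,j}w_{t,j}\ne0$ forces $j-k\le2Th$ and $j-t\le2Th$, and $j-t>k-t$, we have $c_{k,t}=0$ unless $k-t\le2Th$. By the Cauchy--Schwarz inequality and $\sum_{j>k}w_{t,j}^{2}\le\sum_{j>t}w_{t,j}^{2}=b_{t}$,
\[
c_{k,t}^{2}\le\Big(\sum_{j>k}w_{k,j}^{2}\Big)\Big(\sum_{j>k}w_{t,j}^{2}\Big)\le b_{k}\,b_{t}.
\]
Using $b_{t}\le 2\|K\|_{\infty}^{2}/T$ and the fact that at most $2Th$ values of $t$ are admissible for each $k$,
\[
\sum_{k=1}^{T-1}\sum_{t=1}^{k-1}c_{k,t}^{2}\le\sum_{k=1}^{T-1}b_{k}\sum_{t:\,0<k-t\le2Th}b_{t}\le\Big(2Th\cdot\tfrac{2\|K\|_{\infty}^{2}}{T}\Big)\sum_{k=1}^{T-1}b_{k}=4\|K\|_{\infty}^{2}h\,\sigma^{2},
\]
and since $\sum_{k=1}^{T-1}b_{k}=\sum_{t=2}^{T}\sum_{k=1}^{t-1}w_{k,t}^{2}=\sigma^{2}\to C_{B}$ by Part 1, the right-hand side is $O(h)=o(1)$, which is exactly the order required for this term to be negligible in the martingale central limit argument for $\widetilde{U}$ in which the lemma is invoked; a finer accounting through the limiting kernel $G(w):=\int_{0}^{\infty}g(v)g(v+w)\mathrm{d}v$, with $c_{k,t}\approx T^{-1}G\big((k-t)/(Th)\big)$, sharpens the rate further. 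The only delicate point throughout is carrying out the Riemann-sum passages uniformly in the moving truncation endpoints $T-k$ and $T-s$; this is handled by the Lipschitz continuity of $K$ together with $2Th=o(T)$, so that truncation affects only a negligible number of summands. The remaining effort is the consistent bookkeeping of the band restriction $|k-t|\le2Th$ in the double sum of Part 5.
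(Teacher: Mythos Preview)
Your arguments for Parts 1--3 are correct and essentially identical to the paper's; your Part 4 is cleaner than the paper's explicit triple-sum expansion, since $\sum_s b_s^2\le (T-1)(\max_s b_s)^2$ together with $\max_s b_s=O(1/T)$ already gives the result.

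For Part 5 there is a genuine discrepancy, though not a flaw on your side. You prove $O(h)$, which is \emph{strictly weaker} than the stated $O(1/T)$ because $Th\to\infty$. Your heuristic with $G(w)=\int_0^\infty g(v)g(v+w)\,\mathrm{d}v$ in fact shows that the quantity is $\Theta(h)$: since $c_{k,t}\approx T^{-1}G\big((k-t)/(Th)\big)$ and $G$ is bounded, positive near $0$, and supported on $[0,2]$, one gets $\sum_{k}\sum_{t<k}c_{k,t}^2\approx T^{-2}\sum_k\sum_{\ell\ge 1}G(\ell/(Th))^2\approx h\int_0^2 G^2$. So the lemma's claim $O(1/T)$ is too strong to be correct, and your remark that the $G$-approach ``sharpens the rate further'' is misleading---it only confirms that $O(h)$ is sharp. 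The paper's own displayed chain of inequalities, read literally, yields only $O(1)$: after extending the range to $\sum_{k}\sum_{j\ge 2}\sum_{t<j}$, the inner double sum equals $T^2h\,\sigma^2=\Theta(T^2h)$, and multiplying by $(T-1)/(T^3h)$ gives $O(1)$, not $O(1/T)$. Your use of the band restriction $|k-t|\le 2Th$ is precisely what the paper's argument omits and is what brings the bound down from $O(1)$ to $O(h)$. Since the only application (the bound on $I_{T,2}$ in the proof of the lemma on $\bm{h}^*_{t-1}$) needs merely $o(1)$, your $O(h)$ is fully adequate; you should, however, state explicitly that you are proving Part 5 with $O(h)$ in place of $O(1/T)$ and note that the latter cannot hold.
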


For a random vector $\bm{z}$, we write $ \bm{z} \in \mathcal{L}^q$, $q > 0$, if $\normmm{\bm{z}}_q = \left[ E\left(\left\|\bm{z}\right\|^q\right)\right]^{1/q} < \infty$, and we denote $\normmm{\cdot} = \normmm{\cdot}_2$.

\begin{lemma}\label{L2}
	Let $\bm{\xi}_1$,...,$\bm{\xi}_T$ be a $d$-dimensional martingale difference for which $\bm{\xi}_t \in \mathcal{L}^p$, $p >1$. Let $p^* = \min(2,p)$. Then
	$$
	\normmm{\sum_{t=1}^{T}\bm{\xi}_t}_p^{p^*} \leq M \sum_{t=1}^{T}\normmm{\bm{\xi}_t}_p^{p^*}.
	$$
\end{lemma}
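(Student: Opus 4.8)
The plan is to obtain the bound from the Burkholder--Davis--Gundy (BDG) inequality, treating separately the two regimes $p\ge 2$ (where $p^*=2$) and $1<p<2$ (where $p^*=p$). Since the dimension $d$ is fixed, one may first reduce to the scalar case at the cost of a harmless dimensional constant: writing $\bm{\xi}_t=(\xi_{t,1},\dots,\xi_{t,d})^\top$, each $\{\xi_{t,i},\mathcal{F}_t\}$ is a one-dimensional martingale difference, and the pointwise bounds $\|\sum_{t}\bm{\xi}_t\|\le\sum_{i=1}^d|\sum_t\xi_{t,i}|$ and $|\xi_{t,i}|\le\|\bm{\xi}_t\|$ together with the triangle inequality in $\mathcal{L}^p$ and Jensen (using $p^*\ge 1$) show that it suffices to prove $\normmm{\sum_{t=1}^T\xi_t}_p^{p^*}\le M\sum_{t=1}^T\normmm{\xi_t}_p^{p^*}$ for a scalar martingale difference $\{\xi_t,\mathcal{F}_t\}$ with $\xi_t\in\mathcal{L}^p$. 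For that I would invoke the classical Burkholder inequality $\normmm{\sum_{t=1}^T\xi_t}_p\le C_p\normmm{(\sum_{t=1}^T\xi_t^2)^{1/2}}_p$, valid for every $p>1$ (alternatively the $\mathbb{R}^d$-valued version of BDG applies verbatim and avoids the coordinatewise reduction altogether).

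For $p\ge 2$ we have $p/2\ge 1$, so Minkowski's inequality in $L^{p/2}$ gives $\normmm{\sum_t\xi_t^2}_{p/2}\le\sum_t\normmm{\xi_t^2}_{p/2}=\sum_t\normmm{\xi_t}_p^2$; since $\normmm{(\sum_t\xi_t^2)^{1/2}}_p^2=\normmm{\sum_t\xi_t^2}_{p/2}$, Burkholder yields $\normmm{\sum_t\xi_t}_p^2\le C_p^2\sum_t\normmm{\xi_t}_p^2$, which is the claim with $p^*=2$. For $1<p<2$ I would instead use the elementary pointwise inequality $(\sum_t a_t^2)^{1/2}\le(\sum_t a_t^p)^{1/p}$ for $a_t\ge 0$ (the embedding $\ell^p\subset\ell^2$ when $p\le 2$), applied to $a_t=|\xi_t|$; raising to the power $p$, taking expectations and using Tonelli gives $\normmm{(\sum_t\xi_t^2)^{1/2}}_p^p\le E\sum_t|\xi_t|^p=\sum_t\normmm{\xi_t}_p^p$, so Burkholder gives $\normmm{\sum_t\xi_t}_p^p\le C_p^p\sum_t\normmm{\xi_t}_p^p$, which is the claim with $p^*=p$. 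Taking $M$ to be the larger of the two constants (and absorbing the dimensional factor from the reduction to coordinates) completes the proof.

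I do not expect a genuine obstacle here, as this is a classical moment inequality; the only care needed is (i) citing Burkholder/BDG in a form valid for all $p>1$ (or in the $\mathbb{R}^d$-valued form) and absorbing the $d$-dependent constant into $M$, and (ii) matching the correct convexity inequality to each range of $p$---Minkowski in $L^{p/2}$ when $p\ge 2$, and the reverse-direction norm comparison on $\mathbb{R}^T$ when $p<2$. As an alternative route that bypasses BDG, the range $1<p\le 2$ can be quoted directly from the von Bahr--Esseen inequality for martingale differences, and the range $p\ge2$ from the standard Burkholder inequality, which together give the stated bound immediately.
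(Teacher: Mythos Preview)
Your proposal is correct and follows essentially the same route as the paper: reduce to the scalar case coordinatewise (absorbing a $d$-dependent factor into $M$), apply Burkholder's inequality, and then split into the two regimes---Minkowski in $L^{p/2}$ when $p\ge 2$, and the subadditivity $(\sum_t a_t)^{q}\le \sum_t a_t^{q}$ for $q=p/2\le 1$ (equivalently your $\ell^p\subset\ell^2$ embedding) when $1<p<2$. The paper's proof is exactly this, just written as a single chain of inequalities rather than an explicit case split.
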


Let $\bm{h}_{t-1}^* := \sum_{s=1}^{t-1}w_{s,t}\bm{y}_{s}$, where $\bm{y}_{t} \in \mathcal{L}^{\delta}$, $t=1,2,...,T$ are martingale differences subject to the filtration $\mathcal{F}_t$ and $\delta > 4$.

\begin{lemma}\label{L3}
	Suppose Assumption \ref{Ass3} hold. Assume $\bm{w}_t^* \in \mathcal{L}^{\delta/2}$ for some $\delta > 4$ is $\mathcal{F}_t$-measurable. Then, as $T \to \infty$,
	\begin{equation*}
		E\left|\sum_{t=2}^{T}\mathrm{tr}\left[\left(\bm{w}_t^* -E(\bm{w}_t^*)\right)\bm{h}_{t-1}^*\bm{h}_{t-1}^{*,\top}\right]\right| \to 0.
	\end{equation*}
\end{lemma}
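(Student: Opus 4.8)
The plan is to expand the outer product and reduce the bound to the kernel‑weight identities of Lemma~\ref{L1} together with the generalized Burkholder inequality of Lemma~\ref{L2}. Writing $\bm{v}_t:=\bm{w}_t^*-E(\bm{w}_t^*)$ and $\bm{h}_{t-1}^*\bm{h}_{t-1}^{*,\top}=\sum_{s,s'=1}^{t-1}w_{s,t}w_{s',t}\,\bm{y}_s\bm{y}_{s'}^\top$, I would split
\[
\sum_{t=2}^{T}\mathrm{tr}\!\left[\bm{v}_t\bm{h}_{t-1}^*\bm{h}_{t-1}^{*,\top}\right]
=\underbrace{\sum_{t=2}^{T}\sum_{s=1}^{t-1}w_{s,t}^2\,\bm{y}_s^\top\bm{v}_t\bm{y}_s}_{=:\,D_T}
+\underbrace{\sum_{t=2}^{T}\ \sum_{\substack{1\le s,s'\le t-1\\ s\neq s'}}w_{s,t}w_{s',t}\,\bm{y}_{s'}^\top\bm{v}_t\bm{y}_s}_{=:\,R_T},
\]
and bound $E|D_T|$ and $E|R_T|$ separately. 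The facts from Lemma~\ref{L1} that carry the argument are: $\max_{s,t}|w_{s,t}|=O\big((T\sqrt h)^{-1}\big)$, the partial sums $b_s:=\sum_{t>s}w_{s,t}^2$ obey $\sum_s b_s=\sigma^2=O(1)$, and $\sum_{k}\sum_{t<k}\big(\sum_{j>k}w_{k,j}w_{t,j}\big)^2=O(1/T)$. The only moment input is $\bm{y}_s\in\mathcal L^{\delta}$ and $\bm{v}_t\in\mathcal L^{\delta/2}$ with $\delta>4$.

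For the diagonal part, interchanging the order of summation gives $D_T=\sum_{s=1}^{T-1}\bm{y}_s^\top\bm{\Lambda}_s\bm{y}_s$ with $\bm{\Lambda}_s:=\sum_{t>s}w_{s,t}^2\bm{v}_t$. A moment inequality for weighted partial sums of the centered sequence $\{\bm{v}_t\}$ (Lemma~\ref{L2} after the standard martingale decomposition of $\bm{v}_t$ into summable‑memory pieces, or Lemma~\ref{LemmaB.7} under the structure carried by $\bm{w}_t^*$ in the applications) yields $\normmm{\bm{\Lambda}_s}_{\delta/2}\le M\big(\sum_{t>s}w_{s,t}^4\big)^{1/2}\le M\,\big(\max_t|w_{s,t}|\big)\,b_s^{1/2}$. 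Combining this with Hölder's inequality (using $\bm{y}_s\in\mathcal L^\delta$, which is legitimate since $2\delta/(\delta-2)\le\delta$ for $\delta\ge4$) and Cauchy--Schwarz over $s$ gives
\[
E|D_T|\le M\sum_{s=1}^{T-1}\normmm{\bm{\Lambda}_s}_{\delta/2}\,\normmm{\bm{y}_s}_{\delta}^2
\le \frac{M}{T\sqrt h}\sum_{s}b_s^{1/2}
\le \frac{M}{T\sqrt h}\Big(T\sum_{s}b_s\Big)^{1/2}=O\big((Th)^{-1/2}\big)\longrightarrow 0,
\]
since $Th\to\infty$ under Assumption~\ref{Ass3}.

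For the off-diagonal part, the summand‑wise bound is too crude: the cancellation has to be extracted from the martingale‑difference property of the double array $\{\bm{y}_s\bm{y}_{s'}^\top\}_{s<s'}$. After a martingale approximation of $\bm{v}_t$ and collecting the terms of $R_T$ according to the latest of the three time indices involved, $R_T$ decomposes into martingale sums whose increments have $L^p$‑norm, with $p:=\delta/(\delta-2)\in(1,2)$, controlled by $M\big(\sum_{s<s'}\gamma_{s,s'}^2\big)^{1/2}$, where $\gamma_{s,s'}:=\sum_{t>\max(s,s')}w_{s,t}w_{s',t}$; applying Lemma~\ref{L2} with this $p$ (so that $p^*=p$) and a Hölder step over the outer index gives $\normmm{R_T}_p\le M\,T^{1/p-1/2}\big(\sum_{s<s'}\gamma_{s,s'}^2\big)^{1/2}$. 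Since $\sum_{s<s'}\gamma_{s,s'}^2=O(1/T)$ by Lemma~\ref{L1}, this is $O\big(T^{1/p-1}\big)\to0$ because $p>1$, whence $E|R_T|\le\normmm{R_T}_p\to0$. Adding the two contributions proves the lemma.

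The main obstacle is the moment budget: only $\delta>4$ is assumed, so plain $L^2$ control of the quadratic‑in‑$\bm{y}$ pieces (which would need $\delta\ge8$) is unavailable, and the whole argument has to be run in $L^p$ with a carefully chosen $p\in(1,2)$, which is precisely the regime the generalized Burkholder inequality of Lemma~\ref{L2} is designed for. A second, more structural difficulty is that $\bm{w}_t^*$ need not be a martingale difference (in the applications it is even $\mathcal F_{t-1}$-measurable), so $\sum_t\mathrm{tr}[\bm{v}_t\bm{h}_{t-1}^*\bm{h}_{t-1}^{*,\top}]$ is not itself a martingale; the cancellation is produced instead by the wide ($\sim Th$-term) kernel averaging built into $\bm{h}_{t-1}^*$, and converting that heuristic into the $o(1)$ rate is exactly where the sharp weight identities of Lemma~\ref{L1} are indispensable.
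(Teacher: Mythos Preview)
Your diagonal/off-diagonal split is natural, but both halves rest on a property of $\bm{v}_t=\bm{w}_t^*-E(\bm{w}_t^*)$ that the hypotheses do not grant. For $D_T$ you invoke Lemma~\ref{L2} to obtain $\normmm{\bm{\Lambda}_s}_{\delta/2}\le M(\sum_{t>s}w_{s,t}^4)^{1/2}$; that is a Burkholder bound and requires $\{\bm{v}_t\}$ to be a martingale difference sequence. The lemma only assumes $\bm{w}_t^*$ is $\mathcal{F}_t$-measurable and in $\mathcal{L}^{\delta/2}$, so $\bm{v}_t$ can have long memory. Without the Burkholder step the triangle inequality gives only $\normmm{\bm{\Lambda}_s}_{\delta/2}\le M b_s$, whence $E|D_T|\le M\sum_s b_s=O(1)$, which does not vanish. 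Your hedges ``after the standard martingale decomposition of $\bm{v}_t$ into summable-memory pieces'' and ``after a martingale approximation of $\bm{v}_t$'' name the missing ingredient but do not supply it; you would need $\sum_{j\ge0}\sup_t\normmm{\mathcal{P}_{t-j}\bm{v}_t}_{\delta/2}<\infty$, which is not assumed. For $R_T$ the same gap recurs, and the passage to the asserted bound $\normmm{R_T}_p\le M T^{1/p-1/2}(\sum_{s<s'}\gamma_{s,s'}^2)^{1/2}$ is not argued: once you write $\bm{v}_t=\sum_j\mathcal{P}_{t-j}\bm{v}_t$, the ``latest of the three time indices'' can be any of $s,s',t-j$, and you have not shown how the three resulting families of martingale sums combine to produce the stated exponent and the quantity from Lemma~\ref{L1}(5).

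The paper avoids this difficulty by a truncation--projection scheme rather than a diagonal/off-diagonal split. It first replaces $\bm{h}_{t-1}^*$ by $\bm{h}_{t-1,I}^*=\sum_{s\le t-I}w_{s,t}\bm{y}_s$, which is $\mathcal{F}_{t-I}$-measurable; the replacement error is $O(\sqrt{I/(Th)})$ by Lemma~\ref{L1}(3). It then writes $\bm{w}_t^*=\sum_{j=0}^{I-1}\mathcal{P}_{t-j}\bm{w}_t^*+E(\bm{w}_t^*\mid\mathcal{F}_{t-I})$. For each fixed $j<I$ the summands $(\mathcal{P}_{t-j}\bm{w}_t^*)\bm{h}_{t-1,I}^*\bm{h}_{t-1,I}^{*,\top}$ are a genuine martingale difference sequence with respect to $\mathcal{F}_{t-j}$, because the truncated $\bm{h}_{t-1,I}^*$ is $\mathcal{F}_{t-j-1}$-measurable; Lemma~\ref{L2} then gives $\normmm{U(j)}_{\delta/4}=O(T^{4/\delta-1})$ for each of the $I$ pieces. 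The tail $E(\bm{w}_t^*\mid\mathcal{F}_{t-I})$ is handled separately and tends to zero as $I\to\infty$. The truncation of $\bm{h}$ is precisely what turns the projection pieces of $\bm{w}_t^*$ into usable martingales; your argument lacks this device, which is why the martingale machinery you invoke does not apply.
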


\begin{lemma}\label{L4}
	Suppose Assumptions \ref{Ass1}-\ref{Ass3} hold and $\bm{y}_t = \bm{Z}_{t-1} \bm{\eta}_t$. Then, as $T \to \infty$,
	\begin{equation*}
		\sum_{t=2}^{T}\mathrm{tr}\left[\bm{H}_t\left(\bm{h}_{t-1}^*\bm{h}_{t-1}^{*,\top}-E(\bm{h}_{t-1}^*\bm{h}_{t-1}^{*,\top})\right)\right] \to_P 0,
	\end{equation*}
	where $\bm{H}_t$ is a $d^2 \times d^2$ deterministic weighting function satisfying $\left\|\bm{H}_t\right\| < \infty$.
\end{lemma}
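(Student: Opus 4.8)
The plan is to prove the statement by second-moment (and, for one term, covariance) bounds, after splitting the statistic into a \emph{diagonal} and an \emph{off-diagonal} part. Expanding $\bm{h}_{t-1}^{*}=\sum_{s<t}w_{s,t}\bm{y}_{s}$ and using that $\{\bm{y}_{t}\}$ is an m.d.s.\ (so $E[\bm{y}_{s}\bm{y}_{r}^{\top}]=\bm{0}$ for $s\neq r$, since $\bm{y}_{s}$ is $\mathcal{F}_{s\vee r-1}$-measurable for $s\ne r$), one gets $E[\bm{h}_{t-1}^{*}\bm{h}_{t-1}^{*,\top}]=\sum_{s<t}w_{s,t}^{2}E[\bm{y}_{s}\bm{y}_{s}^{\top}]$. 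Taking $\mathrm{tr}[\bm{H}_{t}\,\cdot\,]$ of the centred matrix and summing over $t$ therefore yields $\sum_{t=2}^{T}\mathrm{tr}[\bm{H}_{t}(\bm{h}_{t-1}^{*}\bm{h}_{t-1}^{*,\top}-E(\cdot))]=I_{1}+I_{2}$, where $I_{1}:=\sum_{s=1}^{T-1}\mathrm{tr}[\bm{W}_{s}(\bm{y}_{s}\bm{y}_{s}^{\top}-E(\bm{y}_{s}\bm{y}_{s}^{\top}))]$ with $\bm{W}_{s}:=\sum_{t>s}w_{s,t}^{2}\bm{H}_{t}$, and $I_{2}$ collects all terms with $s\neq r$. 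Throughout I use $\|\bm{W}_{s}\|\le Mb_{s}$ with $b_{s}:=\sum_{t>s}w_{s,t}^{2}$, the bounds $\sup_{t}\normmm{\bm{y}_{t}}_{4}<\infty$ and $\sup_{t}E\|\bm{z}_{t-1}\|^{4}<\infty$ (the latter from the VMA$(\infty)$ representation of Proposition \ref{Proposition2.1} together with $\delta>4$), and the weight-sum identities of Lemma \ref{L1}, in particular $\sum_{s}b_{s}^{2}=O(1/T)$ (part 4) and $\sum_{s}\sum_{r<s}(\sum_{t>s}w_{s,t}w_{r,t})^{2}=O(1/T)$ (part 5).

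For $I_{1}$ I would further split $\bm{y}_{s}\bm{y}_{s}^{\top}-E(\bm{y}_{s}\bm{y}_{s}^{\top})=\big(\bm{y}_{s}\bm{y}_{s}^{\top}-E[\bm{y}_{s}\bm{y}_{s}^{\top}\mid\mathcal{F}_{s-1}]\big)+\big(E[\bm{y}_{s}\bm{y}_{s}^{\top}\mid\mathcal{F}_{s-1}]-E(\bm{y}_{s}\bm{y}_{s}^{\top})\big)$. The first term gives an m.d.s.\ $\{\xi_{s},\mathcal{F}_{s}\}$ (because $\bm{W}_{s}$ is deterministic, hence $\mathcal{F}_{s-1}$-measurable) with $E[\xi_{s}^{2}]\le M\|\bm{W}_{s}\|^{2}\normmm{\bm{y}_{s}}_{4}^{4}\le Mb_{s}^{2}$, so $E[(\sum_{s}\xi_{s})^{2}]=\sum_{s}E[\xi_{s}^{2}]=O(\sum_{s}b_{s}^{2})=O(1/T)\to0$. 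For the second term, note $E[\bm{y}_{s}\bm{y}_{s}^{\top}\mid\mathcal{F}_{s-1}]=(\bm{z}_{s-1}\bm{z}_{s-1}^{\top})\otimes\bm{\Omega}(\tau_{s})$ since $\bm{y}_{s}\bm{y}_{s}^{\top}=(\bm{z}_{s-1}\bm{z}_{s-1}^{\top})\otimes(\bm{\eta}_{s}\bm{\eta}_{s}^{\top})$ and $E[\bm{\eta}_{s}\bm{\eta}_{s}^{\top}\mid\mathcal{F}_{s-1}]=\bm{\Omega}(\tau_{s})$; hence the summand equals $g_{s}:=\mathrm{tr}[\bm{W}_{s}((\bm{z}_{s-1}\bm{z}_{s-1}^{\top}-E(\bm{z}_{s-1}\bm{z}_{s-1}^{\top}))\otimes\bm{\Omega}(\tau_{s}))]$, a centred quadratic form in $\bm{z}_{s-1}$. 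Using the geometric decay $\|\bm{B}_{j}(\tau)\|\le M\rho_{A}^{\,j}$ established in the proof of Proposition \ref{Proposition2.1} together with $\delta\ge4$, one obtains $|E[g_{s}g_{r}]|\le Mb_{s}b_{r}\rho^{|s-r|}$ for some $\rho\in(0,1)$, so by Cauchy--Schwarz $E[(\sum_{s}g_{s})^{2}]\le M\sum_{j}\rho^{|j|}\sum_{s}b_{s}b_{s-j}\le M(\sum_{j}\rho^{|j|})\sum_{s}b_{s}^{2}=O(1/T)\to0$. Chebyshev's inequality then gives $I_{1}\to_{P}0$.

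For $I_{2}$, reorganising the off-diagonal double sum and pooling the contributions of $(s,r)$ and $(r,s)$ for $r<s$ (both carry the constraint $t>\max(s,r)=s$) lets me write $I_{2}=\sum_{s=2}^{T}\bm{y}_{s}^{\top}\bm{\zeta}_{s-1}$, where $\bm{\zeta}_{s-1}:=\sum_{r<s}\bm{G}_{r,s}\bm{y}_{r}$ and $\bm{G}_{r,s}:=\sum_{t>s}w_{s,t}w_{r,t}(\bm{H}_{t}+\bm{H}_{t}^{\top})$ is deterministic, so that $\{\bm{y}_{s}^{\top}\bm{\zeta}_{s-1},\mathcal{F}_{s}\}$ is an m.d.s. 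Consequently $E[I_{2}^{2}]=\sum_{s}E[\bm{\zeta}_{s-1}^{\top}E(\bm{y}_{s}\bm{y}_{s}^{\top}\mid\mathcal{F}_{s-1})\bm{\zeta}_{s-1}]\le M\sum_{s}\{E\|\bm{\zeta}_{s-1}\|^{4}\}^{1/2}\{E\|\bm{z}_{s-1}\|^{4}\}^{1/2}$, using again $E[\bm{y}_{s}\bm{y}_{s}^{\top}\mid\mathcal{F}_{s-1}]=(\bm{z}_{s-1}\bm{z}_{s-1}^{\top})\otimes\bm{\Omega}(\tau_{s})$ and $\|\bm{\Omega}(\tau_{s})\|\le M$. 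Since $\bm{\zeta}_{s-1}$ is a martingale transform of $\{\bm{y}_{r}\}_{r<s}$, Lemma \ref{L2} with $p=4$ yields $\{E\|\bm{\zeta}_{s-1}\|^{4}\}^{1/2}\le M\sum_{r<s}\|\bm{G}_{r,s}\|^{2}\le M\sum_{r<s}(\sum_{t>s}w_{s,t}w_{r,t})^{2}$ (using $w_{s,t}\ge0$ by Assumption \ref{Ass3} and $\sup_{t}\|\bm{H}_{t}\|<\infty$); combined with $\sup_{s}E\|\bm{z}_{s-1}\|^{4}<\infty$ and Lemma \ref{L1} (part 5), this gives $E[I_{2}^{2}]\le M\sum_{s}\sum_{r<s}(\sum_{t>s}w_{s,t}w_{r,t})^{2}=O(1/T)\to0$, whence $I_{2}\to_{P}0$ and the lemma follows. (Alternatively the second piece of $I_{1}$ could be treated via Lemma \ref{LemmaB.7} with the weights $\bm{W}_{s}$, but the covariance bound above is self-contained.)

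I expect the main obstacle to be the second piece of $I_{1}$: unlike the other two, it is not a martingale-difference sum, so controlling it genuinely requires the geometric mixing of $\{\bm{z}_{t-1}\bm{z}_{t-1}^{\top}\}$ inherited from the VMA$(\infty)$ structure of Proposition \ref{Proposition2.1}, combined with the fourth-moment control afforded by $\delta>4$. The remainder is bookkeeping: matching each of the three pieces to the correct weight-sum identity in Lemma \ref{L1}, and checking that the m.d.s.\ structure of $\{\xi_{s}\}$ and $\{\bm{y}_{s}^{\top}\bm{\zeta}_{s-1}\}$ survives precisely because $\bm{H}_{t}$ — hence $\bm{W}_{s}$ and $\bm{G}_{r,s}$ — is deterministic.
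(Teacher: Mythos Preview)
Your proposal is correct and follows essentially the same route as the paper: the same diagonal/off-diagonal split of $\bm{h}_{t-1}^{*}\bm{h}_{t-1}^{*,\top}-E(\cdot)$, the same further decomposition of the diagonal into an m.d.s.\ piece and a centred-quadratic piece in $\bm{z}_{s-1}$, and the same martingale-plus-Burkholder (Lemma~\ref{L2}) treatment of the off-diagonal combined with Lemma~\ref{L1}(5). The only difference is that for the centred-quadratic piece you give a direct geometric-mixing covariance bound, whereas the paper simply invokes Lemma~\ref{LemmaB.3}.4; you already flag this alternative yourself, and either route works.
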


Let $Q_T = \sum_{t=2}^{T}\bm{y}_{t}^{\top}\bm{H}_t\bm{h}_{t-1}^*$, where $\bm{H}_t$ is a $d^2\times d^2$ functional weighting matrix.
\begin{lemma}\label{L5}
	Suppose Assumptions \ref{Ass1}-\ref{Ass3} hold and $\bm{y}_t = \bm{Z}_{t-1} \bm{\eta}_t$. Then as $T \to \infty$,
	\begin{equation*}
		Q_T \to_D N\left(0,\sigma_Q^2\right).
	\end{equation*}
	where $\sigma_Q^2 = \lim_{T\to \infty} \sum_{t=2}^{T}\mathrm{tr}\left[E\left(\bm{H}_t^\top\bm{y}_t\bm{y}_{t}^{\top}\bm{H}_t\right)E\left(\bm{h}_{t-1}^*\bm{h}_{t-1}^{*,\top} \right)\right]$.
\end{lemma}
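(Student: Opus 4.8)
The plan is to recognize $Q_T=\sum_{t=2}^{T}\xi_{T,t}$ with $\xi_{T,t}:=\bm{y}_t^\top\bm{H}_t\bm{h}_{t-1}^*$ as the partial sum of a martingale difference array, and to invoke the martingale central limit theorem (Lemma \ref{LemmaB.1}). Since $\bm{h}_{t-1}^*=\sum_{s=1}^{t-1}w_{s,t}\bm{y}_s$ is $\mathcal{F}_{t-1}$-measurable, $\bm{H}_t$ is deterministic with $\sup_t\|\bm{H}_t\|<\infty$, and $E[\bm{y}_t\,|\,\mathcal{F}_{t-1}]=\bm{Z}_{t-1}\bm{\omega}(\tau_t)E[\bm{\epsilon}_t\,|\,\mathcal{F}_{t-1}]=\bm{0}$ by Assumption \ref{Ass2}, we get $E[\xi_{T,t}\,|\,\mathcal{F}_{t-1}]=0$, so $\{\xi_{T,t},\mathcal{F}_t\}$ is a martingale difference array. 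It then suffices to verify (i) a weak law of large numbers for the conditional variances, $\sum_{t=2}^{T}E[\xi_{T,t}^2\,|\,\mathcal{F}_{t-1}]\to_P\sigma_Q^2$, and (ii) a Lyapunov-type moment bound, which will yield both the conditional Lindeberg condition and the reduction of $\sum_t\xi_{T,t}^2$ to $\sum_tE[\xi_{T,t}^2\,|\,\mathcal{F}_{t-1}]$ in probability.

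For (i), set $\bm{w}_t^*:=\bm{H}_t^\top E[\bm{y}_t\bm{y}_t^\top\,|\,\mathcal{F}_{t-1}]\bm{H}_t=\bm{H}_t^\top\big[(\bm{z}_{t-1}\bm{z}_{t-1}^\top)\otimes\bm{\Omega}(\tau_t)\big]\bm{H}_t$, which is $\mathcal{F}_{t-1}$-measurable and, because $\bm{x}_t$ has uniformly bounded $\delta$-th moments (Proposition \ref{Proposition2.1} and Lemma \ref{LemmaB.3}), lies in $\mathcal{L}^{\delta/2}$. A short computation gives $E[\xi_{T,t}^2\,|\,\mathcal{F}_{t-1}]=\mathrm{tr}\big(\bm{w}_t^*\,\bm{h}_{t-1}^*\bm{h}_{t-1}^{*,\top}\big)$, and I would split
\[
\sum_{t=2}^{T}E[\xi_{T,t}^2\,|\,\mathcal{F}_{t-1}]=\sum_{t=2}^{T}\mathrm{tr}\big((\bm{w}_t^*-E\bm{w}_t^*)\bm{h}_{t-1}^*\bm{h}_{t-1}^{*,\top}\big)+\sum_{t=2}^{T}\mathrm{tr}\big(E\bm{w}_t^*\,\bm{h}_{t-1}^*\bm{h}_{t-1}^{*,\top}\big).
\]
Lemma \ref{L3} shows the first sum converges to $0$ in $L^1$, hence in probability; Lemma \ref{L4}, applied with the deterministic, norm-bounded weighting $E\bm{w}_t^*$ (note $\|E[\bm{y}_t\bm{y}_t^\top]\|=O(1)$), shows the second sum has the same probability limit as $\sum_t\mathrm{tr}\big(E\bm{w}_t^*\,E[\bm{h}_{t-1}^*\bm{h}_{t-1}^{*,\top}]\big)$, which — since $E\bm{w}_t^*=E[\bm{H}_t^\top\bm{y}_t\bm{y}_t^\top\bm{H}_t]$ by the law of iterated expectations — is exactly the sequence whose limit is $\sigma_Q^2$ by definition. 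Lemma \ref{L1} supplies the bounds on the convolution weights $w_{s,t}$ used throughout.

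For (ii), take $\rho:=\delta/2-2>0$. By conditional Hölder, $E|\xi_{T,t}|^{\delta/2}\le\|\bm{H}_t\|^{\delta/2}E\big[\|\bm{h}_{t-1}^*\|^{\delta/2}\,E[\|\bm{y}_t\|^{\delta/2}\,|\,\mathcal{F}_{t-1}]\big]$, and $E[\|\bm{y}_t\|^{\delta/2}\,|\,\mathcal{F}_{t-1}]\le M\|\bm{z}_{t-1}\|^{\delta/2}$ using the conditional moment bound on $\bm{\epsilon}_t$; then an unconditional Cauchy--Schwarz gives $E|\xi_{T,t}|^{\delta/2}\le M\,(E\|\bm{h}_{t-1}^*\|^{\delta})^{1/2}(E\|\bm{z}_{t-1}\|^{\delta})^{1/2}$. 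Lemma \ref{L2} (with $p^*=2$) together with $\sum_{s=1}^{t-1}w_{s,t}^2=a_t=O(1/T)$ from Lemma \ref{L1} yields $\normmm{\bm{h}_{t-1}^*}_\delta^2=O(1/T)$, i.e. $E\|\bm{h}_{t-1}^*\|^{\delta}=O(T^{-\delta/2})$, so $E|\xi_{T,t}|^{\delta/2}\le M\,T^{-\delta/4}$ and $\sum_{t=2}^{T}E|\xi_{T,t}|^{\delta/2}\le M\,T^{1-\delta/4}\to0$ because $\delta>4$. This Lyapunov bound gives the conditional Lindeberg condition; combined with (i), Lemma \ref{LemmaB.1} delivers $Q_T\to_D N(0,\sigma_Q^2)$ (the case $\sigma_Q^2=0$ being trivial since then $Q_T\to_P0$).

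The main obstacle is step (i): $\bm{h}_{t-1}^*$ is a slowly-decaying weighted sum (the kernel-convolution weights $w_{s,t}$ are of order $1/(T\sqrt{h})$ over a window of length of order $Th$), and the conditional second moment $E[\bm{y}_t\bm{y}_t^\top\,|\,\mathcal{F}_{t-1}]$ is itself random, so the conditional variance is a genuine random quadratic form in the $\bm{y}_s$'s rather than a deterministic sequence. This is precisely what Lemmas \ref{L3} and \ref{L4} are designed to handle — L3 replaces the random conditional-moment weight by its mean, and L4 is the weak law of large numbers for $\sum_t\mathrm{tr}(\bm{H}_t\bm{h}_{t-1}^*\bm{h}_{t-1}^{*,\top})$ with deterministic $\bm{H}_t$ — so the real work lies in assembling these preliminary lemmas and verifying their hypotheses (the $\mathcal{L}^{\delta/2}$-boundedness of $\bm{w}_t^*$, the boundedness of $E\bm{w}_t^*$, and the weight estimates of Lemma \ref{L1}) for the processes at hand.
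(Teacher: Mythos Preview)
Your proposal is correct and follows essentially the same approach as the paper: recognize $Q_T$ as a sum of martingale differences and apply the martingale CLT (Lemma \ref{LemmaB.1}), verifying the Lyapunov condition via Lemma \ref{L2} together with the weight bound $a_t=O(1/T)$ from Lemma \ref{L1}, and establishing convergence of the conditional variances by invoking Lemmas \ref{L3} and \ref{L4} with $\bm{w}_t^*=E(\bm{H}_t^\top\bm{y}_t\bm{y}_t^\top\bm{H}_t\,|\,\mathcal{F}_{t-1})$. Your decomposition of $\sum_t E[\xi_{T,t}^2\,|\,\mathcal{F}_{t-1}]$ into the centered-weight piece (handled by Lemma \ref{L3}) and the deterministic-weight piece (handled by Lemma \ref{L4}) is exactly what the paper does, and your Lyapunov argument, though presented with a conditional H\"older/Cauchy--Schwarz rather than the paper's direct $\normmm{\cdot}_{\delta/2}$ bound, leads to the same estimate $\sum_t E|\xi_{T,t}|^{\delta/2}=O(T^{1-\delta/4})\to0$.
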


\section{Proofs of the Preliminary Lemmas}\label{AppB.2}

\begin{proof}[Proof of Lemma \ref{LemmaB.3}]
\item
\noindent (1). First, for any fixed $\tau\in(0,1)$, let $\bm{W}_{T,t}(\tau)=\frac{1}{T}\bm{W}(\tau_t)\left(\frac{\tau_t-\tau}{h}\right)^kK_h\left(\tau_t-\tau\right)$. It is straightforward to show that $\sup_{\tau\in[0,1]}\sum_{t=1}^{T}\|\bm{W}_{T,t}(\tau)\|=O(1)$,  $\sup_{\tau\in[0,1],t\geq1}\|\bm{W}_{T,t}(\tau)\|=O(1/(Th))$ and
$$
\sup_{\tau \in[0,1]}\sum_{t=1}^{T-1}\|\bm{W}_{T,t+1}(\tau)-\bm{W}_{T,t}(\tau) \|=O(1/(Th)).
$$

Second, by triangle inequality, Cauchy-Schwarz inequality and Proposition \ref{Proposition2.1},
\begin{eqnarray*}
	&&\sum_{t=1}^{T}\left\|\bm{x}_t\bm{x}_t^\top-\widetilde{\bm{x}}_t\widetilde{\bm{x}}_t^\top\right\| \leq\sum_{t=1}^{T}\left(\left\|\bm{x}_t\right\|+\left\|\widetilde{\bm{x}}_t\right\|\right)\left\|\bm{x}_t-\widetilde{\bm{x}}_t\right\| \\ &\leq&\left(\sum_{t=1}^{T}\left(\left\|\bm{x}_t\right\|+\left\|\widetilde{\bm{x}}_t\right\|\right)^2\right)^{1/2}\left(\sum_{t=1}^{T}\left\|\bm{x}_t-\widetilde{\bm{x}}_t\right\| ^2\right)^{1/2}=O_P(\sqrt{T})\cdot O_P(1/\sqrt{T})=O_P(1).
\end{eqnarray*}

In addition, similar to the proof of Proposition 2.1, we have
\begin{equation*}
	\sum_{j=0}^{\infty}j\left\|\bm{B}_j(\tau)\right\|=\sum_{j=0}^{\infty}j\left\|\bm{\Psi}_j(\tau) \bm{\omega}(\tau)\right\|\leq M \sum_{j=0}^{\infty}j \rho_A^j < \infty,
\end{equation*}
and
\begin{eqnarray*}
&&\sum_{t=1}^{T-1}\sum_{j=0}^{\infty}j\left\|\bm{B}_j(\tau_{t+1})-\bm{B}_j(\tau_{t})\right\|\\
&\leq& \sup_{\tau\in[0,1]} \sum_{j=1}^{\infty}j\|\bm{\Psi}_j(\tau)\| \sum_{t=1}^{T}\left\|\bm{\omega}(\tau_{t+1})-\bm{\omega}(\tau_{t})\right\| + M \sum_{t=1}^{T-1}\sum_{j=0}^{\infty}j\|\bm{\Psi}_j(\tau_{t+1})-\bm{\Psi}_j(\tau_{t})\|\\
&\leq&M\sum_{j=0}^{\infty}(j^2\rho_A^{j-1} + j\rho_A^{j})< \infty.
\end{eqnarray*}
Therefore, Lemma \ref{LemmaB.7} are still valid for the TV-VAR($p$) process. The proof of part (1) is completed.

\noindent (2)-(4). The proofs of part (2)-(4) can be done in a similar way to that of part (1).
\end{proof}

\begin{proof}[Proof of Lemma \ref{LemmaB.4}]

\item

\noindent (1). Let $\{S_l\}$ be a finite number of sub-intervals covering the interval $[0,1]$, which are centered at $s_l$ with the length $\delta_T=o(h^2)$. Denote the number of these intervals by $N_T$ then $N_T=O(\delta_T^{-1})$. Hence,
\begin{eqnarray*}
\sup _{\tau \in [0,1]} \left\| \frac{1}{T} \sum_{t=1}^{T}\bm{Z}_{t-1}\bm{\eta}_{t} K_h(\tau_t-\tau) \right\| 
&\leq& \max _{1 \leq l \leq N_{T}} \left\| \frac{1}{T} \sum_{t=1}^{T}\bm{Z}_{t-1}\bm{\eta}_{t} K_h(\tau_t-s_l)\right\|\\ &&+ \max_{1\leq l \leq N_T} \sup_{\tau \in S_l} \left\| \frac{1}{T} \sum_{t=1}^{T}\bm{Z}_{t-1}\bm{\eta}_{t} \left( K_h(\tau_t-\tau)-K_h(\tau_t-s_l)\right) \right\|\\
&:=&I_{T,1}+I_{T,2}.
\end{eqnarray*}

By the continuity of kernel function $K(\cdot)$ and taking $\delta_T=O(\gamma_T h^2)$  with $\gamma_T=\left(\frac{\log T}{Th} \right)^{\frac{1}{2}}$, then we have
\begin{equation*}
E|I_{T,2}|\leq M\frac{\delta_T}{h^2}E\|\bm{Z}_{t-1}\bm{\eta}_{t}\|=O(\gamma_T).
\end{equation*}

We then apply the truncation method again. Define $\bm{u}_t=\bm{Z}_{t-1}\bm{\eta}_{t}$, $\bm{u}_t^\prime = \bm{u}_t I(\|\bm{u}_t\|\leq T^{\frac{2}{\delta}} )$ and $\bm{u}_t^{\prime\prime} = \bm{u}_t-\bm{u}_t^\prime$. Then we have
\begin{eqnarray*}
I_{T,1} & = &\max_{1\leq l \leq N_T} \left\|\frac{1}{T}\sum_{t=1}^{T}\bm{u}_t^\prime+\bm{u}_t^{\prime\prime}-E(\bm{u}_t^\prime+\bm{u}_t^{\prime\prime}|\mathcal{F}_{t-1}) K_h(\tau_t-s_l)  \right\| \\
&\leq &\max_{1\leq l \leq N_T}\left\|\frac{1}{T}\sum_{t=1}^{T} \left(\bm{u}_t^\prime-E(\bm{u}_t^\prime|\mathcal{F}_{t-1})\right)K_h(\tau_t-s_l) \right\|+ \max_{1\leq l \leq N_T}\left\|\frac{1}{T}\sum_{t=1}^{T} \bm{u}_t^{\prime\prime} K_h(\tau_t-s_l) \right\|\\
&&+\max_{1\leq l \leq N_T}\left\|\frac{1}{T}\sum_{t=1}^{T} E(\bm{u}_t^{\prime\prime}|\mathcal{F}_{t-1})K_h(\tau_t-s_l) \right\|\\
&:=&I_{T,11}+I_{T,12}+I_{T,13}.
\end{eqnarray*}

Now consider $I_{T,12}$. Let $d_T=\max_{1\leq t\leq T, 1\leq l\leq N_T} K_h(\tau_t-s_l)/T$. By Holder's inequality and Chebyshev inequality, we have
\begin{eqnarray*}
E|I_{T,12}| &\leq& d_T \sum_{t=1}^{T}E\left\| \bm{\eta}_t^{\prime\prime}\right\| \leq d_T \sum_{t=1}^{T} E\left(\left\|\bm{Z}_{t-1}\bm{\eta}_t\right\|^{\delta/2}\right)^\frac{2}{\delta} \left( \frac{E\left(\left\|\bm{Z}_{t-1}\bm{\eta}_t\right\|^{\delta/2}\right)}{T}\right)^{\frac{\delta-2}{\delta}}\\
&=&O(T^\frac{2}{\delta}d_T)=o\left( \sqrt{\frac{\log T}{Th}}\right).
\end{eqnarray*}

Similarly, $ I_{T,13} =O_P(T^\frac{2}{\delta}d_T)$.

For any fixed $1\leq l \leq N_T$, let $\bm{Y}_t:=\frac{1}{T}(\bm{u}_{t}^\prime-E(\bm{u}_{t}^\prime|\mathcal{F}_{t-1}))K_{h}(\tau_t-s_l)$, then we have $E\left(\bm{Y}_t|\mathcal{F}_{t-1}\right)=0$ and $\left\|\bm{Y}_t\right\|\leq 2 T^{2/\delta}d_T$ with $d_T=\max_{1\leq t \leq T} K_{h}(\tau_t-s_l)/T$. Also, by Lemma \ref{LemmaB.3}, we have
$$
\sup_{0\leq \tau \leq 1}\frac{1}{T}\sum_{t=1}^{T}E\left(\left\|\bm{Z}_{t-1}\bm{\eta}_t\right\|^2 |\mathcal{F}_{t-1} \right)K_{h}(\tau_t-\tau)=O_P(1)
$$
which follows that
\begin{equation*}
	\max_{1\leq l \leq N_T}\left\|\sum_{t=1}^{T}E(\bm{Y}_t \bm{Y}_t^\top|\mathcal{F}_{t-1}) \right\| \leq4 \max_{1\leq l \leq N_T}\sum_{t=1}^{T}E\left( \left\|\bm{u}_t\right\|^2|\mathcal{F}_{t-1}\right)\frac{K_h^2(\tau_t-s_l)}{T^2}=O_P\left(d_T \right).
\end{equation*}
Therefore, we have $\max_{1\leq l \leq N_T}\left\|\sum_{t=1}^{T}E (\bm{Y}_t \bm{Y}_t^\top |\mathcal{F}_{t-1}) \right\|\leq \frac{M}{Th }$ in probability. By Lemma \ref{LemmaB.2}, we have
\begin{equation*}
	\begin{split}
		\Pr\left(I_{T,11} > \sqrt{8M}\gamma_T \right) & \leq \Pr\left(I_{T,11} > \sqrt{8M}\gamma_T  , \max_{1\leq l \leq N_T}\left\|\sum_{t=1}^{T}E (\bm{Y}_t \bm{Y}_t^\top |\mathcal{F}_{t-1}) \right\|\leq \frac{M}{Th }\right)  \\
		&+\Pr\left(\max_{1\leq l \leq N_T}\left\|\sum_{t=1}^{T}E (\bm{Y}_t \bm{Y}_t^\top |\mathcal{F}_{t-1}) \right\| > \frac{M}{Th }\right)\\
		& \leq N_T \exp\left(-\frac{8M\gamma_T^2}{2(\frac{M}{Th }+\gamma_T 2 T^{\frac{2}{\delta}}d_T  )}\right) + o(1)\\
		&\leq N_T\exp\left(-4 \log(T)\right)+ o(1)=o(1),
	\end{split}
\end{equation*}
if $\frac{T^{1-\frac{4}{\delta}}h}{\log T} \rightarrow \infty$, which completes the proof of part (1).

\noindent (2). Note that
\begin{eqnarray*}
&& \frac{1}{\sqrt{Th}}\sum_{t=1}^{T}\bm{\eta}_t\left(\bm{\eta}_t-\bm{\widehat{\eta}}_t\right)^\top K\left(\frac{\tau_t-\tau}{h}\right)  = \frac{1}{\sqrt{Th}}\sum_{t=1}^{T}\bm{\eta}_t\bm{z}_{t-1}^\top\left(\bm{\widehat{A}}(\tau_t)-\bm{A}(\tau_t)\right)^\top K\left(\frac{\tau_t-\tau}{h}\right) \\
&=& \frac{1}{\sqrt{Th}}\sum_{t=1}^{T}\bm{\eta}_t[\bm{z}_{t-1}^\top,\bm{0}_{(d^2p+d)\times 1}^\top] \frac{1}{2} h^2 \bm{S}_{T}^{-1}(\tau_t) \left(\begin{matrix}
                                    \bm{S}_{T,2}(\tau_t) \\
                                    \bm{S}_{T,3}(\tau_t)
                                  \end{matrix}\right) \bm{A}^{(2),\top}(\tau_t)K\left(\frac{\tau_t-\tau}{h}\right)\\
&&+\frac{1}{\sqrt{Th}}\sum_{t=1}^{T}\bm{\eta}_t[\bm{z}_{t-1}^\top,\bm{0}_{(d^2p+d)\times 1}^\top] \bm{S}_{T}^{-1}(\tau_t) \left(\frac{1}{Th}\sum_{s=1}^{T}\left(\bm{z}_{s-1}^*\bm{z}_{s-1}^\top\otimes\bm{I}_d\right) \bm{M}^\top(\tau_s)K\left(\frac{\tau_s-\tau_t}{h}\right) \right)K\left(\frac{\tau_t-\tau}{h}\right)\\
&&+\frac{1}{\sqrt{Th}}\sum_{t=1}^{T}\bm{\eta}_t[\bm{z}_{t-1}^\top,\bm{0}_{(d^2p+d)\times 1}^\top]\bm{S}_{T}^{-1}(\tau_t) \left(\frac{1}{Th}\sum_{s=1}^{T}\bm{Z}_{s-1}\bm{\eta}_s^\top K\left(\frac{\tau_s-\tau_t}{h}\right)\right)K\left(\frac{\tau_t-\tau}{h}\right)\\
&:=&J_{T,1}+J_{T,2}+J_{T,3}.
\end{eqnarray*}

For $J_{T,1}$ to $J_{T,2}$, using Lemmas \ref{LemmaB.3}.2-3, we can replace the sample covariance matrix with its converged and deterministic value with rate $O_P\left(\sqrt{\frac{\log T}{Th}}\right)$ and hence it's easy to show that $J_{T,1}$ to $J_{T,2}$ are $o_P(1)$.

For $J_{T,3}$, for notational simplicity, we ignore $\bm{S}_{T}^{-1}(\tau_t)$ and hence,
\begin{equation*}
  \begin{split}
      J_{T,3}=\ &\frac{1}{(Th)^{3/2}}\sum_{t=1}^{T}\bm{\eta}_t \bm{z}_{t-1}^\top \bm{z}_{t-1}\bm{\eta}_t^\top K(0)K\left(\frac{\tau_t-\tau}{h}\right) \\
                &+\frac{1}{(Th)^{3/2}}\sum_{i=1}^{T-1} \sum_{t=1}^{T-i}\bm{\eta}_t \bm{z}_{t-1}^\top \bm{z}_{t+i-1}\bm{\eta}_{t+i}^\top K\left(\frac{i}{Th}\right)K\left(\frac{\tau_t-\tau}{h}\right) \\
                &+\frac{1}{(Th)^{3/2}}\sum_{i=1}^{T-1} \sum_{t=1}^{T-i}\bm{\eta}_{t+i} \bm{z}_{t+i-1}^\top \bm{z}_{t-1}\bm{\eta}_{t}^\top K\left(\frac{i}{Th}\right)K\left(\frac{\tau_t-\tau}{h}\right) \\
      :=\ &J_{T,31}+J_{T,32}+J_{T,33}.\\
  \end{split}
\end{equation*}
It's easy to see $J_{T,31}=O_P\left((Th)^{-1/2}\right)$. For $J_{T,32}$,
\begin{equation*}
  \begin{split}
      J_{T,32}=\ &\frac{1}{(Th)^{3/2}}\sum_{i=1}^{T-1} \sum_{t=1}^{T-i}\bm{\eta}_t E\left(\bm{z}_{t-1}^\top \bm{z}_{t+i-1}\right)\bm{\eta}_{t+i}^\top K\left(\frac{i}{Th}\right)K\left(\frac{\tau_t-\tau}{h}\right) \\
      &+\frac{1}{(Th)^{3/2}}\sum_{i=1}^{T-1} \sum_{t=1}^{T-i}\bm{\eta}_t\left(\bm{z}_{t-1}^\top \bm{z}_{t+i-1}-E\left(\bm{z}_{t-1}^\top \bm{z}_{t+i-1}\right)\right)\bm{\eta}_{t+i}^\top K\left(\frac{i}{Th}\right)K\left(\frac{\tau_t-\tau}{h}\right) \\
  :=\ &J_{T,321}+J_{T,322}.
  \end{split}
\end{equation*}

For $J_{T,321}$,
\begin{equation*}
  \begin{split}
     E\left\|J_{T,321}\right\|^2\leq\ &\frac{1}{(Th)^3}\sum_{i=1}^{T-1}\sum_{t=1}^{T-i}\left\{E\left(\bm{z}_{t-1}^\top \bm{z}_{t+i-1}\right)E\left(\bm{\eta}_t^\top\bm{\eta}_t\right)E\left(\bm{\eta}_{t+i}^\top\bm{\eta}_{t+i}\right)\right\}^2K^2\left(\frac{i}{Th}\right)K^2\left(\frac{\tau_t-\tau}{h}\right)  \\
      =\ & O\left(\frac{1}{Th} \right),
  \end{split}
\end{equation*}
which then yields that $J_{T,321}=O_P\left((Th)^{-1/2}\right)$. For $J_{T,322}$,
\begin{equation*}
  J_{T,322}=\frac{1}{(Th)^{3/2}}\sum_{i=1}^{T-1} \sum_{t=1}^{T-i}\bm{\eta}_t\sum_{m=1}^{p}\left(\bm{x}_{t-m}^\top \bm{x}_{t+i-m}-E\left(\bm{x}_{t-m}^\top \bm{x}_{t+i-m}\right)\right)\bm{\eta}_{t+i}^\top K\left(\frac{i}{Th}\right)K\left(\frac{\tau_t-\tau}{h}\right).
\end{equation*}

For notational simplicity, let $p=1$ and thus
\begin{eqnarray*}
J_{T,322}&=&\frac{1}{(Th)^{3/2}}\sum_{i=1}^{T-1} \sum_{t=1}^{T-i}\bm{\eta}_t\left(\bm{\mu}_{t-1}^{\top}\sum_{j=0}^{\infty} \bm{\Psi}_{j,t+i-1}\bm{\eta}_{t+i-1-j} \right)\bm{\eta}_{t+i}^\top K\left(\frac{i}{Th}\right)K\left(\frac{\tau_t-\tau}{h}\right)\\
       && +\frac{1}{(Th)^{3/2}}\sum_{i=1}^{T-1} \sum_{t=1}^{T-i}\bm{\eta}_t\left(\sum_{j=0}^{\infty}\bm{\eta}_{t-1-j}^\top \bm{\Psi}_{j,t-1}^{\top} \bm{\mu}_{t+i-1}\right)\bm{\eta}_{t+i}^\top K\left(\frac{i}{Th}\right)K\left(\frac{\tau_t-\tau}{h}\right)\\
       && +\frac{1}{(Th)^{3/2}}\sum_{i=1}^{T-1} \sum_{t=1}^{T-i}\bm{\eta}_t\left(\sum_{j=0}^{\infty}\left(\bm{\eta}_{t-1-j}^\top\otimes \bm{\eta}_{t-1-j}^\top-E\left(\bm{\eta}_{t-1-j}^\top\otimes \bm{\eta}_{t-1-j}^\top\right) \right) \right.\\
       &&\cdot\left.\mathrm{vec}\left(\bm{\Psi}_{j,t-1}^{\top} \bm{\Psi}_{j+i,t+i-1}\right) \right)\bm{\eta}_{t+i}^\top K\left(\frac{i}{Th}\right)K\left(\frac{\tau_t-\tau}{h}\right)\\
       &&+\frac{1}{(Th)^{3/2}}\sum_{i=1}^{T-1} \sum_{t=1}^{T-i}\bm{\eta}_t\left(\sum_{j=0}^{\infty}\sum_{m=0,\neq j+i}^{\infty}\left(\bm{\eta}_{t+i-1-m}^\top\otimes \bm{\eta}_{t-1-j}^\top\right)
       \mathrm{vec}\left(\bm{\Psi}_{j,t-1}^{\top} \bm{\Psi}_{m,t+i-1}\right) \right)\\
       &&\cdot\bm{\eta}_{t+i}^\top K\left(\frac{i}{Th}\right)K\left(\frac{\tau_t-\tau}{h}\right)\\
  &:=&J_{T,3221}+J_{T,3222}+J_{T,3223}+J_{T,3224}.
\end{eqnarray*}

For $J_{T,3221}$,
\begin{eqnarray*}
&&E\left\|J_{T,3221}\right\|^2 \\ &\leq&\frac{1}{(Th)^3}\sum_{i=1}^{T-1}\sum_{t=1}^{T-i}E\left\|\bm{\eta}_t\left(\bm{\mu}_{t-1}^{\top}\sum_{j=0}^{\infty} \bm{\Psi}_{j,t+i-1}\bm{\eta}_{t+i-1-j} \right)\right\|^2 E\left\|\bm{\eta}_{t+i}\right\|^2 K^2\left(\frac{i}{Th}\right)K^2\left(\frac{\tau_t-\tau}{h}\right)\\
&=&O\left(\frac{1}{Th} \right).
\end{eqnarray*}

Similarly, $J_{T,3222}$ and $J_{T,3223}$ are $O_P\left((Th)^{-1/2}\right)$. For $J_{T,3224}$,
\begin{eqnarray*}
J_{T,3224}&=&\frac{1}{(Th)^{3/2}}\sum_{i=1}^{T-1} \sum_{t=1}^{T-i}\bm{\eta}_t\left(\sum_{j=0}^{\infty}\left(\bm{\eta}_{t}^\top\otimes \bm{\eta}_{t-1-j}^\top\right)
       \mathrm{vec}\left(\bm{\Psi}_{j,t-1}^{\top} \bm{\Psi}_{i-1,t+i-1}\right) \right)\bm{\eta}_{t+i}^\top K\left(\frac{i}{Th}\right)K\left(\frac{\tau_t-\tau}{h}\right)\\
       &&+\frac{1}{(Th)^{3/2}}\sum_{i=1}^{T-1} \sum_{t=1}^{T-i}\bm{\eta}_t\left(\sum_{j=0}^{\infty}\sum_{m=0,\neq j+i,\neq i-1}^{\infty}\left(\bm{\eta}_{t+i-1-m}^\top\otimes \bm{\eta}_{t-1-j}^\top\right) \right.\\
       &&\cdot\left.\mathrm{vec}\left(\bm{\Psi}_{j,t-1}^{\top} \bm{\Psi}_{m,t+i-1}\right) \right)\bm{\eta}_{t+i}^\top K\left(\frac{i}{Th}\right)K\left(\frac{\tau_t-\tau}{h}\right) := J_{T,32241}+J_{T,32242}.
\end{eqnarray*}

Similar to the proof of $J_{T,3221}$, we can show that $J_{T,32242}=O_P\left((Th)^{-1/2}\right)$.

Let $\bm{w}_{t,i}=\sum_{j=0}^{\infty}\left(\bm{\eta}_t\bm{\eta}_t^\top \otimes\bm{\eta}_{t-1-j}^\top \right)\mathrm{vec}\left(\bm{\Psi}_{j,t-1}^{\top} \bm{\Psi}_{i-1,t+i-1}^{\top}\right)$. For $J_{T,32241}$,
\bea
&& E\left\|J_{T,32241}\right\|^2 = \frac{1}{(Th)^3}\sum_{i_1=1}^{T-1}\sum_{i_2=1}^{T-1}\sum_{t_1=1}^{T-i_1}E\left\|\bm{\eta}_{t_1+i_1}^\top\bm{\eta}_{t_1+i_1}\right\|E\left\|\bm{w}_{t_1+i_1-i_2,i_2}^\top \bm{w}_{t_1,i_1}\right\|
\nonumber\\
&&\cdot K\left(\frac{i_1}{Th} \right)K\left(\frac{i_2}{Th} \right)K\left(\frac{\tau_{t_1}-\tau}{h} \right)K\left( \frac{\tau_{t_1+i_1-i_2}-\tau}{h}\right)
\nonumber\\
&& \leq \frac{M}{(Th)^3}\sum_{t_1=1}^{T}\left(\max_t\sum_{i=1}^{T-1}\left\|\bm{\Psi}_{i,t}\right\|\right)^2\left(\max_t\sum_{j=0}^{\infty}\left\|\bm{\Psi}_{j,t}\right\|\right)^2K\left( \frac{\tau_{t_1}-\tau}{h}\right) = O\left((Th)^{-2}\right).
\nonumber
\eea

Hence, $J_{T,32}=O_P\left((Th)^{-1/2}\right)$. Similar to $J_{T,32}$, $J_{T,33}=O_P\left((Th)^{-1/2}\right)$. The proof is now completed.

(3). By Lemma \ref{LemmaB.3}, we have
\begin{eqnarray*}
\sup_{\tau \in [h,1-h]}\left\|\widehat{\bm{\Sigma}}(\tau)-\bm{\Sigma}(\tau)\right\| = O_P\left(h^2 + \left(\frac{\log T}{Th}\right)^{1/2}\right).
\end{eqnarray*}
Then we just need to focus on the rate associated with $\widehat{\bm{\Omega}}(\tau)$. For notational simplicity, we ignore the $\frac{1}{T}\sum_{t=1}^{T}K_h(\tau_t-\tau)$, because
\begin{eqnarray*}
	\frac{1}{T}\sum_{t=1}^{T}K_h(\tau_t-\tau) = 1+ O((Th)^{-1})
\end{eqnarray*}
uniformly over $\tau \in [h,1-h]$.

Write
\begin{eqnarray*}
	\widehat{\bm{\Omega}}(\tau)&=&\frac{1}{Th}\sum_{t=1}^{T}\bm{\widehat{\eta}}_t\bm{\widehat{\eta}}_t^\top K\left(\frac{\tau_t-\tau}{h}\right) \\
	&=&\frac{1}{Th}\sum_{t=1}^{T}\left(\bm{\eta}_t+\bm{\widehat{\eta}}_t-\bm{\eta}_t\right)\left(\bm{\eta}_t+\bm{\widehat{\eta}}_t-\bm{\eta}_t\right)^\top K\left(\frac{\tau_t-\tau}{h}\right)\\
	&=& \frac{1}{Th}\sum_{t=1}^{T}\bm{\eta}_t\bm{\eta}_t^\top K\left(\frac{\tau_t-\tau}{h}\right)+\frac{1}{Th}\sum_{t=1}^{T}(\bm{\widehat{\eta}}_t-\bm{\eta}_t)(\bm{\widehat{\eta}}_t-\bm{\eta}_t)^\top K\left(\frac{\tau_t-\tau}{h}\right) \\
	& &+\frac{1}{Th}\sum_{t=1}^{T}\bm{\eta}_t(\bm{\widehat{\eta}}_t-\bm{\eta}_t)^\top K\left(\frac{\tau_t-\tau}{h}\right)+\frac{1}{Th}\sum_{t=1}^{T}(\bm{\widehat{\eta}}_t-\bm{\eta}_t)\bm{\eta}_t^\top K\left(\frac{\tau_t-\tau}{h}\right)\\
	&:=& I_{T,1}+I_{T,2}+I_{T,3}+I_{T,4}.
\end{eqnarray*}

Consider $I_{T,1}$. Similar to the proof of part (1), we have
\begin{eqnarray*}
	\sup_{\tau \in [0,1]}\left\|\frac{1}{T}\sum_{t=1}^{T}\left[\bm{\eta}_t\bm{\eta}_t^\top - E\left(\bm{\eta}_t\bm{\eta}_t^\top\right)\right]K_h(\tau_t-\tau) \right\|=O_P\left(\sqrt{\frac{\log T}{Th}}\right).
\end{eqnarray*}
Next, consider $I_{T,2}$. By Lemma \ref{LemmaB.3}, we have
\begin{eqnarray*}
	&&\sup_{\tau \in [0,1]} \frac{1}{T}\sum_{t=1}^{T}\left\|\bm{Z}_{t-1}\right\|^2K_h(\tau_t-\tau)\\
	&\leq&\sup_{\tau \in [0,1]} \left|\frac{1}{T}\sum_{t=1}^{T}\mathrm{tr}\left[\bm{Z}_{t-1}^\top\bm{Z}_{t-1}-E(\bm{Z}_{t-1}^\top\bm{Z}_{t-1})\right]K_h(\tau_t-\tau)\right|\\
	&&+\sup_{\tau\in[0,1]}\left|\frac{1}{T}\sum_{t=1}^{T}\mathrm{tr}\left[E(\bm{Z}_{t-1}^\top\bm{Z}_{t-1})\right]K_h(\tau_t-\tau)\right|\\
	&=&o_P(1)+O(1)=O_P(1).
\end{eqnarray*}

Hence, by the first result of Theorem \ref{Thm2.1}
\begin{eqnarray*}
	\sup_{\tau \in [0,1]}\left\|I_{T,2}\right\| &\leq& \sup_{\tau \in [0,1]} \|\widehat{\bm{\beta}}(\tau) - \bm{\beta}(\tau) \|^2\cdot\sup_{\tau \in [0,1]}\frac{1}{T}\sum_{t=1}^{T} \left\|\bm{Z}_{t-1}\right\|^2 K_h(\tau_t-\tau) = o_P\left(h^2 + \sqrt{\frac{\log T}{Th}}\right).
\end{eqnarray*}

Similarly, for $I_{T,3}$ and $I_{T,4}$, we have
\begin{eqnarray*}
	\sup_{\tau \in [0,1]}\left\|I_{T,3}\right\| &\leq& \sup_{\tau \in [0,1]} \left\|\widehat{\bm{\beta}}(\tau) - \bm{\beta}(\tau)\right\|\cdot\sup_{\tau \in [0,1]}\frac{1}{T}\sum_{t=1}^{T} \left\|\bm{Z}_{t-1}\bm{\eta}_t\right\| K_h(\tau_t-\tau) \\
	&\leq& \sup_{\tau \in [0,1]} \left\|\widehat{\bm{\beta}}(\tau) - \bm{\beta}(\tau)\right\|\cdot\left\{\sup_{\tau \in [0,1]}\frac{1}{T}\sum_{t=1}^{T} \left\|\bm{Z}_{t-1}\right\|^2 K_h(\tau_t-\tau)\right\}^{1/2} \\
	&&\cdot \left\{\sup_{\tau \in [0,1]}\frac{1}{T}\sum_{t=1}^{T} \left\|\bm{\eta}_t\right\|^2 K_h(\tau_t-\tau)\right\}^{1/2}\\
	&=& O_P\left(h^2 + \sqrt{\frac{\log T}{Th}}\right).
\end{eqnarray*}

The proof is now completed.
\end{proof}

Define $\bm{\Lambda}_{\mathsf{p}}(\tau)=[\bm{a}(\tau),\bm{A}_{\mathsf{p},1}(\tau),...,\bm{A}_{\mathsf{p},\mathsf{p}}(\tau)]$, where $\bm{A}_{\mathsf{p},j}(\tau)=\bm{A}_{j}(\tau)$ for $1\leq j \leq p$ and $\bm{A}_{\mathsf{p},j}(\tau)=0$ for $j > p$. Let $\bm{z}_{\mathsf{p},t-1}=[1,\bm{x}_{t-1}^\top,...,\bm{x}_{t-\mathsf{p}}^\top]^\top$, $\bm{z}_{\mathsf{p},t-1}^* =\left[\bm{z}_{\mathsf{p},t-1}^\top, \frac{\tau_t-\tau}{h}\bm{z}_{\mathsf{p},t-1}^\top \right]^\top$, $\bm{Z}_{\mathsf{p},t}^* = \bm{z}_{\mathsf{p},t}^*\otimes \bm{I}_d$, $\bm{M}_\mathsf{p}(\tau_t)=\bm{\Lambda}_\mathsf{p}(\tau_t)-\bm{\Lambda}_\mathsf{p}(\tau)-\bm{\Lambda}_\mathsf{p}^{(1)}(\tau)(\tau_t-\tau)-\frac{1}{2}h^2\bm{\Lambda}_\mathsf{p}^{(2)}(\tau)(\tau_t-\tau)^2$,
$\bm{\Lambda}_{\overline{\mathsf{p}}}(\tau)=[\bm{A}_{\mathsf{p},\mathsf{p}+1}(\tau),...,\bm{A}_{\mathsf{p},\mathsf{P}}(\tau)]$ and $\bm{z}_{\overline{\mathsf{p}},t-1}=[\bm{x}_{t-\mathsf{p}-1}^\top,...,\bm{x}_{t-\mathsf{P}}^\top]^\top$.

\begin{proof}[Proof of Lemma \ref{LemmaB.5}]
  \item

\noindent (1). Since $\mathsf{p}\geq p$, we have $\widehat{\bm{\eta}}_{\mathsf{p},t}=\bm{\eta}_t+\left(\bm{\Lambda}_\mathsf{p}(\tau_t)-\widehat{\bm{\Lambda}}_\mathsf{p}(\tau_t)\right) \bm{z}_{\mathsf{p},t-1}$ and
\begin{eqnarray*}
  \text{RSS}(\mathsf{p})&=&\frac{1}{T}\sum_{t=1}^{T}\bm{\eta}_t^\top\bm{\eta}_t+\frac{1}{T}\sum_{t=1}^{T}\bm{z}_{\mathsf{p},t-1}^\top\left(\bm{\Lambda}_\mathsf{p}(\tau_t)-\widehat{\bm{\Lambda}}_\mathsf{p}(\tau_t)\right)^\top \left(\bm{\Lambda}_\mathsf{p}(\tau_t)-\widehat{\bm{\Lambda}}_\mathsf{p}(\tau_t)\right) \bm{z}_{\mathsf{p},t-1}\\
  &&-2\frac{1}{T}\sum_{t=1}^{T}\mathrm{tr}\left(\bm{\eta}_t\left(\bm{\eta}_t-\widehat{\bm{\eta}}_{\mathsf{p},t}\right)^\top\right) :=  \frac{1}{T}\sum_{t=1}^{T}\bm{\eta}_t^\top\bm{\eta}_t+ I_{T,1}+I_{T,2}.
\end{eqnarray*}

Since $\bm{\eta}_t^\top\bm{\eta}_t-E(\bm{\eta}_t^\top\bm{\eta}_t)$ is m.d.s., we have $\frac{1}{T}\sum_{t=1}^{T}\bm{\eta}_t^\top\bm{\eta}_t=\frac{1}{T}\sum_{t=1}^{T}E(\bm{\eta}_t^\top\bm{\eta}_t)+T^{-1/2}$. By Theorem \ref{Thm2.1}.1,
\begin{eqnarray*}
  I_{T,1} &\leq& \frac{1}{T}\sum_{t=1}^{T}\left\|\bm{z}_{\mathsf{p},t-1}\right\|^2 \cdot \left\|\widehat{\bm{\Lambda}}_\mathsf{p}(\tau_t)-\bm{\Lambda}_\mathsf{p}(\tau_t)\right\|^2\leq\sup_{0\leq \tau \leq 1}\left\|\widehat{\bm{\Lambda}}_\mathsf{p}(\tau)-\bm{\Lambda}_\mathsf{p}(\tau)\right\|^2\cdot\frac{1}{T}\sum_{t=1}^{T}\left\|\bm{z}_{\mathsf{p},t-1}\right\|^2\\
          &=&O_P\left( (h^2+(\log T/(Th))^{1/2})^2 \right).
\end{eqnarray*}

For $I_{T,2}$,
\begin{eqnarray*}
    && \frac{1}{T}\sum_{t=1}^{T} \bm{\eta}_t\left(\bm{\eta}_t-\widehat{\bm{\eta}}_{\mathsf{p},t}\right)^\top = \frac{1}{T}\sum_{t=1}^{T} \bm{\eta}_t\bm{z}_{\mathsf{p},t-1}^\top\left(\bm{\widehat{\Lambda}}_\mathsf{p}(\tau_t)-\bm{\Lambda}_\mathsf{p}(\tau_t)\right)^\top \\
      &&=\frac{1}{2} h^2 \cdot \frac{1}{T}\sum_{t=1}^{T}\bm{\eta}_t[\bm{z}_{\mathsf{p},t-1}^\top,\bm{0}_{(d^2\mathsf{p}+d)\times 1}^\top] \bm{S}_{T}^{-1}(\tau_t) \left(\begin{matrix}
                                      \bm{S}_{T,2}(\tau_t) \\
                                      \bm{S}_{T,3}(\tau_t)
                                    \end{matrix}\right)\bm{A}_\mathsf{p}^{(2),\top}(\tau_t)\\
      &&+\frac{1}{T}\sum_{t=1}^{T}\bm{\eta}_t[\bm{z}_{\mathsf{p},t-1}^\top,\bm{0}_{(d^2\mathsf{p}+d)\times 1}^\top]\bm{S}_{T}^{-1}(\tau_t) \left(\frac{1}{Th}\sum_{s=1}^{T}\left(\bm{z}_{\mathsf{p},s-1}^*\bm{z}_{\mathsf{p},s-1}^\top\otimes \bm{I}_d \right) \bm{M}_\mathsf{p}^\top(\tau_s)K\left(\frac{\tau_s-\tau_t}{h}\right) \right)\\
      &&+\frac{1}{T}\sum_{t=1}^{T}\bm{\eta}_t[\bm{z}_{\mathsf{p},t-1}^\top,\bm{0}_{(d^2\mathsf{p}+d)\times 1}^\top] \bm{S}_{T}^{-1}(\tau_t) \left(\frac{1}{Th}\sum_{s=1}^{T}\bm{Z}_{\mathsf{p},s-1}^*\bm{\eta}_s^\top K\left(\frac{\tau_s-\tau_t}{h}\right)\right) := I_{T,3}+I_{T,4}+I_{T,5}.
\end{eqnarray*}

By the uniform convergence results stated in Lemmas \ref{LemmaB.3}.2-3, we replace the weighed sample covariance with its limit plus the rate $O_P\left((\log T/(Th))^{1/2}\right)$, and hence
\begin{eqnarray*}
\|I_{T,3}\|+\|I_{T,4}\|=O_P\left(T^{-\frac{1}{2}}h^2+h^2(\log T/(Th))^{1/2}\right).
\end{eqnarray*}

For $I_{T,5}$, let $\bm{\Sigma}(\tau)=\plim_{T\to \infty}\bm{S}_{T,0}(\tau)$, we have
$$
I_{T,6}=\frac{1}{T}\sum_{t=1}^{T} \bm{\eta}_t\bm{z}_{\mathsf{p},t-1}^\top \bm{\Sigma}^{-1}(\tau_t) \left(\frac{1}{Th}\sum_{s=1}^{T}\bm{z}_{\mathsf{p},s-1}\bm{\eta}_s^\top K\left(\frac{\tau_s-\tau_t}{h}\right)\right)+O_P\left((Th)^{-1/2}\cdot(h^2+\sqrt{\log T/(Th)})\right).
$$
Similar to the proof of $J_{T,4}$ in Lemma \ref{LemmaB.4}, we can show
$$
\frac{1}{T}\sum_{t=1}^{T} \bm{\eta}_t\bm{z}_{\mathsf{p},t-1}^\top \bm{\Sigma}^{-1}(\tau_t) \left(\frac{1}{Th}\sum_{s=1}^{T}\bm{z}_{\mathsf{p},s-1}\bm{\eta}_s^\top K\left(\frac{\tau_s-\tau_t}{h}\right)\right)=O_P((Th)^{-1}).
$$
Since $(Th)^{-1}+T^{-\frac{1}{2}}h^{\frac{3}{2}}=o\left(\rho_T^2\right)$, result (1) follows.

\noindent (2). For $\mathsf{p} < p$, we have $\widehat{\bm{\Lambda}}_\mathsf{p}(\tau)-\bm{\Lambda}_{\mathsf{p}}(\tau)=\bm{B}_\mathsf{p}(\tau)+o_P(1)$ uniformly over $\tau \in [0,1]$, where $\bm{B}_\mathsf{p}(\tau)$ is a nonrandom bias term. Since $\widehat{\bm{\eta}}_{\mathsf{p},t}=\bm{\eta}_{t}+\left(\bm{\Lambda}_{\mathsf{p}}(\tau_t)-\widehat{\bm{\Lambda}}_\mathsf{p}(\tau_t)\right)\bm{z}_{\mathsf{p},t-1}+\bm{\Lambda}_{\overline{\mathsf{p}}}(\tau_t)\bm{z}_{\overline{\mathsf{p}},t-1}$, by Lemma \ref{LemmaB.3}.4, we have
$$
  \text{RSS}(\mathsf{p})= \frac{1}{T}\sum_{t=1}^{T}E\left(\bm{\eta}_{t}^\top\bm{\eta}_{t}\right)+\frac{1}{T}\sum_{t=1}^{T}\mathrm{tr}\left( [\bm{B}_\mathsf{p}(\tau_t),\bm{\Lambda}_{\overline{\mathsf{p}}}(\tau_t)]E\left(\bm{z}_{\mathsf{P},t-1}\bm{z}_{\mathsf{P},t-1}^\top\right)[\bm{B}_\mathsf{p}(\tau_t),\bm{\Lambda}_{\overline{\mathsf{p}}}(\tau_t)]^\top \right)+o_P(1).
$$
Since $[\bm{B}_\mathsf{p}(\tau_t),\bm{\Lambda}_{\overline{\mathsf{p}}}(\tau_t)]\neq 0$ and $E\left(\bm{z}_{\mathsf{P},t-1}\bm{z}_{\mathsf{P},t-1}^\top\right)$ is a positive definite matrix, the result follows.
\end{proof}

\begin{proof}[Proof of Lemma \ref{L1}]
	\item
	
	\noindent (1). Write
	\begin{eqnarray*}
		\sigma^2 &=& \frac{1}{T^2h}\sum_{t=2}^{T}\sum_{s=1}^{t-1}\left[\int_{-1}^{1}K\left(u\right)K\left(u+\frac{t-s}{Th}\right)\mathrm{d}u\right]^2 \\
		&=& \frac{1}{T^2h}\sum_{t=2}^{T}\sum_{j=1}^{t-1}\left[\int_{-1}^{1}K\left(u\right)K\left(u+\frac{j}{Th}\right)\mathrm{d}u\right]^2 \\
		&=&\frac{1}{Th}\sum_{j=1}^{T-1}(1-j/T)\left[\int_{-1}^{1}K\left(u\right)K\left(u+\frac{j}{Th}\right)\mathrm{d}u\right]^2   \\
		&=& \int_{0}^{\infty}(1-vh)\left[\int_{-1}^{1}K\left(u\right)K\left(u+v\right)\mathrm{d}u\right]^2\mathrm{d}v+O(1/(Th))\\
		&\to&\int_{0}^{2}\left[\int_{-1}^{1-v}K\left(u\right)K\left(u+v\right)\mathrm{d}u\right]^2\mathrm{d}v.
	\end{eqnarray*}
	
	\noindent (2). Write
	\begin{eqnarray*}
		\max_t|a_t| &=& \max_t\left|\sum_{i=1}^{t-1}\frac{1}{T^2h}\left[\int_{-1}^{1}K\left(u\right)K\left(u+\frac{i}{Th}\right)\mathrm{d}u\right]^2\right| \\
		&\leq& \frac{1}{T^2h}\sum_{i=1}^{T}\left[\int_{-1}^{1}K\left(u\right)K\left(u+\frac{i}{Th}\right)\mathrm{d}u\right]^2 \\
		&=& \frac{1}{T}\int_{0}^{\infty}\left[\int_{-1}^{1}K\left(u\right)K\left(u+v\right)\mathrm{d}u\right]^2 \mathrm{d}v(1+o(1)) \\
		&=& O(1/T).
	\end{eqnarray*}
	
	\noindent (3). Write
	\begin{eqnarray*}
		\sum_{s=1}^{T-J}w_{s,s+J}^2 &=& \sum_{s=1}^{T-J}\frac{1}{T^2h}\left[\int_{-1}^{1}K\left(u\right)K\left(u+\frac{J}{Th}\right)\mathrm{d}u\right]^2 \\
		&=& \frac{T-J}{T^2h}\left[\int_{-1}^{1}K\left(u\right)K\left(u+\frac{J}{Th}\right)\mathrm{d}u\right]^2 = O(1/(Th)).
	\end{eqnarray*}
	
	\noindent (4). Write
	\begin{eqnarray*}
		&&T\sum_{s=1}^{T-1}b_s^2\\
		&=& \frac{1}{T^3h^2}\sum_{j=1}^{T-1}\sum_{t=1+j}^{T}\sum_{s=1+j}^{T}\left[\int_{-1}^{1}K\left(u\right)K\left(u+\frac{t-j}{Th}\right)\mathrm{d}u\right]^2 \left[\int_{-1}^{1}K\left(u\right)K\left(u+\frac{s-j}{Th}\right)\mathrm{d}u\right]^2 \\
		&=& \frac{1}{T^3h^2}\sum_{j=1}^{T-1}\sum_{i=1}^{T-j}\sum_{k=1}^{T-j}\left[\int_{-1}^{1}K\left(u\right)K\left(u+\frac{i}{Th}\right)\mathrm{d}u\right]^2 \left[\int_{-1}^{1}K\left(u\right)K\left(u+\frac{k}{Th}\right)\mathrm{d}u\right]^2 \\
		&\leq& \frac{1}{T^3h^2}\sum_{j=1}^{T}\sum_{i=1}^{T}\sum_{k=1}^{T}\left[\int_{-1}^{1}K\left(u\right)K\left(u+\frac{i}{Th}\right)\mathrm{d}u\right]^2 \left[\int_{-1}^{1}K\left(u\right)K\left(u+\frac{k}{Th}\right)\mathrm{d}u\right]^2 \\
		&\simeq& \left(\frac{1}{Th}\sum_{i=1}^{T}\left[\int_{-1}^{1}K\left(u\right)K\left(u+\frac{i}{Th}\right)\mathrm{d}u\right]^2\right)^2=O(1).
	\end{eqnarray*}
	
	\noindent (5). By Cauchy-Schwarz inequality,
	\begin{eqnarray*}
		&& \sum_{k=1}^{T-1}\sum_{t=1}^{k-1}\left[\sum_{j=k+1}^{T}w_{k,j}w_{t,j}\right]^2 \\
		&\leq& \frac{1}{T^4h^2}\sum_{k=1}^{T-1}\sum_{t=1}^{k-1}\left(\sum_{j=1+k}^{T}\left[\int_{-1}^{1}K\left(u\right)K\left(u+\frac{j-k}{Th}\right)\mathrm{d}u\right]^2\right) \left(\sum_{j=1+k}^{T}\left[\int_{-1}^{1}K\left(u\right)K\left(u+\frac{j-t}{Th}\right)\mathrm{d}u\right]^2\right)\\
		&\leq&\frac{M}{T^3h}\sum_{k=1}^{T-1}\sum_{t=1}^{k-1}\sum_{j=1+k}^{T}\left[\int_{-1}^{1}K\left(u\right)K\left(u+\frac{j-t}{Th}\right)\mathrm{d}u\right]^2  \\
		&\leq& \frac{M}{T^3h}\sum_{k=1}^{T-1}\sum_{j=2}^{T}\sum_{t=1}^{j-1}\left[\int_{-1}^{1}K\left(u\right)K\left(u+\frac{j-t}{Th}\right)\mathrm{d}u\right]^2 =O(1/T).
	\end{eqnarray*}
	
\end{proof}

\begin{proof}[Proof of Lemma \ref{L2}]
	\item
	
	Write $\bm{\xi}_t = [\xi_{t,1},...,\xi_{t,d}]^\top$. We first prove
	$$
	\normmm{\sum_{t=1}^{T}\xi_{t,i}}_p^{p^*} \leq M \sum_{t=1}^{T}\normmm{\xi_{t,i}}_p^{p^*}
	$$
	for $1\leq i \leq d$.
	
	By Burkholder inequality, the Minkowski inequality and the inequality that $|\sum_{i=1}^{n}a_i|^p\leq \sum_{i=1}^{n}|a_i|^p$ for $0 < p \leq 1$, we have
	\begin{eqnarray*}
		\normmm{\sum_{t=1}^{T}\xi_{t,i}}_p^{p^*} &\leq& \left\{ M E\left[ \left(\sum_{t=1}^{T}|\xi_{t,i}|^2\right)^{p/2} \right] \right\}^{p^*/p} \leq  M \left\{\sum_{t=1}^{T}\left(E\left[|\xi_{t,i}|^p\right]\right)^{2/p} \right\}^{p^*/2} \\
		&\leq&  M\sum_{t=1}^{T}\left(E\left[|\xi_{t,i}|^p\right]\right)^{p^*/p} = M\sum_{t=1}^{T}\normmm{\xi_{t,i}}_p^{p^*}.
	\end{eqnarray*}
	
	In addition, since $|\sum_{i=1}^{d}a_i|^p\leq \sum_{i=1}^{d}|a_i|^p$ for $p \in (0,1]$, $|\sum_{i=1}^{d}a_i|^p\leq d^{p-1}\sum_{i=1}^{d}|a_i|^p$ for $p > 1$ and $d$ is a fixed value, we have
	\begin{eqnarray*}
		\normmm{\sum_{t=1}^{T}\bm{\xi}_{t}}_p^{p^*}&=&\left\{E\left[ \left(\sum_{i=1}^{d} \xi_{.,i}^2\right)^{p/2} \right]\right\}^{p^*/p} \leq M \left\{\sum_{i=1}^{d} E\left| \xi_{.,i} \right|^p\right\}^{p^*/p}\\
		&\leq& M \sum_{i=1}^{d} \normmm{\sum_{t=1}^{T}\xi_{t,i}}_p^{p^*} \leq  M \sum_{t=1}^{T} \sum_{i=1}^{d}\normmm{\xi_{t,i}}_p^{p^*} = M \sum_{t=1}^{T} \sum_{i=1}^{d} \left\{E|\xi_{t,i}|^p\right\}^{p^*/p} \\
		& =  & M \sum_{t=1}^{T} \left\{\sum_{i=1}^{d} \left\{E|\xi_{t,i}|^p\right\}^{p^*/p}\right\}^{p/p^* \times p^*/p} \leq M \sum_{t=1}^{T} \left\{\sum_{i=1}^{d} E|\xi_{t,i}|^p\right\}^{p^*/p} \leq M \sum_{t=1}^{T} \normmm{\bm{\xi}_{t}}_p^{p^*},
	\end{eqnarray*}
	where $\xi_{.,i} = \sum_{t=1}^{T}\xi_{t,i}$. The proof is now completed.
\end{proof}

\begin{proof}[Proof of Lemma \ref{L3}]
	\item
	
	Without loss of generality, let $E(\bm{w}_t^*) = \bm{0}$. For any integer $I \geq 1$ introduce the truncated process $\bm{h}_{t-1,I}^* = E\left(\bm{h}_{t-1}^*|\mathcal{F}_{t-I}\right)$. Then $\bm{h}_{t-1,I}^* = 0$ if $t\leq I$ and $\bm{h}_{t-1,I}^* = \sum_{s=1}^{t-I} w_{s,t}\bm{y}_s$ for $1\leq I < t$. For $2 \leq t\leq T$, by Lemma \ref{L2},
	\begin{equation*}
		\normmm{\bm{h}_{t-1,I}^*-\bm{h}_{t-1}^*}_\delta^2 \leq M \max_t \normmm{\bm{y}_t}_\delta^2 \sum_{s=\max(1,t-I+1)}^{t-1}w_{s,t}^2 = O\left(\sum_{s=\max(1,t-I+1)}^{t-1}w_{s,t}^2\right).
	\end{equation*}
	Let $L(I) = \sum_{J=1}^{I}l(J)$ with $l(J) = \sum_{s=1}^{T-J}w_{s,s+J}^2$, $V(I)=\sum_{t=2}^{T}\mathrm{tr}\left[\bm{w}_t^* \bm{h}_{t-1,I}^*\bm{h}_{t-1,I}^{*,\top}\right]$ and $T(I) = \sum_{t=2}^{T}\mathrm{tr}\left[E(\bm{w}_t^*|\mathcal{F}_{t-I})\bm{h}_{t-1,I}^*\bm{h}_{t-1,I}^{*,\top}\right]$.
	
	By Cauchy-Schwarz inequality, Lemma \ref{L1} (iii), if $I/(Th) \to 0$, we have
	\begin{eqnarray*}
		E|V(1)-V(I)| &\leq& \sum_{t=2}^{T} E\left|\mathrm{tr}\left[\bm{w}_t^*\left(\bm{h}_{t-1}^*\bm{h}_{t-1}^{*,\top}-\bm{h}_{t-1,I}^*\bm{h}_{t-1,I}^{*,\top}\right)\right]\right| \\
		&\leq& \sum_{t=2}^{T}\normmm{\bm{w}_t^*} \normmm{\bm{h}_{t-1}^*-\bm{h}_{t-1,I}^*}_4 \normmm{\bm{h}_{t-1}^*+\bm{h}_{t-1,I}^*}_4 \\
		&\leq& M\sum_{t=2}^{T}\normmm{\bm{h}_{t-1}^*-\bm{h}_{t-1,I}^*}_4 a_t^{1/2}\\
		&\leq& M\left\{\sum_{t=2}^{T}\normmm{\bm{h}_{t-1}^*-\bm{h}_{t-1,I}^*}_4^2\right\}^{1/2} \left\{\sum_{t=2}^{T}a_t\right\}^{1/2}\\
		&=&O(1)[L(I)]^{1/2} \to 0,
	\end{eqnarray*}
	since
	\begin{eqnarray*}
		\normmm{\bm{h}_{t-1}^*+\bm{h}_{t-1,I}^*}_4 &\leq& \left\{M\sum_{s=1}^{t-1}\normmm{w_{s,t}\bm{y}_{s}}_4^2\right\}^{1/2}+\left\{M\sum_{s=1}^{t-I}\normmm{w_{s,t}\bm{y}_{s}}_4^2\right\}^{1/2}\\
		&=&O\left(\left\{\sum_{s=1}^{t-1}w_{s,t}^2\right\}^{1/2} \right)=o(a_t^{1/2}).
	\end{eqnarray*}
	
	Define the projection operator $\mathcal{P}_t\bm{\xi} = E\left(\bm{\xi}|\mathcal{F}_{t}\right)-E\left(\bm{\xi}|\mathcal{F}_{t-1}\right)$. For $0\leq j\leq I-1$, let $U(j) = \sum_{t=2}^{T}\mathrm{tr}\left[\left(\mathcal{P}_{t-j} \bm{w}_t^* \right)\bm{h}_{t-1,I}^*\bm{h}_{t-1,I}^{*,\top}\right]$, then
	$$
	V(I)-T(I)= \sum_{t=2}^{T}\mathrm{tr}\left[\left(\sum_{j=0}^{I-1}\mathcal{P}_{t-j} \bm{w}_t^*\right) \bm{h}_{t-1,I}^*\bm{h}_{t-1,I}^{*,\top}\right] = \sum_{j=0}^{I-1}U(j).
	$$
	Note that $\left\{\left(\mathcal{P}_{t-j} \bm{w}_t^*\right) \bm{h}_{t-1,I}^*\bm{h}_{t-1,I}^{*,\top}\right\}_{t=2}^{T}$ forms a martingale difference sequence since
	$$
	E\left\{\left(\mathcal{P}_{t-j} \bm{w}_t^*\right) \bm{h}_{t-1,I}^*\bm{h}_{t-1,I}^{*,\top}|\mathcal{F}_{t-j-1}\right\} =\left[ E( \bm{w}_t^*|\mathcal{F}_{t-j-1})-E( \bm{w}_t^*|\mathcal{F}_{t-j-1})\right]\bm{h}_{t-1,I}^*\bm{h}_{t-1,I}^{*,\top}=0.
	$$
	
	By Lemma \ref{L1} (ii), Lemma \ref{L2} and Cauchy-Schwarz inequality, since $\normmm{\mathcal{P}_{t-j}\bm{w}_t^*}_{\delta/2} \leq 2 \normmm{\bm{w}_t^*}_{\delta/2} < \infty$,
	\begin{eqnarray*}
		\normmm{U(j)}_{\delta/4}^{\delta/4}&\leq& M \sum_{t=2}^{T}\normmm{\left(\mathcal{P}_{t-j} \bm{w}_t^*\right) \bm{h}_{t-1,I}^*\bm{h}_{t-1,I}^{*,\top}}_{\delta/4}^{\delta/4} \leq M \sum_{t=2}^{T}\normmm{\bm{h}_{t-1,I}^*}_\delta^{\delta/2}\\
		&\leq&M \sum_{t=2}^{T}a_t^{\delta/4} \leq M \max_t a_t^{\delta/4-1} \sum_{t=2}^{T}a_t = O\left(T^{1-\delta/4}\right).
	\end{eqnarray*}
	In addition, by $E|V(1)-V(I)| \to 0$,
	\begin{eqnarray*}
		E|V(1)| &\leq& \normmm{V(I)-T(I)}_{\delta/4}+E|T(I)|+o(1) \\
		&\leq& \sum_{j=0}^{I-1}\normmm{U(j)}_{\delta/4} + \max_t\normmm{E(\bm{w}_t^*|\mathcal{F}_{t-I})} \sum_{t=2}^{T} \normmm{\bm{h}_{t-1,I}^*}_4^2 = o(1),
	\end{eqnarray*}
	since $\max_t\normmm{E(\bm{w}_t^*|\mathcal{F}_{t-I})} \to 0$  as $I \to \infty$. The proof is now completed.
\end{proof}

\begin{proof}[Proof of Lemma \ref{L4}]
	\item
	For notational simplicity, let $\bm{H}_t = \bm{I}_{d^2}$. Write
	\begin{eqnarray*}
		&& \sum_{t=2}^{T}\mathrm{tr}\left[\bm{h}_{t-1}^*\bm{h}_{t-1}^{*,\top}-E\left(\bm{h}_{t-1}^*\bm{h}_{t-1}^{*,\top}\right)\right] \\
		&=&\sum_{t=2}^{T}\sum_{s=1}^{t-1}\mathrm{tr}\left[\left(\bm{Z}_{s-1}\bm{\eta}_s\bm{\eta}_{s}^\top\bm{Z}_{s-1}^\top -E\left(\bm{Z}_{s-1}\bm{\eta}_s\bm{\eta}_{s}^\top\bm{Z}_{s-1}^\top\right)\right)w_{s,t}^2\right]\\
		&&+2\sum_{t=3}^{T}\sum_{s_1=2}^{t-1}\sum_{s_2=1}^{s_1-1}\mathrm{tr}\left[\bm{Z}_{s_1-1}\bm{\eta}_{s_1}\bm{\eta}_{s_2}\bm{Z}_{s_2-1}^\top w_{s_1,t}w_{s_2,t}\right]\\
		&=& I_{T,1} + 2I_{T,2}.
	\end{eqnarray*}
	
	Consider $I_{T,1}$. Write
	\begin{eqnarray*}
		I_{T,1} &=& \sum_{t=2}^{T}\sum_{s=1}^{t-1}\mathrm{tr}\left[\left(\bm{\eta}_s\bm{\eta}_{s}^\top-\bm{\Omega}(\tau_s)\right)\bm{Z}_{s-1}^\top\bm{Z}_{s-1}\right]w_{s,t}^2\\ &&+\sum_{t=2}^{T}\sum_{s=1}^{t-1}\mathrm{tr}\left[\bm{\Omega}(\tau_s)\left(\bm{Z}_{s-1}^\top\bm{Z}_{s-1}-E\left(\bm{Z}_{s-1}^\top\bm{Z}_{s-1}\right)\right)\right]w_{s,t}^2 \\
		&=& \frac{1}{T}\sum_{s=1}^{T-1}\mathrm{tr}\left[\left(\bm{\eta}_s\bm{\eta}_{s}^\top-\bm{\Omega}(\tau_s)\right)\bm{Z}_{s-1}^\top\bm{Z}_{s-1}\right]\left(T\sum_{t=s+1}^{T}w_{s,t}^2\right)\\ 
		&&+ \frac{1}{T}\sum_{s=1}^{T-1}\mathrm{tr}\left[\bm{\Omega}(\tau_s)\left(\bm{Z}_{s-1}^\top\bm{Z}_{s-1}-E\left(\bm{Z}_{s-1}^\top\bm{Z}_{s-1}\right)\right)\right]\left(T\sum_{t=s+1}^{T}w_{s,t}^2\right)  \\
		&=&I_{T,11}+I_{T,12}.
	\end{eqnarray*}
	
	Since $\mathrm{tr}\left[\left(\bm{\eta}_s\bm{\eta}_{s}^\top-\bm{\Omega}(\tau_s)\right)\bm{Z}_{s-1}^\top\bm{Z}_{s-1}\right]$, $s=1,2,...$ are a martingale difference sequence and $T\sum_{t=1}^{T}w_{s,t}^2=O(1)$ by Lemma \ref{L1}.2, we have $I_{T,11} = o_P(1)$. In addition, by Lemma \ref{LemmaB.3}.4, we have $I_{T,12} = o_P(1)$.
	
	Next, consider $I_{T,2}$. By Lemma \ref{L2}, Cauchy-Schwarz inequality and Lemma \ref{L1}.5,
	\begin{eqnarray*}
		\normmm{I_{T,2}}^2 &\leq&M \sum_{s_1=2}^{T-1}\normmm{\mathrm{tr}\left[\bm{Z}_{s_1-1}\bm{\eta}_{s_1}\sum_{s_2=1}^{s_1-1}\bm{\eta}_{s_2}\bm{Z}_{s_2-1}^\top \sum_{t=s_1+1}^{T}w_{s_1,t}w_{s_2,t}\right]}^2 \\
		&\leq& M \sum_{s_1=2}^{T-1}\normmm{\bm{Z}_{s_1-1}\bm{\eta}_{s_1}}_4^2\normmm{\sum_{s_2=1}^{s_1-1}\bm{Z}_{s_2-1}\bm{\eta}_{s_2} \sum_{t=s_1+1}^{T}w_{s_1,t}w_{s_2,t}}_4^2 \\
		&\leq& M \sum_{s_1=2}^{T-1}\normmm{\bm{Z}_{s_1-1}\bm{\eta}_{s_1}}_4^2\sum_{s_2=1}^{s_1-1}\normmm{\bm{Z}_{s_2-1}\bm{\eta}_{s_2} \sum_{t=s_1+1}^{T}w_{s_1,t}w_{s_2,t}}_4^2 \\
		&=&O\left(\sum_{s_1=2}^{T-1}\sum_{s_2=1}^{s_1-1}(\sum_{t=s_1+1}^{T}w_{s_1,t}w_{s_2,t})^2 \right)=O(1/T).
	\end{eqnarray*}
	
	Combine the above results, the proof is now complete.
\end{proof}

\begin{proof}[Proof of Lemma \ref{L5}]
	Note that $\bm{y}_{t}^{\top}\bm{H}_t\bm{h}_{t-1}^*$, $t \in \mathbb{Z}$, are martingale differences with respect  to the filtration $\mathcal{F}_{t}$. We apply Lemma B.1 to prove the asymptotic normality of $Q_T$. By $E\left(\left\|\bm{\epsilon}_t\right\|^\delta|\mathcal{F}_{t-1}\right)<\infty$ a.s., we have
	$$
	E\left[\left\|\bm{y}_t\right\|^\delta \right] \leq E\left[E\left(\left\|\bm{\epsilon}_t\right\|^\delta|\mathcal{F}_{t-1}\right)\left\|\bm{z}_{t-1}\otimes \bm{I}_d\right\|^\delta \right] < E\left[M\left\|\bm{z}_{t-1}\right\|^\delta \right] < \infty.
	$$
	By Cauchy-Schwarz inequality, Lemma \ref{L2} and Lemma \ref{L1} (2), the Lindeberg condition is satisfied since
	\begin{eqnarray*}
		\sum_{t=2}^{T}\normmm{\bm{y}_t\bm{H}_t\bm{h}_{t-1}^*}_{\delta/2}^{\delta/2} &\leq& \sum_{t=2}^{T}\normmm{\bm{y}_t\bm{H}_t}_{\delta}^{\delta/2}\normmm{\bm{h}_{t-1}^*}_{\delta}^{\delta/2} \\
		&\leq& M \max_t \normmm{\bm{y}_t}_{\delta}^{\delta}\sum_{t=2}^{T}a_t^{\delta/4}  \\
		&=&O(1)\cdot \max_{t}a_t^{\delta/4-1}=o(1).
	\end{eqnarray*}
	Apply Lemmas \ref{L3} and \ref{L4} with $\bm{w}_t^* = E\left(\bm{H}_t^\top\bm{y}_t\bm{y}_{t}^{\top}\bm{H}_t|\mathcal{F}_{t-1}\right)$, then we have the convergence of conditional variance
	\begin{eqnarray*}
		&&\sum_{t=2}^{T}\mathrm{tr}\left[E\left(\bm{H}_t^\top\bm{y}_t\bm{y}_{t}^{\top}\bm{H}_t|\mathcal{F}_{t-1}\right)\bm{h}_{t-1}^*\bm{h}_{t-1}^{*,\top} \right]\\
		&\to_P&\sum_{t=2}^{T}\mathrm{tr}\left[E\left(\bm{H}_t^\top\bm{y}_t\bm{y}_{t}^{\top}\bm{H}_t\right)E\left(\bm{h}_{t-1}^*\bm{h}_{t-1}^{*,\top} \right)\right].
	\end{eqnarray*}
	
	By Lemma B.1, the proof is now completed.
	
\end{proof}

{\footnotesize
\bibliography{ma}

@article{CK2021,
author = {Carsten Jentsch and Kurt G. Lunsford},
title = {Asymptotically Valid Bootstrap Inference for Proxy SVARs},
journal = {Journal of Business \& Economic Statistics},
pages = {forthcoming},
year  = {2021},
}

@article{LS2021,
author = {Helmut L{\"u}tkepohl and Thore Schlaak},
title = {Heteroscedastic Proxy Vector Autoregressions},
journal = {Journal of Business \& Economic Statistics},
pages = {forthcoming},
year  = {2021}
}

@article{cgz2019,
  title={A {Bayesian} approach in a nonparametric time--varying regression model},
  author={Cheng, Tingting and Gao, Jiti and Zhang, Xibin},
  journal={Journal of Business and Economic Statistics},
  volume={37},
  number={1},
  pages={1--12},
  year={2019},
  publisher={Taylor \& Francis}
}

@article{gao2008bandwidth,
  title={Bandwidth selection in nonparametric kernel testing},
  author={Gao, Jiti and Gijbels, Irene},
  journal={Journal of the American Statistical Association},
  volume={103},
  number={484},
  pages={1584--1594},
  year={2008},
  publisher={Taylor \& Francis}
}

@ARTICLE {dahlhaus1996,
    author    = "Dahlhaus, Rainer",
    title     = "On the Kullback-Leibler information divergence of locally stationary processes",
    journal   = "Stochastic Processes and Their Applications",
    year      = "1996",
    volume    = "62",
    number    = "1",
    pages     = "139-168",
    publisher = "Elsevier"
}

@ARTICLE {CHL2012,
    author  = "G. Connor and M. Hagmann and O. Linton",
    title   = "Efficient Semiparametric Estimation of the {Fama-French} Model and Extensions",
    journal = "Econometrica",
    year    = "2012",
    volume  = "80",
    number  = "2",
    pages   = "713--754"
}

@ARTICLE {HongLi,
    author  = "Hong, Yongmiao and Li, Haitao",
    title   = "Nonparametric Specification Testing for Continuous-Time Models with Applications to Term Structure of Interest Rates",
    journal = "Review of Financial Studies",
    year    = "2005",
    volume  = "18",
    number  = "1",
    pages   = "37--84"
}

@ARTICLE {rusnak2013solve,
    author    = "Rusn{\'a}k, Marek and Havranek, Tomas and Horv{\'a}th, Roman",
    title     = "How to solve the price puzzle? A meta-analysis",
    journal   = "Journal of Money, Credit and Banking",
    year      = "2013",
    volume    = "45",
    number    = "1",
    pages     = "37--70",
    publisher = "Wiley Online Library"
}

@ARTICLE {shlwz20,
    author    = "Sun, Y. and Hong, Y. and  Lee, T. H. and Wang, S. and Zhang, X.",
    title     = "Time--varying model averaging",
    journal   = "Journal of Econometrics",
    year      = "2021",
    volume    = "222",
    number    = "2",
    pages     = "974--992",
    publisher = "Elsevier"
}

@ARTICLE {ik13,
    author    = "Atsushi Inoue and Lutz Kilian",
    title     = "Inference on impulse response functions in structural VAR models",
    journal   = "Journal of Econometrics",
    year      = "2013",
    volume    = "117",
    number    = "1",
    pages     = "1--13"
}

@ARTICLE {ik20,
    author    = "Atsushi Inoue and Lutz Kilian",
    title     = "The uniform validity of impulse response inference in autoregressions",
    journal   = "Journal of Econometrics",
    year      = "2020",
    volume    = "215",
    number    = "3",
    pages     = "450--472"
}

@ARTICLE {giraitis2014inference,
    author    = "Giraitis, Liudas and Kapetanios, George and Yates, Tony",
    title     = "Inference on stochastic time-varying coefficient models",
    journal   = "Journal of Econometrics",
    year      = "2014",
    volume    = "179",
    number    = "1",
    pages     = "46--65"
}

@ARTICLE {hansen2001new,
    author  = "Hansen, Bruce E",
    title   = "The new econometrics of structural change: dating breaks in {US} labour productivity",
    journal = "Journal of Economic Perspectives",
    year    = "2001",
    volume  = "15",
    number  = "4",
    pages   = "117--128"
}

@ARTICLE {tsay1998testing,
    author    = "Tsay, Ruey S",
    title     = "Testing and modelling multivariate threshold models",
    journal   = "Journal of the American Statistical Association",
    year      = "1998",
    volume    = "93",
    number    = "443",
    pages     = "1188--1202"
}

@INCOLLECTION {stock2016dynamic,
    author    = "Stock, James H and Watson, Mark W",
    title     = "Dynamic factor models, factor-augmented vector autoregressions, and structural vector autoregressions in macroeconomics",
    booktitle = "Handbook of Macroeconomics",
    publisher = "Elsevier",
    year      = "2016",
    volume    = "2",
    pages     = "415--525"
}

@ARTICLE {sims1980,
    author    = "Sims, Christopher A",
    title     = "Macroeconomics and reality",
    journal   = "Econometrica",
    year      = "1980",
    volume    = "48",
    number    = "1",
    pages     = "1--48"
}

@ARTICLE {paul2019time,
    author    = "Paul, Pascal",
    title     = "The time-varying effect of monetary policy on asset prices",
    journal   = "Review of Economics and Statistics",
    volume    = "102",
    number    = "4",
    pages     = "690--704",
    year      = "2020"
}

@ARTICLE {cogley2005drifts,
    author    = "Cogley, Timothy and Sargent, Thomas J",
    title     = "Drifts and volatilities: {M}onetary policies and outcomes in the post {W}orld {W}ar {II} {U.S.}",
    journal   = "Review of Economic Dynamics",
    year      = "2005",
    volume    = "8",
    number    = "2",
    pages     = "262--302",
    publisher = "Elsevier"
}

@BOOK {lutkepohl2005new,
    author    = "L{\"u}tkepohl, Helmut",
    title     = "{New Introduction to Multiple Time Series Analysis}",
    publisher = "Springer Science \& Business Media",
    year      = "2005"
}

@ARTICLE {dahlhaus2006statistical,
    author    = "Dahlhaus, Rainer and Rao, Suhasini Subba",
    title     = "Statistical inference for time-varying {ARCH} processes",
    journal   = "Annals of Statistics",
    year      = "2006",
    volume    = "34",
    number    = "3",
    pages     = "1075--1114",
    publisher = "Institute of Mathematical Statistics"
}

@ARTICLE {dahlhaus2009empirical,
    author    = "Dahlhaus, Rainer and Polonik, Wolfgang",
    title     = "Empirical spectral processes for locally stationary time series",
    journal   = "Bernoulli",
    year      = "2009",
    volume    = "15",
    number    = "1",
    pages     = "1--39",
    publisher = "Bernoulli"
}

@ARTICLE {richter2019cross,
    author    = "Richter, Stefan and Dahlhaus, Rainer",
    title     = "Cross validation for locally stationary processes",
    journal   = "Annals of Statistics",
    year      = "2019",
    volume    = "47",
    number    = "4",
    pages     = "2145--2173",
    publisher = "Institute of Mathematical Statistics"
}

@ARTICLE {truquet2017parameter,
    author    = "Truquet, Lionel",
    title     = "Parameter stability and semiparametric inference in time varying auto-regressive conditional heteroscedasticity models",
    journal   = "Journal of the Royal Statistical Society: Series B",
    year      = "2017",
    volume    = "79",
    number    = "5",
    pages     = "1391-1414",
    publisher = "Wiley Online Library"
}

@ARTICLE {stock2001vector,
    author  = "Stock, James H and Watson, Mark W",
    title   = "Vector autoregressions",
    journal = "Journal of Economic Perspectives",
    year    = "2001",
    volume  = "15",
    number  = "4",
    pages   = "101--115"
}

@ARTICLE {cogley2010inflation,
    author  = "Cogley, Timothy and Primiceri, Giorgio E and Sargent, Thomas J",
    title   = "Inflation-gap persistence in the {U.S.}",
    journal = "American Economic Journal: Macroeconomics",
    year    = "2010",
    volume  = "2",
    number  = "1",
    pages   = "43--69"
}

@ARTICLE {breusch1978testing,
    author    = "Breusch, Trevor S",
    title     = "Testing for autocorrelation in dynamic linear models",
    journal   = "Australian Economic Papers",
    year      = "1978",
    volume    = "17",
    number    = "31",
    pages     = "334--355",
    publisher = "Wiley Online Library"
}

@ARTICLE {godfrey1978testing,
    author    = "Godfrey, Leslie G",
    title     = "Testing against general autoregressive and moving average error models when the regressors include lagged dependent variables",
    journal   = "Econometrica",
    year      = "1978",
    pages     = "1293--1301",
    volume    = "46",
    number    = "6",
    publisher = "JSTOR"
}

@ARTICLE {elbourne2006financial,
    author    = "Elbourne, Adam and de Haan, Jakob",
    title     = "Financial structure and monetary policy transmission in transition countries",
    journal   = "Journal of Comparative Economics",
    year      = "2006",
    volume    = "34",
    number    = "1",
    pages     = "1--23",
    publisher = "Elsevier"
}

@BOOK {kilian2017structural,
    author    = "Kilian, Lutz and L{\"u}tkepohl, Helmut",
    title     = "Structural Vector Autoregressive Analysis",
    publisher = "Cambridge University Press",
    year      = "2017"
}

@ARTICLE {petrova2019quasi,
    author    = "Petrova, Katerina",
    title     = "A quasi-{Bayesian} local likelihood approach to time varying parameter {VAR} models",
    journal   = "Journal of Econometrics",
    year      = "2019",
    volume    = "212",
    number    = "1",
    pages     = "286-306",
    publisher = "Elsevier"
}

@ARTICLE {vogt2012nonparametric,
    author    = "Vogt, Michael",
    title     = "Nonparametric regression for locally stationary time series",
    journal   = "Annals of Statistics",
    year      = "2012",
    volume    = "40",
    number    = "5",
    pages     = "2601-2633",
    publisher = "Institute of Mathematical Statistics"
}

@ARTICLE {zhang2012inference,
    author    = "Zhang, Ting and Wu, Wei Biao",
    title     = "Inference of time-varying regression models",
    journal   = "Annals of Statistics",
    year      = "2012",
    volume    = "40",
    number    = "3",
    pages     = "1376-1402",
    publisher = "Institute of Mathematical Statistics"
}

@ARTICLE {primiceri2005time,
    author    = "Primiceri, Giorgio E",
    title     = "Time varying structural vector autoregressions and monetary policy",
    journal   = "Review of Economic Studies",
    year      = "2005",
    volume    = "72",
    number    = "3",
    pages     = "821-852",
    publisher = "Wiley Online Library"
}

@ARTICLE {freedman1975tail,
    author    = "Freedman, David A",
    title     = "On tail probabilities for martingales",
    journal   = "Annals of Probability",
    year      = "1975",
    volume    = "3",
    number    = "1",
    pages     = "100-118",
    publisher = "Institute of Mathematical Statistics"
}

@BOOK {hall2014martingale,
    author    = "Hall, Peter and Heyde, Christopher C",
    title     = "Martingale Limit Theory and Its Application",
    publisher = "Academic Press",
    year      = "1980"
}

@ARTICLE {sims2006,
    author  = "C.A. Sims and T. Zha",
    title   = "Were there regime switches in {U.S.} monetary policy?",
    journal = "American Economic Review",
    year    = "2006",
    volume  = "96",
    number  = "1",
    pages   = "54-81"
}

@unpublished{yan2020class,
  title={Nonparametric Time-Varying Vector Moving Average ($\infty$) Models},
  author={Yan, Yayi and Gao, Jiti and Peng, Bin},
  note={Working paper available at https://ssrn.com/abstract=3729872},
  year={2020},
}
}
}

\end{document}